\xpatchcmd\thmt@restatable{%
\csname #2\@xa\endcsname\ifx\@nx#1\@nx\else[{#1}]\fi
}{%
\ifthmt@thisistheone
\csname #2\@xa\endcsname\ifx\@nx#1\@nx\else[{#1}]\fi
\else
\csname #2\@xa\endcsname[{Restated}]
\fi}{}{}
\newtheorem{hypothesis}[theorem]{Hypothesis}
\newcommand{\Oh}{\ensuremath{\mathcal{O}}\xspace}
\newcommand{\Ohtilda}{\ensuremath{\widetilde{\mathcal{O}}}\xspace}
\newcommand{\NN}{\mathbb{N}\xspace}
\newcommand{\ZZ}{\mathbb{Z}\xspace}
\DeclareMathOperator*{\argmax}{arg\,max}
\DeclareMathOperator{\subw}{subw}
\DeclareMathOperator{\clemb}{clemb}
\DeclareMathOperator{\wed}{wed}
\DeclareMathOperator{\tw}{tw}
\DeclareMathOperator{\poly}{poly}
\DeclareMathOperator{\polylog}{polylog}
\newcommand{\lo}{\textup{lo}}
\newcommand{\hi}{\textup{hi}}
\newcommand{\tOh}{\Ohtilda}
\newcommand{\eps}{\varepsilon}
\newcommand{\bu}{\overline{u}}
\newcommand{\bv}{\overline{v}}
\newcommand{\partenc}{\mathcal{E}}
\newcommand{\cliqueconj}{MinClique Hypothesis\xspace}
\newcommand{\cliquelistconj}{Clique Listing Hypothesis\xspace}
\newcommand{\zerocliqueconj}{ZeroClique Hypothesis\xspace}
\newcommand{\minplusconvconj}{MinConv Hypothesis\xspace}
\newcommand{\threesumconj}{3SUM Hypothesis\xspace}
\newcommand{\miniso}[1]{Min-Weight-$#1$-Subgraph\xspace}
\newcommand{\listiso}[1]{$#1$-Listing\xspace}
\newcommand{\enumiso}[1]{$#1$-Enumeration\xspace}
\newcommand{\enciso}[1]{$#1$-Encoding\xspace}
\newcommand{\minH}{\miniso{H}}
\newcommand{\listH}{\listiso{H}}
\newcommand{\enumH}{\enumiso{H}}
\newcommand{\Hminiso}{\miniso{H}}
\newcommand{\Hlistiso}{\listiso{H}}
\newcommand{\Henumiso}{\enumiso{H}}
\newcommand{\Henciso}{\enciso{H}}
\newcommand{\minplusverif}{MinConv Verification\xspace}
\newcommand{\minplusverifconj}{MinConv Verification Hypothesis\xspace}
\newcommand{\threesum}{3SUM\xspace}
\newcommand{\threesumlist}{3SUM listing\xspace}
\newcommand{\family}{\mathcal{P}}
\newcommand{\Tfamily}{\mathcal{P}'}
\newcommand{\funcc}{\mathcal{F}}
\newcommand{\fabc}{\funcc(\alpha, \beta, \gamma)}
\newcommand{\pabc}{P(\alpha, \beta, \gamma \times 2)}
\newcommand{\goggr}{H_{\textup{gog}}}
\newcommand{\eemb}{f}
\newcommand{\corr}{I}
\newcommand{\jj}{\delta}
\newcommand{\jmax}{\jj_{\max}}
\newcommand{\vmax}{d_{\max}}
\newcommand{\edg}{\leftrightline}
\newcommand{\dd}{.\,.}
\newcommand{\subp}[1]{V_{#1}^{\jj_{#1}}}
\newcommand{\bends}{B_{\textup{ends}}}
\newcommand{\inq}{(\textasteriskcentered)\xspace}
\newcommand{\inqq}{(\textasteriskcentered\textasteriskcentered)\xspace}
\newcommand{\prj}[2]{\left. #1 \right |_{#2}}
\newcommand{\cC}{\mathcal{C}}
\newcommand{\casea}{\textbf{\boldmath$\fabc = 2 \beta \gamma + \frac{\alpha \beta}{2} - \frac{\beta^2}{2} + \frac{\beta}{2} - \frac{\alpha}{2} - 2 \gamma + 2$, $\alpha + \beta$ is even, $\alpha > \beta$, and $\beta < \gamma + 2$.}\xspace}
\newcommand{\caseb}{\textbf{\boldmath$\fabc = 2 \beta \gamma + \frac{\alpha \beta}{2} - \frac{\beta^2}{2} + \frac{3\beta}{2} - \frac{\alpha}{2} - 3 \gamma$, $\alpha + \beta$ is even, $3 \beta < \alpha + 6 \gamma + 8$, and ($\alpha = \beta$ or $\beta \ge \gamma + 2$).}\xspace}
\newcommand{\casec}{\textbf{\boldmath$\fabc = 2 \beta \gamma + \frac{\alpha \beta}{2} - \frac{\beta^2}{2} + 3\beta - \alpha - 6 \gamma - 4$, $\alpha + \beta$ is even, $2 \beta \le \alpha + 4 \gamma + 6$, and $3 \beta \ge \alpha + 6 \gamma + 8$.}\xspace}
\newcommand{\cased}{\textbf{\boldmath$\fabc = 2 \beta \gamma + \frac{\alpha \beta}{2} - \frac{\beta^2}{2} + \beta - \frac{\alpha}{2} - 2 \gamma + \frac{3}{2}$, $\alpha + \beta$ is odd, and $\beta < 2 \gamma + 3$.}\xspace}
\newcommand{\casee}{\textbf{\boldmath$\fabc = 2 \beta \gamma + \frac{\alpha \beta}{2} - \frac{\beta^2}{2} + 2 \beta - \frac{\alpha}{2} - 4 \gamma - \frac{3}{2}$, $\alpha + \beta$ is odd, $2 \beta \le \alpha + 4 \gamma + 6$, and $\beta \ge 2 \gamma + 3$.}\xspace}
\newcommand{\casef}{\textbf{\boldmath$\fabc = 2 \gamma^2 + \alpha \gamma + \frac{\alpha^2}{8} + \frac{\alpha}{2}$, $\alpha = 0 \bmod 4$, and $2 \beta > \alpha + 4 \gamma + 6$.}\xspace}
\newcommand{\caseg}{\textbf{\boldmath$\fabc = 2 \gamma^2 + \alpha \gamma + \frac{\alpha^2}{8} + \frac{\alpha}{2} + \frac{3}{8}$, $\alpha$ is odd, and $2 \beta > \alpha + 4 \gamma + 6$.}\xspace}
\newcommand{\caseh}{\textbf{\boldmath$\fabc = 2 \gamma^2 + \alpha \gamma + \frac{\alpha^2}{8} + \frac{\alpha}{2} + \frac{1}{2}$, $\alpha = 2 \bmod 4$, and $2 \beta > \alpha + 4 \gamma + 6$.}\xspace}
\title{A Fine-grained Classification of Subquadratic Patterns for Subgraph Listing and Friends}
\author{Karl Bringmann}{Saarland University and Max Planck Institute for Informatics, Saarland Informatics Campus, Germany}{bringmann@cs.uni-saarland.de}{}{}
\author{Egor Gorbachev}{Saarland University and Max Planck Institute for Informatics, Saarland Informatics Campus, Germany}{egorbachev@cs.uni-saarland.de}{}{}
\authorrunning{K. Bringmann and E. Gorbachev} 
\keywords{Fine-grained complexity, graph pattern detection, minimum-weight subgraph, enumeration algorithms, listing algorithms}
\begin{document}

\maketitle

\begin{abstract}
    In an $m$-edge host graph $G$, all triangles can be listed in time $\Oh(m^{1.5})$ [Itai, Rodeh '78], and all $k$-cycles can be listed in time $\Oh(m^{2-1/{\lceil k/2 \rceil}} + t)$ where $t$ is the output size [Alon, Yuster, Zwick '97].
These classic results also hold for the colored problem variant, where the nodes of the host graph $G$ are colored by nodes in the pattern graph $H$, and we are only interested in subgraphs of $G$ that are isomorphic to the pattern $H$ and respect the colors. 
We study the problem of listing all $H$-subgraphs in the colored setting, for fixed pattern graphs $H$.

As our main result, we determine all pattern graphs $H$ such that all $H$-subgraphs can be listed in subquadratic time $\Oh(m^{2-\eps} + t)$, where $t$ is the output size. Moreover, for each such subquadratic pattern $H$ we determine the smallest exponent $c(H)$ such that all $H$-subgraphs can be listed in time $\Oh(m^{c(H)} + t)$. This is a vast generalization of the classic results on triangles and cycles.

To prove this result, we design new listing algorithms and prove conditional lower bounds based on standard hypotheses from fine-grained complexity theory. 
In our algorithms, we use a new ingredient that we call hyper-degree splitting, where we split tuples of nodes into high degree and low degree depending on their number of common neighbors. 

We also show the same results for two related problems: finding an $H$-subgraph of minimum total edge-weight in time $\Oh(m^{c(H)})$, and enumerating all $H$-subgraphs in $\Oh(m^{c(H)})$ preprocessing time and constant delay. Again we determine all pattern graphs $H$ that have complexity $c(H) < 2$, and for each such subquadratic pattern we determine the optimal complexity $c(H)$.

\end{abstract}

\newpage
\tableofcontents
\newpage

\section{Introduction} \label{sec:introduction}

The subgraph isomorphism problem is one of the most fundamental graph problems: Given two graphs $H$ and $G$, decide whether the host graph $G$ contains a subgraph isomorphic to the pattern graph $H$. This problem and its variants have a large number of applications in areas such as databases, network motifs, statistical physics, and probabilistic inference. 
In this paper we study the setting where the pattern graph $H$ is fixed and thus of constant size, while the host graph~$G$ is given as the input, and we analyze running time in terms of the number of edges $m$ of $G$ (and sometimes its number of nodes $n$). 
We study the \emph{colored} problem variant, where each node of $G$ is colored by a node in $H$ and we ask for a subgraph respecting these colors. This variant naturally arises in database applications. See Section~\ref{sec:formal-results} for formal problem definitions.

\paragraph*{Fine-Grained Perspective}

In 1978 Itai and Rodeh solved triangle detection in time $\Oh(m^{1.5})$~\cite{ItaiR78}. Alon, Yuster and Zwick found a generalization to $k$-cycle detection, achieving time $\Oh(m^{2-1/\lceil k/2 \rceil})$~\cite{AlonYZ97}. These running times can be improved using fast matrix multiplication, in particular triangle detection is in time $\Oh(m^{2\omega/(\omega+1)})$ where $\omega$ is the exponent of matrix multiplication~\cite{AlonYZ97} (see~\cite{DalirrooyfardVW21} for $k$-cycles). 
These works inspired a long line of research on algorithms for subgraph isomorphism.

In contrast, from the lower bound perspective, specifically from the perspective of fine-grained complexity theory, subgraph detection is a big and completely open problem: Even for the simple problem of triangle detection conditional lower bounds are elusive; the current state of the art cannot even rule out linear-time algorithms for triangle detection.\footnote{If one is willing to believe $\omega > 2$, then there is a lower bound of $\Omega(m^{\omega/2})$, thus ruling out linear-time triangle detection, but even then no tight upper and lower bounds are known since $\omega/2 < 2\omega/(\omega+1)$.} Thus, we cannot hope for determining the optimal running times for a larger class of patterns.

For this reason, in this paper we study variants of subgraph isomorphism for which the triangle case is well understood, specifically we study \emph{minimum-weight}, \emph{listing}, and \emph{enumeration} variants of subgraph isomorphism, as we will discuss in Sections~\ref{sec:intro_minweight}-\ref{sec:intro_enumeration}. 

\paragraph*{Parameterized Perspective}
Subgraph isomorphism is intensely studied in the area of parameterized algorithms, for various classes of pattern graphs and with respect to various parameters (see, e.g,~\cite{Marx10,MarxP14,AminiFS12,FominLRSR12,Pratt19}).
In terms of the treewidth $\tw(H)$ a standard dynamic programming algorithm detects an $H$-subgraph in time $\Oh(n^{\tw(H)+1})$ or $\Oh(m^{\tw(H)})$.\footnote{This works for the colored version of the problem. The uncolored version can be solved in time $\tOh(n^{\tw(H)+1})$ or $\tOh(m^{\tw(H)})$ by color coding~\cite{AlonYZ95}. Here and throughout the paper, by $\tOh$-notation we hide logarithmic factors in the input size $m$, i.e., $\tOh(T) = \bigcup_{c \ge 0} \Oh(T \log^c m)$.} For some patterns faster algorithms are known, e.g., for $k$-Clique fast matrix multiplication yields time $\Oh(n^{\omega k/3})$ if $k$ is divisible by~$3$~\cite{NesetrilP85}. However, the best conditional lower bound only rules out time $n^{o(\tw(H)/\log \tw(H))}$ (for any family of pattern graphs)~\cite{Marx10}, which leaves a super-constant gap of $\Omega(\log \tw(H))$ in the exponent. 
It is a big open problem in parameterized complexity to reduce this gap to a constant.
(A similar gap exists for minimum-weight, listing, and enumeration.)

Therefore, for pattern graphs $H$ with large treewidth, meaning high time complexity, it is impossible to determine the precise time complexity before answering a big open problem in parameterized complexity.
For this reason, in this paper we study pattern graphs with low time complexity, specifically we focus on \emph{subquadratic time complexity}.

\medskip
The previous two paragraphs lead us to study patterns with subquadratic time complexity for min-weight subgraph, subgraph listing, and subgraph enumeration, as discussed next.

\subsection{Min-Weight Subgraph}
\label{sec:intro_minweight}

In the \minH problem the host graph $G$ comes equipped with edge weights, and the task is to find the minimum total edge weight of any $H$-subgraph of~$G$.
Standard dynamic programming solves \minH in time $\Oh(m)$ when $H$ is a tree.
It is well-known that a generalization of Itai and Rodeh's triangle detection algorithm~\cite{ItaiR78} solves min-weight triangle in time $\Oh(m^{1.5})$, and a generalization of Alon, Yuster and Zwick's algorithm~\cite{AlonYZ97} solves min-weight $k$-cycle in time $\Oh(m^{2-1/\lceil k/2 \rceil})$.
Min-weight subgraph problems are of key importance in fine-grained complexity (see, e.g., \cite{WilliamsW13,WilliamsW18,AbboudWY18,WilliamsX20,AbboudGW23,
BringmannS21,ChanX23,ChanWX23}), mainly due to the equivalence of All-Pairs-Shortest-Paths (APSP) and min-weight triangle: The APSP hypothesis states that APSP cannot be solved in time $\Oh(n^{3-\eps})$ for any $\eps > 0$, and this holds if and only if min-weight triangle cannot be solved in time $\Oh(n^{3-\delta})$ for any $\delta > 0$~\cite{WilliamsW18}. 
Since $m \le n^2$, it follows that min-weight triangle cannot be solved in time $\Oh(m^{1.5-\eps})$ for any $\eps > 0$ assuming the APSP hypothesis, which is a tight lower bound. 

For any pattern graph $H$ let us define its \emph{min-weight complexity} $c_W(H)$ as the infimum over all $c \ge 1$ such that \minH can be solved in time $\Oh(m^c)$. The min-weight complexity of the triangle is completely understood, specifically it is $1.5$ assuming the APSP hypothesis. We ask how far this understanding can be generalized to further pattern graphs. As discussed before, we focus on patterns with subquadratic complexity:

\begin{center}
\emph{Subquadratic Min-Weight Question: 
Determine all pattern graphs $H$ with min-weight \\ complexity $c_W(H) < 2$, and for each such pattern determine the number $c_W(H)$.
}
\end{center}
In other words, the questions asks to determine for every pattern $H$ the value $\min\{c_W(H),2\}$.
This number is known for some natural families of pattern graphs. In particular, the min-weight complexity of any tree is 1, of the $k$-cycle is $2-1/{\lceil k/2 \rceil}$~\cite{AlonYZ97,FanKZ23}, and of the biclique $K_{k,2}$ is $2 - 1/k$~\cite{FanKZ23,KhamisCMNNOS20}; see \cite{FanKZ23} for some further examples.
However, beyond some natural families of pattern graphs, in its generality the question is open.

\paragraph*{Our Results on Min-Weight Subgraph}
We answer the Subquadratic Min-Weight Question, conditional on standard hypotheses from fine-grained complexity. Specifically, for every pattern graph $H$, we either prove a fine-grained lower bound showing that \minH is not in time $\Oh(m^{2-\eps})$ for any $\eps > 0$, thus showing $c_W(H) \ge 2$, or we determine a value $c(H) < 2$ for which we design an algorithm solving \minH in time $\Oh(m^{c(H)})$ and prove a fine-grained lower bound showing that \minH is not in time $\Oh(m^{c(H)-\eps})$ for any $\eps > 0$.

\subsection{Subgraph Listing}
\label{sec:intro_listing}

For a pattern graph $H$, the \listH problem asks to list all $H$-subgraphs of a given $m$-edge host graph $G$. Denote the total number of $H$-subgraphs by $t$, i.e., $t$ is the output size of the listing problem. 
A standard dynamic programming algorithm solves \listH in time $\Oh(m + t)$ when $H$ is a tree, see, e.g.,~\cite{BaganDG07}.
It is well-known that a generalization of Itai and Rodeh's triangle detection algorithm~\cite{ItaiR78} solves triangle listing in time $\Oh(m^{1.5} + t) = \Oh(m^{1.5})$, and a generalization of Alon, Yuster and Zwick's algorithm~\cite{AlonYZ97} solves $k$-cycle listing in time $\Oh(m^{2-1/\lceil k/2 \rceil} + t)$. 
For triangle listing, further investigations~\cite{WilliamsW18,BjorklundPWZ14} lead to an algorithm with running time $\tOh(m^{2\omega/(\omega+1)} + m^{3(\omega-1)/(\omega+1)} t^{(3-\omega)/(\omega+1)})$~\cite{BjorklundPWZ14}. 
In a landmark result in fine-grained complexity, P\v{a}tra\c{s}cu proved that listing $t=m$ triangles in an $m$-edge graph cannot be solved in time $\Oh(m^{4/3-\eps})$ for any $\eps > 0$ assuming the 3SUM hypothesis~\cite{Patrascu10}, which matches the known algorithm~\cite{BjorklundPWZ14} if $\omega = 2$ (see also~\cite{KopelowitzPP16,WilliamsX20} for further developments).

In this paper we focus on listing algorithms that are \emph{output-linear}, i.e., whose running time depends (near-)linearly on $t$. Accordingly, for any pattern graph $H$ we define its \emph{listing complexity} $c_L(H)$ as the infimum over all $c \ge 1$ such that \listH can be solved in time $\tOh(m^c + t)$. 
Note that the basic listing algorithms mentioned above for trees~\cite{BaganDG07}, triangles~\cite{ItaiR78}, and $k$-cycles~\cite{AlonYZ97} are output-linear, but the improved triangle listing algorithm~\cite{BjorklundPWZ14} is not.
It is known that the triangle has listing complexity $1.5$, as Itai and Rodeh's algorithm has exponent $1.5$ and no algorithm can have exponent $c < 1.5$ assuming the 3SUM hypothesis\footnote{This lower bound follows from slightly changing parameters in \cite{Patrascu10}, or from plugging in appropriate parameters in \cite[Theorem 3.4]{WilliamsX20}, or as a special case of the results in this paper.}.
So for output-linear listing algorithms the triangle case is completely understood from a fine-grained perspective. 
We ask how far this understanding can be generalized to further pattern graphs. Again we focus on patterns with subquadratic complexity:

\begin{center}
\emph{Subquadratic Listing Question: 
Determine all pattern graphs $H$ with listing \\ complexity $c_L(H) < 2$, and for each such pattern determine the number $c_L(H)$.
}
\end{center}
In other words, this questions asks to determine for every pattern $H$ the value $\min\{c_L(H),2\}$.
To the best of our knowledge, the number $c_L(H)$ is known only for the triangle and for trees (assuming standard hypotheses from fine-grained complexity).
Joglekar and R{\'{e}}~\cite{JoglekarR18} started an attack on this question, where they discovered some pattern graphs with subquadratic listing complexity which will also play an important role in our work. 
This illustrates that our question has inspired prior work, but it was far from resolved.

\paragraph*{Our Results on Subgraph Listing}
We answer the Subquadratic Listing Question, conditional on standard hypotheses from fine-grained complexity. 
Specifically, for every pattern graph $H$, we either prove a fine-grained lower bound showing that \listH is not in time $\tOh(m^{2-\eps} + t)$ for any $\eps > 0$, thus showing $c_L(H) \ge 2$, or we determine a value $c(H) < 2$ for which we design an algorithm solving \listH in time $\Oh(m^{c(H)} + t)$ and prove a fine-grained lower bound showing that \listH is not in time $\tOh(m^{c(H)-\eps} + t)$ for any $\eps > 0$.
In fact, we obtain the same complexity as for min-weight subgraph, showing that $\min\{c_L(H),2\} = \min\{c_W(H),2\}$ for all $H$.

\subsection{Subgraph Enumeration}
\label{sec:intro_enumeration}

An \enumH algorithm with preprocessing time $P(m)$ and delay $D(m)$ is an algorithm that lists all $H$-subgraphs of a given $m$-edge host graph $G$ such that the time spent by the algorithm before writing the first solution is at most $P(m)$, and for any $i$ the time spent by the algorithm between writing the $i$-th solution and the $(i+1)$-th solution is at most $D(m)$. Note that an \enumH algorithm with preprocessing time $P(m)$ and delay $D(m)$ also yields an \listH algorithm that lists all $t$ solutions in total time $\Oh(P(m) + t \cdot D(m))$. Enumeration algorithms come with the added benefit that they can be aborted early and still yield some solutions, while a listing algorithm might not have produced any solutions at the same point in time. See the survey~\cite{Strozecki19} for more background on enumeration. 
Any tree can be enumerated with preprocessing time $\Oh(m)$ and delay $\Oh(1)$, see, e.g.,~\cite{BaganDG07}.
Trivially, the listing version of Itai and Rodeh's triangle detection algorithm~\cite{ItaiR78} is an enumeration algorithm with preprocessing time $\Oh(m^{1.5})$ and delay $\Oh(1)$. 
Enumeration algorithms are widely studied in the database theory community, and recently there is a growing interest in the fine-grained complexity of enumeration algorithms, see, e.g.,~\cite{Strozecki19,DurandG07,BerkholzKS18,
Durand20,DeepHK20,CarmeliK20,CarmeliK21,
CarmeliZBCKS22,BringmannCM22,BringmannC22,CarmeliS23}.

In this paper we focus on enumeration algorithms with (near-)constant delay. Accordingly, for any pattern graph $H$ we define its \emph{enumeration complexity} $c_E(H)$ as the infimum over all $c \ge 1$ such that \enumH can be solved in preprocessing time $\Oh(m^c)$ and delay $\tOh(1)$. The analogue of our previous questions then is as follows:

\begin{center}
\emph{Subquadratic Enumeration Question: 
Determine all pattern graphs $H$ with enumeration \\ complexity $c_E(H) < 2$, and for each such pattern determine the number $c_E(H)$.
}
\end{center}
In other words, this questions asks to determine for every pattern $H$ the value $\min\{c_E(H),2\}$.
To the best of our knowledge, the number $c_E(H)$ is only known for the triangle and for trees (assuming standard hypotheses from fine-grained complexity).
Super-linear lower bounds for all non-trees were shown in~\cite{DurandG07,brault2013pertinence}.

\paragraph*{Our Results on Subgraph Enumeration}
We answer the Subquadratic Enumeration Question, conditional on standard hypotheses from fine-grained complexity. 
Specifically, for every pattern graph $H$, we either prove a fine-grained lower bound showing that \enumH is not in preprocessing time $\Oh(m^{2-\eps})$ and delay $\tOh(1)$ for any $\eps > 0$, thus showing $c_E(H) \ge 2$, or we determine a value $c(H) < 2$ for which we design an algorithm solving \enumH in preprocessing time $\Oh(m^{c(H)})$ and delay $\Oh(1)$ and prove a fine-grained lower bound showing that \enumH is not in preprocessing time $\Oh(m^{c(H)-\eps})$ and delay $\tOh(1)$ for any $\eps > 0$.
In fact, we obtain the same complexity as for min-weight subgraph and subgraph listing, showing that $\min\{c_E(H),2\} = \min\{c_L(H),2\} = \min\{c_W(H),2\}$ for all $H$. 

\smallskip
We leave as an open problem to extend our results beyond subquadratic complexity, i.e., to determine the numbers $\min\{c_W(H),\kappa\}, \min\{c_L(H),\kappa\}$, and $\min\{c_E(H),\kappa\}$ for each pattern graph $H$ for larger constants $\kappa > 2$.

\subsection{Formal Statement of Results}
\label{sec:formal-results}

We start by formally defining the problems considered in this paper.

\begin{definition}[Problem Definitions]
  A \emph{pattern graph} is a connected, undirected graph $H$ with $|V(H)| \ge 2$; in what follows we fix $H$. A \emph{host graph} is an undirected graph $G = (V,E)$ together with a partitioning $V = \bigcup_{a \in V(H)} V_a$. An \emph{$H$-subgraph of $G$} is a tuple of nodes $\bv = (v_a)_{a \in V(H)} \in \prod_{a \in V(H)} V_a$ such that for each $\{a,b\} \in E(H)$ we have $\{v_a, v_b\} \in E(G)$. 
  
  In the \emph{\minH} problem, given a host graph $G$ and edge weights $w \colon E(G) \to \mathbb{R}$, the task is to compute the smallest total edge weight of any $H$-subgraph of $G$, i.e., the weight of $\bv = (v_a)_{a \in V(H)}$ is $\sum_{\{a,b\} \in E(H)} w(v_a,v_b)$. If $G$ has no $H$-subgraph, the result is $\infty$. We assume that arithmetic operations on edge weights can be performed in constant time. 
  
  In the \emph{\listH} problem, the task is to output all $H$-subgraphs of a given host graph $G$. 
  
  In the \emph{\enumH} problem, the task is the same as in the \listH problem. However, instead of the total running time, we consider the \emph{preprocessing time}, which is the time spent by the algorithm before writing the first solution, as well as the \emph{delay}, which is the maximum time spent by the algorithm between writing two consecutive solutions.
  
  We denote by $m := |E(G)|$ the input size, and we write $n := |V(G)|$. In \listH, we denote by $t$ the total number of $H$-subgraphs (i.e., the output size).
\end{definition}

Note that the assumption that the pattern $H$ is connected and has at least two nodes is without loss of generality. Indeed, if $H$ is disconnected, then the above problems easily split over the connected components of $H$. If $|V(H)| = 1$, then the problems are trivial.
We may further assume that there are no isolated nodes in $G$, because such nodes are not part of any $H$-subgraph of $G$ and thus can be removed. In particular, we assume $m = \Omega(n)$.

\medskip
To state our main result, we need two ingredients. The first ingredient is clique separators and the corresponding decomposition.

\begin{definition}[Clique Separator, \cite{Gavril77}]
  Let $H$ be a graph and $C \subsetneq V(H)$ (possibly empty). We say that $C$ is a \emph{clique separator} in $H$ if $H[C]$ is a clique and $H[V(H) \setminus C]$ is disconnected.
\end{definition}

For example, the empty set is a clique separator of size 0 in a disconnected graph, and a cut vertex is a clique separator of size 1.

\begin{definition}[Clique Separator Decomposition] \label{def:clique-separator-decomp}
  Let $H$ be a graph. We define $D(H)$ as the set of all induced subgraphs $H[V']$ for $V' \subseteq V(H)$ such that $H[V']$ has no clique separator and $V'$ is an inclusionwise maximal subset with this property.
\end{definition}

The following lemma makes it easy to determine the decomposition $D(H)$ for a given graph $H$, as it suffices to split along any minimal clique separator and recursively construct the decomposition for each part. For an example see Figure~\ref{clique-separator-decomposition-of-subquadratic-graphs}.

\begin{restatable}[\cref{sec:preliminaries:misc}]{lemma}{dsetcomputation}
\label{d-set-computation}
    Let $H$ be a graph and $C$ be a clique separator in $H$ which is minimal, i.e., no proper subset of $C$ is a clique separator. Let $S_1, S_2, \ldots, S_k \subset V(H)$ be the connected components of $H[V(H) \setminus C]$. Then $D(H) = \bigcup_{i \in [k]} D(H[S_i \cup C])$.
\end{restatable}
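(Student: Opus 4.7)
The plan is to prove the equality $D(H) = \bigcup_{i \in [k]} D(H[S_i \cup C])$ by two inclusions, built on the following workhorse observation: \emph{any $V' \subseteq V(H)$ with $H[V']$ free of clique separators must satisfy $V' \subseteq S_i \cup C$ for some~$i$}. To see this: if $V'$ intersected two distinct components $S_a, S_b$ of $H - C$, then $V' \cap C$ would be a clique separator of $H[V']$, since it is a clique (subset of $C$), a proper subset of $V'$ (as $V'$ has vertices outside $C$), and its removal disconnects the vertices in $V' \cap S_a$ from those in $V' \cap S_b$---every $H$-path between distinct components of $H - C$ traverses $C$.

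For the forward inclusion $D(H) \subseteq \bigcup_i D(H[S_i \cup C])$, the observation applied to $V' \in D(H)$ gives $V' \subseteq S_i \cup C$, and maximality of $V'$ in $V(H)$ trivially implies maximality within $S_i \cup C$, placing $V'$ in $D(H[S_i \cup C])$. For the reverse inclusion, take $V' \in D(H[S_i \cup C])$; the no-clique-separator property carries over unchanged (the induced subgraph $H[V']$ is the same viewed in $H$ or in $H[S_i \cup C]$), so I only need to verify maximality in $V(H)$. Assume for contradiction $V'' \supsetneq V'$ with $H[V'']$ free of clique separators. If $V'' \subseteq S_i \cup C$, the maximality of $V'$ in $S_i \cup C$ is contradicted. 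Otherwise $V''$ contains some $v \in S_j$ with $j \neq i$; and if moreover $V' \not\subseteq C$ (so that $V'$ touches $S_i$), the workhorse observation applied to $V''$ yields $V'' \cap C$ as a clique separator of $H[V'']$, a contradiction.

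The last remaining case is $V' \subseteq C$, which together with the maximality of $V'$ in $S_i \cup C$ forces $V' = C$: any proper subset of $C$ can be enlarged by a vertex of $C$ into a clique, which has no clique separator, contradicting maximality. Here the minimality of $C$ enters the argument: for each $c \in C$, because $C \setminus \{c\}$ is not a clique separator, $H - (C \setminus \{c\})$ is connected, which forces $c$ to have a neighbor in every component $S_j$. I will use this to show $V' = C$ cannot actually be maximal in $S_i \cup C$, making the case vacuous. Either some vertex $u \in S_i$ is adjacent to all of $C$---then $C \cup \{u\}$ is a clique strictly extending $C$---or a carefully chosen subset $N \subseteq S_i$ (containing a neighbor of each $c \in C$ together with enough connectivity through $H[S_i]$) yields $W := C \cup N \supsetneq C$ with $H[W]$ free of clique separators. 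I expect the main technical obstacle to be this last construction, especially for $|C| \ge 3$: the subset $N$ must be chosen so that no clique inside $C \cup N$ acts as a separator of $H[W]$, which requires a careful inductive or spanning-tree-based argument exploiting the minimality of $C$ and the connectivity of $H[S_i]$.
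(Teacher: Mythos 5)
Your skeleton is correct and matches the paper's structure: the workhorse observation (any $V'$ with $H[V']$ free of clique separators lies inside a single $S_i \cup C$, because otherwise $V' \cap C$ separates it), the forward inclusion, and the reverse inclusion up to the case $V' = C$. However, the proposal is not a complete proof: the step you yourself flag as ``the main technical obstacle''---constructing a set $W \supsetneq C$ inside $S_i \cup C$ with $H[W]$ free of clique separators, thereby showing $V' = C$ cannot be maximal in $H[S_i \cup C]$---is left as an unproven sketch (``a carefully chosen subset $N \subseteq S_i$ \ldots which requires a careful inductive or spanning-tree-based argument''). This is exactly the nontrivial heart of the lemma and is precisely where the minimality of $C$ does real work, so the proposal cannot be accepted as a proof without it.

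For reference, the paper's resolution of this step is the following Steiner-tree argument. After observing (via minimality of $C$) that each $a \in C$ has a neighbor in $S_i$, and handling $|C| = 1$ directly, one forms the auxiliary graph $H' \coloneqq (S_i \cup C,\, E(H[S_i \cup C]) \setminus E(H[C]))$, and picks a minimum-cardinality tree $T$ in $H'$ that contains $C$ with every node of $C$ appearing as a leaf. Such a $T$ exists by connectivity of $H[S_i]$ and the neighbor property, and satisfies $C \subsetneq V(T)$. The claim is that $H[V(T)]$ has no clique separator, which is proved by contradiction: if $C'$ were a clique separator of $H[V(T)]$, then either $C' \subseteq C$ (removing leaves of a tree cannot disconnect it---contradiction) or $C' \not\subseteq C$, in which case the component structure of $H[V(T)] - C'$ is used to reroute through some $a \in C' \setminus C$ and build a strictly smaller Steiner tree $T'$, contradicting minimality of $T$. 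You would need to supply an argument of comparable substance for your construction of $N$; the issue you anticipate for $|C| \ge 3$---ensuring no clique inside $C \cup N$ separates $H[W]$---is genuinely delicate and is what forces the minimality-of-the-tree machinery.

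One smaller point: the paper separates out the case $C = \emptyset$ at the start (here $H$ is disconnected and the statement reduces to splitting $D(H)$ over connected components), whereas your workhorse observation silently subsumes it; that part is fine, but be aware that when you later use ``each $a \in C$ has a neighbor in $S_i$'' or speak of $V' = C$ being non-maximal, you implicitly assume $|C| \ge 1$, so the $C = \emptyset$ case should be dispatched separately.
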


We remark that combining this lemma with an algorithm to find clique separators~\cite{Whitesides81} allows to compute the decomposition $D(H)$ in time $\textup{poly}(|V(H)|)$, see \cref{sec:appendix-decomposition-algorithm}. (This is not needed for our main results because we anyways assume that $H$ has constant size.)

\smallskip
The second ingredient is parallel path graphs, or $P$-graphs for short.

\begin{definition}[$P$-graphs]
    For positive integers $\ell_1, \ell_2, \ldots, \ell_k$, the \emph{$P$-graph} $P(\ell_1, \ell_2, \ldots, \ell_k)$ is constructed by starting from path graphs of lengths $\ell_1, \ell_2, \ldots, \ell_k$ and then identifying all their start nodes and identifying all their end nodes, see Figure \ref{p-example} for an example. Note that at most one of $\ell_1, \ell_2, \ldots, \ell_k$ can be~1 (as otherwise we would not obtain a simple graph).
    We often abbreviate $P(\ell_1, \ell_2, \ldots, \ell_k, \underbrace{2, 2, \ldots, 2}_{\gamma\text{ times}})$ as $P(\ell_1, \ell_2, \ldots, \ell_k, \gamma \times 2)$.
	
	To unify notation, we also define $P(\ell_1, \ell_2, 0 \times 2) := P(\ell_1, \ell_2)$ and $P(\ell_1, 0, 0 \times 2) := P(\ell_1)$.
\begin{figure}
\begin{center}
    \includegraphics[scale=0.8]{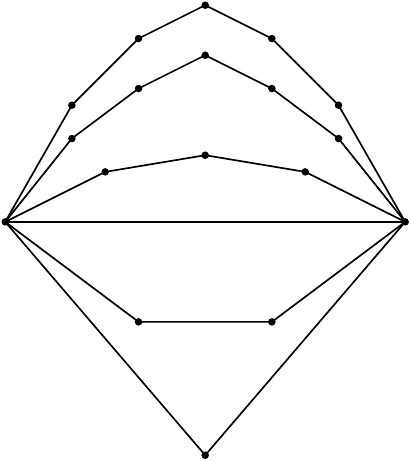}
\end{center}

\caption{A drawing of the $P$-graph $P(6, 6, 4, 1, 3, 2)$.}
\label{p-example}
\end{figure}
\end{definition}

For example, the $P$-graph with one path $P(\ell_1)$ is the path with $\ell_1$ edges, the $P$-graph with two paths $P(\ell_1, \ell_2)$ is the cycle $C_{\ell_1 + \ell_2}$, and $P(k \times 2)$ is the bipartite clique $K_{k,2}$.

\begin{definition}[Simple Patterns] \label{def:family}
    Let $\family := \{ P(\alpha,\beta,\gamma \times 2) : (\alpha,\beta,\gamma) \in \Tfamily \}$ be the $P$-graphs corresponding to the set of triples $\Tfamily \subseteq \mathbb{N}^3$ defined by 
	\begin{align*}
	\Tfamily := &\{ (\alpha, \beta, \gamma) : \alpha \ge \beta \ge 2, \gamma \ge 1 \}  \\ 
	&\cup \{ (k-2,2,0) : k \ge 4 \} & \text{($P(k-2,2,0 \times 2) = P(k-2,2)$ is the $k$-cycle $C_k$)} \\
	&\cup \{(1,0,0)\} & \text{($P(1,0,0 \times 2) = P(1)$ is the single edge $K_2$)} \\
	&\cup \{(2,1,0)\}. & \text{($P(2,1,0 \times 2) = P(2,1)$ is the triangle $K_3$)}
	\end{align*}

\end{definition}

We are now ready to state our main result. We start by defining a function that governs the time complexity of simple patterns.

\begin{definition}[Savings Function] \label{def:funcc}
    For any $(\alpha, \beta, \gamma) \in \Tfamily$ we define $\funcc(\alpha, \beta, \gamma) = $

    \[             
            \begin{cases}
                2 \beta \gamma + \frac{\alpha \beta}{2} - \frac{\beta^2}{2} + \frac{\beta}{2} - \frac{\alpha}{2} - 2 \gamma + 2, & \text{if $\alpha + \beta$ even, $\alpha > \beta$, and $\beta < \gamma + 2$;}\\
                2 \beta \gamma + \frac{\alpha \beta}{2} - \frac{\beta^2}{2} + \frac{3\beta}{2} - \frac{\alpha}{2} - 3 \gamma, & \text{if $\alpha + \beta$ even, $3 \beta < \alpha + 6 \gamma + 8$, and ($\alpha = \beta$ or $\beta \ge \gamma + 2$);}\\
                2 \beta \gamma + \frac{\alpha \beta}{2} - \frac{\beta^2}{2} + 3\beta - \alpha - 6 \gamma - 4, & \text{if $\alpha + \beta$ even, $2 \beta \le \alpha + 4 \gamma + 6$, and $3 \beta \ge \alpha + 6 \gamma + 8$;}\\
                2 \beta \gamma + \frac{\alpha \beta}{2} - \frac{\beta^2}{2} + \beta - \frac{\alpha}{2} - 2 \gamma + \frac{3}{2}, & \text{if $\alpha + \beta$ odd and $\beta < 2 \gamma + 3$;}\\
                2 \beta \gamma + \frac{\alpha \beta}{2} - \frac{\beta^2}{2} + 2 \beta - \frac{\alpha}{2} - 4 \gamma - \frac{3}{2}, & \text{if $\alpha + \beta$ odd, $2 \beta \le \alpha + 4 \gamma + 6$, and $\beta \ge 2 \gamma + 3$;}\\
                2 \gamma^2 + \alpha \gamma + \frac{\alpha^2}{8} + \frac{\alpha}{2}, & \text{if $\alpha = 0 \bmod 4$ and $2 \beta > \alpha + 4 \gamma + 6$;}\\
                2 \gamma^2 + \alpha \gamma + \frac{\alpha^2}{8} + \frac{\alpha}{2} + \frac{3}{8}, & \text{if $\alpha$ odd and $2 \beta > \alpha + 4 \gamma + 6$;}\\
                2 \gamma^2 + \alpha \gamma + \frac{\alpha^2}{8} + \frac{\alpha}{2} + \frac{1}{2}, & \text{if $\alpha = 2 \bmod 4$ and $2 \beta > \alpha + 4 \gamma + 6$.}\\
            \end{cases}
    \]
\end{definition}

In \cref{lm:funcc-correctness} we verify that $\funcc$ is well-defined and in \cref{lm:funcc-is-positive-integer} we show that its values are positive integers.
The function values in the last three cases can be unified to $\lceil 2 \gamma^2 + \alpha \gamma + \frac{\alpha^2}{8} + \frac{\alpha}{2} \rceil$ (see Lemma~\ref{lm:funcc-last-cases}). We do not know comparable simplifications for the remaining cases.
An alternative way to define $\funcc$ via only three polynomials is described in \cref{lm:alternative-definition-of-funcc}.

The following is our main result. For background on the hypotheses, see \cref{sec:prelims-hypotheses}.

\begin{theorem}[Main Result: Characterization of Subquadratic Patterns] \label{full-characterization}
    Let $H$ be a pattern graph. If there exists $H' \in D(H)$ such that $H'$ is not isomorphic to any graph in $\family$, then 
    \begin{itemize}
        \item \Hminiso cannot be solved in time $\Oh(m^{2 - \varepsilon})$ for any $\varepsilon > 0$ assuming the \cliqueconj and \minplusconvconj;
        \item \Hlistiso cannot be solved in time $\tOh(m^{2 - \varepsilon} + t)$ for any $\varepsilon > 0$ assuming the \zerocliqueconj and \threesumconj; and
        \item \Henumiso cannot be solved in preprocessing time $\Oh(m^{2 - \varepsilon})$ and delay $\tOh(1)$ for any $\varepsilon > 0$ assuming the \zerocliqueconj and \threesumconj.
    \end{itemize}
    
    \noindent
    Otherwise, every graph in $D(H)$ is isomorphic to a graph in $\family$, so up to isomorphism we have $D(H) = \{P(\alpha_1, \beta_1, \gamma_1 \times 2), \ldots, P(\alpha_k, \beta_k, \gamma_k \times 2) \}$ for some $(\alpha_1, \beta_1, \gamma_1), \ldots, (\alpha_k, \beta_k, \gamma_k) \in \Tfamily$. Let $c(H) := 2 - 1 / \max_{i \in [k]} \funcc(\alpha_i, \beta_i, \gamma_i)$. Then for any $\varepsilon > 0$
    \begin{itemize}
        \item \Hminiso can be solved in time $\Oh(m^{c(H)})$, but not in time $\Oh(m^{c(H) - \varepsilon})$ assuming the \cliqueconj;
        \item \Hlistiso can be solved in time $\Oh(m^{c(H)} + t)$, but not in time $\tOh(m^{c(H) - \varepsilon} + t)$ assuming the \zerocliqueconj; and
        \item \Henumiso can be solved in preprocessing time $\Oh(m^{c(H)})$ and delay $\Oh(1)$, but not in preprocessing time $\Oh(m^{c(H) - \varepsilon})$ and delay $\tOh(1)$ assuming the \zerocliqueconj.
    \end{itemize}
\end{theorem}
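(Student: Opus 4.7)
The plan is to prove the theorem by separately handling the three problem variants (min-weight, listing, enumeration) under a unified structural scheme based on the clique separator decomposition $D(H)$ from Definition~\ref{def:clique-separator-decomp}. There are two orthogonal components. First, a compositionality result showing that the complexity of $H$ is determined by the hardest piece in $D(H)$, reducing both upper and lower bounds to understanding the individual pieces. Second, a tight analysis of each piece: in the subquadratic regime, the only relevant pieces are the $P$-graphs $P(\alpha,\beta,\gamma\times 2) \in \family$, for which we need to match $\funcc(\alpha,\beta,\gamma)$ on both sides.

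For the algorithmic upper bounds, I would first establish that clique separators admit an efficient join. If $C$ is a minimal clique separator of $H$ splitting it into $H_1 = H[S_1 \cup C]$ and $H_2 = H[S_2 \cup C]$, then solving the problem on $H$ reduces to solving it on $H_1$ and $H_2$: we iterate over all placements of the clique $C$ in the host graph and glue partial solutions. For listing this is a product of enumerators, for enumeration we use the standard product-of-enumerators construction to preserve constant delay, and for min-weight we precompute, for each placement of $C$, the minimum weight on each side independently. Iterating via Lemma~\ref{d-set-computation} reduces the problem to one independent subproblem per element of $D(H)$. The real algorithmic work then lies in the $P$-graph subroutine, where the new ingredient, hyper-degree splitting, comes in: for suitable subsets of pattern nodes, we partition host-graph tuples by whether their number of common neighbors exceeds a threshold. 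High hyper-degree tuples are few (the sum of hyper-degrees is bounded via $m$), so we can process them with an expensive per-tuple routine; low hyper-degree tuples are individually cheap. Balancing the thresholds across all internal paths of $P(\alpha,\beta,\gamma\times 2)$ yields exactly the savings $\funcc(\alpha,\beta,\gamma)$, and the eight cases in Definition~\ref{def:funcc} correspond to which of several competing balances dominates.

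For the conditional lower bounds, the outer dichotomy — whether some $H' \in D(H)$ lies outside $\family$ — is handled by a gadget reduction: given a hard $H'$-instance, embed it into an $H$-instance by padding the remaining parts of $D(H)$ with trivial single-edge gadgets attached at each clique separator. Since $H$ is fixed, this loses only constant factors in $m$, so a subquadratic algorithm for $H$ yields one for $H'$. It then suffices to show that every connected $H' \notin \family$ with no clique separator admits a quadratic lower bound, via reductions from MinClique/MinConv (for min-weight) and from ZeroClique/3SUM (for listing and enumeration), using the fact that such $H'$ must contain a $K_4$-like or $K_{k,2}$-like obstruction. For patterns in $\family$, the matching lower bound $m^{2-1/\funcc(\alpha,\beta,\gamma)-o(1)}$ comes from a balanced $k$-clique/zero-$k$-clique/3SUM reduction into the $P$-graph: we plant each clique node along one of the paths and count the combinatorial savings achievable from the path lengths $\alpha,\beta$ and the multiplicity $\gamma$, where the eight cases match the algorithmic case split by construction.

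The main obstacle is the tight analysis of the $P$-graph subroutine together with its matching lower bound. The hyper-degree thresholds interact nontrivially with the parities of $\alpha+\beta$ and the relative sizes of $\alpha$, $\beta$, and $\gamma$, which is exactly the reason why Definition~\ref{def:funcc} breaks into eight cases rather than admitting a single closed form. On the algorithmic side, one must verify that no further improvement is possible via smarter enumeration orderings; on the lower bound side, one must choose the clique-to-pattern assignment delicately so that the reduction's losses exactly cancel the algorithm's savings in each case. Getting both sides to agree on all eight expressions — including the alternative closed forms from Lemma~\ref{lm:funcc-last-cases} and Lemma~\ref{lm:alternative-definition-of-funcc} — is the core technical content and where I expect the bulk of the proof to live.
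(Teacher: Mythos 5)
Your high-level plan captures much of the paper's architecture: compositionality via clique separator decomposition (the paper's \cref{clique-separator-algo-combination} and \cref{tc-from-D}, with induced-minor reductions \cref{induced-minor,induced-minor-encoding} doing the padding you describe), a unified per-piece subroutine for the $P$-graphs driven by hyper-degree splitting, and clique-embedding-style lower bounds matching $\funcc$. A minor stylistic difference is that the paper routes all three variants (min-weight, listing, enumeration) through a single intermediate ``Encoding'' abstraction (\cref{all-reductions}), rather than treating them separately with a product-of-enumerators construction; this unification is what lets the upper bound be proved once.

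However, there is a genuine gap in your plan for the first branch of the theorem, the quadratic lower bound when some $H' \in D(H)$ is not isomorphic to a graph in $\family$. You suggest that such an $H'$ ``must contain a $K_4$-like or $K_{k,2}$-like obstruction,'' but $K_{k,2} = P(k\times 2)$ is itself in $\family$ and has complexity $2-1/k < 2$, so it cannot serve as a quadratic-hardness obstruction. More importantly, you miss that not every $P$-graph outside $\family$ can be handled via clique embeddings: the paper's crucial observation is that $\clemb(P(3,3,3)) = 2 - 1/9 < 2$, so the clique-embedding machinery (which underlies the MinClique/ZeroClique lower bounds) is provably insufficient for $P$-graphs with three or more paths of length at least three. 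This forces a \emph{separate} reduction from MinConv (for min-weight) and 3SUM (for listing/enumeration) targeting $P(3,3,3)$ specifically, which is exactly why the first bullet of the theorem invokes the \minplusconvconj in addition to the \cliqueconj, and the other bullets invoke the \threesumconj in addition to the \zerocliqueconj. Without recognizing this case and supplying the $P(3,3,3)$ reduction (the paper's \cref{all-p333-iso-clique-hardness}), the trichotomy in the first branch cannot be closed: the argument needs $\clemb(H') \ge 2$, or else $H'$ is a $P$-graph, and then either it has $\ge 3$ long paths (giving a $P(3,3,3)$ induced minor, hardness from 3SUM/MinConv) or it is forced to lie in $\family$, contradicting the hypothesis. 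You also omit the goggles-graph ($\goggr$) obstruction used alongside $K_4$ in \cref{small-clemb-implies-p} to certify that low-$\clemb$ clique-separator-free graphs are exactly the $P$-graphs.
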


Informally, \cref{full-characterization} shows that all subquadratic patterns arise by stitching together graphs in~$\family$ along cliques, in a tree-like fashion. Actually, stitching along nodes and edges is sufficient, because no graph in $\family$ contains a triangle as a proper subgraph (see \cref{no-large-separators-needed}). The overall complexity is the maximum of $2 - 1/\funcc(\alpha,\beta,\gamma)$ over all stitched parts $P(\alpha, \beta, \gamma \times 2)$.

Let us discuss some examples, and check that we rediscover known complexities.
\begin{itemize}
\item If $H$ is a tree, then $D(H)$ consists of all (subgraphs induced by) edges of $H$. So each graph in $D(H)$ is isomorphic to $K_2 = P(1) = P(1,0,0 \times 2) \in \family$. We have $\funcc(1,0,0) = 1$ by case 4 of \cref{def:funcc}, so \cref{full-characterization} yields complexity $c(H) = 2 - 1/1 = 1$.
\item The triangle $H = K_3$ has no clique separator, so $D(H) = \{H\} = \{P(2,1,0 \times 2)\}$. We have $\funcc(2,1,0) = 2$ from case 4 of \cref{def:funcc}, so \cref{full-characterization} yields complexity $c(K_3) = 1.5$.
\item The $k$-cycle $H = C_k$ for $k \ge 4$ also has no clique separator, so we have $D(H) = \{H\} = \{P(k-2,2,0 \times 2)\}$. For even $k$, case 2 of \cref{def:funcc} applies and yields $\funcc(k-2,2,0) = k/2$. For odd $k$, case 4 applies and yields $\funcc(k-2,2,0) = (k+1)/2$. In either case we have $\funcc(k-2,2,0) = \lceil k/2 \rceil$, so we obtain complexity $c(C_k) = 2 - 1/\lceil k/2 \rceil$. 
\item The biclique $H = K_{k,2} = P(2,2,(k-2) \times 2)$ for $k \ge 3$ has no clique separator, so $D(H) = \{H\}$. We have $\funcc(2,2,k-2) = k$ by case 2 of \cref{def:funcc}, so we obtain complexity $2 - 1/k$.
\item See \cref{clique-separator-decomposition-of-subquadratic-graphs} for another non-trivial example.
\end{itemize}

\begin{figure}
    \begin{center}
        \includegraphics[scale=0.65]{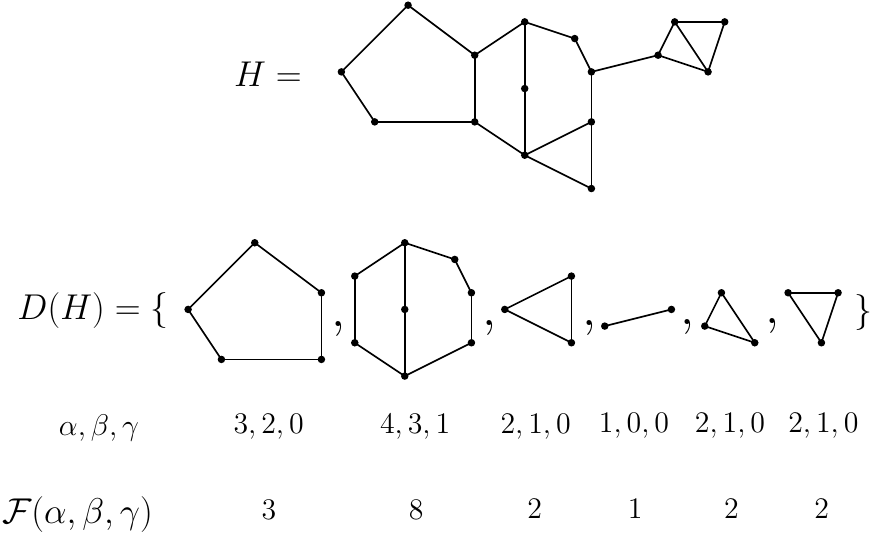}
    \end{center}

    \caption{Example application of our main result. All graphs in $D(H)$ are in $\family$, and the maximum value of $\funcc(\alpha,\beta,\gamma)$ over all parts is $8$, so we obtain complexity $c(H) = 2-1/8$.}
    \label{clique-separator-decomposition-of-subquadratic-graphs}
\end{figure}

One implication of our result is that all subquadratic complexities are of the form $2-1/k$ for $k \in \mathbb{N}$, e.g., there is no pattern with complexity strictly between $1.5$ and $1.66$. (This follows from the fact that $\funcc(\alpha, \beta, \gamma)$ is a positive integer for every $(\alpha,\beta,\gamma) \in \Tfamily$, see Lemma~\ref{lm:funcc-is-positive-integer}.)

\smallskip
It is easy to determine all graphs from $\family$ attaining $\funcc(\alpha,\beta,\gamma) = k$ for a given $k \in \mathbb{N}$. Indeed, we show that every $(\alpha, \beta, \gamma) \in \Tfamily$ satisfies $\alpha \ge \beta$ and $\funcc(\alpha,\beta,\gamma) \ge \alpha/2 + \gamma$ (see \cref{F-gets-big}), and thus $\funcc(\alpha,\beta,\gamma) = k$ implies $\beta \le \alpha \le 2k$ and $\gamma \le k$. Hence, only a finite number of candidates in~$\Tfamily$ can attain value $k$, and we can calculate the value $\funcc(\alpha,\beta,\gamma)$ for each candidate. This is tedious to do by hand, but simple for a computer search. 
This search yields the following exhaustive list of patterns in $\family$ attaining $\funcc(\alpha,\beta,\gamma) = 1,2,\ldots,10$: 
\begin{enumerate}[1:]
  \item $K_2 = P(1, 0, 0 \times 2)$,
  \item $K_3 = P(2, 1, 0 \times 2)$, $C_4 = P(2, 2, 0 \times 2)$,
  \item $C_5 = P(3, 2, 0 \times 2), C_6 = P(4, 2, 0 \times 2), K_{3,2} = P(2, 2, 1 \times 2)$,
  \item $C_7 = P(5, 2, 0 \times 2), C_8 = P(6, 2, 0 \times 2), K_{4,2} = P(2, 2, 2 \times 2)$,
  \item $C_9 = P(7, 2, 0 \times 2), C_{10} = P(8, 2, 0 \times 2), K_{5,2} = P(2, 2, 3 \times 2), P(3,2,1 \times 2), P(4, 2, 1 \times 2)$.
  \item $C_{11} = P(9, 2, 0 \times 2), C_{12} = P(10, 2, 0 \times 2), K_{6,2} = P(2, 2, 4 \times 2), P(5,2,1 \times 2), P(6, 2, 1 \times 2)$, $P(3, 3, 1 \times 2)$,
  \item $C_{13} = P(11, 2, 0 \times 2), C_{14} = P(12, 2, 0 \times 2), K_{7,2} = P(2, 2, 5 \times 2), P(7,2,1 \times 2), P(8, 2, 1 \times 2)$, $P(3, 2, 2 \times 2), P(4, 2, 2 \times 2)$,
  \item $C_{15} = P(13, 2, 0 \times 2), C_{16} = P(14, 2, 0 \times 2), K_{8,2} = P(2, 2, 6 \times 2), P(9,2,1 \times 2), P(10, 2, 1 \times 2)$, $P(5, 2, 2 \times 2), P(6, 2, 2 \times 2), P(4, 3, 1 \times 2), P(5, 3, 1 \times 2)$,
  \item $C_{17} = P(15, 2, 0 \times 2), C_{18} = P(16, 2, 0 \times 2), K_{9,2} = P(2, 2, 7 \times 2), P(11,2,1 \times 2), P(12, 2, 1 \times 2)$, $P(7, 2, 2 \times 2), P(8, 2, 2 \times 2), P(3, 2, 3 \times 2), P(4, 2, 3 \times 2), P(3, 3, 2 \times 2), P(4, 4, 1 \times 2)$,
  \item $C_{19} = P(17, 2, 0 \times 2), C_{20} = P(18, 2, 0 \times 2), K_{10,2} = P(2, 2, 8 \times 2), P(13,2,1 \times 2), P(14, 2, 1 \times 2)$, $P(9, 2, 2 \times 2), P(10, 2, 2 \times 2), P(5, 2, 3 \times 2), P(6, 2, 3 \times 2), P(6, 3, 1 \times 2), P(7, 3, 1 \times 2)$.
\end{enumerate}
It is well known that all patterns with complexity $1 = 2-1/1$ are trees, i.e., they arise by stitching together any number of $K_2$ graphs in a tree-like fashion.
Our results show that all patterns with exponent $2-1/2$ arise by stitching together any number of $K_2, K_3$, and $C_4$ graphs along nodes or edges, in a tree-like fashion, using at least one $K_3$ or $C_4$. Similarly, all patterns with exponent $2-1/3$ arise by stitching together any number of $K_2, K_3, C_4, C_5, C_6$, and $K_{3,2}$ graphs along nodes or edges, in a tree-like fashion, using at least one $C_5, C_6$, or $K_{3,2}$; etc.

\subsection{Discussion}

Our work is a vast generalization of prior results on min-weight, listing, and enumeration variants of subgraph isomorphism. 
It was well known that trees are exactly the patterns with linear complexity. 
We show the first results that extend this knowledge beyond the linear-time regime, characterizing the class of patterns with complexity $c$ for any $1 < c < 2$.

Our framework is a unified approach to min-weight, listing, and enumeration, which essentially allows us to rederive prior results in either of these settings for all three settings. However, even the union of prior results for these three settings was far from a complete understanding: While for the triangle, $k$-cycle, and single edge the complexity is long known, for the remaining graphs in $\family$ Joglekar and Ré~\cite{JoglekarR18} showed that their listing complexity is strictly below 2, but they did not achieve the optimal complexity (e.g., for $K_{k,2}$ their framework gives exponent $2 - 2^{1-k}$~\cite[Example 37]{JoglekarR18}, while the optimal exponent is $2 - 1/k$). We thus in particular had to (1) design improved algorithms for these patterns, (2) prove matching fine-grained lower bounds, (3) show that stitching together such graphs along nodes and edges keeps the complexity below 2, and (4) show that all subquadratic patterns can be constructed in this way.
Points (1) and (2) are particularly challenging, since the complexity of graphs in $\mathcal{P}$ is very complicated, as can be seen from the 8 cases in \cref{def:funcc}. 
Since we prove matching upper and conditional lower bounds, we show that this complicated complexity is not just an artifact of our algorithms, but is actually the true complexity of these patterns (conditional on standard hypotheses from fine-grained complexity).

\section{Proof Overview} \label{sec:proof-overview}

In this section we provide an overview of our tools and the structure of our proof. 
At the end of this section we will prove our main result, assuming the lemmas that we state throughout this section. In later sections it then remains to prove the lemmas stated here.

\subsection{Lower Bounds}
\label{sec:proof-overview-lower-bounds}

For our lower bounds, we adopt the framework of clique embeddings from \cite{FanKZ23}.

\begin{definition}[Clique Embedding, \cite{FanKZ23}]
    Let $H$ be a graph. We say that sets $A,B \subseteq V(H)$ \emph{touch} in $H$ if $A \cap B \neq \emptyset$ or there exists an edge $\{a, b\} \in E(H)$ with $a \in A$ and $b \in B$.
    
    Let $H$ be a pattern graph and $k \in \mathbb{N}$. An \emph{embedding} from a clique $K_k$ to $H$ is a function~$\psi$ that maps every node $u \in V(K_k)$ to a non-empty subset $\psi(u) \subseteq V(H)$ such that $H[\psi(u)]$ is connected for every $u \in V(K_k)$, and $\psi(u)$ and $\psi(v)$ touch in $H$ for all $u,v \in V(K_k)$.

    For an embedding $\psi$, the \emph{weak depth} of an edge $e = \{a, b\} \in E(H)$ is $d_{\psi}(e) := |\{ u \in V(K_k) : \{a,b\} \cap \psi(u) \ne \emptyset \}|$. The \emph{weak edge depth} of the embedding $\psi$ is $\wed(\psi) := \max_{e \in E(H)} d_{\psi}(e)$. The \emph{clique embedding power} of $H$ is $\clemb(H) := \sup_{k \ge 3} \max_{\psi \colon K_k \to H} \frac{k}{\wed(\psi)}$, where the maximum is taken over all clique embeddings $\psi$ from $K_k$ to $H$.
\end{definition}

When describing clique embeddings, for $\psi(u) = B$ we say that node $u$ is embedded into the set~$B$. We will see that it is often advantageous to embed multiple nodes of $K_k$ into the same set $B$.
Since the nodes of a clique are indistinguishable, an embedding $\psi$ is given by a multiset $\{ \psi(u) \mid u \in V(K_k) \}$. In particular, it is not important \emph{which} nodes are mapped to a set $B$, but \emph{how many} nodes are mapped to $B$. Let us see an example of a clique embedding:

\begin{figure}
\begin{center}
\begin{subfigure}[t]{0.45\textwidth}
    \includegraphics[width=\linewidth]{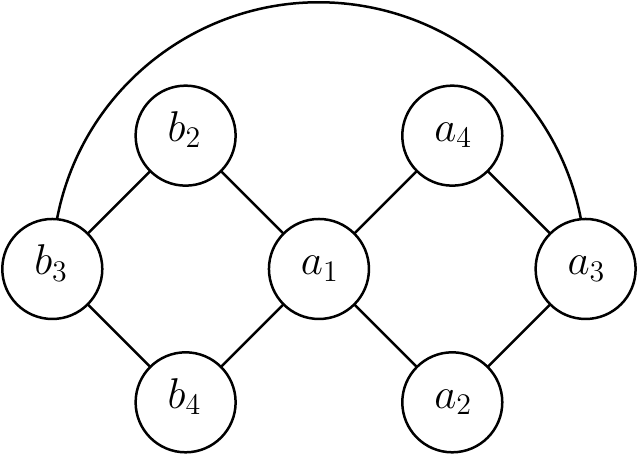}

\caption{Definition of the goggles graph.}
\label{goggles-picture}
\end{subfigure}
\hspace*{\fill}
\begin{subfigure}[t]{0.45\textwidth}
    \includegraphics[width=\linewidth]{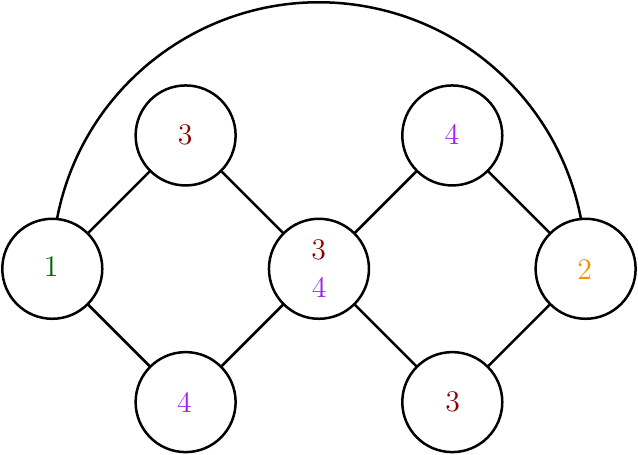}

\caption{Clique embedding for the goggles graph. Every number represents a subgraph into which one node is embedded.}
\label{goggles-lower-bound}
\end{subfigure}
\end{center}
\caption{Example of a clique embedding for the goggles graph.}
\end{figure}

\begin{lemma} \label{goggles-lb}
    We define the \emph{goggles graph} $\goggr$ by taking a $4$-cycle $(a_1, \ldots, a_4)$ and a $4$-cycle $(b_1, \ldots, b_4)$, identifying $a_1$ and $b_1$ and connecting $a_3$ and $b_3$ by an edge; see Figure~\ref{goggles-picture}.
    
    We have $\clemb(\goggr) \ge 2$.
\end{lemma}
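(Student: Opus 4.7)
The plan is to exhibit an explicit clique embedding $\psi \colon K_4 \to \goggr$ whose weak edge depth is $2$, which immediately yields $\clemb(\goggr) \ge 4/2 = 2$.

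Label the four vertices of $K_4$ as $u_1, u_2, u_3, u_4$ and set
\begin{align*}
    \psi(u_1) &= \{a_1, a_2, b_2\}, & \psi(u_2) &= \{a_1, a_4, b_4\}, \\
    \psi(u_3) &= \{a_3\}, & \psi(u_4) &= \{b_3\}.
\end{align*}
The first step is to verify that each image is a nonempty connected subgraph of $\goggr$: the two three-vertex sets induce the paths $(a_2, a_1, b_2)$ and $(a_4, a_1, b_4)$ through the hub $a_1$, while $\psi(u_3)$ and $\psi(u_4)$ are singletons.

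Next I would check the $\binom{4}{2} = 6$ pairwise touching conditions. The subsets $\psi(u_1)$ and $\psi(u_2)$ share $a_1$. Each of the four cross pairs $(\psi(u_i), \psi(u_j))$ with $i \in \{1,2\}$ and $j \in \{3,4\}$ is connected by a distinct edge incident to $a_3$ or $b_3$: namely $\{a_2, a_3\}$, $\{a_4, a_3\}$, $\{b_2, b_3\}$, $\{b_4, b_3\}$, respectively. Finally, $\psi(u_3)$ and $\psi(u_4)$ are joined by the bridge edge $\{a_3, b_3\}$.

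The main step is to compute the weak depth of each of the nine edges of $\goggr$ and check that each equals exactly $2$. The four edges incident to $a_1$ are touched precisely by $\psi(u_1)$ and $\psi(u_2)$, since these are the only subsets containing $a_1$ or any neighbor of $a_1$ appearing on such an edge. Each of the four edges $\{a_2, a_3\}, \{a_3, a_4\}, \{b_2, b_3\}, \{b_3, b_4\}$ is touched by exactly one of the two singletons (the one containing its degree-$3$ endpoint) together with exactly one of the two path subsets (the one containing its degree-$2$ endpoint). The bridge $\{a_3, b_3\}$ is touched only by $\psi(u_3)$ and $\psi(u_4)$. Hence $\wed(\psi) = 2$, giving $\clemb(\goggr) \ge 2$.

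The whole argument reduces to displaying and checking a single embedding on four clique nodes, so the only non-routine step is finding it. The key design idea is to force $a_1$ to appear in exactly two of the four images (bounding the depth of the four star edges at $a_1$ by $2$) while routing all six pairwise-touching relations through six distinct edges of $\goggr$, so that no edge is covered by more than two of the four subsets.
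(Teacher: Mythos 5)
Your proposal is correct and uses exactly the same clique embedding as the paper (one node each into $\{a_1,a_2,b_2\}$, $\{a_1,a_4,b_4\}$, $\{a_3\}$, and $\{b_3\}$); the only difference is that you spell out the touching and weak-edge-depth checks that the paper leaves to the reader and the figure.
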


\begin{proof}
	Consider the clique embedding $\psi$ from $K_4$ to $\goggr$ that embeds one node into each of the sets $\{b_3\}$, $\{a_3\}$, $\{b_2, a_1, a_2\}$, and $\{b_4, a_1, a_4\}$, see Figure \ref{goggles-lower-bound}. It can be easily checked that $\psi$ is indeed a clique embedding (i.e., all chosen subgraphs are connected and all pairs of chosen subgraphs touch). It can also be checked that every edge of $\goggr$ sees at most two chosen subgraphs, so $\psi$ has weak edge depth $2$. This yields $\clemb(\goggr) \ge \frac{4}{2} = 2$.

\end{proof}

Fan, Koutris, and Zhao \cite{FanKZ23} used clique embeddings to prove conditional lower bounds for min-weight subgraph finding (and generalizations to hypergraphs). Here we show that clique embeddings also yield analogous lower bounds for listing and enumeration.

\begin{restatable}[Lower Bounds via clemb, \cref{sec:lower-bounds-until-p-graphs:clemb}]{lemma}{allHisocliquehardness} \label{all-H-iso-clique-hardness}
    Let $H$ be a pattern and $\varepsilon > 0$. \Hminiso cannot be solved in time $\Oh(m^{\clemb(H) - \varepsilon})$ assuming the \cliqueconj. \Hlistiso cannot be solved in time $\tOh(m^{\clemb(H) - \varepsilon} + t)$, and \Henumiso cannot be solved in preprocessing time $\Oh(m^{\clemb(H) - \varepsilon})$ and delay $\tOh(1)$, assuming the \zerocliqueconj.
\end{restatable}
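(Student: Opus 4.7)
The \Hminiso statement is essentially the main payoff of the clique-embedding framework of \cite{FanKZ23}, and my plan is to reuse their construction and extend it to the listing and enumeration settings. Fix an integer $k \ge 3$ and a clique embedding $\psi \colon K_k \to H$ with $d := \wed(\psi)$ achieving $k/d$ close to $\clemb(H)$. Given an $n$-vertex input $\tilde G$ for the relevant $k$-clique variant, I would build a colored host graph $G$ for the $H$-subgraph problem by taking, for each $a \in V(H)$,
\[
V_a := \bigl\{(u, x) : u \in V(K_k),\; a \in \psi(u),\; x \in V(\tilde G)\bigr\},
\]
and for each $\{a,b\} \in E(H)$ putting into $E(G)$ exactly the pairs $\{(u,x),(u',y)\}$ with either $u = u'$ and $x = y$ (the \emph{consistency} edges, needed whenever $\{a,b\} \subseteq \psi(u)$) or $u \ne u'$ and $\{x,y\} \in E(\tilde G)$. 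By the very definition of $d_\psi(\{a,b\})$ the bipartite class between $V_a$ and $V_b$ has $\Oh(n^{d_\psi(\{a,b\})}) \le \Oh(n^d)$ edges, hence $m = \Oh(n^d)$; the connectedness of every $H[\psi(u)]$ and the touching property of $\psi$ are exactly what is required to put the $H$-subgraphs of $G$ into natural correspondence with the $k$-cliques of $\tilde G$.

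\textbf{MinH and zero-sum extensions.} For \Hminiso I would distribute the edge weights of $\tilde G$ onto $E(G)$ as in \cite{FanKZ23} so that the total weight of any $H$-subgraph equals the weight of the corresponding $k$-clique; a putative $\Oh(m^{\clemb(H)-\varepsilon})$ algorithm then solves min-weight $k$-clique in time $\Oh(n^{d(\clemb(H)-\varepsilon)})$, and choosing $\psi$ with $k/d$ within $\varepsilon/2$ of $\clemb(H)$ yields $d(\clemb(H)-\varepsilon) \le k - \Omega(\varepsilon)$, contradicting the \cliqueconj. For \Hlistiso and \Henumiso the identical construction is applied to a ZeroClique instance, with the same weight distribution so that a zero-sum $H$-subgraph of $G$ exists iff a zero-sum $k$-clique of $\tilde G$ exists. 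A listing algorithm in $\tOh(m^{c} + t)$ time (or an enumeration algorithm with preprocessing $\Oh(m^{c})$ and delay $\tOh(1)$) then decides ZeroClique by scanning the emitted $H$-subgraphs, checking each weight in $\Oh(1)$, and returning ``yes'' at the first zero hit or ``no'' upon exhaustion.

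\textbf{Main obstacle.} The delicate point is the $+t$ term: in the plain reduction $t$ equals the number of $k$-cliques in $\tilde G$ and may be as large as $\Theta(n^k)$, which already matches the conjectured \zerocliqueconj lower bound and would block a direct contradiction. I plan to resolve this in the standard way for output-sensitive lower bounds via $k$-clique-style hypotheses, following the template of P\v{a}tra\c{s}cu's 3SUM-to-triangle-listing argument \cite{Patrascu10} and its refinements \cite{KopelowitzPP16,WilliamsX20}: combine the clique-embedding construction with a random hashing / sub-sampling layer that shrinks both $m$ and $t$ simultaneously, run the reduced instance polynomially many times to recover the ZeroClique answer with high probability, and tune the sampling exponent so that whenever $c < \clemb(H)$ the total running time becomes $\Oh(n^{k - \Omega(\varepsilon)})$, contradicting the \zerocliqueconj. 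Verifying this balancing precisely and checking that the weight-propagation step carries over uniformly across all hashed sub-instances is the main technical work; everything else is an immediate adaptation of \cite{FanKZ23}.
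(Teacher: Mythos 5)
Your overall plan — reduce from MinWeight-$k$-Clique and ZeroClique via a clique embedding $\psi$, use the unweighted variant for listing, treat enumeration as a special case of listing, and handle the $+t$ term via P\v{a}tra\c{s}cu-style hashing — is the right plan and is in fact what the paper does. But the concrete construction you write down is not the one from \cite{FanKZ23}, and as stated it does not give a bijection.

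You define $V_a := \{(u,x) : u \in V(K_k),\ a \in \psi(u),\ x \in V(\tilde G)\}$, so each color class is a union of $|\psi^{-1}(a)|$ disjoint copies of $V(\tilde G)$ consisting of \emph{singletons}. Whenever some $a$ has $|\psi^{-1}(a)| \ge 2$ — which is exactly when the embedding buys anything beyond weak edge depth $2$ — an $H$-subgraph of $G$ picks only one pair $(u,x)$ per $a$ and thus fails to record a $\tilde G$-vertex for every $u \in \psi^{-1}(a)$, so the correspondence with $K_k$-cliques breaks (e.g.\ for a triangle with $\psi(4) = \{a_1,a_2,a_3\}$, the $H$-subgraph $((1,x_1),(2,x_2),(4,x_4))$ only encodes a triangle in $\tilde G$, not a $K_4$). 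Relatedly, your parts have size $\Oh(n)$, so the bipartite class between $V_a$ and $V_b$ has $\Oh(n^2)$ edges, not $\Oh(n^{d_\psi(\{a,b\})})$; you compute $m$ as if you had used tuples, but you did not. The correct construction (the paper's \cref{lm:clemb-reduction}) takes $V_a := \prod_{u \in \psi^{-1}(a)} V(\tilde G)$, i.e.\ each node of $V_a$ is a \emph{tuple} of $\tilde G$-vertices, and the consistency edges force equal tuple-coordinates along each connected set $\psi(u)$. That is what yields both $m = \Oh(n^{\wed(\psi)})$ and the weight-preserving bijection.

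On the $+t$ term: you correctly identify it as the main obstacle and propose to merge the clique-embedding with a hashing/subsampling layer on the $H$-side. The paper factors this more modularly via \cref{lm:zero-clique-to-clique-list}: first prove (once, at the $K_k$ level) that the \zerocliqueconj implies no $\tOh(n^{k-\eps}+t)$ algorithm can list $k$-cliques, using exactly the P\v{a}tra\c{s}cu/Vassilevska Williams–Xu hashing, and then apply the \emph{unweighted} embedding reduction from $K_k$-Listing to $H$-Listing, which preserves the output count up to constants. Your merged plan is plausible, but it would require re-deriving the range-bucketing and subsampling analysis on the weighted $H$-instances produced by the embedding; the paper's two-lemma factoring avoids that extra verification and is the cleaner route.
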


\begin{proof}[Proof Sketch]
	For min-weight this was shown in \cite[Proposition 12]{FanKZ23}.
	For listing, there is a known reduction from detecting a triangle of total edge weight zero to triangle listing~\cite{WilliamsX20}. We generalize this reduction to start from detecting a $k$-clique of total edge weight zero and end at $H$-Listing.
	For enumeration we use a simple reduction from listing.
\end{proof}

We next show that induced minors have smaller or equal time complexity for min-weight, listing, and enumeration. The same holds for the clique embedding power.

\begin{restatable}[Induced Minors for Subgraph Problems, \cref{sec:lower-bounds-until-p-graphs:induced-minor}]{lemma}{inducedminor} \label{induced-minor}
    Let $H$ and $H'$ be patterns such that $H'$ is an induced minor of~$H$. Then for any functions $T,P,D$,
    \begin{itemize}
        \item if \Hminiso can be solved in time $T(m)$, then \miniso{H'} can be solved in time $\Oh(T(m))$,
        \item if \Hlistiso can be solved in time $T(m, t)$, then \listiso{H'} can be solved in time $\Oh(T(m, t))$,
        \item if \Henumiso can be solved in preprocessing time $P(m)$ and delay $D(m)$, then \enumiso{H'} can be solved in preprocessing time $\Oh(P(m))$ and delay $\Oh(D(m))$.
    \end{itemize}
\end{restatable}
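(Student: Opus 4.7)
The plan is a single, standard reduction that works uniformly for all three problem variants: given an input host graph $G'$ of the $H'$-problem, build a host graph $G$ of size $\Oh(m)$ for the $H$-problem such that $H$-subgraphs of $G$ are in weight-preserving bijection with $H'$-subgraphs of $G'$, and every such $H$-subgraph can be decoded in $\Oh(1)$ time. Feeding $G$ into the assumed $H$-algorithm and decoding each produced solution on the fly then yields the claimed bounds for \miniso{H'}, \listiso{H'}, and \enumiso{H'}.

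Since $H'$ is an induced minor of $H$, I will fix pairwise disjoint connected branch sets $(B_u)_{u \in V(H')}$ with $B_u \subseteq V(H)$ such that for distinct $u,v \in V(H')$ the edge $\{u,v\}$ lies in $E(H')$ if and only if some edge of $H$ runs between $B_u$ and $B_v$; the vertices of $V(H)$ lying in no $B_u$ are called deleted. Given the input $G'$ with parts $V'_u$, I put, for each $a \in B_u$, a fresh disjoint copy $V_a := \{v^a : v \in V'_u\}$ of $V'_u$, and for each deleted $a \in V(H)$ a single dummy $V_a := \{\star_a\}$. The edges of $G$ are added per edge $\{a,b\} \in E(H)$ in three flavors: identity edges $\{v^a,v^b\}$ for $v \in V'_u$ when $a,b \in B_u$; copies $\{(v')^a,(v'')^b\}$ of every $G'$-edge $\{v',v''\}$ with $v' \in V'_u$, $v'' \in V'_v$ when $a \in B_u$, $b \in B_v$ with $u \ne v$ (the induced-minor property forces $\{u,v\} \in E(H')$); and edges linking dummies to everything they must touch when at least one endpoint of $\{a,b\}$ is deleted. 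Weights are $0$ on identity, dummy, and all but one copy of each cross-bag edge: for every $\{u,v\} \in E(H')$ I fix a representative $H$-edge $\{a_{uv},b_{uv}\}$ crossing $B_u$ and $B_v$ and place the original weight $w_{G'}(v',v'')$ only on the representative copy of each $\{v',v''\}$, leaving all parallel copies at weight $0$.

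The key verification is that $H$-subgraphs of $G$ are exactly the $H'$-subgraphs of $G'$ in disguise. Because $H[B_u]$ is connected and identity edges are the only intra-bag edges, the choice within each $B_u$ is forced to correspond to a single element $v_u \in V'_u$, and the cross-bag edges then hold precisely when $\{v_u,v_v\} \in E(G')$ for every $\{u,v\} \in E(H')$; the representative device makes total weights agree exactly. A size count, using $m = \Omega(n)$ from the preliminaries and the fact that $H$ has constant size, shows $|V(G)| + |E(G)| = \Oh(m)$. Decoding is trivial: given the chosen tuple $(x_a)_{a \in V(H)}$ from $V(G)$, the underlying $H'$-subgraph is read off in $\Oh(1)$ time by projecting each bag to its single underlying $v_u$.

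The main obstacle I expect is the weight bookkeeping: naively, if several $H$-edges straddle the same pair $\{B_u,B_v\}$, each $G'$-edge between $V'_u$ and $V'_v$ would contribute its weight multiple times to an $H$-subgraph, breaking the bijection in the \emph{weighted} version; the representative-edge trick is precisely what fixes this. A secondary subtlety is that host graphs are partitioned over all of $V(H)$ while an induced minor only distinguishes vertices inside some $B_u$; the dummy vertices $\star_a$ absorb the remaining vertices of $H$ without spawning spurious solutions.
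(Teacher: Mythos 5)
Your proposal is correct and follows essentially the same route as the paper: the paper's Lemma~\ref{induced-minor-reduction} builds the same host graph $G$ (branch sets become disjoint copies, identity edges inside bags, cross-bag copies of $G'$-edges, a single dummy per deleted vertex), uses exactly the same representative-edge device $\eemb$ to place each weight on one crossing $H$-edge, and then invokes the constant-time bijection to derive all three statements of Lemma~\ref{induced-minor}.
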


\begin{restatable}[Induced Minors for clemb, \cref{sec:lower-bounds-until-p-graphs:induced-minor}]{lemma}{clembofinducedminors}  \label{clemb-of-induced-minors}
    Let $H$ and $H'$ be patterns such that $H'$ is an induced minor of~$H$. Then $\clemb(H) \ge \clemb(H')$.
\end{restatable}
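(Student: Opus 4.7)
The plan is to start with an arbitrary clique embedding $\psi' \colon K_k \to H'$ and lift it through the bags of an induced minor model to a clique embedding $\psi \colon K_k \to H$ satisfying $\wed(\psi) \le \wed(\psi')$. I would first invoke the standard branch-set characterization: since $H'$ is an induced minor of $H$, there exist pairwise disjoint subsets $\{S_{v'}\}_{v' \in V(H')}$ of $V(H)$ such that each $H[S_{v'}]$ is connected and, for all distinct $u',v' \in V(H')$, the pair $\{u',v'\}$ is an edge of $H'$ if and only if there is an edge of $H$ between $S_{u'}$ and $S_{v'}$. This follows by viewing an induced minor as the result of deleting a set of vertices and contracting disjoint connected subsets of the remainder.

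Next I would define the lifted embedding by $\psi(u) := \bigcup_{v' \in \psi'(u)} S_{v'}$ and verify the two axioms of a clique embedding. Connectedness of $H[\psi(u)]$ follows because each bag $S_{v'}$ is connected, $H'[\psi'(u)]$ is connected by hypothesis, and every edge of $H'[\psi'(u)]$ lifts to at least one inter-bag edge in $H$ by the model. The touching condition for $\psi(u)$ and $\psi(v)$ in $H$ follows from the touching of $\psi'(u)$ and $\psi'(v)$ in $H'$: a shared vertex $w'$ gives a shared bag $S_{w'}$, while an edge of $H'$ with endpoints in $\psi'(u)$ and $\psi'(v)$ lifts to an edge of $H$ between the corresponding bags.

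Then I would bound $\wed(\psi) \le \wed(\psi')$ by a brief case analysis on an edge $e = \{a,b\} \in E(H)$. If $a \in S_{v'_a}$ and $b \in S_{v'_b}$ with $v'_a \ne v'_b$, then by the model $e' := \{v'_a, v'_b\}$ lies in $E(H')$, and a node $u \in V(K_k)$ contributes to $d_\psi(e)$ precisely when $\psi'(u)$ meets $\{v'_a, v'_b\}$, giving $d_\psi(e) = d_{\psi'}(e') \le \wed(\psi')$. Otherwise $\{a,b\}$ meets at most one bag $S_{v'}$, and $u$ contributes to $d_\psi(e)$ only if $v' \in \psi'(u)$; I would charge this count to any edge $e'$ of $H'$ incident to $v'$, obtaining $d_\psi(e) \le d_{\psi'}(e') \le \wed(\psi')$. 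Combining both cases yields $\wed(\psi) \le \wed(\psi')$, hence $k/\wed(\psi) \ge k/\wed(\psi')$, and taking the supremum over $k \ge 3$ and $\psi'$ gives $\clemb(H) \ge \clemb(H')$.

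The main subtlety is the second case of the depth bound: when both endpoints of $e$ lie in a single bag, no edge of $H'$ directly witnesses $e$, so I must charge $e$ to an auxiliary edge of $H'$ incident to $v'$. This is the only place where I rely on the standing assumption that patterns are connected with at least two vertices (which guarantees that $v'$ has a neighbor in $H'$), and it is what makes the lifting argument go through cleanly.
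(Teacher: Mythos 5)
Your proposal is correct and follows essentially the same approach as the paper: lift the embedding via the branch sets $\psi(u) := \bigcup_{v' \in \psi'(u)} S_{v'}$, verify connectedness and touching from the model, and for the depth bound reduce each $e \in E(H)$ to some witnessing $e' \in E(H')$, treating separately the case where the endpoints lie in two distinct branch sets versus the case where $e$ meets at most one branch set (charging to an arbitrary incident edge of $H'$, using that $H'$ is a connected pattern on at least two vertices). The paper's proof phrases the cases via $\varphi^{-1}(a), \varphi^{-1}(b) \in V(H') \cup \{\bot\}$ rather than your phrasing of "meets at most one bag," but the underlying argument is identical.
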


\smallskip
This means that any lower bound for an induced minor $H'$ also gives a lower bound for~$H$. Since the decomposition $D(H)$ consists of induced minors of $H$, any lower bound for a graph in $D(H)$ yields a lower bound for $H$. 
It follows that it suffices to prove lower bounds for graphs without clique separators, since all graphs in $D(H)$ have this property.

We next show that graphs without clique separators that are not a $P$-graph have clique embedding power at least 2, and thus complexity at least 2.

\begin{restatable}[\cref{sec:lower-bounds-until-p-graphs:lst}]{lemma}{smallclembimpliesp} \label{small-clemb-implies-p}
	Let $H$ be a pattern without clique separator. Then $\clemb(H) \ge 2$ or $H$ is a $P$-graph (i.e., $H$ is isomorphic to $P(\ell_1,\ldots,\ell_k)$ for some integers $\ell_1,\ldots,\ell_k \ge 1$).
\end{restatable}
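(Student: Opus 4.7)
The plan is to exploit the clique-separator hypothesis to force useful structural features on $H$, and then either invoke \cref{clemb-of-induced-minors} or give a direct $K_4$-embedding. First I reduce to the non-trivial setting. Since neither the empty set nor a singleton is a clique separator, $H$ is $2$-connected. If $H$ is itself a cycle, then $H = P(\ell_1, \ell_2)$ is already a $P$-graph and we are done. Otherwise $H$ has at least one vertex of degree $\ge 3$, and I pass to the multigraph $H^{*}$ obtained from $H$ by exhaustively suppressing every degree-$2$ vertex (replacing it by an edge between its two neighbors, allowing parallel edges and loops). The key observation is that $H$ is a $P$-graph precisely when $H^{*}$ has at most two vertices -- one vertex with a loop when $H$ is a cycle, or two vertices with $k \ge 2$ parallel edges when $H = P(\ell_1, \ldots, \ell_k)$. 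Hence under our hypotheses, $H^{*}$ has at least three vertices, each of degree $\ge 3$, so $H$ has at least three branch vertices.

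I then split into two cases. \textbf{Case~1: $H$ contains a subdivision of $K_4$.} Contracting each subdivided path to a single edge and deleting all other vertices of $H$ exhibits $K_4$ as an induced minor of $H$. The identity embedding $K_4 \to K_4$ (each vertex mapped to a distinct singleton) is a valid clique embedding with $k=4$ and $\wed = 2$, so $\clemb(K_4) \ge 2$. By \cref{clemb-of-induced-minors}, $\clemb(H) \ge \clemb(K_4) \ge 2$.

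\textbf{Case~2: $H$ contains no $K_4$-subdivision.} By Duffin's classical theorem, a $2$-connected graph with no $K_4$-subdivision is series-parallel, so $H$ admits an SPQR-style decomposition around a pair of poles $s,t$ as a parallel composition of branches $B_1,\ldots,B_r$. Because $H$ is not a $P$-graph, at least one branch $B_i$ is not a simple $s$-$t$ path; recursing into the nested structure if necessary, I may assume $B_i$ is a series composition through an internal vertex $m$ that separates $s$ from $t$ inside $B_i$. Using the assumption that no clique of size $\le 2$ is a separator of $H$, I argue that on each side of $m$ inside $B_i$ there must be at least one $(s,m)$- or $(m,t)$-path that is not a single edge (otherwise $\{s, m\}$ or $\{m, t\}$ would be an edge clique separator of $H$, disconnecting the other side of $m$ from the remaining $B_j$). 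This gives enough room to construct two connected sets $A_1, A_2 \subseteq V(H) \setminus \{s,t\}$, both containing $m$, which together cover $V(H) \setminus \{s,t\}$ and which split each branch $B_j$ (and each side of $m$ inside $B_i$) across the two sets. Then $\psi(u_1) = \{s\}$, $\psi(u_2) = \{t\}$, $\psi(u_3) = A_1$, $\psi(u_4) = A_2$ is a valid $K_4$-embedding (all four sets are connected, and all pairs touch through an edge of $H$), and an edge-by-edge check shows every edge of $H$ lies in at most two of the four sets, so $\wed(\psi) \le 2$ and $\clemb(H) \ge 2$.

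The main obstacle is Case~2, which requires orchestrating the no-clique-separator hypothesis against the SPQR decomposition to rule out the degenerate configurations where one side of $m$ is too thin to support $A_1, A_2$. This is handled by a careful induction down the SPQR-tree of $H$: if the non-path branch $B_i$ itself has a nested parallel substructure, the same argument is applied to that sub-block. Case~1, by contrast, is essentially immediate from \cref{clemb-of-induced-minors} once one observes that $K_4$ itself has clique embedding power at least $2$ via the identity embedding.
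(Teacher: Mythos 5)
Your Case~1 ($K_4$-subdivision $\Rightarrow$ induced $K_4$ minor $\Rightarrow$ $\clemb(H)\ge 2$) matches the paper. Case~2, however, has a genuine gap: the proposed clique embedding $\psi(u_1)=\{s\}$, $\psi(u_2)=\{t\}$, $\psi(u_3)=A_1$, $\psi(u_4)=A_2$ is only a valid $K_4$-embedding if $\psi(u_1)$ and $\psi(u_2)$ touch, and since you insist $A_1,A_2\subseteq V(H)\setminus\{s,t\}$, this forces $\{s,t\}\in E(H)$. You never argue that the poles of your chosen parallel composition are adjacent, and in general they are not. Concretely, take the goggles graph $\goggr$ (two $4$-cycles sharing a vertex, with an edge joining the two opposite degree-$3$ vertices) and subdivide its ``bridge'' edge once: the result is $2$-connected, series-parallel, has no clique separator, is not a $P$-graph, and \emph{no} choice of series-parallel poles gives an adjacent pair. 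Your Case~2 construction cannot produce a $K_4$-embedding for this graph in the form you describe (a valid one does exist, but it needs $\psi(u_2)$ to be a two-element set containing a neighbor of the other pole, not a singleton). There is a second, smaller soundness issue even when $s,t$ happen to be adjacent: if $m$ (which lies in $A_1\cap A_2$) is itself adjacent to $s$, the edge $\{s,m\}$ has weak depth $3$; your clique-separator argument about ``one side of $m$ not being a single edge'' does not rule this out. The ``induction down the SPQR-tree'' you invoke does not repair either problem, since the poles of nested parallel compositions are generically non-adjacent as well.

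The paper avoids these issues by introducing the goggles graph $\goggr$ as an intermediate pattern: it proves $\clemb(\goggr)\ge 2$ directly (via exactly the kind of $\{s\},\{t\},A_1,A_2$ embedding you describe, which works for $\goggr$ because there $s$ and $t$ \emph{are} adjacent), and then shows that every $2$-connected series-parallel non-$P$-graph without a clique separator has $\goggr$ as an induced minor, using a carefully chosen decomposition (maximize the number of topmost parallel parts, then maximize the $s$--$t$ distance). The induced-minor operation absorbs the missing $\{s,t\}$ edge by contracting a path, which is precisely what your direct construction cannot do. To salvage your approach you would either have to allow $\psi(u_1)$ or $\psi(u_2)$ to grow along a path connecting $s$ to a neighbor of $t$ (and then re-verify the weak-edge-depth bound, which is delicate), or follow the paper and route through an $\goggr$ induced minor.
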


\begin{proof}[Proof Sketch]
	If $H$ has an $\goggr$ induced minor then $\clemb(H) \ge 2$ (by Lemmas~\ref{goggles-lb} and~\ref{clemb-of-induced-minors}). Similarly, if $H$ has a $K_4$ minor then it is easy to see that $\clemb(H) \ge 2$. For a graph that has no clique separator and thus no cut vertex, it is known that $H$ has no $K_4$ minor if and only if $H$ has treewidth at most 2 if and only if $H$ is a series-parallel graph. Finally, we need to show that a series-parallel graph without a clique separator and an $\goggr$ induced minor is a $P$-graph.
\end{proof}

Surprisingly, not all $P$-graphs admit truly subquadratic algorithms, but lower bounds based on clique embeddings are not sufficient to show this: We have $\clemb(P(3, 3, 3)) = 2 - 1/9$ according to~\cite[Section 7.1]{FanKZ23}, but \listiso{P(3, 3, 3)} cannot be solved in time $\tOh(m^{2-\eps} + t)$ for any $\eps > 0$ assuming the 3XOR hypothesis according to~\cite[Lemma 35]{JoglekarR18}. 

Here we replace the 3XOR hypothesis by the more standard \threesumconj, and we prove analogous lower bounds for \miniso{P(3, 3, 3)} and \enumiso{P(3, 3, 3)}.

\begin{restatable}[Lower bounds for P(3,3,3), \cref{sec:lower-bounds-until-p-graphs:p333}]{lemma}{allpthreethreethreeisocliquehardness} \label{all-p333-iso-clique-hardness}
    Let $\varepsilon > 0$. \miniso{P(3, 3, 3)} cannot be solved in time $\Oh(m^{2 - \varepsilon})$ assuming the \minplusconvconj. \listiso{P(3, 3, 3)} cannot be solved in time $\tOh(m^{2 - \varepsilon} + t)$, and \enumiso{P(3, 3, 3)} cannot be solved in preprocessing time $\Oh(m^{2 - \varepsilon})$ and delay $\tOh(1)$, assuming the \threesumconj.
\end{restatable}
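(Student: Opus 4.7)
The plan is to prove the three bullets via three reductions from \minplusverif and 3SUM to the respective $P(3,3,3)$ variant, all based on a common graph construction. The main challenge is to encode the bilinear constraint $k = i + j$ within the rigid structure of $P(3,3,3)$ using only $\tOh(N)$ edges.

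For the min-weight bullet, I will reduce \minplusverif (equivalent to MinConv under fine-grained reductions) to \miniso{P(3,3,3)}. Given arrays $A, B, C$ of length $N$, I construct a colored host graph $G$ with $\tOh(N)$ edges such that the minimum weight of any $P(3,3,3)$-subgraph equals $\min_{i, j}\bigl(A[i] + B[j] - C[i+j]\bigr)$. The pattern $P(3,3,3)$ has endpoints $s, t$ and three parallel length-3 paths with internal vertices $u_A, v_A, u_B, v_B, u_C, v_C$, so I set $V_s = V_t = [N]$ (encoding the free indices $i, j$), and arrange paths $A$ and $B$ to contribute weights $A[i]$ and $B[j]$ via singleton internal vertices and a single weighted edge each. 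Path $C$ is designed to contribute $-C[i+j]$, ensuring the $P(3,3,3)$ subgraph's total weight is exactly $A[i] + B[j] - C[i+j]$. A hypothetical $\Oh(m^{2-\eps})$-time algorithm would then decide MinConv verification in $\Oh(N^{2-\eps})$, contradicting the \minplusconvconj.

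For the listing bullet, I will reduce from 3SUM-Convolution (equivalent to 3SUM) to \listiso{P(3,3,3)} using the analogous construction, where $P(3,3,3)$-subgraphs correspond to valid triples $(i, j, k)$ with $i + j = k$ and $A[i] + B[j] = C[k]$. To ensure that $t$ does not dominate the $\tOh(m^{2-\eps} + t)$ bound, I use hashing techniques in the style of P{\v{a}}tra{\c{s}}cu~\cite{Patrascu10} to reduce to sub-instances with few solutions. For the enumeration bullet, I invoke the standard reduction: any \enumH algorithm with preprocessing $\Oh(m^{2-\eps})$ and delay $\tOh(1)$ can be run until termination to obtain a listing algorithm in time $\tOh(m^{2-\eps} + t)$, which is ruled out by the listing bullet.

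The main obstacle will be the design of path $C$'s gadget, which must encode the bilinear relation $k = i + j$ together with $-C[k]$ (or the equation $A[i] + B[j] = C[k]$ in the 3SUM version) using only $\tOh(N)$ edges, whereas the naive construction that places weight $-C[i+j]$ on each $(u_C, v_C)$ edge uses $\Theta(N^2)$ edges. I plan to overcome this by letting the internal vertices $V_{u_C}, V_{v_C}$ carry indices and by placing large linear penalties on path $C$'s edges that force $k = i + j$ at the optimum, with matching correction terms placed on paths $A$ and $B$ that cancel the penalties for valid configurations $(i, j)$. For the 3SUM/listing version the analogous gadget is existence-based: only those edges consistent with the additive constraint and the equation $A[i] + B[j] = C[k]$ are included, so that the $P(3,3,3)$-subgraphs of $G$ stand in bijection with the solutions of the 3SUM-Convolution instance.
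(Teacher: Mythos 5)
You correctly identify the high-level structure (reduce \minplusverif to \miniso{P(3,3,3)} for the first bullet, a 3SUM-style listing reduction for the second, and run the enumeration algorithm to exhaustion for the third), and you correctly diagnose the crux: encoding the bilinear constraint $k = i+j$ and the weight $-C[k]$ on the third path of $P(3,3,3)$ with only $\tOh(N)$ edges. However, your proposed resolution does not close this gap. With $V_s$ carrying $i$ and $V_t$ carrying $j$, any penalty placed on the third path decomposes across edges as $f(i,k) + g(k,j)$ (where $k$ is the value carried at the internal vertices). Either these functions are expressed via dense edge sets, which reintroduces $\Theta(N^2)$ edges, or, to keep the edge count linear, the penalty degenerates to something additively separable like $M(k-i) - Mj$. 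A signed quantity like $M(k-i-j)$ is not a penalty: it can go arbitrarily negative, so the minimizer simply chooses $k$ to make the ``penalty'' as negative as possible rather than to satisfy $k=i+j$. Unsigned penalties of the form $M\lvert k - i - j\rvert$ cannot be split into a sum of an $(i,k)$-term and a $(k,j)$-term without knowing both $i$ and $j$ at the same edge. You flag this step as ``the main obstacle'' but the proposed fix does not overcome it, so the core of the proof is missing.

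The paper takes a genuinely different route. It never places $i$ and $j$ at the two endpoints of $P(3,3,3)$. Instead it uses a square-root decomposition: every index is written as $i = i_H \cdot s + i_L$ with $s = \lceil\sqrt{n}\rceil$, the endpoint parts $V_{u_H}$ and $V_{u_L}$ each carry a \emph{pair} of high/low digits $(i_H, j_H)$ resp.\ $(i_L, j_L)$ (so each has $\Oh(n)$ nodes), and every internal path vertex carries a single digit and hence has only $\Oh(\sqrt{n})$ nodes. The edges out of $V_{u_L}$ read off $i_L$, $j_L$, and $i_L + j_L$ combinatorially (each node of $V_{u_L}$ has $\Oh(1)$ neighbors in each adjacent part), and the array weights $A[i]$, $B[j]$, $-C[k]$ sit on the middle edges of the three paths, which connect two $\Oh(\sqrt n)$-sized parts and therefore contribute only $\Oh(n)$ edges. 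This gives a linear-size host graph and a clean bijection with valid triples, with no penalties needed. For the listing bullet the paper does not re-derive a hashing argument but instead invokes a 3SUM-listing-over-small-universe lemma (Lemma 4.6 of Fischer, Kopelowitz, Porat) with a carefully chosen universe size $N^\mu$ that trades the host-graph size against the solution count so that the claimed algorithm is genuinely subquadratic. Both reductions reuse the same square-root gadget; that gadget is the piece you would need to supply to make your proof go through.
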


Thus, all $P$-graphs that have at least three paths of length at least three do not admit truly subquadratic algorithms. 
We will argue that every $P$-graph that does not have at least three paths of length at least three and has no clique separator is one of our simple patterns in $\family$ (recall \cref{def:family}). All of these patterns indeed have complexity below 2.

It remains to prove lower bounds for graphs in $\family$. Here clique embeddings are sufficient.

\begin{restatable}[\cref{sec:p-graph-lower-bounds}]{lemma}{pgraphlowerbound} \label{lm:p-graph-lower-bound}
    For any $(\alpha, \beta, \gamma) \in \Tfamily$, $\clemb(P(\alpha, \beta, \gamma \times 2)) \ge 2 - 1 / \funcc(\alpha, \beta, \gamma)$.
\end{restatable}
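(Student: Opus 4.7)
The plan is to prove the lemma by constructing explicit clique embeddings that match the value of $\funcc$ in each of its eight cases. Write $H = P(\alpha, \beta, \gamma \times 2)$, fix its two poles $s, t$, label the $\alpha$-path as $s = a_0, a_1, \ldots, a_\alpha = t$, the $\beta$-path as $s = b_0, \ldots, b_\beta = t$, and the $\gamma$ length-2 paths as $s - c_i - t$ for $i \in [\gamma]$. For each $(\alpha, \beta, \gamma) \in \Tfamily$ we must exhibit $\psi \colon K_k \to H$ with $k/\wed(\psi) \ge 2 - 1/\funcc(\alpha, \beta, \gamma)$, which we achieve with $k = 2f - 1$ and $\wed(\psi) \le f$ where $f = \funcc(\alpha, \beta, \gamma)$, since then $k/\wed(\psi) \ge (2f-1)/f = 2 - 1/f$.

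All constructions follow a common template: a small number of \emph{pole stars}, i.e., subsets of the form $\{s\} \cup S$ or $\{t\} \cup T$ containing one pole plus a few adjacent interior vertices, together with a larger number of \emph{path intervals}, i.e., short connected arcs along one of the two long paths or the singletons $\{c_i\}$ on the length-2 paths. Pole stars automatically touch each other through the shared pole and touch any interval whose endpoint is adjacent to that pole; adjacent or overlapping intervals touch each other directly. Thus designing a valid embedding reduces to choosing how to slice each of the two long paths into intervals, how many singletons $\{c_i\}$ to include, and how many pole stars to place at each pole. The weak edge depth is then governed by a few candidate worst-case edges: the pole-incident edges $\{s, a_1\}, \{s, b_1\}, \{s, c_i\}$ and their mirror images at $t$, together with the interior edges of the longest interval on each path.

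The eight subcases of $\funcc$ correspond to three qualitative regimes. When $2\beta > \alpha + 4\gamma + 6$ (cases 6--8), the $\beta$-path is too short to contribute, and the optimal scheme uses only $\alpha$-path intervals paired with the $\gamma$ singletons $\{c_i\}$ and a logarithmic number of pole stars; the three subcases split by $\alpha \bmod 4$, reflecting how intervals of length $\approx \alpha/4$ tile $[0, \alpha]$. In the remaining cases both long paths contribute, and intervals along both are coupled via pole stars; the distinction between $\alpha + \beta$ even and odd, and the threshold conditions $\beta < \gamma + 2$, $3\beta < \alpha + 6\gamma + 8$, and $2\beta \le \alpha + 4\gamma + 6$, will emerge as exactly those boundaries at which the dominant worst-case edge changes identity, forcing one interval boundary to shift by one vertex in order to keep $\wed \le f$ while maximizing $k$.

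The main obstacle is that the target bound is an exact integer, so every interval length and every pole-star size must be pinned down precisely; an off-by-one in any single interval shifts $\wed$ by one and ruins the bound. We therefore plan to present a single parameterized construction with a handful of tunable integer parameters (the number of intervals on each path, their common length rounded up or down, and the number of pole stars at each end), and match each case of $\funcc$ by solving a small integer optimization that maximizes $k$ subject to $\wed \le f$. The delicate step is verifying that the case predicates in \cref{def:funcc} are precisely the conditions under which the corresponding parameter choice is optimal; once this alignment is established, the claimed inequality $\clemb(P(\alpha, \beta, \gamma \times 2)) \ge 2 - 1/\funcc(\alpha, \beta, \gamma)$ follows directly from the count $k/\wed(\psi) \ge (2f-1)/f$.
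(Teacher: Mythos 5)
Your high-level approach matches the paper's: both prove the lemma by exhibiting, for each case of $\funcc$, an explicit clique embedding $\psi \colon K_{2f-1} \to P(\alpha,\beta,\gamma\times 2)$ with $\wed(\psi) = f$, so that $\clemb \ge (2f-1)/f = 2 - 1/f$. However, as written your plan contains a concrete error that breaks the construction and a missing idea that the paper relies on, so it does not yet constitute a proof.

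The concrete error is the use of singletons $\{c_i\}$ as embedded sets. In $P(\alpha,\beta,\gamma\times 2)$ the vertices $c_1,\dots,c_\gamma$ are pairwise non-adjacent (each is a degree-2 vertex whose only neighbors are the two poles), so $\{c_i\}$ and $\{c_j\}$ for $i \ne j$ do not touch and cannot both appear in a clique embedding. The paper circumvents this by never embedding into a bare $\{c_i\}$; instead it embeds into paths that pass through $c_i$ and a pole, such as $c_i \edg a_0 \edg a_1 \edg \cdots$ or $\cdots \edg a_\alpha \edg c_i$, which all touch one another on the edges $\{a_0,c_i\}$ or $\{a_\alpha,c_i\}$. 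Fixing this forces a nontrivial redesign of your ``pole star plus intervals'' template, because these through-$c_i$ paths contribute to the weak edge depth of edges on the $\alpha$-path, not just at the poles. Separately, your remark about ``a logarithmic number of pole stars'' in the regime $2\beta > \alpha + 4\gamma + 6$ does not match the arithmetic: the clique has $2f-1$ vertices with $f$ polynomial in $\alpha,\beta,\gamma$, and the paper's construction there uses a fixed, explicitly counted set of subgraphs.

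The missing idea is weighted embedding: the paper embeds \emph{multiple} vertices of $K_k$ into the \emph{same} subset of $V(H)$, and this multiplicity is what makes the counts come out to exactly $\funcc$ (e.g.\ subgraphs carrying weight $2\gamma + \tfrac{\alpha-\beta+3}{2}$ in its Type B and C families). Your template appears to assume one clique vertex per subset, which cannot reach the stated bound: with unit weights, pairwise-touching intervals on a cycle of length $L$ must have length at least roughly $L/2$, limiting $k$ to about $L$ while $\wed$ is already about $L/2$, which gives exponent $2 - 2/L$ rather than the much tighter $2 - 1/f$ with $f$ quadratic in the parameters. In short, the ``small integer optimization'' you defer to is the hard part: you must decide which subgraphs receive extra weight and verify edge depth along both long paths under each of the eight parameter regimes, and this is precisely what the paper's Lemmas on $P(\alpha,\gamma\times 2)$ and $P(\alpha,\beta,\gamma\times 2)$ (the six families A--F, with weights, and the case analysis split by parity of $\alpha+\beta$ and the three threshold inequalities) carry out.
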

\begin{proof}[Proof Sketch]
	In our proof we present explicit clique embeddings for all graphs in $\family$. This is a long proof with several cases, due to the complicated definition of $\funcc$. See also \cref{sec:p922-example}, where we present our lower bound for the concrete example $P(9,2,2)$. 
\end{proof}

This finishes our lower bound journey.

\subsection{Algorithms}
\label{sec:proof-overview-algorithms}

To treat min-weight, listing, and enumeration in a unified way, we introduce an intermediate problem that we call \Henciso. See \cref{sec:prelims-treedecomp} for background on tree decompositions. 

\begin{definition}[Encoding Problem]
	Let $H$ be a pattern and $G = (V,E)$ be a host graph with partitioning $V = \bigcup_{a \in V(H)} V_a$. A \emph{partial $H$-encoding} $\partenc$ of $G$ is a tree decomposition $\mathcal{T}$ of~$H$ and for every bag $B$ of $\mathcal{T}$ a set $S_B \subseteq \prod_{a \in B} V_a$ (which we call \emph{submaterialization}). We say that an $H$-subgraph $\bv$ of $G$ is \emph{encoded} by $\partenc$ if we have $\prj{\bv}{B} \in S_B$ for every bag $B$ of $\mathcal{T}$. The \emph{size} of $\partenc$ is the total size of all sets $S_B$, summed over all bags $B$.

    A (full) \emph{$H$-encoding} of $G$ is a 
    collection of partial $H$-encodings $\partenc_1, \ldots, \partenc_k$ such that every $H$-subgraph $\bv$ of $G$ is encoded by exactly one $\partenc_i$. We require that the number of partial encodings $k$ is bounded by a constant (i.e., it may depend on $H$ but not on $G$).
    The \emph{size} of a full $H$-encoding is the sum of all sizes of its partial encodings $\partenc_1, \ldots, \partenc_k$.

    The \emph{\Henciso problem} asks to compute a full $H$-encoding for a given host graph $G$.
\end{definition}

Similar problems have appeared in the literature, e.g., the ``Boolean tensor decomposition'' from~\cite{KhamisNOS19} (which differs in requiring that every $H$-subgraph is encoded by \emph{at least} one $\partenc_i$).

\smallskip
\Henciso is a unifying problem by means of the following reduction lemma.

\begin{restatable}[\cref{sec:upper-bounds-until-p-graphs}]{lemma}{allreductions} \label{all-reductions}
    \hspace{-0.3cm} For any pattern $H$ if \Henciso is solvable in time $T(m)$ then 
    \begin{itemize}
    \item \Hminiso can be solved in time $\Oh(T(m))$,
    \item \Hlistiso can be solved in time $\Oh(T(m) + t)$, where $t$ is the output size, and
    \item \Henumiso can be solved in preprocessing time $\Oh(T(m))$ and delay $\Oh(1)$.
    \end{itemize}
\end{restatable}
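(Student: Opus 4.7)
My plan is to unify the three problems via a bottom-up dynamic program on the tree decomposition of each partial encoding. Since a full $H$-encoding has $k = \Oh(1)$ parts, I may handle each $\partenc = (\mathcal{T}, \{S_B\}_B)$ in time $\Oh(T(m))$ separately and pay only a constant factor for combination. I root $\mathcal{T}$ at an arbitrary bag $R$ and, for each edge $e = \{a,b\} \in E(H)$, designate a canonical bag $\mathrm{bag}(e)$ containing $\{a,b\}$ (which exists by the tree decomposition axioms). As preprocessing, in time $\Oh(\sum_B |S_B|) = \Oh(T(m))$ I filter each $S_B$ to keep only tuples $v$ with $\{v_a,v_b\}\in E(G)$ for every edge $\{a,b\} \subseteq B$ of $H$, which is $\Oh(1)$ per tuple since $H$ has constant size. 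After this filtering, any consistent choice of tuples $(v_B \in S_B)_B$ across bags corresponds to a bona fide $H$-subgraph, because every edge of $H$ lies in some bag of $\mathcal{T}$ and is therefore certified.

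For \Hminiso, for each bag $B$ and $v \in S_B$ I define $\mathrm{DP}(B, v)$ as the minimum of $\sum_{e : \mathrm{bag}(e) \in \mathrm{subtree}(B)} w(v_a, v_b)$ over all consistent extensions of $v$ into the descendants of $B$, evaluated via the standard recursion
\[
  \mathrm{DP}(B, v) = \sum_{\substack{e = \{a,b\} \\ \mathrm{bag}(e) = B}} w(v_a, v_b) + \sum_{B_i \text{ child of } B} \min_{\substack{v_i \in S_{B_i} \\ v_i|_{B \cap B_i} = v|_{B \cap B_i}}} \mathrm{DP}(B_i, v_i).
\]
For each child $B_i$ I precompute a hash map from each projection $\pi$ onto $B \cap B_i$ to $\min \{ \mathrm{DP}(B_i, v_i) : v_i|_{B \cap B_i} = \pi \}$; this costs $\Oh(|S_{B_i}|)$ and each later lookup is $\Oh(1)$. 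The min-weight answer for $\partenc$ is $\min_{v \in S_R} \mathrm{DP}(R, v)$, and the overall answer is the minimum across the $k$ partial encodings, all computed in total time $\Oh(T(m))$.

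For \Henumiso I first run the boolean variant of the same DP to mark every $v \in S_B$ as \emph{extendable} if it extends to a full $H$-subgraph encoded by $\partenc$, discarding the rest in time $\Oh(T(m))$. Then, for each tree-edge $(B, B_i)$ of $\mathcal{T}$, I precompute a hash index that groups the surviving tuples of $S_{B_i}$ into doubly-linked lists by their projection onto $B \cap B_i$. Enumeration proceeds by a DFS traversal of $\mathcal{T}$: I maintain the partial assignment along the visited prefix of bags, and at each bag iterate through the extendable tuples consistent with the current partial assignment, using the hash indices for $\Oh(1)$-time transitions. When the last bag is reached the partial assignment is a full $H$-subgraph and is output. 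For \Hlistiso I simply run enumeration end-to-end, giving total time $\Oh(T(m) + t)$.

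Correctness and the absence of duplicate outputs follow from two standard facts: within a single $\partenc$, the tree-decomposition property forces a full tuple $\bv$ to be uniquely determined by its bag-projections, so the DFS produces each encoded $H$-subgraph exactly once; across partial encodings, the defining property of a full encoding (each $H$-subgraph is encoded by exactly one $\partenc_i$) rules out further duplication. The main obstacle I anticipate is establishing \emph{worst-case} $\Oh(1)$ delay rather than merely amortized output-linear time: after an output, the DFS may have to backtrack across several bags before producing the next one. The crux is that because $H$ is fixed, $\mathcal{T}$ has $\Oh(1)$ bags and $\Oh(1)$ branching per level, capping both the depth and the fan-out of the recursion by constants; every surviving tuple is extendable, so each recursive call produces at least one output; and the precomputed hash indices make each single transition (pick the next consistent tuple at a bag, or initialize an iterator at a child) run in $\Oh(1)$ time. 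Together these yield worst-case constant delay between consecutive outputs.
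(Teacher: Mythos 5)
Your proof is correct and, modulo presentation, takes essentially the same approach as the paper. The paper packages the argument as a reduction: it first proves a helper lemma (\cref{lm:reduce-encoding-to-tree}) that turns each partial encoding $\partenc$ into a tree pattern $T'$ (obtained by subdividing the tree decomposition's tree) and a host graph $G'$ whose parts are the submaterializations $S_B$ and their projections, with a weight-preserving bijection between $T'$-subgraphs of $G'$ and $H$-subgraphs of $G$ encoded by $\partenc$; it then invokes the known linear-time/constant-delay algorithms for tree patterns \cite{BaganDG07}. You instead inline the same tree dynamic program (semi-join reduction for extendability, bottom-up DP for min-weight, DFS with hash indices for enumeration) directly on the bags. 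Both the filtering of $S_B$, the ``assign each edge of $H$ to a canonical bag'' trick to count weights once, and the constant-delay argument (constant tree depth and fan-out, downward extendability after semi-join, $\Oh(1)$ per transition) correspond step-by-step to what happens inside the paper's reduction plus the standard tree algorithm. The paper's modularization is marginally cleaner in that it cites the tree results as a black box, whereas you re-derive them, but the substance is identical. One small presentational nit in your write-up: for the constant-delay claim you should be explicit that the semi-join is a bottom-up pass toward a fixed root, and that when the DFS sits at bag $B$ with its parent tuple already committed, the semi-join invariant at the parent guarantees at least one consistent surviving tuple exists at each child of $B$; this is the precise reason each recursive descent cannot dead-end, and making the direction of the semi-join explicit would remove the ambiguity in ``mark every $v\in S_B$ as extendable''.
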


\smallskip
Hence, from now of we focus on designing algorithms for \Henciso.
We first show that the complexity of Encoding is smaller or equal for induced minors, as for our main problems.

\begin{restatable}[Induced Minors for Encoding, \cref{sec:upper-bounds-until-p-graphs}]{lemma}{inducedminorencoding} \label{induced-minor-encoding}
    Let $H$ and $H'$ be patterns such that $H'$ is an induced minor of $H$. If \Henciso can be solved in time $T(m)$, then \enciso{H'} can be solved in time $\Oh(T(m))$.
\end{restatable}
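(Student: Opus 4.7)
The plan is to prove \cref{induced-minor-encoding} by a gadget reduction: from an instance $G'$ of \enciso{H'} I build an instance $G$ of \Henciso of size $\Oh(m)$, invoke the assumed algorithm on $G$, and translate its $H$-encoding of $G$ back into an $H'$-encoding of $G'$ without asymptotic overhead. Write $(V_a)_{a \in V(H')}$ for the branch sets witnessing $H'$ as an induced minor of $H$, and let $X := V(H) \setminus \bigcup_{a \in V(H')} V_a$ be the deleted vertices.

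First, I construct $G$ as follows. Given $G'$ with partition $V(G') = \bigcup_{a \in V(H')} V'_a$, set $V^G_v$ to be a fresh copy of $V'_a$ (indexed by $V'_a$) if $v \in V_a$, or a singleton dummy $\{\star_v\}$ if $v \in X$. For each edge $\{u,w\} \in E(H)$: if $u,w$ lie in the same branch set $V_a$, add the \emph{equality edges} $\{u_x, w_x\}$ for all $x \in V'_a$; if $u \in V_a, w \in V_b$ with $a \neq b$ (so $\{a,b\} \in E(H')$ by the induced-minor property), copy the $G'$-edges between $V'_a$ and $V'_b$ into $G$; if exactly one of $u,w$ lies in $X$, connect the dummy to every copy on the other side; if both lie in $X$, add the single edge between the dummies. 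Since $m = \Omega(n)$ and $|V(H)| = \Oh(1)$, this yields $\Oh(m)$ edges in $\Oh(m)$ time. A direct verification shows that the map sending any $H$-subgraph $\bv$ of $G$ to the tuple $\bv'$ whose $a$-entry is the common $V'_a$-index of $\bv_v$ for $v \in V_a$ (well-defined because the equality gadgets force a single index on each $V_a$) is a bijection onto the $H'$-subgraphs of $G'$; its inverse is the \emph{canonical lift} sending $\tau$ to the tuple $\bar\tau$ with $\bar\tau_v$ equal to the copy of $\tau_a$ when $v \in V_a$ and to $\star_v$ when $v \in X$.

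Next, I translate each full $H$-encoding $(\partenc_i)_{i \in [k]}$ of $G$ into a full $H'$-encoding $(\partenc'_i)_{i \in [k]}$ of $G'$ by ``contracting branch sets''. For each $\partenc_i = (\mathcal{T}_i, (S_B)_B)$, the tree $\mathcal{T}'_i$ has the same skeleton as $\mathcal{T}_i$, with each bag $B$ replaced by $B' := \{a \in V(H') : V_a \cap B \neq \emptyset\}$. This is a tree decomposition of $H'$: vertex and edge coverage follow from the induced-minor property, and the subtree of $\mathcal{T}'_i$ containing $a$ equals $\bigcup_{v \in V_a} \{B \in V(\mathcal{T}_i) : v \in B\}$, which is connected because each individual subtree is connected and $H[V_a]$ is connected (so we chain the subtrees along the edges of $H[V_a]$). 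For the submaterializations, iterate over $\sigma \in S_B$, discard $\sigma$ unless it equals $\bar\tau$ for some $\tau \in \prod_{a \in B'} V'_a$, and otherwise add that (unique) $\tau$ to $S_{B'}$. Since $\tau \mapsto \bar\tau$ is injective, $|S_{B'}| \le |S_B|$, so the new encoding has total size $\Oh(T(m))$ and is computed in $\Oh(T(m))$ time.

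Finally, correctness. Under the bijection $\bv \leftrightarrow \bv'$, the restriction $\prj{\bv}{B}$ is exactly the canonical lift of $\prj{\bv'}{B'}$ for every bag $B$, so by construction of $S_{B'}$ we have $\prj{\bv'}{B'} \in S_{B'}$ for all $B'$ if and only if $\prj{\bv}{B} \in S_B$ for all $B$. Thus $\bv'$ is encoded by $\partenc'_i$ exactly when $\bv$ is encoded by $\partenc_i$, and the ``encoded by exactly one partial encoding'' property transfers from the $H$-encoding to the $H'$-encoding. The main subtlety to watch is that $S_{B'}$ must be defined via canonical lifts rather than by naive projection of $S_B$: naive projection would only yield the forward direction and could destroy uniqueness of encoding. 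Apart from this the steps are routine, and the total running time is $\Oh(m) + T(\Oh(m)) + \Oh(T(m)) = \Oh(T(m))$, completing the reduction.
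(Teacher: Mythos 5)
Your proof is correct and follows essentially the same route as the paper: construct $G$ via the standard induced-minor gadget, invoke the assumed \Henciso algorithm, and translate the resulting full $H$-encoding by contracting each bag $B$ to $B' := \{a \in V(H') : V_a \cap B \neq \emptyset\}$ while carrying along the submaterializations. One small point where you are more careful than the paper: when building $S_{B'}$, the paper maps every $\sigma \in S_B$ through $\corr^{-1}$ (relying on ``the choice of $a$ does not matter''), which is only well-justified when $\sigma$ is a canonical lift, whereas you explicitly discard the $\sigma$'s that are not canonical lifts; that filtering is exactly what makes the backward implication (``$\bv'$ encoded $\Rightarrow \bv$ encoded'') go through cleanly without needing $\corr^{-1}$ to be injective on arbitrary tuples, so you were right to flag it.
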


\smallskip
A crucial property of Encoding is that it splits over a clique separator, i.e., we can combine algorithms for the parts induced by a clique separator to an algorithm for the full graph:

\begin{restatable}[Encoding splits over Clique Separator, \cref{sec:upper-bounds-until-p-graphs}]{lemma}{cliqueseparatoralgocombination} \label{clique-separator-algo-combination}
    Let $H$ be a pattern and $C$ be a clique separator in $H$. Let $S_1, \ldots, S_k \subset V(H)$ be the connected components of $H - C$. If \enciso{H[S_i \cup C]} can be solved in time $T(m)$ for every $i \in [k]$, then \Henciso can be solved in time $\Oh(T(m))$.
\end{restatable}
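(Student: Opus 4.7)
The approach is to run the given \enciso{H_i}-algorithm on each sub-pattern $H_i := H[S_i \cup C]$ separately and glue the resulting encodings together along the clique $C$. Concretely, for each $i \in [k]$, apply the algorithm to the host graph $G_i$ obtained from $G$ by restricting to the color classes $V_a$ with $a \in V(H_i)$; this yields a full $H_i$-encoding consisting of a constant number of partial encodings $\partenc^{(i)}_1, \ldots, \partenc^{(i)}_{k_i}$, computed in time $\Oh(T(m))$ and of total size $\Oh(T(m))$. Crucially, since $C$ is a clique of $H_i$, a standard property of tree decompositions guarantees that every tree decomposition $\mathcal{T}^{(i)}_{j}$ contains at least one bag $B^{(i)}_{j}$ with $B^{(i)}_{j} \supseteq C$.

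For every tuple $\mathbf{j} = (j_1, \ldots, j_k)$ with $j_i \in [k_i]$, build a partial $H$-encoding $\partenc_{\mathbf{j}}$ by gluing $\partenc^{(1)}_{j_1}, \ldots, \partenc^{(k)}_{j_k}$ as follows: take the disjoint union of the $\mathcal{T}^{(i)}_{j_i}$, add an edge from $B^{(1)}_{j_1}$ to each $B^{(i)}_{j_i}$ for $i \ge 2$ to obtain a single tree $\mathcal{T}_{\mathbf{j}}$, and retain all submaterializations unchanged. One checks that $\mathcal{T}_{\mathbf{j}}$ is a valid tree decomposition of $H$: every vertex and every edge of $H$ lies in some $H_i$ (using that $S_1, \ldots, S_k$ partition $V(H) \setminus C$ and $C$ is a clique), the running-intersection property for $v \in S_i$ holds trivially within $\mathcal{T}^{(i)}_{j_i}$, and for $v \in C$ it holds because the bags of each $\mathcal{T}^{(i)}_{j_i}$ containing $v$ form a connected subtree that includes $B^{(i)}_{j_i}$, and these anchoring bags are linked by the new edges.

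For correctness of the full encoding, an $H$-subgraph $\bv$ of $G$ projects to an $H_i$-subgraph $\prj{\bv}{V(H_i)}$ of $G_i$, which is encoded by a unique partial encoding $\partenc^{(i)}_{j_i^*}$. Since every bag $B$ of $\mathcal{T}_{\mathbf{j}}$ belongs to exactly one $\mathcal{T}^{(i)}_{j_i}$, and its constraint $\prj{\bv}{B} \in S_B$ depends only on $\prj{\bv}{V(H_i)}$, the combined encoding $\partenc_{\mathbf{j}}$ encodes $\bv$ if and only if $j_i = j_i^*$ for every $i \in [k]$, so each $H$-subgraph is encoded by exactly one $\partenc_{\mathbf{j}}$. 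The number of combined partial encodings is $\prod_i k_i = \Oh(1)$, their total size is $\Oh(T(m))$ since $k$ and each $k_i$ are constants, and the assembly time is linear in this size.

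The main obstacle is the gluing step: one must verify that merging the tree decompositions of the $H_i$'s at bags containing $C$ yields a valid tree decomposition of $H$, and that the \emph{exactly once} property of each $H_i$-encoding lifts cleanly to $H$. Both rely on $C$ being a clique separator, since cliques must sit inside a single bag of any tree decomposition (giving the anchoring bags) and every edge of $H$ lies inside some $H_i$ (so that the combined tree decomposition covers $E(H)$); the rest is routine bookkeeping.
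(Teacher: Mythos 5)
Your proof is correct and follows essentially the same approach as the paper: run each sub-encoding on the induced host graph, use the Clique Containment Lemma to find an anchoring bag containing $C$ in every constituent tree decomposition, glue the trees together at those bags, and observe that the exactly-once property lifts because each bag's constraint depends only on one $H_i$-projection. You are somewhat more explicit than the paper in verifying the running-intersection property and the uniqueness of the encoding combination, but the argument is the same.
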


\begin{proof}[Proof Sketch]
	We solve \enciso{H[S_i \cup C]} on the corresponding parts of $G$ and combine the resulting $H[S_i \cup C]$-encodings to an $H$-encoding of $G$. To this end, for each combination of partial $H[S_i \cup C]$-encodings we connect the corresponding tree decompositions of the subgraphs $H[S_i \cup C]$ to a tree decomposition of $H$. This is always possible because $C$ is a clique in $H[S_i \cup C]$ and thus every tree decomposition of $H[S_i \cup C]$ has a bag containing $C$; connecting these bags yields a tree decomposition of $H$.
\end{proof}

\begin{restatable}[\cref{sec:upper-bounds-until-p-graphs}]{corollary}{tcfromD} \label{tc-from-D}
    For any pattern $H$, if \enciso{H'} can be solved in time $T(m)$ for every $H' \in D(H)$, then \Henciso can be solved in time $\Oh(T(m))$.
\end{restatable}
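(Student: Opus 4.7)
My plan is to prove this by induction on $|V(H)|$, using \cref{d-set-computation} to split $H$ along a minimal clique separator and then gluing the resulting encoding algorithms together via \cref{clique-separator-algo-combination}. The two previously established lemmas essentially do all the work; the corollary just iterates them.

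In the base case, $H$ has no clique separator, so by \cref{def:clique-separator-decomp} we have $D(H) = \{H\}$, and the hypothesis immediately yields an \Henciso algorithm in time $T(m)$. In the inductive step, let $C$ be any minimal clique separator of $H$, which must be nonempty since $H$ is a connected pattern. Let $S_1, \dots, S_k$ (with $k \ge 2$) be the connected components of $H[V(H) \setminus C]$, and write $H_i := H[S_i \cup C]$. Each $H_i$ is connected (since $H[S_i]$ is connected, $H[C]$ is a clique, and $H$ being connected forces an edge between $C$ and $S_i$), has at least two vertices, and is strictly smaller than $H$ because it omits all vertices in $S_j$ for $j \ne i$. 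By \cref{d-set-computation}, $D(H) = \bigcup_{i \in [k]} D(H_i)$, so every element of $D(H_i)$ lies in $D(H)$. Hence the hypothesis of the corollary transfers to $H_i$: \enciso{H''} can be solved in time $T(m)$ for every $H'' \in D(H_i)$. Applying the induction hypothesis to each $H_i$ yields an \enciso{H_i} algorithm in time $\Oh(T(m))$, and then \cref{clique-separator-algo-combination}, applied to $H$ with the separator $C$, combines these into an \Henciso algorithm in time $\Oh(T(m))$.

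The only real subtlety is bookkeeping of the constants hidden in the $\Oh$-notation, since each inductive step multiplies them by a constant. This is harmless: because $H$ is fixed and $|V(H)|$ is a constant (independent of the input $G$), the induction depth and the total number of recursive pieces are bounded by a constant depending only on $H$, so the cumulative multiplicative overhead is absorbed into the $\Oh$. I do not foresee any other obstacle — the genuine combinatorial content (both the decomposition identity and the algorithmic glueing along a clique) is already isolated in the two preceding lemmas.
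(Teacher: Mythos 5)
Your proof is correct and follows essentially the same route as the paper: induction on $|V(H)|$, using \cref{d-set-computation} to push the hypothesis down to each piece $H[S_i \cup C]$ and \cref{clique-separator-algo-combination} to reassemble. The extra remarks (nonemptiness of $C$, connectivity and size of each $H_i$, and the constant-factor bookkeeping) are sound and only make explicit what the paper's proof leaves implicit.
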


Thus, the complexity of \Henciso is the maximum of the complexities of \enciso{H'} over all $H' \in D(H)$. If $D(H)$ contains a graph not in $\family$, then we know that the complexity is at least 2. It thus only remains to prove upper bounds on the complexity of graphs in $\family$.

\begin{restatable}[\cref{sec:p-graph-upper-bounds}]{lemma}{pgraphupperbound} \label{lm:p-graph-upper-bound}
    For any $(\alpha, \beta, \gamma) \in \Tfamily$, \enciso{P(\alpha, \beta, \gamma \times 2)} can be solved in time $\Oh(m^{2 - 1 / \funcc(\alpha, \beta, \gamma)})$.
\end{restatable}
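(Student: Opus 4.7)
The plan is to prove Lemma~\ref{lm:p-graph-upper-bound} by designing, for each of the eight cases in Definition~\ref{def:funcc}, an encoding algorithm whose running time $\Oh(m^{2 - 1/\funcc(\alpha,\beta,\gamma)})$ matches the conditional lower bound of Lemma~\ref{lm:p-graph-lower-bound}. The central new ingredient is \emph{hyper-degree splitting}: for carefully chosen subsets $S \subsetneq V(H)$, I measure the hyper-degree of a tuple $\bu \in \prod_{a \in S} V_a$ as the number of its extensions to a partial $H$-subgraph on some slightly larger set $S \cup T$, and split tuples at a threshold $m^t$ into heavy and light. By a counting argument, the number of heavy tuples is at most the total number of partial subgraphs on $S \cup T$ divided by $m^t$, so heavy tuples are rare; light tuples admit few extensions each and can be handled directly. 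The set of $H$-subgraphs of $G$ is then partitioned into a constant number of classes according to which tuples are heavy versus light, and one partial encoding is built per class.

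Denote the two identified endpoints of $H = P(\alpha, \beta, \gamma \times 2)$ by $u,v$, the long path by $u = a_0, a_1, \ldots, a_\alpha = v$, the medium path by $u = b_0, \ldots, b_\beta = v$, and the midpoints of the $\gamma$ length-$2$ paths by $c_1, \ldots, c_\gamma$. The algorithmic template anchors at $(u,v)$ and treats the $c_i$'s, the $\alpha$-path, and the $\beta$-path essentially independently; to beat the trivial $\Oh(m^2)$ bound, I apply hyper-degree splitting to the pair $(u,v)$ with respect to length-$2$ $uv$-paths (which controls the $c_i$'s), and to pairs of nodes lying on the $\alpha$- and $\beta$-paths with respect to the number of subpaths between them of the appropriate remaining length. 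For each class, the partial encoding uses a tree decomposition of $H$ adapted to the split, with bags for each heavy sub-configuration and each light subpath, and the submaterializations are computed by enumerating the bounded number of heavy structures and running constant-size dynamic programs over the light portions.

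The bulk of the work, and the main obstacle, is matching the exponent $2 - 1/\funcc(\alpha,\beta,\gamma)$ exactly. The eight cases in Definition~\ref{def:funcc} correspond to the eight feasible optima of a linear program in the threshold exponents $t_1, t_2, \ldots$. The parity of $\alpha + \beta$ matters because a path of odd length cannot be split at a vertex midpoint and instead must be split at an edge midpoint, which changes the shape of the hyper-degree tuples along that path; the inequality conditions such as $\beta < \gamma + 2$, $3\beta < \alpha + 6\gamma + 8$, or $2\beta > \alpha + 4\gamma + 6$ determine whether the enumeration over $c_i$'s, over $\alpha$-subpaths, or over $\beta$-subpaths is the bottleneck, and thus where the thresholds should be placed. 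For each of the eight regimes I plan to write down the optimal threshold vector, verify that both the heavy-enumeration cost and the light-enumeration cost simplify to $\Oh(m^{2 - 1/\funcc})$, and check consistency on the boundaries using the well-definedness of $\funcc$ already established in Lemma~\ref{lm:funcc-correctness}. As a sanity check, the algorithm should recover the known optimal bounds recalled after Theorem~\ref{full-characterization} ($\Oh(m)$ for trees, $\Oh(m^{3/2})$ for $K_3$ and $C_4$, $\Oh(m^{2 - 1/\lceil k/2 \rceil})$ for $C_k$, and $\Oh(m^{2 - 1/k})$ for $K_{k,2}$), each of which corresponds to one of the extremal regimes of the case analysis: Itai--Rodeh-style degree splitting drives the cycle cases, common-neighbor enumeration drives the biclique case, and the general algorithm smoothly interpolates between these two extremes.
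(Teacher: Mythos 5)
Your proposal correctly identifies the high-level theme---hyper-degree splitting, tree-decomposition-based partial encodings, and a case analysis governed by the eight branches of $\funcc$---but as written it is a plan, not a proof, and several of its specific structural claims diverge from what is actually needed.

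First, the paper does not use binary heavy/light degree splitting with optimized threshold exponents $t_1, t_2, \ldots$; it uses \emph{multi-level degree uniformization}, partitioning each part $V_{a_i}, V_{b_i}$ into $\funcc(\alpha,\beta,\gamma)$ classes $V_d^j$ of degree in $[m^{(j-1)f}, m^{jf})$ with the fixed scale $f = 1/\funcc(\alpha,\beta,\gamma)$. The subsequent analysis crucially tracks, for each node, \emph{which} of these levels it sits in (the quantities $\jj_d$, $\jmax$), and the $\ell$ parameter fed to the path-processing step depends on $\jmax$. A single binary heavy/light cut per part will not give you the fine-grained control needed to balance the two halves of the cycle in the tight regimes.

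Second, your claim that ``the eight cases in Definition~\ref{def:funcc} correspond to the eight feasible optima of a linear program in the threshold exponents'' does not match the actual mechanism. The eight cases of $\funcc$ are determined by arithmetic conditions on $(\alpha,\beta,\gamma)$ (parity, $\beta$ vs.\ $\gamma$, $2\beta$ vs.\ $\alpha + 4\gamma+6$, etc.), and the upper-bound proof handles them via a long sequence of explicit algebraic inequalities (\cref{sec:algebraic-lemmas}) verifying that specific tree-decomposition costs are $\Oh(m^{2-f})$, not via LP duality over thresholds. Relatedly, your explanation of the role of parity (``a path of odd length cannot be split at a vertex midpoint and instead must be split at an edge midpoint'') is a reasonable intuition for pure cycles but is not how the general $P(\alpha,\beta,\gamma\times 2)$ algorithm uses parity---the parity of $\alpha+\beta$ enters only through which branch of $\funcc$ applies and hence which algebraic lemma is invoked.

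Third, you propose hyper-degree splitting anchored at the pair $(u,v)$ and at pairs of internal path nodes. The paper's algorithm instead hyper-degree-splits on the tuple $(c_1,\ldots,c_\gamma)$ (Option 1) or on $(b_2,c_1,\ldots,c_\gamma)$ (Option 2), measured by the number of common neighbors in $\subp{b_0}$ (or common neighbor \emph{pairs} in $\subp{b_0}\times\subp{b_1}$). These two options are both needed because different parameter regimes are bottlenecked differently; having only one anchoring (especially one on $(u,v)$) is unlikely to cover all eight branches of $\funcc$ tightly. You also do not mention the reusable ``Biased Cycle'' subroutine (\cref{lm:biased-cycle-algorithm}), which is the key tool for absorbing a small materialized edge set $S_{\textup{ends}}$ into a cycle-shaped tree decomposition in the required time, nor the several ``special cases'' (e.g., $\jmax \le 2\gamma+1$, or a high-degree node at a prescribed distance from $a_0$) that the paper dispatches with dedicated tree decompositions before the general hyper-degree argument applies.

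So while the overall toolbox you name is correct, the specific construction you outline does not obviously close the gap, and the bulk of the actual proof---the multi-level degree classes, the two hyper-degree options, the Biased Cycle lemma, the special-case catalogue, and the roughly a dozen algebraic lemmas verifying the exponents---remains to be supplied.
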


\begin{proof}[Proof Sketch]
	Algorithms for subgraph finding often follow the high-degree-low-degree idea of Alon, Yuster, and Zwick~\cite{AlonYZ97}: In a first step, the nodes of the host graph are split into high degree or low degree according to some threshold (this is known as degree splitting). More generally, one can partition all nodes into $\log n$ parts depending on the interval $[2^i,2^{i+1})$ that their degree lies in (this is known as degree uniformization). It remains to solve subgraph finding with certain degree constraints in the host graph, which is solved in a second step by algorithmic ideas that exploit the facts that small-degree nodes have few neighbors and that there are few high-degree nodes. 
	
	We were not able to find optimal algorithms within this framework for all patterns in~$\family$, so we had to go beyond the standard approach. Our new ingredient is a second level of high-degree-low-degree that could be called ``hyper-degree splitting'': After the usual degree splitting, in some cases we can afford to enumerate a set of possible tuples $T \subseteq V_1 \times \ldots \times V_\ell$, i.e., we can compute a superset $T$ of the projection of all $H$-subgraphs to $V_1 \times \ldots \times V_\ell$. We then consider the hyper-degree $\textup{deg}(v_1,\ldots,v_{\ell-1}) = |\{v_\ell : (v_1,\ldots,v_\ell) \in T\}|$, and we split the tuples $(v_1,\ldots,v_{\ell-1})$ according to this hyper-degree into high degree and low degree. 
	
	In \cref{sec:p922-example} we present our algorithm for the concrete example $P(9,2,2)$. This shows our hyper-degree splitting in action, on an example where it seems necessary.
	The full proof of the lemma is again long and has several cases, due to the complex definition of $\funcc$. 

\end{proof}

\subsection{Proof of the Main Result}

We are ready to prove our main result, assuming the lemmas stated in the last two sections.

\begin{proof}[Proof of \cref{full-characterization}]
	We start by showing that every $H' \in D(H)$ is a pattern, i.e., $H'$ is connected and $|V(H')| \ge 2$. 
	The first property follows from the facts that any disconnected graph has a clique separator of size zero, and graphs in $D(H)$ have no clique separator.
	For the second property, suppose for the sake of contradiction that $|V(H')| = 1$, and let $u \in V(H)$ be such that $H' = H[\{u\}]$.
	Since $H$ is connected and has at least two nodes, there exists a neighbor $v$ of $u$ in $H$. Since $H[\{u,v\}]$ is a single edge and thus has no clique separator, $H' = H[\{u\}]$ is no maximal induced subgraph of $H$ without clique separator, which contradicts $H' \in D(H)$.
	Hence, each $H' \in D(H)$ is a pattern.
	
	\smallskip
	For the first part of the theorem statement, assume that there is some $H' \in D(H)$ that is not isomorphic to any graph in $\family$. If $\clemb(H') \ge 2$, then $\clemb(H) \ge 2$ by \cref{clemb-of-induced-minors}, and thus we get the desired quadratic lower bounds by \cref{all-H-iso-clique-hardness}.
	So we can assume $\clemb(H') < 2$. Note that $H'$ has no clique separator, as it is an element of $D(H)$. Thus, \cref{small-clemb-implies-p} implies that $H'$ is a $P$-graph, i.e, $H'$ is isomorphic to $P(\ell_1,\ldots,\ell_k)$ for some integers $\ell_1,\ldots,\ell_k \ge 1$.
	Suppose that $H'$ has at least three paths of length at least three. Then $H'$ (and thus $H$) has $P(3, 3, 3)$ as an induced minor, so  
	we obtain the desired quadratic lower bounds by \cref{all-p333-iso-clique-hardness,,induced-minor}.
	So we can assume that $H'$ is a $P$-graph with at most two paths of length at least three and without clique separator. In this situation we show that $H'$ is isomorphic to a graph in $\family$, contradicting the choice of $H'$. 
	Note that $H'$ is isomorphic to $P(\ell_1,\ldots,\ell_k)$ with $\ell_3,\ldots,\ell_k \le 2$.
	If $k = 1$, then $H'$ is isomorphic to $P(\ell_1)$. This graph has a clique separator of size one, unless $\ell_1 = 1$. So $H'$ is isomorphic to $P(1) = P(1, 0, 0 \times 2) \in \family$.
	If $k = 2$, then $H'$ is isomorphic to $P(\ell_1,\ell_2)$, and thus to the cycle of length $\ell := \ell_1 + \ell_2$. If $\ell = 3$ then $H'$ is isomorphic to $P(2,1) = P(2,1,0 \times 2) \in \family$, and if $\ell \ge 4$ then $H'$ is isomorphic to $P(\ell-2,2) = P(\ell-2,2,0 \times 2) \in \family$.
	If $k \ge 3$, then none of $\ell_1,\ldots,\ell_k$ can be 1, as the path of length 1 would be a size-2 clique separator of $H'$. Therefore, $H'$ is isomorphic to $P(\alpha,\beta,\gamma \times 2)$ for some $\alpha \ge \beta \ge 2$ and $\gamma \ge 1$, which is in $\family$. 
	This contradicts the choice of $H'$.
	
	\smallskip
	For the second part of the theorem, assume that every $H' \in D(H)$ is isomorphic to some pattern in $\family$, i.e., up to isomorphism we have $D(H) = \{P(\alpha_1, \beta_1, \gamma_1 \times 2), \ldots, P(\alpha_k, \beta_k, \gamma_k \times 2) \}$ for some $(\alpha_1, \beta_1, \gamma_1), \ldots, (\alpha_k, \beta_k, \gamma_k) \in \Tfamily$.
    Let $c(H) := 2 - 1 / \max_{i \in [k]} \funcc(\alpha_i, \beta_i, \gamma_i)$. By \cref{lm:p-graph-upper-bound}, \enciso{P(\alpha_i, \beta_i, \gamma_i \times 2)} can be solved in time $\Oh(m^{2 - 1 / \funcc(\alpha_i, \beta_i, \gamma_i)}) \le \Oh(m^{c(H)})$ for each $i$.
    Hence, by \cref{tc-from-D}, \Henciso can be solved in time $\Oh(m^{c(H)})$.
    \cref{all-reductions} now yields the desired upper bounds.
	For the corresponding lower bounds, by \cref{lm:p-graph-lower-bound} we have $\clemb(P(\alpha_i, \beta_i, \gamma_i \times 2)) \ge 2 - 1 / \funcc(\alpha_i, \beta_i, \gamma_i)$ for each $i$.
    Since each $P(\alpha_i, \beta_i, \gamma_i \times 2)$ is an induced subgraph of $H$, by \cref{clemb-of-induced-minors} we obtain $\clemb(H) \ge \max_i \clemb(P(\alpha_i, \beta_i, \gamma_i \times 2)) \ge 2 - 1/\max_i \funcc(\alpha_i, \beta_i, \gamma_i) = c(H)$.
    The desired lower bounds now follow from \cref{all-H-iso-clique-hardness}.

\end{proof}

\subparagraph{Organization}
At this point it only remains to prove the lemmas stated in the last two sections.
In \cref{sec:preliminaries}, we give preliminaries and prove \cref{d-set-computation}. 
We then prove the lemmas from \cref{sec:proof-overview-lower-bounds} in Sections~\ref{sec:lower-bounds-until-p-graphs} and~\ref{sec:p-graph-lower-bounds}, and we prove the lemmas from \cref{sec:proof-overview-algorithms} in Sections~\ref{sec:upper-bounds-until-p-graphs} and~\ref{sec:p-graph-upper-bounds}. These sections rely on various inequalities on $\funcc(\alpha,\beta,\gamma)$ which are proven in \cref{sec:algebraic-lemmas}.
In \cref{sec:p922-example} we present an example of our algorithms and lower bounds, which we recommend to read at this point.
In \cref{app:further-related-work} we discuss further related work, in particular the connection to join queries in database theory and submodular width.
In \cref{sec:appendix-decomposition-algorithm} we present an efficient algorithm for computing the decomposition $D(H)$.

\section{Preliminaries}
\label{sec:preliminaries}

All graphs in this paper are simple and undirected. In addition to standard notation from graph theory, we use the following notation. We write $[n] = \{1,\ldots,n\}$ and use $\NN$ to denote the set of positive integers.
For a vector $\bv$ we denote by $\prj{\bv}{B}$ the projection to the entries with indices in $B$.
Furthermore, for a set $S$ of vectors we denote by $\prj{S}{B}$ the set $\{\prj{\bv}{B} \mid \bv \in S\}$.

Let $H$ be a pattern and $G = (V,E)$ be a host graph with partitioning $V = \bigcup_{a \in V(H)} V_a$. Let $H'$ be an induced subgraph of $H$. We denote by $G[H']$ the subgraph of $G$ induced by all nodes in $G$ that are colored by nodes in $H'$. More precisely, $G[H'] := G[ \bigcup_{a \in V(H')} V_a ]$.

For $B \subseteq V(H)$, we call a set $S \subseteq \prod_{a \in B} V_a$ a \emph{materialization} of $B$ if for every $H$-subgraph $\bv$ of $G$ we have $\prj{\bv}{B} \in S$.
We say that $B$ is materialized if we found some materialization of $B$.

\subsection{Fine-Grained Hypotheses}
\label{sec:prelims-hypotheses}

Let us introduce the hypotheses that our conditional lower bounds are based upon.

\subparagraph{\threesumconj}
In the 3SUM problem, given sets $A,B,C$ of $n$ integers the task is to decide whether there exist $a \in A, b \in B, c \in C$ with $a+b=c$. This has a classic $\Oh(n^2)$-time algorithm~\cite{GajentaanO95}.
The \threesumconj postulates that for any $\eps > 0$ there exists $c \ge 0$ such that the 3SUM problem on integers in the range $\{-n^c,\ldots,n^c\}$ cannot be solved in time $\Oh(n^{2-\eps})$. This is one of the oldest fine-grained hypotheses, dating back to the 90s~\cite{GajentaanO95}, so lower bounds based on the \threesumconj have a long history, see, e.g.,~\cite{GajentaanO95,Patrascu10,KopelowitzPP16,
AbboudBKZ22,AbboudBF23,JinX23,ChanX23}.

\subparagraph{\minplusconvconj}
In the $(\min,+)$-Convolution problem, or MinConv problem for short, given sequences $A[1..n]$ and $B[1..n]$ the task is to compute the sequence $C[1..n]$ with $C[k] = \min\{A[i] + B[k-i] : i \in [k] \}$, where out-of-bounds entries are interpreted as $\infty$. Naively this can be solved in time $\Oh(n^2)$. 
The \minplusconvconj postulates that for any $\eps > 0$ there exists $c \ge 0$ such that MinConv with integer entries in the range $\{-n^c,\ldots,n^c\}$ cannot be solved in time $\Oh(n^{2-\eps})$. This hypothesis has been frequently used to prove fine-grained lower bounds, see, e.g.,~\cite{KunnemannPS17,CyganMWW19,BringmannN21,JansenR23,Klein22,
ChanWX23}. 

\smallskip
It is known that the \minplusconvconj implies the \threesumconj~\cite{KunnemannPS17,CyganMWW19}. That is, all our lower bounds based on the \threesumconj also hold assuming the \minplusconvconj (but phrasing a lower bound under the \threesumconj gives a stronger result).

\subparagraph{\cliqueconj}
The MinWeight-$k$-Clique problem is identical to the \miniso{K_k} problem. Naively, the problem can be solved in time $\Oh(n^k)$, where $n$ is the number of nodes in the host graph. 
For the purpose of fine-grained lower bounds, we restrict the edge weights to polynomially bounded integers, i.e., to integers in the range $\{-n^c,\ldots,n^c\}$ for some constant $c \ge 0$. 
The \cliqueconj postulates that there exist no $k \ge 3$ and $\eps > 0$ such that for all $c \ge 0$, MinWeight-$k$-Clique with edge weights in $\{-n^c,\ldots,n^c\}$ can be solved in time $\Oh(n^{k-\eps})$. 

\subparagraph{\zerocliqueconj}
The Zero-$k$-Clique problem is similar to the MinWeight-$k$-Clique problem, but instead of asking for the minimum total edge weight of a $k$-clique we ask whether there exists a $k$-clique with total edge weight 0. Again the naive running time is $\Oh(n^k)$. 
The \zerocliqueconj postulates that there exist no $k \ge 3$ and $\eps > 0$ such that for all $c \ge 0$, Zero-$k$-Clique with edge weights in $\{-n^c,\ldots,n^c\}$ can be solved in time $\Oh(n^{k-\eps})$. 

\smallskip
It is known that the \cliqueconj implies the \zerocliqueconj, see, e.g.,~\cite{AbboudBDN18}. That is, all our lower bounds based on the \zerocliqueconj also hold assuming the \cliqueconj (but phrasing a lower bound assuming the \zerocliqueconj gives a stronger result).
It is also known that the special case of the \cliqueconj for $k=3$ is implied by the All-Pairs-Shortest-Paths Hypothesis~\cite{WilliamsW18}. Similarly, 
the special case of the \zerocliqueconj for $k=3$, also called ZeroTriangle Hypothesis, is implied
by the 3SUM Hypothesis~\cite{WilliamsW13} and the All-Pairs-Shortest-Paths Hypothesis~\cite{WilliamsW18}.
For $k > 3$, these hypotheses are natural generalizations, but are not known to be implied by any other fine-grained hypothesis. 
Both the \cliqueconj and the \zerocliqueconj have been frequently used to prove fine-grained lower bounds, see, e.g.,~\cite{AbboudWW14,BackursDT16,BackursT17,
AbboudBDN18,LincolnWW18,BringmannGMW20,BringmannCM22}.

\medskip
In this paper, whenever we talk about algorithms we always implicitly include randomized algorithms. In particular, all hypotheses are assumed to hold also for randomized algorithms.

\subsection{Tree Decompositions}
\label{sec:prelims-treedecomp}

Let us define tree decompositions and collect some facts about them. 
A \emph{tree decomposition} of a graph $H$ is a tree $T = ([k], E_T)$ and sets $B(1),\ldots,B(k) \subseteq V(H)$ (these sets are called \emph{bags}) with the following properties. (1) Every node $v \in V(H)$ is contained in some bag. (2) For every edge $\{u,v\} \in E(H)$ some bag contains both $u$ and $v$. And (3) for any $i,j \in [k]$ and $v \in V(H)$, if both $B(i)$ and $B(j)$ contain $v$, then $v$ is contained in $B(\ell)$ for every node $\ell$ on the unique path from $i$ to $j$ in $T$.

Note that if $H$ has no isolated nodes then (1) is subsumed by (2). Property (3) implies that for any node $v \in V(H)$ the bags containing $v$ form a connected subgraph of $T$. We will use the following fact.

\begin{lemma}[Clique Containment Lemma, {\cite[Lemma 3.1]{BodlaenderM93}}] \label{induced-clique-td}
  Let $\mathcal{T}$ be a tree decomposition of a graph $H$ and let $C \subseteq V(H)$ be a clique in $H$.
  Then for some bag $B$ of $\mathcal{T}$ we have $C \subseteq B$.
\end{lemma}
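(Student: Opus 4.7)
The plan is to prove the Clique Containment Lemma via the classical Helly property of subtrees in a tree. Concretely, for each $v \in V(H)$ let $T_v \subseteq [k]$ denote the set of bag indices whose bag contains $v$. By property~(3) of a tree decomposition, each $T_v$ induces a connected subgraph (a subtree) of~$\mathcal{T}$. By property~(2), for every edge $\{u,v\} \in E(H)$ some bag contains both endpoints, so $T_u \cap T_v \neq \emptyset$. Since $C$ is a clique, every pair $u,v \in C$ is joined by an edge in $H$, so the family $\{T_v\}_{v \in C}$ is pairwise intersecting. The goal is therefore to show that such a pairwise intersecting family of subtrees has a common bag; any such bag will contain all of $C$ and the lemma follows.

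To establish this Helly-type statement, I would root the tree $\mathcal{T}$ at an arbitrary node $r$, and for each $v \in C$ let $h_v$ denote the unique node of $T_v$ that is closest to $r$ (it is unique because $T_v$ is connected). Pick $v^* \in C$ maximizing the distance from $r$ to $h_{v^*}$; I claim $h_{v^*}$ lies in every $T_v$ for $v \in C$. Indeed, fix $v \in C$ and pick any $x \in T_v \cap T_{v^*}$ (nonempty by pairwise intersection). Both $h_v$ and $h_{v^*}$ lie on the unique path from $r$ to $x$ (since they are the root-closest vertices of their respective subtrees, which contain $x$), so they are comparable in the ancestor order; by the maximality of the depth of $h_{v^*}$, the vertex $h_v$ is an ancestor of $h_{v^*}$. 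Because $T_v$ is connected, contains $h_v$ and $x$, and $h_{v^*}$ lies on the path between them, we conclude $h_{v^*} \in T_v$, as desired.

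Combining these pieces, the bag $B(h_{v^*})$ contains every $v \in C$, yielding $C \subseteq B(h_{v^*})$.

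The only subtle step is the Helly argument in the second paragraph; everything else is a direct unpacking of the tree-decomposition axioms. An alternative proof would proceed by induction on $|C|$, using that $T_u \cap T_v$ is itself a subtree to reduce to a smaller clique, but the rooting argument above is cleaner because it avoids having to argue separately that intersections of the $T_v$ remain connected and pairwise intersect the remaining subtrees. Since the statement is a well-known classical fact (and the authors cite it rather than reproving it), I would keep the exposition to just these few lines.
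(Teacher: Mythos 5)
The paper cites this lemma from Bodlaender and Möhring without reproving it, so there is no in-paper argument to compare against. Your proof is correct: the reduction to a Helly-type property for pairwise-intersecting subtrees of a tree, followed by the rooting argument (pick the deepest "top" vertex $h_{v^*}$ among the subtrees and show it lies in all of them), is sound and is essentially the standard textbook proof of this fact. The one step worth spelling out a bit more carefully is why both $h_v$ and $h_{v^*}$ lie on the path from $r$ to $x$: this follows because $h_v$ is an ancestor of every node of $T_v$ (otherwise the least common ancestor of $h_v$ and some non-descendant in $T_v$ would be a node of $T_v$ strictly closer to $r$, contradicting the choice of $h_v$), and likewise for $h_{v^*}$. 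With that observation the comparability of $h_v$ and $h_{v^*}$ and the conclusion $h_{v^*}\in T_v$ go through exactly as you wrote.
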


\subsection{Series-Parallel Graphs}
\label{sec:prelims-seriesparallel}

Any series-parallel graph is a graph $H$ with a distinguished source node $s$ and a distinguished sink node $t$. The family of all series-parallel graphs is defined as follows.
\begin{itemize}
    \item \emph{Single Edge.} The graph $H = (\{s, t\}, \{\{s, t\}\})$ with source $s$ and sink $t$ is series-parallel.
    \item \emph{Series Composition.} 
    Let $H_1$ and $H_2$ be series-parallel graphs with source nodes $s_1$ and~$s_2$ and sink nodes $t_1$ and $t_2$, respectively. Construct graph $H$ by taking the disjoint union of $H_1$ and $H_2$ and identifying $t_1$ with $s_2$. Then $H$ with source $s_1$ and sink $t_2$ is series-parallel. 
    \item \emph{Parallel Composition.} Let $H_1$ and $H_2$ be series-parallel graphs with source nodes $s_1$ and~$s_2$ and sink nodes $t_1$ and $t_2$, respectively. Construct graph $H$ by taking the disjoint union of $H_1$ and $H_2$ and identifying $s_1$ with $s_2$ and identifying $t_1$ with $t_2$. Then $H$ with source $s_1 = s_2$ and sink $t_1 = t_2$ is series-parallel. 
\end{itemize}

\subsection{Miscellaneous} \label{sec:preliminaries:misc}

In this section we prove statements that did not naturally fit in any other section, including \cref{d-set-computation} from \cref{sec:formal-results}.

\begin{observation} \label{high-degree-small-cnt-nodes}
    Let $G$ be a $k$-partite graph for $k \ge 2$ with parts $V_1, V_2, \ldots, V_k$. Let $m \ge |E(G)|$. Then for any $i \in [k]$ and any $\Delta \ge 1$ the number of nodes in $V_i$ with degree at least $\Delta$ is at most $m / \Delta$.
\end{observation}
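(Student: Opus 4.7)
The plan is a one-line counting argument. Let $x$ denote the number of nodes in $V_i$ of degree at least $\Delta$. First I would observe that, since $G$ is $k$-partite with the $V_j$ forming a partition of $V(G)$, every edge of $G$ has at most one endpoint in $V_i$. Consequently, the sum $\sum_{v \in V_i} \deg_G(v)$ counts each edge incident to $V_i$ exactly once, and hence is bounded by $|E(G)| \le m$.

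Next I would lower-bound the same sum by restricting to the $x$ high-degree nodes: this restricted sum is at least $x \Delta$ by definition. Combining the two bounds gives $x \Delta \le m$, and dividing by $\Delta \ge 1$ yields $x \le m/\Delta$, which is the claim.

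There is no real obstacle here; the statement is a direct application of the handshake/averaging argument, and the only subtlety worth stating explicitly is that $k$-partiteness ensures edges are not double-counted when summing degrees over a single part (in a general graph one would instead get $m \ge x\Delta/2$, which is weaker by a factor of two).
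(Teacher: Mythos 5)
Your proof is correct and is the natural counting argument the authors have in mind; the paper states this as an unproven observation, so there is no alternative proof to compare against. Your explicit note that $k$-partiteness is what avoids the usual factor-of-two loss in the handshaking bound is exactly the right subtlety to flag.
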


\dsetcomputation*

\begin{proof}
    First consider the case that $C$ is empty.
    Hence, $H$ is disconnected.
    Note that any element of $D(H)$ does not have a clique separator, and thus is connected.
    Therefore, any $H' \in D(H)$ lies completely inside some connected component $S_i$ of $H$.
    It is easy to see that $H'$ is a maximal subgraph without a clique separator in $H[S_i]$ if and only if it is a maximal subgraph without a clique separator in $H$.
    Hence, we get $D(H) = \bigcup_{i \in [k]} D(H[S_i]) = \bigcup_{i \in [k]} D(H[S_i \cup C])$.
    It remains to consider the case $|C| \ge 1$.

    \medskip

    We first show the inclusion ``$\subseteq$''.
    Consider any $U \subseteq V(H)$ such that $ H[U] \in D(H)$.
    If $U \cap S_i \neq \emptyset$ and $U \cap S_j \neq \emptyset$ for some $i \neq j \in [k]$, then $U \cap C$ is a clique separator in $H[U]$ which contradicts $H[U] \in D(H)$.
    Hence, $U \subseteq S_i \cup C$ for some $i \in [k]$.
    As $H[U]$ is a maximal subgraph of $H$ without a clique separator, it is also a maximal subgraph of $H[S_i \cup C]$ without a clique separator, and thus $H[U] \in D(H[S_i \cup C])$.
    Therefore, $D(H) \subseteq \bigcup_{i \in [k]} D(H[S_i \cup C])$.

    \medskip

    We now show the inclusion ``$\supseteq$''. Consider any $U \subseteq V(H[S_i \cup C])$ such that $H[U] \in D(H[S_i \cup C])$ for some $i \in [k]$.
    Assume for the sake of contradiction that $H[U]$ is not a maximal subgraph without a clique separator in $H$.
    It is still a subgraph without a clique separator, so there is some $U' \subseteq V(H)$ such that $U \subsetneq U'$ and $H[U'] \in D(H)$ does not contain a clique separator.
    As shown before, we have $U' \subseteq S_j \cup C$ for some $j \in [k]$.
    If $U \not \subseteq C$, then $j = i$, and $U' = U$ because $H[U]$ is a maximal subgraph in $H[S_i \cup C]$ without a clique separator.
    It follows that $H[U] = H[U'] \in D(H)$.
    
    Otherwise, we have $U \subseteq C$.
    Since $C$ is a clique, it does not have a clique separator in $H[S_i \cup C]$.
    By maximality of $U$ we have $U = C$.
    From now on we can assume $U = C$ and $H[U] \in D(H[S_i \cup C])$, and we want to show $H[U] \in D(H)$.
    Note that every node $a \in C$ is adjacent to at least one node of $S_i$ as otherwise removal of $C \setminus \{a\}$ from $H$ would disconnect $a$ from $S_i$ which contradicts the fact that $C$ is a minimal clique separator in $H$.

    If $|C| = 1$, let $a \in V(H)$ be such that $C = \{a\}$. We have that $a$ is adjacent to some $b \in S_i$. Since $\{a, b\}$ does not have a clique separator, we get a contradiction as $C=U$ is not a maximal subgraph of $H[S_i \cup C]$ without a clique separator.
    Thus, from now on we can assume $|C| \ge 2$.

    Let $H' \coloneqq (S_i \cup C, E(H[S_i \cup C]) \setminus E(H[C]))$.
    That is, $H'$ is the graph $H[S_i \cup C]$ without the edges in $H[C]$.
    Define $T$ as a tree in $H'$ of minimum cardinality such that $C \subseteq V(T)$ and each node of $C$ has degree one in $T$.
    In other words, $T$ is a Steiner tree with $C$ as a set of terminals such that all nodes of $C$ are leaves in $T$.
    Note that such a tree exists as $H[S_i]$ is connected and each node of $C$ is adjacent to some node in $S_i$.
    We choose $T$ as a minimal such tree.
    Note that as $|C| \ge 2$ and $H'$ does not have any edges between nodes in $C$, we have $C \subsetneq V(T)$.
    We claim that $H[V(T)]$ does not have a clique separator.
    As $C$ is a strict subset of $V(T)$, this leads to a contradiction with the fact that $C$ is a maximal subgraph of $H[S_i \cup C]$ without a clique separator.
    
    It thus remains to prove the claim that $H[V(T)]$ does not have a clique separator.
    Assume for the sake of contradiction that $H[V(T)]$ has a clique separator $C'$.
    In case $C' \subseteq C$, the subtree $T[V(T) \setminus C']$ is connected as we removed some leaves from $T$.
    Hence, $H[V(T) \setminus C']$ is connected which contradicts the fact that $C'$ is a clique separator in $H[V(T)]$.
    It remains to consider the case $C' \not \subseteq C$.
    Let $a$ be some node in $C' \setminus C$.
    Note that all nodes from $C \setminus C'$ (if any) are adjacent in $H[V(T) \setminus C']$.
    Hence, there is some connected component of $H[V(T) \setminus C']$ that contains all nodes from $C \setminus C'$.
    As $C'$ is a clique separator in $H[V(T)]$, there is some other connected component $S'_{\ell} \subseteq S_i$ of $H[V(T) \setminus C']$.
    We build a new tree $T'$ in $H'$ such that $C \subseteq V(T')$ and each node of $C$ has degree one in $T'$ such that $T'$ only contain nodes from $V(T) \setminus S'_{\ell}$, thus contradicting the fact that $T$ is such a tree of minimum cardinality.
    To create $T'$ we first connect $a$ with all other nodes from $C'$ by edges. 
    Then for each terminal $b \in C \setminus C'$ we consider the path from $b$ to $a$ in $T$, go along it and add edges of it to $T'$ until the current node is already in $T'$.

    At every point of this process, $T'$ is a connected tree with $V(T') \subseteq V(T)$, and at the end it connects all terminals $C$. Furthermore, all nodes from $C$ have degrees one in $T'$.
    On the other hand, we claim that $T'$ does not contain any nodes from $S'_{\ell}$.
    In the initial step we only added nodes from $C'$ to $T'$ where $S'_{\ell} \cap C' = \emptyset$.
    For each terminal $b \in C$ we do not visit any nodes from $S'_{\ell}$ on the path, because as $b$ lies in a different connected component of $H[V(T) \setminus C']$, any path from $b$ to some node in $S'_{\ell}$ goes through $C'$ (as it is a clique separator), and as soon as we enter $C'$, we stop adding edges (or we stop even earlier).
    Hence, indeed have $V(T') \subseteq V(T) \setminus S'_{\ell}$. This proves the claim and thus finishes the proof.
\end{proof}

\begin{lemma} \label{no-large-separators-needed}
    Let $H$ be a pattern that has a clique separator and $\clemb(H) < 2$. Then every minimal clique separator of $H$ has cardinality one or two.
\end{lemma}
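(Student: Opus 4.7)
The plan is to prove the contrapositive: if $H$ has a minimal clique separator $C$ with $|C| \ge 3$, then $\clemb(H) \ge 2$, which contradicts the hypothesis. Since $H$ is a pattern and hence connected, no clique separator is empty, so combined with this contrapositive we conclude that every minimal clique separator has cardinality one or two.

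The first step unpacks what minimality of $C$ buys us. Let $S_1, \ldots, S_k$ be the connected components of $H - C$. For any $a \in C$, the set $C \setminus \{a\}$ is still a clique but, by minimality, is not a separator; so $H - (C \setminus \{a\}) = H[\{a\} \cup S_1 \cup \cdots \cup S_k]$ is connected. This forces every $a \in C$ to have a neighbor in each component $S_i$.

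The second step is to exhibit an explicit clique embedding $\psi \colon K_4 \to H$ with $\wed(\psi) \le 2$, which gives $\clemb(H) \ge 4/2 = 2$. Pick any three vertices $c_1, c_2, c_3 \in C$ (available since $|C| \ge 3$) and any component $S_1$, and set $\psi(u_i) := \{c_i\}$ for $i \in \{1,2,3\}$ and $\psi(u_4) := S_1$. Each image is nonempty and induces a connected subgraph (singletons trivially, and $S_1$ by definition). The images pairwise touch: the three singletons are pairwise adjacent because $C$ is a clique, and each $\{c_i\}$ touches $S_1$ because $c_i$ has a neighbor in $S_1$ by the previous step.

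The final step is the weak edge depth calculation. The four images are pairwise disjoint, and their union is $\{c_1, c_2, c_3\} \cup S_1$, so any edge $e \in E(H)$ has at most two endpoints inside images, and these endpoints lie in at most two distinct images. Hence $d_\psi(e) \le 2$ for every $e$, giving $\wed(\psi) \le 2$ as required. I do not expect any obstacle here; the whole argument is essentially a one-shot embedding, with the only insight being the minimality observation in the first step. Vertices of $C \setminus \{c_1, c_2, c_3\}$ and of other components $S_j$ with $j \ne 1$ are simply never hit by any image, so they contribute nothing to edge depths.
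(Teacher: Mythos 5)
Your proof is correct and rests on the same core insight as the paper's: minimality of $C$ forces every vertex of $C$ to have a neighbor in each component $S_i$, which is what lets a $K_4$-like structure be realized. The execution differs in two small but pleasant ways. First, the paper splits into cases $|C|\ge 4$ (where $H$ trivially contains an induced $K_4$) and $|C|=3$ (where the minimality/contraction argument is needed), whereas you handle all $|C|\ge 3$ uniformly by selecting three vertices of $C$. Second, the paper constructs a $K_4$ minor and then invokes \cref{clemb-of-induced-minors} to transfer the clique embedding power, whereas you build the $K_4$-embedding $\psi$ into $H$ directly and verify $\wed(\psi)\le 2$ by hand using disjointness of the images. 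Your route is a bit more self-contained (no appeal to the induced-minor lemma) at the cost of re-checking the embedding axioms explicitly; both are valid and comparable in length.
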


\begin{proof}
    As $H$ is a pattern, it is connected, and thus $H$ does not have a clique separator of size zero.

    If $H$ has a clique separator of size at least four, $H$ has an (induced) $K_4$ minor.
    The trivial embedding $\psi(v) = \{v\}$ of $K_4$ into $K_4$ has weak edge depth $2$.
    Together with \cref{clemb-of-induced-minors}, we thus arrive at the contradiction $\clemb(H) \ge \clemb(K_4) \ge 4 / 2 = 2$.

    It remains to show that no minimal clique separator of $H$ has size three.
    For the sake of contradiction assume that nodes $a, b$, and $c$ form a minimal clique separator of size three in $H$.
    Let $S_1$ be one of the connected components of $H-\{a, b, c\}$.
    If $a$ is not adjacent to any node of $S_1$, then $\{b, c\}$ is a clique separator in $H$ because its removal disconnects nodes of $S_1$ from $a$.
    This would violate the fact that $\{a, b, c\}$ is a minimal clique separator in $H$.
    Thus, $a$ is adjacent to some node in $S_1$.
    Analogously, $b$ and $c$ are adjacent to some nodes in $S_1$.
    Removing all nodes from $V(H) \setminus S_1 \setminus \{a, b, c\}$ and contracting $S_1$ into a single node yields that $H$ has a $K_4$ minor, and thus $\clemb(H) \ge \clemb(K_4) \ge 2$, which is a contradiction.
\end{proof}

\section{Lower Bounds: from General Case to Family \boldmath$\family$} \label{sec:lower-bounds-until-p-graphs}

In this section we prove \cref{all-H-iso-clique-hardness,induced-minor,clemb-of-induced-minors,small-clemb-implies-p,all-p333-iso-clique-hardness}. This covers most of the lemmas from \cref{sec:proof-overview-lower-bounds}, except for \cref{lm:p-graph-lower-bound} whose proof is postponed to \cref{sec:p-graph-lower-bounds}.

\smallskip
We start by proving that the \zerocliqueconj implies a lower bound for the problem of listing cliques. 

\begin{hypothesis}[\cliquelistconj]
    For any $k \ge 3$ and $\eps > 0$, there is no algorithm that lists all $k$-cliques of a given $n$-node host graph $G$ in time $\tOh(n^{k - \varepsilon} + t)$, where $t$ is the number of $k$-cliques in $G$.
\end{hypothesis}

\begin{lemma} \label{lm:zero-clique-to-clique-list}
    The \zerocliqueconj implies the \cliquelistconj.
\end{lemma}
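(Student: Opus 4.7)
The plan is to prove the contrapositive: assume the \cliquelistconj fails, so there exist $k \ge 3$ and $\varepsilon > 0$ and an algorithm $\mathcal{A}$ that lists all $k$-cliques of an $n$-node graph in time $\tOh(n^{k-\varepsilon} + t)$. I will use $\mathcal{A}$ as a subroutine to solve Zero-$k$-Clique (on $N$ vertices, with edge weights in $\{-N^c, \ldots, N^c\}$) in time $\Oh(N^{k-\varepsilon'})$ for some $\varepsilon' > 0$, contradicting the \zerocliqueconj.

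The reduction is a hash-and-bucket argument. Set $m := \binom{k}{2}$, let $P$ be a parameter to be fixed later, and sample a prime $p$ uniformly from $[P, 2P]$. Because $x \mapsto x \bmod p$ is additive, any $k$-clique of true weight $0$ satisfies $\sum_{ij} (w_{ij} \bmod p) \equiv 0 \pmod p$. Enumerate all $p^{m-1}$ tuples $(h_{ij})_{1 \le i < j \le k} \in \mathbb{Z}_p^m$ whose entries sum to $0$ modulo $p$ (the last coordinate is determined by the others). For each such tuple, keep only the edges between color classes $i,j$ whose weight hashes to $h_{ij}$; this yields a $k$-partite subgraph $G_{(h_{ij})}$ on the original $N$ vertices. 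Run $\mathcal{A}$ on $G_{(h_{ij})}$, and for each $k$-clique it returns, verify in $\Oh(1)$ time whether its original weight is $0$. Correctness follows because a zero-weight clique has hash tuple summing to $0 \bmod p$, so it is present in exactly one subproblem and will be listed there.

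For the running time, the $p^{m-1}$ calls to $\mathcal{A}$ contribute $p^{m-1} \cdot \tOh(N^{k-\varepsilon})$, and the weight checks contribute $\Oh(T)$, where $T$ is the total number of $k$-cliques listed across all subproblems. A $k$-clique of nonzero weight $W$ (with $|W| \le m \cdot N^c$, so $W$ has at most $\Oh(\log N)$ prime factors) is listed only if $p \mid W$, which happens with probability $\tOh(1/P)$ over a random prime $p \in [P, 2P]$ by the prime number theorem. Summing over the $\Oh(N^k)$ candidate cliques yields $\mathbb{E}[T] = \tOh(N^k/P)$, and the (at most one) true zero clique adds $1$. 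Setting $P := N^{\varepsilon/m}$ balances $p^{m-1} \cdot N^{k-\varepsilon} \approx N^k/p$ and gives expected total time $\tOh(N^{k-\varepsilon/\binom{k}{2}})$. Standard amplification (repeat $\Oh(\log N)$ times with independent primes and take the best) converts the expectation bound into a high-probability bound, which contradicts the \zerocliqueconj with $\varepsilon' := \varepsilon/\binom{k}{2}$.

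The main obstacle is the probabilistic guarantee on the hash: one must argue that a random prime in $[P, 2P]$ both (i) always respects true zero cliques (which is immediate from additivity of $x \bmod p$) and (ii) rarely keeps a non-zero clique in any subproblem, so that the expected listed count $T$ is $\tOh(N^k/P)$ rather than $\Theta(N^k)$. Everything else (iterating efficiently over the $p^{m-1}$ tuples, bucketing edges by hash once in a preprocessing step, verifying weights in constant time) is routine, and the exponent slack $k \ge 3$ ensures that subgraph construction time $\Oh(p^{m-1} N^2)$ is absorbed by the listing calls.
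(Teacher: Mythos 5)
Your reduction is a different (and somewhat more elementary) hashing scheme than the paper's: you use a \emph{small random prime} $p \in [P,2P]$ and bucket each edge by its exact residue $w \bmod p$, whereas the paper fixes a single \emph{large} prime (bigger than any clique weight), multiplies all weights by a uniformly random $x \in \mathbb{F}_p$ to spread nonzero clique weights uniformly over $\mathbb{F}_p$, and then buckets by coarse \emph{intervals} (so that the number of bucket tuples stays polynomially bounded even though $p$ itself is huge). Both hash families preserve every zero-weight clique and make a fixed nonzero-weight clique unlikely to survive, so that part of your argument is fine, and your choice $P = N^{\varepsilon/\binom{k}{2}}$ correctly balances the two cost terms.

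However, there is a genuine gap in your output-size bound. You write that ``the (at most one) true zero clique adds $1$'' to $T$. This is not justified: the input graph may contain $\Theta(N^k)$ zero-weight $k$-cliques (for instance if all edge weights are $0$), and every such clique is preserved by the hash and listed in exactly one subproblem, so $T \ge (\text{number of zero cliques})$ and the total running time is $\Omega(N^k)$ in that case. Aborting as soon as one zero clique is produced does not rescue the analysis, because the hypothesis only bounds the \emph{total} time $\tOh(n^{k-\varepsilon}+t)$ of the listing algorithm and says nothing about the delay before the first output, so the listing call could spend essentially all of its time before emitting the one clique you care about. The paper closes this hole with a preliminary sampling phase: it draws $\Theta(n^{k-\varepsilon})$ uniformly random colorful $k$-tuples and checks each for weight zero. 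If the number $t$ of zero cliques is at least $n^{2\varepsilon}$, this step finds one with probability $1 - e^{-\Omega(n^{\varepsilon})}$; otherwise one may assume $t < n^{2\varepsilon}$ when entering the hashing phase, so that the contribution of the true zero cliques to $T$ is negligible. Your proof needs an analogous guard before the hashing step, after which the rest of your argument (including the $\tOh(1/P)$ survival bound via counting prime divisors of $W$ in $[P,2P]$ against the prime density $\Theta(P/\log P)$, and the absorption of the $\Oh(p^{m-1}N^2)$ construction cost using $k \ge 3$ and $\varepsilon \le 1$ wlog) goes through.
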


In the case of triangles, i.e, $k=3$, this lemma follows from the work of Vassilevska Williams and Xu~\cite[Theorem 3.4]{WilliamsX20}. Here we essentially generalize the proof of~\cite{WilliamsX20} to the case $k > 3$. However, their result also shows lower bounds for a related problem (AllEdgesTriangle) and bounds the number of edges in the constructed graph, both of which is not necessary for the result that we need to prove here, which allows to somewhat simplify their proof.

\begin{proof}
    For the sake of contradiction assume that the \cliquelistconj does not hold, i.e., there exists an algorithm listing all $k$-cliques in time $\tOh(n^{k - \varepsilon} + t)$ for some $k \ge 3$ and $0 < \varepsilon < 1$.
    We use it to design an algorithm that, for a given host graph $G$ with edge weights absolutely bounded by $n^c$, decides whether $G$ has a $k$-clique of total weight zero in time $\Oh(n^{k - \varepsilon'})$ for some $\varepsilon' > 0$. This contradicts the \zerocliqueconj.

    For $100 n^{k - \varepsilon}$ times we pick random nodes from each part of $G$ and check whether they form a zero-$k$-clique. If so, we return such a $k$-clique. Otherwise, we proceed to the main algorithm.

    We first pick a prime number in $[100 k^2 n^c, \Gamma k^2 n^c]$ for a large enough constant $\Gamma$.
    By the prime number theorem, a random integer in this range is prime with probability $\Omega(1 / \log n)$, so we can find such a prime number $p$ in time $\Oh(\polylog n)$ by checking random numbers from this range for primality.
    After picking $p$, we view all edge weights in $G$ as numbers in the prime field $\mathbb{F}_p$ by reducing them modulo $p$.
    Since all edges in $G$ have weights in $[-n^c, n^c]$, any $k$-clique in $G$ has weight in $[-k^2 n^c, k^2 n^c]$.
    Since $p \ge 100 k^2 n^c$, the set of zero-$k$-cliques with respect to the new weights stays the same.

    Pick a uniformly random number $x$ in $\mathbb{F}_p$. Let $G'$ be a copy of~$G$, but all weights $w'$ of edges are multiplied by $x$ over $\mathbb{F}_p$, i.e., $w'(e) = x \cdot w(e) \in \mathbb{F}_p$.
    Note that for any $k$-clique $C$ we have $w'(C) = x \cdot w(C)$ over $\mathbb{F}_p$.

    Let $\rho \coloneqq \frac{\varepsilon}{3k^2}$. We split the numbers $\{0,1,\ldots,p-1\}$ into $\ell \coloneqq \left\lfloor n^{\rho} \right\rfloor$ contiguous ranges $R_1, R_2, \ldots, R_{\ell}$, so that each range has size $\left\lfloor p / \ell \right\rfloor$ or $\left\lceil p / \ell \right\rceil$.
    Let $\mathcal{R}$ be the set of all tuples $(R_{i_e})_{e \in E(K_k)}$ such that $0 \in \sum_{e \in E(K_k)} R_{i_e}$, where $i_e \in \{1,2,\ldots,\ell\}$ for each $e$.
    We compute $\mathcal{R}$ by enumerating all possible tuples $(R_{i_e})_{e \in E(K_k)}$ and checking whether the resulting tuple satisfies $0 \in \sum_{e \in E(K_k)} R_{i_e}$. There are $\Oh(\ell^{|E(K_k)|}) \le \Oh(n^{k^2 \rho}) = \Oh(n^{\varepsilon / 3})$ such tuples, and they can be enumerated in this time bound as well.

    For each tuple $r = (R_{i_e})_{e \in E(K_k)} \in \mathcal{R}$ we construct the set of edges \[E_r \coloneqq \bigcup_{\{a, b\} \in E(K_k)} \{\{v_a, v_b\} \in E(G) \mid v_a \in V_a, v_b \in V_b, w'(\{v_a, v_b\}) \in R_{i_{\{a, b\}}}\}.\]
    Let $G_r = (V(G), E_r, w')$ be the subgraph of $G'$ with edge set $E_r$.
    Observe that $G'$ has a zero-$k$-clique if and only if at least one of the graphs $G_r$ has a zero-$k$-clique.

    We list all $k$-cliques in each graph $G_r$ for all $r \in \mathcal{R}$. For each listed $k$-clique, we check whether it is a zero-$k$-clique in $G$, and if so, we return it.

    Correctness of the presented algorithm is immediate: Since we check the output to be a zero-$k$-clique in $G$, if we return a clique, then it is guaranteed to be a zero-$k$-clique, so we never return a false positive.
    Moreover, for any zero-$k$-clique $C$ in $G$, we have $w'(C) = x \cdot w(C) = 0$ over $\mathbb{F}_p$, so $C$ is also a zero-$k$-clique in $G'$, and thus a zero-$k$-clique in some graph $G_r$ for some $r \in \mathcal{R}$.
    Since we list all $k$-cliques in $G_r$, in particular we list $C$.
    Thus, if $G$ has any zero-$k$-cliques, the algorithm will return one.

    It remains to analyze the running time of the algorithm.
    If the number $t$ of zero-$k$-cliques in $G$ is at least $n^{2 \varepsilon}$, a collection of random nodes picked in the first stage of the algorithm is a zero-$k$-clique with probability at least $n^{2\varepsilon - k}$.
    Hence, by repeating the procedure $100 n^{k - \varepsilon}$ time we find a zero-$k$-clique with probability at least $1 - (1 - n^{2 \varepsilon - k})^{100 n^{k - \varepsilon}} = 1 - \Oh(e^{-n^{\varepsilon}})$.
    Thus, in this case we continue to the main stage of the algorithm with probability $\Oh(e^{-n^{\varepsilon}})$.

    Note that since $|R_e| = \Oh(p / n^{\rho})$, the constraint $0 \in \sum_{e \in E(K_k)} R_{i_e}$ for a tuple $(R_{i_e})_{e \in E(K_k)}$ to be in $\mathcal{R}$ implies that $\sum_{e \in E(K_k)} R_{i_e} \subseteq [-\Oh(p / n^{\rho}), \Oh(p / n^{\rho})]$ over $\mathbb{F}_p$.
    Thus, if a $k$-clique $C$ appears as a $k$-clique of one of graphs $G_s$, then $w'(C) \in [-\Oh(p / n^{\rho}), \Oh(p / n^{\rho})]$ over $\mathbb{F}_p$.
    If $C$ is not a zero-$k$-clique, then $w'(C) = x \cdot w(C)$ is uniformly distributed in $\mathbb{F}_p$, since $p$ is prime and $x$ a uniformly random number in $\mathbb{F}_p$.
    Hence, $w'(C) \in [-\Oh(p / n^{\rho}), \Oh(p / n^{\rho})]$ happens with probability $\Oh(n^{- \rho})$.
    It follows that if $t \le n^{2 \varepsilon}$, then the expected total number of listed cliques in all graphs $G_r$ over all $r \in \mathcal{R}$ is $\Oh(n^{2 \varepsilon} + n^{k - \rho}) = \Oh(n^{k - \rho})$, where the first summand counts the number of zero-$k$-cliques in $G$ and the second counts the number of false positives.
    The remaining case $t \ge n^{2 \varepsilon}$ is handled with high probability by the first step of the algorithm which samples random $k$-tuples of nodes and checks whether they are zero-$k$-cliques. Specifically, with high probability in this case we do not list any cliques (and with probability $\Oh(e^{-n^{\varepsilon}})$ we list $O(n^k)$ cliques).
    In both cases, the expected total number of listed cliques is $\Oh(n^{k - \rho})$.

    Since $|\mathcal{R}| = \Oh(n^{\varepsilon / 3})$, we get that the expected time the algorithm takes is $\tOh(n^{k - \varepsilon} + n^{\varepsilon / 3} \cdot n^{k - \varepsilon} + n^{k - \rho}) \le \tOh(n^{k - \rho})$, where we use that the total output size of all calls to $k$-clique listing is $\Oh(n^{k - \rho})$ in expectation.

    With standard boosting methods we can turn this algorithm with expected running time into an algorithm that is correct with high probability.
    Using Markov's bound, by aborting this algorithm after time $\tOh(n^{k - \rho})$, we obtain a randomized algorithm running in time $\tOh(n^{k - \rho})$ that is correct with probability at least $0.9$.
    Repeating this algorithm $\Oh(\log n)$ times and returning a zero-$k$-clique if at least one repetition returned a zero-$k$-clique, we obtain a randomized algorithm running in time $\tOh(n^{k - \rho}) \le \Oh(n^{k - \rho / 2})$ that is correct with high probability.
    This contradicts the \zerocliqueconj.
\end{proof}

\subsection{Lower Bounds Based on Clique Embeddings}\label{sec:lower-bounds-until-p-graphs:clemb}

The following is a helper lemma to prove lower bounds based on clique embeddings.

\begin{lemma} \label{lm:clemb-reduction}
    Let $H$ be a pattern, $k \ge 3$ be an integer, and $\psi : K_k \to H$ be a clique embedding.
    There is an algorithm that given a (weighted) host graph $G$ for $K_k$-subgraph isomorphism with $|V(G)| = n$ computes a (weighted) host graph $G'$ for $H$-subgraph isomorphism with $|E(G')| = \Oh(n^{\wed(\psi)})$ in time $\Oh(n^{\wed(\psi)})$, such that there is a (weight-preserving) constant-time-computable bijection between $K_k$-subgraphs in $G$ and $H$-subgraphs in $G'$.
\end{lemma}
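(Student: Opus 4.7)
The plan is to let each part of $G'$ consist of tuples of $V(G)$-elements indexed by the clique-nodes whose image contains the corresponding $H$-node. Concretely, for each $a \in V(H)$ set $C(a) := \{u \in V(K_k) : a \in \psi(u)\}$, and let $V_a$ be a subset of $V(G)^{C(a)}$. For each $H$-edge $e = \{a,b\}$, set $D_e := \{u \in V(K_k) : \psi(u) \cap \{a,b\} \neq \emptyset\}$; then $C(a) \cup C(b) \subseteq D_e$ and $|D_e| \leq \wed(\psi)$.

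Next I would fix once and for all an assignment $\mathrm{rep} : E(K_k) \to E(H)$ sending each clique-edge $\{u,u'\}$ to some $H$-edge $\{a,b\}$ with $\{u,u'\} \subseteq D_{\{a,b\}}$. Such an assignment exists because $\psi(u)$ and $\psi(u')$ touch: either they share a node $a$, in which case any edge of $H$ incident to $a$ works, or else there is an edge $\{a,b\} \in E(H)$ with $a \in \psi(u)$ and $b \in \psi(u')$.

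The edges of $G'$ are now generated by iterating, for every $e = \{a,b\} \in E(H)$, over all tuples $\bar{x} = (x_u)_{u \in D_e} \in V(G)^{D_e}$. For each $\bar{x}$, check that $\{x_u, x_{u'}\} \in E(G)$ holds for every clique-edge $\{u,u'\}$ with $\mathrm{rep}(\{u,u'\}) = e$; if so, add an edge in $G'$ between $\bar{x}|_{C(a)} \in V_a$ and $\bar{x}|_{C(b)} \in V_b$ with weight $\sum_{\{u,u'\} : \mathrm{rep}(\{u,u'\}) = e} w_G(x_u, x_{u'})$. Retain in each $V_a$ only tuples that appear as an endpoint of some edge. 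Since $|E(H)|$ is constant and $|D_e| \leq \wed(\psi)$, both the construction time and $|E(G')|$ are $\Oh(n^{\wed(\psi)})$.

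For the bijection, a $K_k$-clique $\phi : V(K_k) \to V(G)$ in $G$ yields the $H$-subgraph $v_a := (\phi(u))_{u \in C(a)}$, with all edge-conditions satisfied and total weight preserved because every clique-edge is charged exactly once (by its $\mathrm{rep}$). In the reverse direction, an $H$-subgraph $(v_a)_{a \in V(H)}$ of $G'$ determines $\phi(u)$ by reading the $u$-th entry of any $v_a$ with $a \in \psi(u)$; the only real obstacle is the well-definedness of $\phi(u)$ across different choices of $a \in \psi(u)$, which follows from the fact that $H[\psi(u)]$ is connected, so consistency of the $u$-coordinate is propagated along any path in $H[\psi(u)]$ by the per-edge consistency enforced during the construction. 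The resulting $\phi$ is a $K_k$-clique in $G$ since every clique-edge is validated by its assigned $H$-edge, and both directions of the bijection are clearly computable in constant time.
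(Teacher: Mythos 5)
Your proof is correct and takes essentially the same approach as the paper: you use the same assignment $\mathrm{rep}$ (the paper's $\eemb$) of clique-edges to $H$-edges, the same "tuples indexed by $\psi^{-1}(a)$" construction of $V_a$, and the same connectivity-of-$H[\psi(u)]$ argument for well-definedness of $\phi$. The only cosmetic difference is in the per-edge test: the paper adds a $G'$-edge $\{v_a,v_b\}$ iff the full union $\bigcup_{u\in\psi^{-1}(a)}\{v_{a,u}\}\cup\bigcup_{u\in\psi^{-1}(b)}\{v_{b,u}\}$ is a clique in $G$, whereas you only test the clique-edges that $\mathrm{rep}$ assigns to $e$; both are sufficient because $\mathrm{rep}$ covers every clique-edge, so the reconstructed $\phi$ is still verified to be a clique.
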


\begin{proof}
	It suffices to prove the weighted case of the lemma, as the unweighted case follows from setting all weights to zero. The proof is a rather standard embedding.
	
	Choose a function $\eemb : E(K_k) \to E(H)$ that maps every edge $\{x, y\}$ of $K_k$ to an arbitrary edge $\{a, b\} \in E(H)$ such that $x, y \in \psi^{-1}(a) \cup \psi^{-1}(b)$. 
	Such a function $\eemb$ exists, because for any $\{x, y\} \in E(K_k)$ the subgraphs induced by $\psi(x)$ and $\psi(y)$ touch. That is, there is an edge $\{a,b\} \in E(H)$ with $a \in \psi(x)$ and $b \in \psi(y)$, or there is a node $a \in V(H)$ with $a \in \psi(x) \cap \psi(y)$. In the former case we can choose $f(\{x,y\}) = \{a,b\}$, in the latter case we pick an arbitrary neighbor $b$ of $a$ in $H$ and set $f(\{x,y\}) = \{a,b\}$, where we used that $H$ is connected and $|V(H)| \ge 2$ since $H$ is a pattern.
	
	Denote the partitioning of $G$ by $V_x$ for $x \in V(K_k)$. We assume that the edges of $G$ are stored in an adjacency matrix, so given two nodes we can query whether there is an edge between them and, if so, query its weight in constant time. Reading the adjacency matrix of $G$ takes time $\Oh(n^2) \le \Oh(n^{\wed(\psi})$. Here, $\wed(\psi) \ge 2$ follows from the fact that any two embedded subgraphs $\psi(x), \psi(y)$ touch, so there is some edge $e \in E(H)$ with $x, y \in \psi^{-1}(e)$, and thus the depth of $e$ is at least two.

    We construct the graph $G'$ as follows. The nodes of $G'$ consist of parts $V'_a$ for $a \in V(H)$. 
    For each $a \in V(H)$, we put a node in $V'_a$ for each tuple of nodes in $\prod_{x \in \psi^{-1}(a)} V_x$. That is, each node in $V'_a$ corresponds to a tuple of nodes from the parts of $G$ corresponding to nodes of $K_k$ that are embedded into $a$ by $\psi$. For a node $v_a \in V'_a$, denote by $v_{a,x} := v_a |_x$ the node from $V_x$ that $v_a$ encodes for $x \in \psi^{-1}(a)$.
    To construct the edges of $G'$, for each edge $e = \{a, b\} \in E(H)$ and all nodes $v_a \in V'_a$ and $v_b \in V'_b$, we add the edge $\{v_a, v_b\}$ to $E(G')$ if and only if $C \coloneqq \bigcup_{x \in \psi^{-1}(a)} \{v_{a,x}\} \cup \bigcup_{x \in \psi^{-1}(b)} \{v_{b,x}\}$ forms a clique in $G$. In particular, since there are no edges in $G$ within a part, this condition checks that $v_{a,x} = v_{b,x}$ for every $x \in \psi^{-1}(a) \cap \psi^{-1}(b)$.
    In case we put an edge, we set its weight to $\sum_{\{x,y\} \in \eemb^{-1}(e)} w(\{v_{a,x}, v_{b,y}\})$, where $w$ is the weight function of edges in $G$.

    We now prove that there is a constant-time-computable weight-preserving one-to-one correspondence between $K_k$-subgraphs in $G$ and $H$-subgraphs in $G'$. Fix a $k$-clique $C$ in $G$. For each $x \in V(K_k)$, denote by $C_x$ the node of $C$ from $V_x$. Consider nodes $v_a = C |_{\psi^{-1}(a)} \in V'_a$ for all $a \in V(H)$. We claim that $\corr(C) \coloneqq \bv \coloneqq (v_a)_{a \in V(H)}$ forms an $H$-subgraph in $G'$, and its weight is the same as the weight of $C$. Indeed, for every edge $\{a, b\} \in E(H)$ we have $\{v_a,v_b\} \in E(G')$ since the subset $\bigcup_{x \in \psi^{-1}(a)} \{v_{a,x}\} \cup \bigcup_{x \in \psi^{-1}(b)} \{v_{b,x}\} = \bigcup_{x \in \psi^{-1}(a) \cup \psi^{-1}(b)} \{C_x\} \subseteq C$ forms a clique in $G$. Furthermore, the weight of the subgraph formed by $\bv$ is $\sum_{e \in E(H)} \sum_{\{x,y\} \in \eemb^{-1}(e)} w(\{C_x, C_y\}) = \sum_{\{x,y\} \in E(K_k)} w(\{C_x, C_y\})$, which is the weight of $C$, because the function $\eemb$ embeds every edge into exactly one summand. Thus, every $K_k$-subgraph in $G$ corresponds to an $H$-subgraph in $G'$ of the same weight.
    Furthermore, such a correspondence is an injection.
    Indeed, let $C'$ be some other $k$-clique in $G$.
    As $C \neq C'$, we have $C_x \neq C'_x$ for some $x \in V(K_k)$.
    Pick some $a \in \psi(x)$.
    It is easy to see that $\corr(C)$ and $\corr(C')$ pick different nodes in $V'_a$.
    Hence, $\corr$ is indeed an injection.

    We now show the inverse correspondence. Fix an $H$-subgraph $\bv$ in $G'$. For each $a \in V(H)$, denote $v_a$ the node of $\bv$ from $V'_a$. Consider nodes $C_x \coloneqq v_a^x$ for each $x \in V(K_k)$ for some $a \in \psi(x)$. We claim that $C \coloneqq (C_x)_{x \in V(K_k)}$ forms a $k$-clique in $G$ and $\corr(C) = \bv$.

    First, we show that $v_b^x = C_x$ for all $b \in \psi(x)$. Let $a$ be the node in $\psi(x)$, for which $C_x$ was defined as $v_a^x$. Thus, $v_a^x = C_x$. By the definition of a clique embedding, $H[\psi(x)]$ is connected, thus there exists a path $a = d_1 \edg d_2 \edg \dots \edg d_{i} = b$ between $a$ and $b$ in $H$, such that all nodes of this path are from $\psi(x)$. If we now consider $v_{d_1}, v_{d_2}, \ldots, v_{d_i}$, such nodes form a path in $G'$ because $\bv$ is an $H$-subgraph in $G'$. For each $j \in [i-1]$, an edge between $v_{d_j}$ and $v_{d_{j+1}}$ was added to $G'$ only if $v_{d_j}$ and $v_{d_{j+1}}$ represent the same nodes in parts $V_y$ for all $y \in \psi^{-1}(d_j) \cap \psi^{-1}(d_{j+1})$. In particular, they represent the same node in $V_x$. Thus, by transitivity $v_{d_1}$ and $v_{d_i}$ represent the same node in $V_x$, so $v_b^x = v_{d_i}^x = v_{d_1}^x = v_a^x = C_x$, proving the claim.
    Thus, indeed $\corr(C) = \bv$.
    Consider some $x \neq y \in V(K_k)$.
    By the definition of a clique embedding, there is an edge $\{a, b\} \in E(H)$ such that $x, y \in \psi^{-1}(\{a, b\})$.
    As $\bv$ forms an $H$-subgraph in $G$, we have that $v_a$ and $v_b$ are adjacent, and thus $C_x$ and $C_y$ are adjacent due to the definition of graph $G'$.
    Hence, $C$ forms a $k$-clique in $G$.

    Thus, $\corr$ is an injection and a surjection.
    Furthermore, computation of $\corr$ and $\corr^{-1}$ can be straightforwardly done in constant time.

    It remains to analyze the time complexity of building graph $G'$. We first build its nodes. For each $a \in H$, we create $\prod_{x \in \psi^{-1}(a)} |V_x| \le n^{|\psi^{-1}(a)|}$ nodes in $V'_a$. For any edge $\{a, b\} \in H$, weak edge depth of $\{a, b\}$ is at least $|\psi^{-1}(a)|$, thus $|\psi^{-1}(a)| \le \wed(\psi)$.
    Thus, nodes of $G'$ can be computed in time $\Oh(n^{\wed(\psi)})$. Then for each $\{a, b\} \in E(H)$, we create edges between $V'_a$ and $V'_b$. We fix a node $v_a \in V'_a$, and to generate all $v_b \in V'_b$ with $v_b^x = v_a^x$ for all $x \in \psi^{-1}(a) \cap \psi^{-1}(b)$, we generate all tuples of nodes from $\prod_{x \in \psi^{-1}(b) \setminus \psi^{-1}(a)} V_x$ and generate a node $v_b$ that encodes them and $v_a^x$ for all $x \in \psi^{-1}(a) \cap \psi^{-1}(b)$. In total, this process works in time $\Oh(\prod_{x \in \psi^{-1}(\{a, b\})} |V_x|) \le \Oh(n^{\wed(\psi)})$. For a fixed pair of nodes $v_a \in V'_a$ and $v_b \in V'_b$, it takes constant time to decide whether an edge should be created and what should be its weight. Thus, in total it takes time $\Oh(n^{\wed(\psi)})$ to build $G'$.
\end{proof}

Despite the supremum in the definition of clique embeddings, it is known that the clique embedding power is always attained by an embedding from a finite clique $K_k$:

\begin{lemma}[{\cite[Proposition 16]{FanKZ23}}] \label{fct:clemb-for-fixed-k}
    For any pattern $H$, there is some $k$ and a clique embedding $\psi : K_k \to H$, such that $\clemb(\psi) = \clemb(H)$.
\end{lemma}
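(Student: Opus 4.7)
The plan is to reformulate the supremum in the definition of $\clemb(H)$ as the maximum over finitely many finite linear programs with rational coefficients, and then to turn a rational optimum into an integer clique embedding by clearing denominators.

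First I would exploit that the nodes of $K_k$ are indistinguishable, so any clique embedding $\psi \colon K_k \to H$ is uniquely described by the multiset $\{\psi(u) : u \in V(K_k)\}$ of non-empty connected subsets of $V(H)$. Since $V(H)$ is finite, only finitely many such subsets exist; enumerate them as $B_1, \ldots, B_N$. Then an embedding corresponds to a vector $(x_1, \ldots, x_N) \in \mathbb{Z}_{\ge 0}^N$ with $x_i$ the multiplicity of $B_i$. In these terms $k = \sum_i x_i$, $\wed(\psi) = \max_{e \in E(H)} \sum_{i : B_i \cap e \ne \emptyset} x_i$, and the requirement that $\psi$ be a valid embedding becomes the combinatorial condition that the support $\{i : x_i > 0\}$ is a pairwise-touching family of subsets of $V(H)$.

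Next I would observe that scaling $(x_1, \ldots, x_N)$ by any positive rational preserves the ratio $k/\wed(\psi)$. Consequently, for every pairwise-touching set $S \subseteq [N]$, maximizing $k/\wed(\psi)$ over embeddings with support contained in $S$ reduces to the finite linear program
\[
\max \sum_{i \in S} x_i \quad \text{subject to} \quad \sum_{\substack{i \in S \\ B_i \cap e \ne \emptyset}} x_i \le 1 \text{ for every } e \in E(H), \quad x_i \ge 0.
\]
This LP has rational coefficients, is feasible (take $x = 0$), and is bounded (since $H$ is a pattern and hence has no isolated nodes, each coordinate $x_i$ appears in at least one constraint), so it attains its optimum at a rational feasible point. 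Because there are only finitely many candidate supports $S$, the value $\clemb(H)$ equals the maximum of finitely many rational numbers and is attained by some rational vector $x^\star$.

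Finally, I would clear denominators: pick a positive integer $t$ with $t x^\star \in \mathbb{Z}_{\ge 0}^N$, replacing $t$ by $3t$ if necessary to ensure $k := \sum_i t x^\star_i \ge 3$. The resulting integer vector encodes a valid clique embedding $\psi \colon K_k \to H$ with $k/\wed(\psi) = \clemb(H)$, as required. The only subtlety I anticipate is that the pairwise-touching condition on the support cannot be expressed as a linear constraint; however, this is a mild obstacle, since there are only finitely many candidate supports $S \subseteq [N]$, so one simply iterates over them and solves a separate LP for each.
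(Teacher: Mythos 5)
The paper does not prove this lemma; it is imported verbatim as Proposition~16 from the cited reference \cite{FanKZ23}, so there is no in-paper argument to compare against. What can be assessed is whether your self-contained proof is correct, and it is. The multiset encoding of an embedding, the reduction of the scale-invariant ratio $k/\wed(\psi)$ over a fixed pairwise-touching support $S$ to the linear program with normalized denominator, the observation that each variable $x_i$ appears in at least one constraint (because $H$ is connected with at least two vertices, so every non-empty connected subset meets some edge) and hence the LP is bounded, and the clearing-of-denominators step all go through. The one minor point you could state explicitly is that the LP optimum for the best support $S$ is strictly positive (e.g.\ take a single vertex $\{v\}$ with multiplicity~1, giving $\nu(S) \ge 1$), so the optimal $x^\star$ is non-zero and the final $k$ can indeed be made at least~3 by scaling; you gesture at this via ``replacing $t$ by $3t$'' but do not rule out $x^\star = 0$. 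With that line added the argument is complete. It is plausibly the same LP-based argument used in \cite{FanKZ23}, but since that proof is not reproduced here I cannot confirm whether you have rediscovered it or found a different one.
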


We can now turn clique embeddings into conditional lower bounds for subgraph problems.

\allHisocliquehardness*

\begin{proof}[Proof of \cref{all-H-iso-clique-hardness}]
    We start with the claim on \Hminiso. For the sake of contradiction assume that there is an algorithm that solves \Hminiso in time $\Oh(m^{\clemb(H) - \varepsilon})$ for some $\varepsilon > 0$. According to \cref{fct:clemb-for-fixed-k}, there is some $k$ and an embedding $\psi : K_k \to H$, such that $\clemb(\psi) = \clemb(H)$. We assume that $\clemb(H) > 1$ as otherwise the claim is obvious, since \Hminiso is not solvable in sublinear time. We now prove that \miniso{K_k} is solvable in time $\Oh(n^{k - \varepsilon'})$ for some $\varepsilon' > 0$, contradicting the \cliqueconj.
    Given an instance $G$ of \miniso{K_k} with $|V(G)| = n$, we use \cref{lm:clemb-reduction} to build an instance $G'$ of \Hminiso with $|E(G')| = \Oh(n^{\wed(\psi)})$ in time $\Oh(n^{\wed(\psi)})$, such that there is a weight-preserving constant-time-computable bijection between $K_k$-subgraphs in $G$ and $H$-subgraphs in $G'$. 
    Thus, solving \Hminiso on $G'$ in time $\Oh(|E(G')|^{\clemb(H) - \varepsilon}) = \Oh(n^{\wed(\psi) \cdot (\clemb(\psi) - \varepsilon)}) = \Oh(n^{k - \wed(\psi) \cdot \varepsilon})$, we solve \miniso{K_k} on $G$ in time $\Oh(n^{\wed(\psi)}) + \Oh(n^{k - \wed(\psi) \cdot \varepsilon}) = \Oh(n^{k - \varepsilon'})$ for $\varepsilon' \coloneqq \min\{k - \wed(\psi), \wed(\psi) \cdot \varepsilon\}$.
    Since $\clemb(\psi) > 1$ we have $\wed(\psi) < k$ and thus $\eps' > 0$, contradicting the \cliqueconj.

    \smallskip

    For proving the claim on \Hlistiso, by \cref{lm:zero-clique-to-clique-list} we may assume the \cliquelistconj.
    For the sake of contradiction assume that there is an algorithm that solves \Hlistiso in time $\tOh(m^{\clemb(H) - \varepsilon} + t)$ for some $\varepsilon > 0$, where $t$ is the number of $H$-subgraphs in the host graph. According to \cref{fct:clemb-for-fixed-k}, there exists some $k$ and an embedding $\psi : K_k \to H$, such that $\clemb(\psi) = \clemb(H)$. We assume that $\clemb(H) > 1$ as otherwise the lemma claim is obvious, since \Hlistiso is not solvable in sublinear time $\tOh(m^{1-\eps} + t)$. We now prove that \listiso{K_k} is solvable in time $\tOh(n^{k - \varepsilon'} + t')$ for some $\varepsilon' > 0$, where $t'$ is the number of $k$-cliques in the host graph, which contradicts the \cliquelistconj.
    Given an instance $G$ of \listiso{K_k} with $|V(G)| = n$, we use \cref{lm:clemb-reduction} (in the unweighted setting) to create an instance $G'$ of \Hlistiso with $|E(G')| = \Oh(n^{\wed(\psi)})$ in time $\Oh(n^{\wed(\psi)})$, such that there is a constant-time-computable bijection between $K_k$-subgraphs in $G$ and $H$-subgraphs in $G'$.
    Thus, solving \Hlistiso on $G'$ in time $\tOh(|E(G')|^{\clemb(H) - \varepsilon} + t) = \tOh(n^{\wed(\psi) \cdot (\clemb(\psi) - \varepsilon)} + t) = \tOh(n^{k - \wed(\psi) \cdot \varepsilon} + t)$, we solve \listiso{K_k} on $G$ in time $\Oh(n^{\wed(\psi)}) + \tOh(n^{k - \wed(\psi) \cdot \varepsilon} + t) = \tOh(n^{k - \varepsilon'} + t)$ for $\varepsilon' \coloneqq \min\{k - \wed(\psi), \wed(\psi) \cdot \varepsilon\}$, where again $\eps' > 0$. This contradicts the \cliquelistconj, and thus the \zerocliqueconj.
    
    \smallskip

	Now consider the claim on \Henumiso. For the sake of contradiction assume that there is an algorithm that solves \Henumiso in $\Oh(m^{\clemb(H) - \varepsilon})$ preprocessing time and $\tOh(1)$ delay for some $\varepsilon > 0$.
    We can then use this algorithm to solve \Hlistiso in time $\tOh(m^{\clemb(H) - \varepsilon} + t)$, where $t$ is the number of $H$-subgraphs in the host graph, by enumerating all $H$-subgraphs in the host graph. By the second claim of the lemma this contradicts the \zerocliqueconj.
\end{proof}

\subsection{Lower Bounds Based on Induced Minors}\label{sec:lower-bounds-until-p-graphs:induced-minor}

For induced minors, we prove a similar helper lemma as for clique embeddings (cf.~\cref{lm:clemb-reduction}).

\begin{lemma} \label{induced-minor-reduction}
    Let $H$ and $H'$ be patterns, where $H'$ is an induced minor of $H$. Let $G'$ be a (weighted) $m$-edge host graph for pattern $H'$.
    There is an algorithm that in time $\Oh(m)$ builds a (weighted) host graph $G$ for pattern $H$ such that $|V(G)| = \Oh(m)$, $|E(G)| = \Oh(m)$, and there is a (weight-preserving) constant-time-computable bijection $\corr$ between $H$-subgraphs in~$G$ and $H'$-subgraphs in $G'$.
\end{lemma}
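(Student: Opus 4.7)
The plan is to generalize the natural reduction for (non-induced) minors, taking extra care about the non-edges of $H'$ and about the vertices of $H$ that are deleted when passing to the induced minor $H'$. By the definition of induced minor I may fix pairwise disjoint connected branch sets $\{S_a\}_{a \in V(H')} \subseteq V(H)$ with the property that for all distinct $a, b \in V(H')$, an edge of $H$ between $S_a$ and $S_b$ exists if and only if $\{a, b\} \in E(H')$. Write $D := V(H) \setminus \bigcup_{a \in V(H')} S_a$ for the ``deleted'' vertices. For every $\{a, b\} \in E(H')$ I fix once and for all a canonical edge $e^*_{ab} = \{c^*_{ab}, d^*_{ab}\} \in E(H)$ with $c^*_{ab} \in S_a$ and $d^*_{ab} \in S_b$.

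Next, I construct $G$ as follows. For every $v \in V'_a$ create a \emph{gadget} $\Gamma_v$ consisting of one node $v^c \in V_c$ per $c \in S_a$, and for every $c \in D$ create a single \emph{hub} vertex $u_c \in V_c$. For each edge $\{c, d\} \in E(H)$ I add to $G$: if $c, d \in S_a$ for some $a$, the internal gadget edges $\{v^c, v^d\}$ for every $v \in V'_a$; if $c \in S_a$, $d \in S_b$ with $a \neq b$, the inter-gadget edges $\{v^c, w^d\}$ for every $\{v, w\} \in E(G')$ with $v \in V'_a$, $w \in V'_b$; and if at least one of $c, d$ lies in $D$, the corresponding gadget-to-hub or hub-to-hub edges using the unique hubs. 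All these edges get weight $0$, with the sole exception that each inter-gadget edge corresponding to the canonical edge $e^*_{ab}$ is assigned weight $w'(\{v, w\})$. With this rule each edge of $H'$ contributes its weight exactly once to the corresponding $H$-subgraph in $G$.

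For the bijection $\corr$: given an $H$-subgraph $\bu = (u_c)_{c \in V(H)}$ of $G$, each part $V_c$ with $c \in D$ has only the hub available, so the freedom is entirely in picking $v^c \in V_c$ for $c \in S_a$. The internal gadget edges and the connectedness of $H[S_a]$ force $\prj{\bu}{S_a}$ to come from a single gadget $\Gamma_{v_a}$; the inter-gadget edges then force $\{v_a, v_b\} \in E(G')$ whenever $\{a, b\} \in E(H')$, so $(v_a)_{a \in V(H')}$ is a valid $H'$-subgraph of $G'$. Conversely, any $H'$-subgraph of $G'$ lifts to the $H$-subgraph of $G$ that takes $v_a^c$ in $V_c$ for $c \in S_a$ and the hub in $V_c$ for $c \in D$; the induced-minor property ensures every edge of $H$ is realized. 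Weight preservation is immediate from the canonical-edge rule, and since $|V(H)|$ is constant the map $\corr$ is $\Oh(1)$-time computable in both directions.

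The size and time bounds follow directly from $|V(H)|$ being constant: each gadget contributes $\Oh(1)$ vertices and internal edges, each edge of $G'$ contributes $\Oh(1)$ inter-gadget edges, and the hubs contribute $\Oh(1)$ overall; since $G'$ has no isolated vertices, $|V(G')| = \Oh(m)$ and hence $|V(G)|, |E(G)| = \Oh(m)$, with the construction running in $\Oh(m)$ time. The main subtlety, compared to the analogous reduction for (non-induced) minors, will be ensuring that no spurious edges between distinct gadgets $\Gamma_v$ and $\Gamma_w$ can appear when $\{a, b\} \notin E(H')$: this is precisely the point where the \emph{induced}-minor property ``no edge between branch sets for non-adjacent $a, b$'' is used, and is the one point that needs careful verification. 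The hub device then handles the deleted vertices in $D$ without creating any extra $H$-subgraphs, preserving the bijection.
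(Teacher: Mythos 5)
Your construction is essentially identical to the paper's: your gadgets $\Gamma_v$ correspond to the paper's indexed node copies $v_{a,i}$ spread across the parts $V_a$ for $a$ in a given branch set, your hubs are the paper's singleton parts for the deleted vertices (those with $\varphi^{-1}(a) = \bot$), and your canonical edge $e^*_{ab}$ plays the same role as the paper's weight-assignment function $f : E(H') \to E(H)$. One small remark on your closing paragraph: the induced-minor hypothesis is not really needed to rule out ``spurious'' inter-gadget edges (your edge rule only fires for $\{c,d\} \in E(H)$, so non-adjacent branch sets never receive such edges by construction); rather, it is used in the lifting direction, to guarantee that every edge of $H$ between distinct branch sets $S_a, S_b$ is backed by an edge $\{a,b\} \in E(H')$, so that the lift of an $H'$-subgraph of $G'$ actually realizes all edges of $H$ and the correspondence is a bijection and not merely an injection.
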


\begin{proof}
	Again it suffices to prove the lemma in the weighted setting, as in the unweighted setting we can set all weights to zero. The proof is similar to \cref{lm:clemb-reduction}.
	
    As $H'$ is an induced minor of $H$, there exists a function $\varphi : V(H') \to 2^{V(H)}$, such that nodes of $H'$ are mapped to disjoint non-empty connected subsets of nodes of $H$, where $\{x, y\} \in E(H')$ for $x \neq y$ if and only if $\{a, b\} \in E(H)$ for some $a \in \varphi(x)$ and $b \in \varphi(y)$. Pick an arbitrary such function $\varphi$ and a function $\eemb : E(H') \to E(H)$ that maps each $\{x, y\} \in E(H')$ to some $\{a, b\} \in E(H)$ with $a \in \varphi(x)$ and $b \in \varphi(y)$.
Denote by $\varphi^{-1} : V(H) \to V(H') \cup \{\bot\}$ a function that maps each node $a$ of $H$ to the only node $x$ of $H'$, such that $a \in \varphi(x)$, or $\bot$ if no such node $x$ exists. As sets $\varphi(x)$ are disjoint, $\varphi^{-1}$ is uniquely defined.

Denote the parts of $G'$ by $V'_x$ for $x \in V(H')$.

We construct the graph $G$ as follows. For each $a \in V(H)$ we construct a part $V_a$ in $G$. 
If $\varphi^{-1}(a) = x$ for some $x \in V(H')$, then $V_a$ consists of $|V'_x|$ nodes that are in a one-to-one correspondence with nodes of $V'_x$. We denote the nodes in $V'_x$ by $v_{x,i}$ for $i \in [|V'_x|]$, and the nodes in $V_a$ by $v_{a,i}$ for $i \in [|V'_x|]$. 
Otherwise we have $\varphi^{-1}(a) = \bot$, in this case $V_a$ consists of a single node $v_{a,1}$. 
To construct the edges of $G$, for each $\{a, b\} \in E(H)$, we add edges between $V_a$ and $V_b$ as follows.
If $\varphi^{-1}(a) = \bot$ or $\varphi^{-1}(b) = \bot$, we add edges of weight zero between all pairs of nodes in $V_a$ and $V_b$. If $\varphi^{-1}(a) = x = \varphi^{-1}(b)$ for some $x \in V(H')$, we add edges of weight zero between $v_{a,i}$ and $v_{b,i}$ for all $i \in [|V'_x|]$.
If $\varphi^{-1}(a) = x \neq y = \varphi^{-1}(b)$ for some $x, y \in V(H')$, for each edge $\{v'_{x,i}, v'_{y,j}\} \in E(G')$ between nodes in $V'_x$ and $V'_y$, we add an edge $\{v_{a,i}, v_{b,j}\}$ to $G$. Furthermore, if $\eemb(\{x, y\}) = \{a, b\}$, we set the weight of the edge $\{v_{a,i}, v_{b,j}\}$ to the weight of the edge $\{v'_{x,i}, v'_{y,j}\}$, otherwise we set it to zero.

We now prove that $|V(G)|, |E(G)| = \Oh(m)$. In each part $V_a$, we have either a single node or $|V'_x| = \Oh(m)$ nodes for $x = \varphi^{-1}(a)$. Thus, there are $\Oh(m)$ nodes in $G$ in total. Regarding edges, if $\varphi^{-1}(a) = \bot$ or $\varphi^{-1}(b) = \bot$, either $V_a$ or $V_b$ consists of a single node, and we add $\Oh(|V_a| + |V_b|) = \Oh(m)$ edges between these parts. If $\varphi^{-1}(a) = x = \varphi^{-1}(b)$, we add $|V'_x| = \Oh(m)$ edges between these parts. If $\varphi^{-1}(a) = x \neq y = \varphi^{-1}(b)$, the number of edges we add is equal to the number of edges between $V'_x$ and $V'_y$ in $G'$, and thus is $\Oh(m)$. Hence, there are $\Oh(m)$ edges in $G$ in total. It is easy to see that $G$ can also be constructed in time $\Oh(m)$.

We now show that there is a weight-preserving constant-time-computable bijection between $H$-subgraphs in $G$ and $H'$-subgraphs in $G'$. Fix an $H'$-subgraph $\bu'$ in $G'$. For each $x \in V(H')$, denote by $u'_x$ the node of $\bu'$ from $V'_x$. For any $a \in V(H)$, if $x = \phi^{-1}(a)$ then we have $u'_x = v_{x,i}$ for some $i$, and we set $u_a := v_{a,i}$. Otherwise we have $\varphi^{-1}(a) = \bot$ and we set $u_a := v_{a,1}$. We claim that $\corr(\bu') \coloneqq \bu \coloneqq (u_a)_{a \in V(H)}$ forms an $H$-subgraph in $G$ and its weight is the same as the weight of $\bu'$.
Indeed, consider some edge $\{a, b\} \in E(H)$.
If $\varphi^{-1}(a) = \bot$ or $\varphi^{-1}(b) = \bot$, all pairs of nodes in $V_a$ and $V_b$ are connected, in particular, $u_a$ and $u_b$. If $\varphi^{-1}(a) = x = \varphi^{-1}(b)$, $u_a = v_{a,i}$ and $u_b = v_{b,i}$, where $u'_x = v_{x,i}$, and thus they are connected by an edge. If $\varphi^{-1}(a) = x \neq y = \varphi^{-1}(b)$, $u_a$ and $u_b$ are connected by an edge as they are nodes in these parts corresponding to $u'_x$ and $u'_y$ respectively, and they are connected by an edge in $G'$ due to the definition of $\varphi$ and the fact that $\bu'$ forms an $H'$-subgraph in $G'$.
Furthermore, for each $\{x, y\} \in E(H)$, the weight of $\{u'_x, u'_y\}$ is embedded into exactly one edge on $\bu$, specifically it is embedded in the weight of $\{u_a, u_b\}$ where $\{a, b\} = \eemb(\{x, y\})$. Thus, every $H'$-subgraph in $G'$ corresponds to an $H$-subgraph in $G$ of the same weight.
Furthermore, the mapping $\corr$ is an injection.
Indeed, let $\bu''$ be some other $H'$-subgraph in $G'$.
As $\bu' \neq \bu''$, we have $u'_x \neq u''_x$ for some $x \in V(H')$.
Pick some $a \in \varphi(x)$.
As $u'_x \neq u''_x$, $\corr(\bu')$ and $\corr(\bu'')$ pick different nodes in $V_a$.
Hence, $\corr$ is indeed an injection.

We now show the inverse correspondence. Fix some $H$-subgraph $\bu$ in $G$. For each $a \in V(H)$, denote $u_a$ the node of $\bu$ from $V_a$. 
For each $x \in V(H')$ pick an $a \in \varphi(x)$, write $u_a = v_{a,i}$ for some $i$, and set $u'_x \coloneqq v'_{x,i}$. We claim that $\bu' \coloneqq (u'_x)_{x \in V(H')}$ forms an $H'$-subgraph in $G'$ and $\corr(\bu') = \bu$.

First, we prove that $u_b = v_{b,i}$ for all $b \in \varphi(x)$, where $u'_x = v'_{x,i}$. Let $a$ be the node that we picked in $\varphi(x)$ to set $u'_x$. Thus, $u_a = v_{a,i}$. By the definition of $\varphi$, $H[\varphi(x)]$ is connected, thus there exists a path $a = d_1 \edg d_2 \edg \cdots \edg d_i = b$ between $a$ and $b$ in $H$, such that all nodes of this path are from $\varphi(x)$. If we now consider $u_{d_1}, u_{d_2}, \ldots, u_{d_i}$, such nodes form a path in $G$ because $\bu$ is an $H$-subgraph in $G$. For each $j \in [i-1]$, an edge between $u_{d_j}$ and $u_{d_{j+1}}$ exists if and only if $u_{d_j} = v_{d_j,k}$ and $u_{d_{j+1}} = v_{d_{j+1},k}$ for some $k$ because $\varphi^{-1}(d_j) = x = \varphi^{-1}(d_{j+1})$. Thus, by transitivity $u_{d_1} = u_a = v_{a,k}$ and $u_{d_i} = u_b = v_{b,k}$ for some $k$. But as $u_a = v_{a,i}$, we get that $k=i$, thus proving the claim. Furthermore, for each $a \in V(H)$, such that $\varphi^{-1}(a) = \bot$, $\bu$ chooses the only available node of $V_a$.
Thus, indeed $\corr(\bu') = \bu$.
Consider some $\{x, y\} \in E(H')$.
By the definition of $\varphi$, there is an edge $\{a, b\} \in E(H)$ such that $a \in \varphi(x)$ and $b \in \varphi(y)$.
As $\bu$ forms an $H$-subgraph in $G$, we have that $u_a$ and $u_b$ are adjacent, and thus $u'_x$ and $u'_y$ are adjacent due to the definition of graph $G$.
Hence, $\bu'$ forms a $H'$-subgraph in $G'$.

Thus, $\corr$ is an injection and a surjection.
Furthermore, computation of $\corr$ and $\corr^{-1}$ can be straightforwardly done in constant time.
\end{proof}

\inducedminor*

\begin{proof}[Proof of \cref{induced-minor}]
    We first prove the claim for \Hminiso. Given a host graph $G'$ for \miniso{H'}, we apply \cref{induced-minor-reduction} to build a host graph $G$ for \Hminiso in time $\Oh(m)$. Given $G$, we solve \Hminiso in time $\Oh(T(m))$. From the solution for \Hminiso on $G$, in constant time we get the solution for \miniso{H'} on $G'$. The overall time complexity is $\Oh(T(m))$.

    \smallskip

    For \Hlistiso, we apply \cref{induced-minor-reduction} in the unweighted setting. Given a list of all $H$-subgraphs in $G$, we can convert each one of them into a corresponding $H'$-subgraph in $G'$, which solves the problem in time $\Oh(T(m, t))$.

    \smallskip

    For \Henumiso, we proceed similarly as for \Hlistiso. Whenever the enumeration algorithm generates an $H$-subgraph in $G$, we convert it into an $H'$-subgraph in $G'$ in constant time. The delay thus remains bounded by $\Oh(D(m))$.
\end{proof}

\clembofinducedminors*

\begin{proof}[Proof of \cref{clemb-of-induced-minors}]
	Let $\varphi, \varphi^{-1}$, and $f$ be the same functions as in the proof of \cref{induced-minor-reduction}.

Fix some clique embedding $\psi' : K_k \to 2^{V(H')}$.
We create a clique embedding $\psi : K_k \to 2^{V(H)}$ with $\wed(\psi) \le \wed(\phi')$.
Hence, we obtain $\clemb(H') \le \clemb(H)$ by taking supremum over all $k$ and all clique embeddings $\psi'$.

We define $\psi$ in the following way:
\[
    \psi(v) = \bigcup_{a \in \psi'(v)} \varphi(a)
\]
for every $v \in V(K_k)$.

Denote $\psi'^{-1}(a) \coloneqq \{v \in V(K_k) \mid a \in \psi'(v)\}$ for every $a \in V(H')$.
Analogously, $\psi^{-1}(a) \coloneqq \{v \in V(K_k) \mid a \in \psi(v)\}$ for every $a \in V(H)$.

We first show that it is a valid clique embedding.
Note that $\psi(v)$ is an image in $V(H)$ of $\varphi$ of some connected subset $\psi'(v)$ of nodes of $V'(H)$.
An image of a connected subset of nodes in a minor is a connected subset of nodes in the graph.
Furthermore, as $\psi'$ is a clique embedding, for every $u, v \in V(K_k)$ with $u \ne v$, there is some edge $\{x, y\} \in E(H')$, such that $\psi'(v) \cap \{x, y\} \neq \emptyset$ and $\psi'(u) \cap \{x, y\} \neq \emptyset$.
Let $\{a, b\} \coloneqq \eemb(\{x, y\})$.
By the definition of $\psi$, we have $\psi(v) \cap \{a, b\} \neq \emptyset$ and $\psi(u) \cap \{a, b\} \neq \emptyset$.
Thus, $\psi$ is indeed a clique embedding.

It remains to show that $\wed(\psi) \le \wed(\psi')$.
Consider any edge $\{a, b\} \in E(H)$.
We claim that there is some edge $\{x, y\} \in E(H')$ such that $\psi^{-1}(a) \cup \psi^{-1}(b) \subseteq \psi'^{-1}(x) \cup \psi'^{-1}(y)$.
If $\bot \neq \varphi^{-1}(a) \neq \varphi^{-1}(b) \neq \bot$, there is an edge $\{x \coloneqq \varphi^{-1}(a), y \coloneqq \varphi^{-1}(b)\} \in E(H')$ as $H'$ is an induced minor of $H$. We have $\psi^{-1}(a) \subseteq \psi'^{-1}(x)$ and $\psi^{-1}(b) \subseteq \psi^{-1}(y)$.
If $\varphi^{-1}(a) = \varphi^{-1}(b)$, we have $\psi^{-1}(a), \psi^{-1}(b) \subseteq \psi'^{-1}(x)$ for $x \coloneqq \psi^{-1}(a)$.
As $H'$ is a pattern, there is some $y \in V(H')$ with $\{x, y\} \in E(H')$.
If $\varphi^{-1}(a) = \bot$, we have $\psi^{-1}(a) = \emptyset$ by the definition of $\psi$.
Such a case is similar to the case $\varphi^{-1}(a) = \varphi^{-1}(b)$.
The case of $\varphi^{-1}(b) = \bot$ is symmetric.

As $\psi^{-1}(a) \cup \psi^{-1}(b) \subseteq \psi'^{-1}(x) \cup \psi'^{-1}(y)$, we get that the depth of $\{a, b\}$ is at most the depth of $\{x, y\}$.
Hence, by taking maximum over all edges, we obtain $\wed(\psi) \le \wed(\psi')$.
\end{proof}

\subsection{Lower Bounds: from General Case to $P$-graphs}\label{sec:lower-bounds-until-p-graphs:lst}

In this section we prove \cref{small-clemb-implies-p}, which reduces general patterns to $P$-graphs.

Without defining treewidth, we will use the following two facts. 
The first is a characterization of graphs of treewidth at most two in terms of forbidden minors, see also~\cite{Ramachandramurthi97}. The next is a characterization of graphs of treewidth at most 2 in terms of series-parallel graphs. 

\begin{lemma}[\cite{ArnborgPC90,SatyanarayanaT90}] \label{lm:no-k4-is-tw-2}
  Let $H$ be a graph. The treewidth of $H$ is at most $2$ if and only if $H$ has no $K_4$ minor.
\end{lemma}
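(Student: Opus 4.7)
This is a classical result; since the lemma is attributed to \cite{ArnborgPC90,SatyanarayanaT90}, the plan is simply to invoke those references. For completeness I will sketch how the standard argument proceeds.

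The forward direction (treewidth at most $2$ implies no $K_4$ minor) is immediate from the minor-monotonicity of treewidth: deleting a vertex, deleting an edge, and contracting an edge cannot increase treewidth, so a $K_4$-minor in $H$ would force $\tw(H) \ge \tw(K_4) = 3$, contradicting the hypothesis.

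For the reverse direction I would argue by induction on $|V(H)|$. First I would reduce to the 2-connected case: if $H$ is disconnected, apply induction to each component and take the disjoint union of the resulting tree decompositions; if $H$ has a cut vertex $v$, apply induction to each block of $H$ and glue the resulting decompositions along any bag containing $v$. It thus suffices to treat $H$ that is $2$-connected with $|V(H)|\ge 4$. The combinatorial core is then to show that every such $K_4$-minor-free graph admits either a parallel reduction (two parallel edges, one of which can be removed) or a series reduction (a degree-$2$ vertex whose neighbors are adjacent or not); equivalently, every $2$-connected $K_4$-minor-free graph is series-parallel. Given a degree-$2$ vertex $v$ with neighbors $u,w$, set $H' := H/\{u,v\}$; it is still $K_4$-minor-free and smaller, so by induction $\tw(H') \le 2$. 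Take a width-$2$ tree decomposition of $H'$, locate a bag containing the contracted vertex (which corresponds to both $u$ and $w$ after undoing the contraction, since $\{u,w\} \in E(H')$), and attach a new leaf bag $\{u,v,w\}$ to it. The result is a tree decomposition of $H$ of width at most $2$.

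The only nontrivial step is the existence of a series or parallel reduction in every $2$-connected $K_4$-minor-free graph; this is the technical core of the classical characterization and can be taken as a black box via the cited references \cite{ArnborgPC90,SatyanarayanaT90}. All remaining steps (minor-monotonicity, block decomposition, extending a tree decomposition across a contracted edge) are routine treewidth manipulations.
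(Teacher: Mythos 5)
Your proposal is correct and takes essentially the same approach as the paper: the paper proves nothing here and simply cites \cite{ArnborgPC90,SatyanarayanaT90}, exactly as you propose to do. (Your optional sketch is sound; the one slightly garbled phrase is "which corresponds to both $u$ and $w$ after undoing the contraction" — the contracted vertex corresponds to the merge of $u$ and $v$, and what you actually want is a bag containing both that vertex and $w$, which exists because they are adjacent in $H'$.)
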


\begin{lemma}[\cite{BodlaenderF01}] \label{lm:tw-2-with-no-cut-vertices-is-series-parallel}
  Let $H$ be a graph. A \emph{block} in $H$ is a maximal connected subgraph of~$H$ that has no cut vertex. The treewidth of $H$ is at most $2$ if and only if every block of $H$ is a series-parallel graph.
\end{lemma}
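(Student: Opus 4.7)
The plan is to prove the equivalence in two directions, using Lemma~\ref{lm:no-k4-is-tw-2} together with Duffin's classical characterization of series-parallel graphs as exactly the graphs with no $K_4$ minor.

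For the forward direction, assume $\tw(H) \le 2$. By Lemma~\ref{lm:no-k4-is-tw-2}, $H$ has no $K_4$ minor, and neither does any block $B$ (being a subgraph). If $B$ is a single edge, it is series-parallel by definition. Otherwise $B$ is $2$-connected on at least three vertices, and I would invoke Duffin's theorem: every $2$-connected simple graph with no $K_4$ minor is series-parallel. The standard proof is by induction on $|E(B)|$. The combinatorial core is to show that such a $B$ must contain a vertex $v$ of degree exactly $2$; suppressing $v$ (or contracting one of its incident edges) yields a smaller $2$-connected graph with no $K_4$ minor, which is series-parallel by induction, and reversing the step realizes $B$ as a series (and possibly one parallel) composition.

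For the backward direction, assume every block is series-parallel. A straightforward structural induction on the series-parallel construction shows that every series-parallel graph $S$ admits a tree decomposition of width at most $2$ containing a bag with both the source $s$ and the sink $t$ of $S$: the single-edge base case is immediate, and for series and parallel compositions one glues the two smaller decompositions at bags containing the identified source/sink endpoints. To build a tree decomposition of $H$ itself, I would glue the per-block decompositions at cut vertices: for each cut vertex $v$ and each block $B \ni v$, the decomposition of $B$ has a bag containing $v$; adding tree edges between these bags across all blocks meeting at $v$ preserves the running-intersection property and yields a valid tree decomposition of $H$ of width at most~$2$.

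The main obstacle is the nontrivial direction of Duffin's theorem, since the rest is routine. Because this is a well-known classical result, in the paper it is cleanest simply to cite Duffin (alongside Bodlaender–Fomin for the block-decomposition part); alternatively, the degree-$2$ vertex argument sketched above makes the induction self-contained at modest cost.
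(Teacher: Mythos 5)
The paper does not prove this statement at all: it is quoted as a known result with the citation \cite{BodlaenderF01}, exactly as you suggest doing in your final paragraph, so there is no in-paper argument to compare against. Your sketch is a sound outline of the standard proof and is consistent with the literature you would cite (Duffin's characterization of series-parallel graphs via excluded $K_4$ minors, plus the fact that treewidth is the maximum treewidth over blocks). If you did want to make it self-contained, two small points deserve care. First, in the backward direction the inductive invariant ``there is a bag containing both the source and the sink'' is not preserved by merely gluing the two decompositions at bags containing the identified endpoints in the \emph{series} case, since no existing bag need contain both $s_1$ and $t_2$; the routine fix is to add a fresh bag $\{s_1, t_1{=}s_2, t_2\}$ of size three adjacent to the two glued bags. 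Second, in the forward direction the degree-$2$ suppression step needs the usual caution when the two neighbours of the degree-$2$ vertex are already adjacent (deletion rather than suppression, or working with multigraphs), and one should note separately that trivial blocks (single edges, and isolated vertices if the ambient convention allows them) are handled directly; you acknowledge this delicacy, and citing Duffin sidesteps it entirely.
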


\smallclembimpliesp*

\begin{proof}[Proof of \cref{small-clemb-implies-p}]
	Recall \cref{goggles-lb}, in which we defined the goggles graph $\goggr$ and showed that $\clemb(\goggr) \ge 2$. 
	If $H$ has an induced $\goggr$ minor, then we have $\clemb(H) \ge \clemb(\goggr) \ge 2$ by \cref{clemb-of-induced-minors}.
	
	If $H$ has a $K_4$ minor, it also has an induced $K_4$ minor, and thus $\clemb(H) \ge \clemb(K_4)$ by \cref{clemb-of-induced-minors}. The trivial embedding $\psi(v) = \{v\}$ of $K_4$ into $K_4$ has weak edge depth $2$, which shows $\clemb(H) \ge \clemb(K_4) \ge 4 / 2 = 2$. 
	
	If $H$ has no $K_4$ minor, by \cref{lm:no-k4-is-tw-2,lm:tw-2-with-no-cut-vertices-is-series-parallel} every block of $H$ is a series-parallel graph.
	Note that a cut vertex is a clique separator of size one.
    Hence, as $H$ has no clique separator, $H$ is a series-parallel graph itself.
    
    It remains to show that if $H$ is a series-parallel graph with no clique separator and no induced $\goggr$ minor then it is a $P$-graph. 
	
    If $H$ is a single edge, it is a $P$-graph $P(1)$. Otherwise, if the topmost composition in some series-parallel decomposition of $H$ is series, the middle node of this series composition is a cut vertex.
    Since $H$ has no clique separator and thus no cut vertex, we may assume that in every series-parallel decomposition of $H$ the topmost composition is parallel.
    Out of all series-parallel decompositions of $H$, choose the one that has the largest number of parallel compositions on the topmost level.
    Out of such decompositions choose the one where the distance between the source $s$ and the sink $t$ is maximized.

    Let $H_1, H_2, \ldots, H_k$ for $k \ge 2$ be the subgraphs of $H$ that constitute the topmost parallel composition, that is, (1) $H$ arises by taking the disjoint union of $H_1,\ldots,H_k$ and identifying all their source nodes and identifying all their sink nodes, and (2) no $H_i$ can be written as the parallel composition of series-parallel graphs.
    See \cref{series-parallel-general}.
    We consider two cases. 
\begin{figure}
\begin{center}
    \includegraphics[scale=0.75]{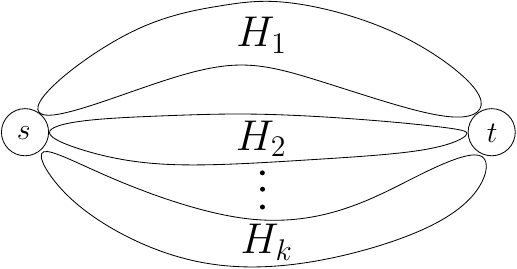}
\end{center}

\caption{Series-parallel graph $H$ from \cref{small-clemb-implies-p} decomposed into series parts.}
\label{series-parallel-general}
\end{figure}
    \begin{itemize}
        \item \emph{Case 1: $s$ and $t$ are adjacent in $H$.}
            The edge between $s$ and $t$ induces one of the subgraphs~$H_i$; without loss of generality let it be $H_1$.
            If $k \ge 3$, then the edge $\{s, t\}$ is a clique separator as its removal disconnects $H_2 \setminus \{s, t\}$ and $H_3 \setminus \{s, t\}$, which are nonempty as there cannot be two edges between $s$ and $t$.
            Hence, we may assume $k=2$.
            Let $H_2$ be composed as series of two subgraphs $H_{2, 1}$ and $H_{2, 2}$, and denote their common middle node by~$q$ (see \cref{series-parallel-ends-connected}).
            If $H_{2, 1}$ and $H_{2, 2}$ are single edges, the whole graph is a triangle, which is a $P$-graph $P(2, 1)$.
            Otherwise, assume that $H_{2, 1}$ is not a single edge (this is without loss of generality by symmetry).
            In this case $s$ and $q$ are not adjacent, as otherwise the edge $\{s,q\}$ would be a clique separator, as its removal disconnects $t$ and $V(H_{2, 1}) \setminus \{s, q\}$ (which is not empty since $H_{2, 1}$ is not a single edge).
            We then build a different series-parallel decomposition of $H$, where $s$ is the source, $q$ is the sink, and $H_{2, 1}$ and $(V(H_1) \cup V(H_{2, 2}), E(H_1) \cup E(H_{2, 2}))$ are series parts.
            See \cref{series-parallel-ends-connected}.
            We arrive at a contradiction as we assumed that the series-parallel decomposition we picked maximizes the number of parallel compositions on the topmost level, and out of such maximizes the distance between the source and the sink.
            The decomposition we picked has $k-1=1$ series compositions on the topmost level, and the distance between the source and the sink is one as $s$ and $t$ are adjacent.
            However, the new series-parallel decomposition has at least one series composition on the topmost level, and the distance between the source and the sink is at least two as $s$ and $q$ are not adjacent.
            Therefore, in this case the only valid graph is $P(2, 1)$.
    \begin{figure}
    \begin{center}
        \includegraphics[scale=0.65]{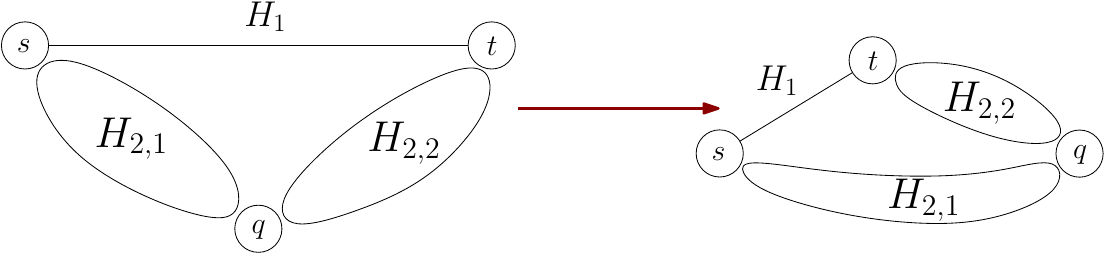}
    \end{center}

    \caption{Possible transformation of the series-parallel graph $H$ from Case 1 of \cref{small-clemb-implies-p}.}
    \label{series-parallel-ends-connected}
    \end{figure}
\item \emph{Case 2: $s$ and $t$ are not adjacent.}
    If $H$ is not a $P$-graph with endpoints $s$ and $t$, the parts $H_1, \ldots, H_k$ cannot all be paths, and thus for some part $H_i$ its constructions requires a parallel composition; 
    without loss of generality let this part be $H_1$.
    Then there are nodes $p,q$ in $H_1$ such that $H$ is the series composition of a graph $H_{1,1}$ with source $s$ and sink $p$, a graph $H_{1,2}$ with source $p$ and sink $q$, and a graph $H_{1,3}$ with source $q$ and sink $t$, and the graph $H_{1,2}$ is the parallel composition of two graphs $H_{1,2,1}$ and $H_{1,2,2}$ each with source $p$ and sink $q$. See \cref{series-parallel-ends-disconnected}. Note that here it may happen that $s$ coincides with $p$, in which case the graph $H_{1,1}$ consists of a single node $s=p$, or that $q$ coincides with $t$, in which case the graph $H_{1,3}$ consists of a single node $q=t$. However, both cannot happen at the same time, as otherwise $H_1$ itself arises from a parallel composition, contradicting our choice of $H_1,\ldots,H_k$ as the subgraphs that constitute the topmost parallel composition.
    Without loss of generality assume that $q \neq t$.
    Note that $V(H_{1,2}) \setminus \{p, q\}$ is not empty as $H_{1,2}$ arises from a parallel composition.
    It follows that $p$ and $q$ are not adjacent, as otherwise the edge $\{p,q\}$ would be a clique separator, as its removal disconnects $t$ and $V(H_{1,2}) \setminus \{p, q\}$.
    We now build a new series-parallel decomposition of $H$, where $p$ is the source, $q$ is the sink, and $H$ arises as the parallel composition of $H_{1,2}$ and everything else (i.e., the second part is formed by $H_{1,1}, H_{1,3}, H_2,\ldots, H_k$).
    See \cref{series-parallel-ends-disconnected}.
    Note that the new decomposition consists of at least 3 parallel parts, as $H_{1,2}$ arises by a parallel composition and thus consists of at least two parallel parts, and the remainder of the graph yields at least one additional parallel part. 
    Since we picked the original series-parallel decomposition to maximize the number of parallel parts, we obtain $k \ge 3$.

        \begin{figure}
        \begin{center}
            \includegraphics[scale=0.65]{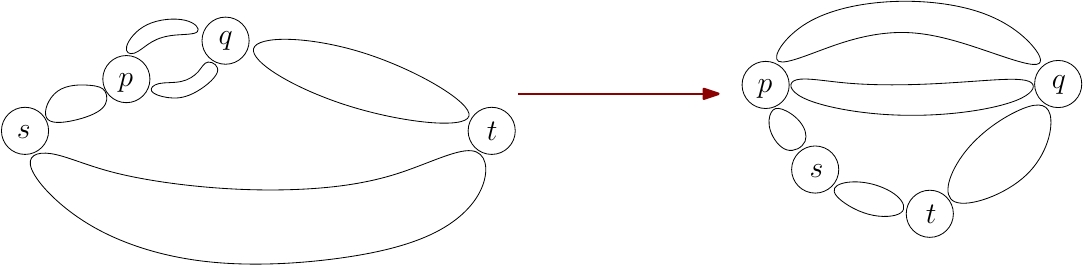}
        \end{center}

    \caption{Possible transformation of the series-parallel graph $H$ from Case 2 of \cref{small-clemb-implies-p}.}
        \label{series-parallel-ends-disconnected}
        \end{figure}
    
    We now turn our attention back to the original series-parallel decomposition, now with the additional knowledge that $k \ge 3$.
        We claim that $H$ has an induced $\goggr$ minor, which yields a contradiction.
        See \cref{series-parallel-goggles-minor}.
        We contract the graph $H_{1,1}$ with source $s$ and sink~$p$ into a single node, i.e., we contract the nodes $V(H_{1,1})$. 
        We contract the graph $H_{1,3}$ with source $q$ and sink $t$ to a single edge, i.e., we contract the nodes $V(H_{1,3}) \setminus \{q\}$.
        For the parallel parts $H_{1,2,1}$ and $H_{1,2,2}$ that constitute the graph $H_{1,2}$ with source $p$ and sink $q$, we contract both of them to paths of length two, i.e., we contract the nodes $V(H_{1,2,1}) \setminus \{p,q\}$ and the nodes $V(H_{1,2,2}) \setminus \{p,q\}$. This is possible because $p$ and $q$ are not adjacent.
        Analogously, we contract $H_2$ and $H_3$ into two paths of length two between $s$ and $t$; this is possible because $s$ and $t$ are not adjacent.
        We delete all nodes in $V(H_i) \setminus \{s, t\}$ for $i \ge 4$. This yields an induced $\goggr$ minor. We thus arrive at a contradiction.
        \begin{figure}
        \begin{center}
            \includegraphics[scale=0.65]{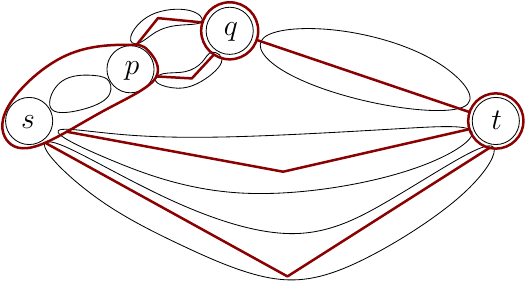}
        \end{center}

        \caption{Induced $H_{\textup{gog}}$ minor in $H$ in Case 2 of \cref{small-clemb-implies-p}. The graph $H$ is shown in black. An induced $H_{\textup{gog}}$ minor is shown in red.}
        \label{series-parallel-goggles-minor}
        \end{figure}
    \end{itemize}
    In each case we showed that $H$ is a $P$-graph or obtained a contradiction, which finishes the proof.
\end{proof}

\subsection{Lower Bounds for $P(3, 3, 3)$} \label{sec:lower-bounds-until-p-graphs:p333}

We now turn to the specific pattern graph $P(3,3,3)$. As discussed in the proof overview, clique embeddings do not suffice to show that this pattern has complexity at least 2. We therefore prove conditional lower bounds assuming the \minplusconvconj or the \threesumconj, see \cref{sec:prelims-hypotheses} for their definitions.

\smallskip

For MinConv we make use of the standard equivalence with MinConv Verification.

\begin{hypothesis}[\minplusverifconj]
    For any $\varepsilon > 0$ there exists $c \ge 0$ such that there is no algorithm that, given sequences $A[1..n], B[1..n]$, and $C[1..n]$ with integer entries in the range $\{-n^c,\ldots,n^c\}$, in time $\Oh(n^{2 - \varepsilon})$ decides whether $C[k] \le A[i] + B[j]$ holds for all $i + j = k$.
\end{hypothesis}

\begin{lemma}[{\cite[Theorem 2]{CyganMWW19}}] \label{lm:minplus-to-minplus-verif}
    The \minplusconvconj is equivalent to the \minplusverifconj.
\end{lemma}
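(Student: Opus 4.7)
The plan is to prove equivalence via reductions in both directions, each preserving subquadratic running time up to polylogarithmic factors, and each preserving the polynomial bound on entry magnitudes (so that the hypothesis applies to the instances produced).

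The easy direction is that the \minplusconvconj implies the \minplusverifconj: assuming a MinConv algorithm runs in time $\Oh(n^{2-\eps})$, then given a verification instance $(A,B,C)$ we compute $D[k] = \min_{i+j=k}\{A[i]+B[j]\}$ in time $\Oh(n^{2-\eps})$ and then check $C[k] \le D[k]$ for every $k$ in an additional $\Oh(n)$ time. Since $D$ has entries in the same polynomial range as $A, B$, this fits the hypothesis. Contrapositively, if MinConv Verification admits no $\Oh(n^{2-\eps})$ algorithm (for polynomially bounded entries), neither does MinConv.

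The harder direction is the reverse: I need to build a MinConv algorithm from MinConv Verification queries with only $\polylog n$ overhead. The difficulty is that a single verification call returns one global bit — namely whether $C[k] \le \min_{i+j=k}\{A[i]+B[j]\}$ for \emph{every} $k$ — whereas MinConv requires committing to an exact value at each of the $n$ indices. Naive binary search per index uses $\Oh(n \log n)$ verification calls, which gives $\tOh(n^{3-\eps})$ and is useless. The plan is to follow the reduction chain of Cygan--Mucha--W\k{e}grzycki--W\l{}odarczyk: first show MinConv reduces to MinConv UpperBound (the problem of deciding whether a given $C$ upper-bounds $\min_{i+j=k}\{A[i]+B[j]\}$ entrywise), then reduce MinConv UpperBound to MinConv Verification by negating and exchanging roles, and finally drive the search-to-decision step by a carefully designed parallel binary search. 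Concretely, one maintains per-index intervals $[\ell_k, h_k]$ containing the true value and batches the binary search across all $k$ simultaneously, using random shifts or structured masks so that each layer of refinement can be resolved by $\polylog n$ Verification calls rather than one per index.

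The main obstacle is exactly this localization step — converting the single-bit output of MinConv Verification into per-index information using only a polylogarithmic number of calls. The key idea I would implement is to exploit the additive structure of MinConv: by adding index-dependent offsets that are chosen so violations at different indices contribute distinguishable ``signatures'' to the global answer, a batch of $\Theta(n)$ indices can be resolved by $\polylog n$ verification calls on modified instances whose entries remain polynomially bounded (so the hypothesis still applies). Once this subroutine is in place, the global binary search has depth $\Oh(\log M) = \Oh(\log n)$ since values are polynomially bounded, yielding a MinConv algorithm in time $\tOh(n^{2-\eps})$ from any MinConv Verification algorithm in time $\Oh(n^{2-\eps})$. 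The remaining work — tracking entry ranges through each reduction step so the polynomial-boundedness clause of both hypotheses is respected — is technical but routine.
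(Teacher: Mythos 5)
The paper itself offers no proof of this lemma --- it is cited verbatim to Theorem~2 of Cygan, Mucha, W\k{e}grzycki, and W\l{}odarczyk --- so there is nothing in the text to compare against. Your proposal, however, does not constitute a proof either. The easy direction is fine. For the hard direction, you correctly identify that the crux is ``localization'': turning the one-bit output of a Verification call into per-index information. But you then substitute a plan for an argument. ``Random shifts or structured masks so that violations at different indices contribute distinguishable signatures to the global answer'' is a hope, not a construction, and as stated it does not withstand scrutiny: the Verification oracle returns a single bit (``does $C[k]\le \min_{i+j=k}\{A[i]+B[j]\}$ hold for \emph{all} $k$?''), and there may be $\Theta(n)$ violating indices. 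Recovering a set of $d$ violating indices from one-bit answers is a group-testing problem requiring $\Omega(d)$ queries in the worst case, so with $d=\Theta(n)$ you cannot hope for $\polylog n$ calls per refinement layer. Your claimed $\tOh(n^{2-\eps})$ bound therefore does not follow from the outline as written.

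This is also not the route taken in the cited reference. The Cygan--Mucha--W\k{e}grzycki--W\l{}odarczyk reduction does not attempt a per-index parallel binary search with polylogarithmically many same-size oracle calls; it is a block-decomposition / recursive argument in the style of the Vassilevska Williams--Williams ``APSP versus negative triangle'' search-to-decision framework, where the subquadratic savings of the decision oracle on smaller block instances are amplified recursively. To complete the hard direction you would need to actually carry out an argument of that shape (split the arrays into $g$ blocks of size $n/g$, solve the block-pair subconvolutions recursively, and use the decision oracle to prune or verify candidate outputs), including the bookkeeping that shows the recursion bottoms out with an overall exponent strictly below $2$ and that all intermediate instances keep polynomially bounded entries. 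As it stands, the step you flag as ``the main obstacle'' remains unproved, so there is a genuine gap.
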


For our 3SUM-based lower bounds we make use of the following result that shows hardness of 3SUM Listing over small universe.\footnote{This result was known earlier, the specific reference \cite{FischerKP23} proves this result by a deterministic reduction.}

\begin{lemma}[{\cite[Lemma 4.6]{FischerKP23}}] \label{lm:three-sum-listing-hardness}
    For any $1 < \mu < 2$ and $\varepsilon > 0$, there is no algorithm that given a size-$n$ \threesum instance $A \subseteq [n^{\mu}]$ with $\Oh(n^{3 - \mu})$ solutions lists all solutions in time $\Oh(n^{2 - \varepsilon})$, unless the \threesumconj fails.
\end{lemma}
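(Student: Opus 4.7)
The plan is to reduce a general 3SUM instance over polynomially bounded integers to the promise problem of the lemma via modular hashing, so that a truly subquadratic algorithm for the latter would contradict the \threesumconj. Suppose we have an algorithm $\mathcal{A}$ solving the promise problem (given $A \subseteq [n^\mu]$ of size $n$ with at most $\Oh(n^{3-\mu})$ solutions, list them all) in time $\Oh(n^{2-\eps})$. Given an arbitrary 3SUM instance $A' \subseteq \{-n^c,\ldots,n^c\}$ of size $n$, where $c$ is the constant guaranteed by the \threesumconj, I would pick a uniformly random prime $p \in [n^\mu/2,\, n^\mu]$ (found by standard trial primality testing) and form the hashed instance $A := \{a \bmod p : a \in A'\} \subseteq [n^\mu]$, tagging each element with its preimage in $A'$.

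The key quantitative step is bounding the number of triples $(a,b,c) \in A'^3$ that induce a mod-$p$ solution $\bar a + \bar b \equiv \bar c \pmod{p}$. Every genuine solution induces such a triple deterministically. For any non-solution, $a+b-c$ is a nonzero integer of absolute value $\Oh(n^c)$, hence has $\Oh(c \log n)$ prime divisors; since there are $\Theta(n^\mu / \log n)$ primes in the target range, the probability that our random $p$ divides $a+b-c$ is $\tOh(n^{-\mu})$. By linearity of expectation, the total number of \emph{candidate} triples is $\Oh(|\text{true solutions}|) + \tOh(n^{3-\mu})$. The main obstacle is that $\bar a + \bar b \equiv \bar c \pmod{p}$ with residues in $[0,p)$ translates to the exact equation $\bar a + \bar b = \bar c + \delta p$ for $\delta \in \{0,1\}$, rather than directly to a plain 3SUM equation. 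I would handle this by invoking $\mathcal{A}$ on $\Oh(1)$ constructed instances: partition $[0, 3p)$ into three blocks and place shifted copies of $A$ so that each overflow case becomes a genuine 3SUM equation over a universe of size $\Oh(n^\mu)$ (this is the standard trick going back to~\cite{Patrascu10}, implemented deterministically in~\cite{FischerKP23}).

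Finally, for each candidate triple returned by $\mathcal{A}$ I would verify in constant time via the stored tags whether it corresponds to an actual 3SUM solution in $A'$, returning \emph{yes} iff any of them does. By Markov's inequality the number of candidates is $\tOh(n^{3-\mu})$ with constant probability, so with constant probability every constructed sub-instance satisfies the promise required by $\mathcal{A}$; otherwise I restart with a fresh prime, which succeeds after $\Oh(\log n)$ tries w.h.p. The total running time on a successful try is $\Oh(n^{2-\eps}) + \tOh(n^{3-\mu})$, which is $\Oh(n^{2-\eps'})$ for $\eps' \approx \min(\eps, \mu-1)/2 > 0$, using crucially that $\mu > 1$ so that $n^{3-\mu}$ is truly subquadratic. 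This contradicts the \threesumconj and establishes the lemma. The hardest part will be the second paragraph's bookkeeping of overflow cases, where one must ensure that shifting $A$ to disjoint sub-ranges neither inflates the universe beyond $\Oh(n^\mu)$ nor inflates the solution count beyond the promised $\Oh(n^{3-\mu})$; the cited deterministic reduction of Fischer, Künnemann and Paul~\cite{FischerKP23} performs exactly this bookkeeping and, in addition, derandomizes the choice of $p$.
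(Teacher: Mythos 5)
The paper does not prove this lemma; it is cited verbatim from~\cite{FischerKP23}, with a footnote noting that the result was folklore before that reference derandomized it. So there is no paper proof to compare against, and your attempt must be judged on its own.

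Your overall plan (hash a general 3SUM instance into a small universe, bound false positives, invoke the promise algorithm, verify the output) is the right one, but there is a quantitative gap in the false-positive bound that breaks the Markov step. With a uniformly random prime $p \in [n^\mu/2,\,n^\mu]$, for a fixed non-solution triple the number of primes in that range dividing $a+b-c$ is $O(1)$, but the number of primes in that range is $\Theta(n^\mu/\log n)$, so the probability of a false positive is $\Theta(\log n / n^\mu)$, not $O(n^{-\mu})$. The expected number of spurious triples is therefore $\Theta(n^{3-\mu}\log n)$, and Markov then only gives that it is $O(n^{3-\mu}\log n)$ with constant probability --- this is \emph{not} $O(n^{3-\mu})$, so the constructed sub-instances need not satisfy the promise, and $\mathcal{A}$ is not guaranteed to behave. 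You cannot repair this by changing the prime range ($p$ smaller keeps the universe fine but worsens the bound; $p$ larger fixes the bound but breaks $A \subseteq [n^\mu]$), nor by sliding $\mu$ (moving $\mu$ up shrinks the allowed solution count, moving $\mu$ down shrinks the allowed universe). The standard fix is to replace random-prime hashing with an almost-linear hash family à la Dietzfelbinger/P\u{a}tra\c{s}cu, where the false-positive probability is a clean $O(1/r)$ without the prime-density $\log n$, giving $O(n^{3-\mu})$ expected spurious triples and a working Markov argument; this is also, essentially, what \cite{FischerKP23} does (and then derandomizes). Secondarily, when the promise may still fail on a bad random choice you must specify the abort criterion (cap the running time at $O(n^{2-\eps})$ and the output at $O(n^{3-\mu})$) before declaring a retry, since a promise algorithm has undefined behaviour off-promise; you gesture at restarting but do not say how a bad try is detected. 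The overflow/bookkeeping part you flag as the hardest is in fact routine; the hashing family is where the real care is needed.
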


\allpthreethreethreeisocliquehardness*

\begin{proof}[Proof of \cref{all-p333-iso-clique-hardness}]
    For the first claim, by \cref{lm:minplus-to-minplus-verif}, we may use the \minplusverifconj instead of the \minplusconvconj.
    We reduce \minplusverif on sequences of length $n$ to \miniso{P(3, 3, 3)} on a host graph $G$ with $\Oh(n)$ nodes and edges in $\Oh(n)$ time.

    \minplusverif asks us to check, given integer sequences $A, B$, and $C$ of length~$n$, whether there are indices $i, j$, and $k$ such that $i + j = k$ and $A[i] + B[j] < C[k]$. Define $s \coloneqq \left\lceil \sqrt n \right\rceil $. Every index $i$ of $A$ is uniquely represented as $i_H \cdot s + i_L$ where $i_L \in [0, s - 1]$. Similarly, every index $j$ of $B$ is uniquely represented as $j_H \cdot s + j_L$ where $j_L \in [0, s - 1]$.

    Denote the two end-nodes of the paths of $P(3, 3, 3)$ by $u_H$ and $u_L$, the nodes of the first path by $u_H, a_H, a_L, u_L$, the nodes of the second path by $u_H, b_H, b_L, u_L$, and the nodes of the third path by $u_H, c_H, c_L, u_L$. See \cref{fig:p333-lower-bound-minplus}.

    We create a host graph $G$ as follows. The parts of $G$ are $V_{a_L} = V_{a_H} = V_{b_L} = V_{b_H} = V_{c_L} = V_{c_H} = [0, 2s]$ and $V_{u_L} = V_{u_H} = \{(i, j) \mid i, j \in [0, s] \}$. The edges are constructed in the following way.
Nodes $(i, j) \in V_{u_L}$ and $i' \in V_{a_L}$ are adjacent if and only if $i = i'$.
Nodes $(i, j) \in V_{u_L}$ and $j' \in V_{b_L}$ are adjacent if and only if $j = j'$.
Nodes $(i, j) \in V_{u_L}$ and $k' \in V_{c_L}$ are adjacent if and only if $i + j = k'$.
We add analogous edges between $V_{u_H}$ and $V_{a_H}$, between $V_{u_H}$ and $V_{b_H}$, and between $V_{u_H}$ and $V_{c_H}$.
All these edges are assigned weight zero.
Furthermore, $i_H \in V_{a_H}$ and $i_L \in V_{a_L}$ are adjacent if and only if $i_L \in [0, s - 1]$ and $i_H \cdot s + i_L \in [n]$, and the weight of this edge is $A[i_H \cdot s + i_L]$.
Similarly, $j_H \in V_{b_H}$ and $j_L \in V_{b_L}$ are adjacent if and only if $j_L \in [0, s - 1]$ and $j_H \cdot s + j_L \in [n]$, and the weight of this edge is $B[j_H \cdot s + j_L]$.
Finally, $k_H \in V_{c_H}$ and $k_L \in V_{c_L}$ are adjacent if and only if $k_H \cdot s + k_L \in [n]$, and the weight of this edge is $-C[k_H \cdot s + k_L]$.

\begin{figure}
    \begin{center}
        \includegraphics[scale=0.8]{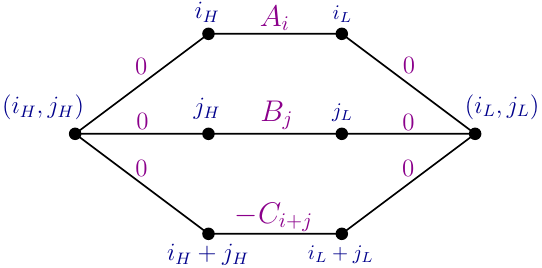}
    \end{center}

    \caption{Illustration of the \miniso{P(3, 3, 3)} instance constructed in the proof of \cref{all-p333-iso-clique-hardness}. The figure depicts the $P(3, 3, 3)$-subgraph corresponding to some triple $(A[i], B[j], C[k])$ with $i + j = k$.}
    \label{fig:p333-lower-bound-minplus}
\end{figure}

    We claim that there is a constant-time-computable bijection between $P(3, 3, 3)$-subgraphs of $G$ and triples $(i, j, k) \in [n]^3$ with $i + j = k$. Furthermore, the weight of such a $P(3, 3, 3)$-subgraph is equal to $A[i] + B[j] - C[k]$. Thus, there is a triple $(i, j, k)$ with $A[i] + B[j] < C[k]$ if and only if there is a $P(3, 3, 3)$-subgraph in $G$ of negative weight. 
    Hence, by finding the minimum weight of a $P(3, 3, 3)$-subgraph in $G$ we can decide the given MinConv Verification instance. 
    It remains to show a bijection.

    Let $\bv$ be a $P(3, 3, 3)$-subgraph of $G$. Denote $i_H \coloneqq v_{a_H}, i_L \coloneqq v_{a_L}, j_H \coloneqq v_{b_H}, j_L \coloneqq v_{b_L}, i \coloneqq i_H \cdot s + i_L, j \coloneqq j_H \cdot s + j_L$, and $k \coloneqq i + j$. We say that $\bv$ corresponds to a triple $\corr(\bv) \coloneqq (i, j, k)$. As $i_H$ and $i_L$ are adjacent, $j_H$ and $j_L$ are adjacent and $k_H$ and $k_L$ are adjacent, we get that $(i, j, k) \in [n]^3$ be the definition of $G$. Furthermore, $i + j = k$ by the definition of $k$. The weight of the edge between $i_H$ and $i_L$ is $A[i]$, the weight of the edge between $j_H$ and $j_L$ is $B[j]$, and the weight of the edge between $k_H$ and $k_L$ is $-C[k]$. All other edges have weight zero. Thus, the weight of $\bv$ is indeed equal to $A[i] + B[j] - C[k]$.
    
    We show that $\corr$ is injective. For a given $i$ and $j$, there is exactly one pair of nodes $(i_H, i_L) \in V_{a_H} \times V_{a_L}$ with $i_H \cdot s + i_L = i$ that is connected by an edge, and there is exactly one pair of nodes $(j_H, j_L) \in V_{b_H} \times V_{b_L}$ with $j_H \cdot s + j_L = j$ that is connected by an edge. For fixed $i_H$ and $j_H$, there is exactly one node $(i_H, j_H)$ in $V_{u_H}$ that is connected to both of them, and $(i_H, j_H)$ is connected to exactly one node $i_H + j_H$ in $V_{c_H}$. Analogously, the nodes in $V_{u_L}$ and $V_{c_L}$ are defined uniquely. Thus, for each $(i, j, k)$, at most one $P(3, 3, 3)$-subgraph $\bv$ of $G$ has $\corr(\bv) = (i, j, k)$.
    
    Finally, $\corr$ is surjective. Given $(i, j, k) \in [n]^3$ with $i + j = k$, define $\bv$, where $v_{a_H} \coloneqq i_H, v_{a_L} \coloneqq i_L, v_{b_H} \coloneqq j_H, v_{b_L} \coloneqq j_L, v_{c_H} \coloneqq i_H + j_H, v_{c_L} \coloneqq i_L + j_L, v_{u_H} \coloneqq (i_H, j_H)$, and $v_{u_L} \coloneqq (i_L, j_L)$. It is easy to see that $\bv$ forms a $P(3, 3, 3)$-subgraph in $G$ and $\corr(\bv) = (i, j, k)$.

    We now analyze the time complexity. There are $\Oh(s) = \Oh(\sqrt n)$ nodes in parts $V_{a_H}$, $V_{a_L}$, $V_{b_H}$, $V_{b_L}$, $V_{c_H}$, and $V_{c_L}$. Furthermore, there are $\Oh(s \cdot s) = \Oh(n)$ nodes in parts $V_{u_H}$ and $V_{u_L}$. Each node in $V_{u_H}$ and $V_{u_L}$ is connected to a constant number of nodes in $V_{a_H}, V_{b_H}$, and $V_{c_H}$ and $V_{a_L}, V_{b_L}$, and $V_{c_L}$, respectively. Between $V_{a_H}$ and $V_{a_L}$ there are at most $\Oh(|V_{a_H}| \cdot |V_{a_L}|) = \Oh(s^2) = \Oh(n)$ edges. Analogously, there are $\Oh(n)$ edges between $V_{b_H}$ and $V_{b_L}$ and between $V_{c_H}$ and $V_{c_L}$. Thus, $G$ has $\Oh(n)$ nodes and $\Oh(n)$ edges, and it can be built in time $\Oh(n)$.
    Hence, if \miniso{P(3, 3, 3)} can be solved in time $\Oh(m^{2 - \varepsilon})$ for some $\varepsilon > 0$, where $m$ is the number of edges in the host graph, using this reduction we can solve \minplusverif in time $\Oh(n^{2 - \varepsilon})$, which contradicts the \minplusverifconj and thus the \minplusconvconj.
    This concludes the proof of the first claim of the lemma statement.

    \medskip
    
    For the second claim, assume for the sake of contradiction that \listiso{P(3, 3, 3)} can be solved in time $\tOh(m^{2 - \varepsilon} + t)$ for some $0 < \varepsilon < 1$.
    We prove that the \threesumconj fails.
    By \cref{lm:three-sum-listing-hardness}, it suffices to list all solutions of a size-$n$ \threesum instance $A \subseteq [n^{\mu}]$ with $\Oh(n^{3 - \mu})$ solutions in time $\Oh(n^{2 - \varepsilon'})$ for some $1 < \mu < 2$ and $\varepsilon' > 0$.
    We pick $\mu = 1 + \frac{\varepsilon}{2}$ and $\varepsilon' = \frac{\varepsilon^2}{3}$.
    
    We claim that, given such a \threesumlist instance $A$, in time $\Oh(n^{\mu})$ we can construct an $\Oh(n^{\mu})$-edge host graph $G$ for $P(3, 3, 3)$ such that there is a constant-time-computable bijection between $P(3, 3, 3)$-subgraphs in $G$ and \threesum solutions on $A$.
    Using the assumed \listiso{P(3, 3, 3)} algorithm, we then list all $P(3, 3, 3)$-subgraphs in $G$ in time $\tOh((n^{\mu})^{2 - \varepsilon} + n^{3 - \mu}) = \tOh(n^{(1 + \frac{\varepsilon}{2}) \cdot (2 - \varepsilon)} + n^{2 - \frac{\varepsilon}{2}}) = \tOh(n^{2 - \frac{\varepsilon^2}{2}} + n^{2 - \varepsilon})\le \Oh(n^{2 - \varepsilon'})$.
    In time $\Oh(n^{3 - \mu}) \le \Oh(n^{2 - \varepsilon'})$ we can then convert the listed $P(3, 3, 3)$-subgraphs in $G$ into the solutions of \threesum on $A$.
    It remains to show the construction of $G$.

    Define $s \coloneqq \left\lceil \sqrt{n^{\mu}} \right\rceil $. We build a similar graph as in the first claim of the lemma, with the following differences: The edges do not have weights. We add an edge between $i_H \in V_{a_H}$ and $i_L \in V_{a_L}$ if and only if $i_L \in [0, s - 1]$ and $i_H \cdot s + i_L \in A$. Similarly, we add an edge between $j_H \in V_{b_H}$ and $j_L \in V_{b_L}$ if and only if $j_L \in [0, s - 1]$ and $j_H \cdot s + j_L \in A$. Finally, we add an edge between $k_H \in V_{c_H}$ and $k_L \in V_{c_L}$ if and only if $k_H \cdot s + k_L \in A$. 
    We get the same bijection as in the first claim of the lemma between $P(3, 3, 3)$-subgraphs in $G$ and triples $(i, j, k)$ such that $i + j = k$ and $i, j, k \in A$.
    Finally, $G$ has $\Oh(s^2) = \Oh(n^{\mu})$ edges, and can be built in time $\Oh(s^2) = \Oh(n^{\mu})$.

    \medskip
    
    For the third claim of the lemma, note that if there is such an algorithm for \enumiso{P(3, 3, 3)}, we can use it to solve \listiso{P(3, 3, 3)} in time $\tOh(m^{2 - \varepsilon} + t)$, where $t$ is the number of $P(3, 3, 3)$-subgraphs in $G$. This contradicts the \threesumconj by the second claim of the lemma.
\end{proof}

\section{Lower Bounds for Patterns in \boldmath$\family$} \label{sec:p-graph-lower-bounds}

In this section we prove the following lemma that was stated in \cref{sec:proof-overview}.

\pgraphlowerbound*

Throughout this section, $\alpha, \beta$, and $\gamma$ denote integers with $\alpha \ge \beta \ge 3$ and $\gamma \ge 1$.
We start by considering three simple graph families for which the clique embedding number is known, see \cref{edge-lb,cycle-lb,biclique-lb}. Then in \cref{sec:lower-bound-P-graphs-alpha-gamma} we cover all graphs of the form $P(\alpha, \gamma \times 2)$, and in \cref{sec:lower-bound-P-graphs-alpha-beta-gamma} we cover all graphs of the form $P(\alpha, \beta, \gamma \times 2)$.
Finally, in \cref{sec:lower-bound-P-graphs-combination} we put the pieces together to prove \cref{lm:p-graph-lower-bound}.

\begin{lemma} \label{edge-lb}
    We have $\clemb(K_2) \ge 1$.
\end{lemma}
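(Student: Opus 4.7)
The plan is to unwind the definition of $\clemb$ and exhibit one explicit clique embedding witnessing the bound. Write $V(K_2) = \{a,b\}$ and consider the embedding $\psi \colon K_3 \to K_2$ that sends every $u \in V(K_3)$ to $\psi(u) := \{a,b\}$. Each image is a nonempty subset whose induced subgraph is $K_2$ itself, hence connected, and any two images are equal and therefore touch. So $\psi$ is a valid clique embedding from $K_3$ to $K_2$.

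It remains to compute the weak edge depth. The pattern $K_2$ has a single edge $e = \{a,b\}$, and $\psi(u) \cap \{a,b\} = \{a,b\} \neq \emptyset$ for every $u \in V(K_3)$, so $d_\psi(e) = 3$ and thus $\wed(\psi) = 3$. Plugging into the definition $\clemb(K_2) \ge \frac{k}{\wed(\psi)}$ with $k = 3$ yields $\clemb(K_2) \ge 1$, as required. There is no real obstacle here: since every image $\psi(u)$ is a nonempty subset of $\{a,b\}$, the single edge of $K_2$ is automatically hit by every $u \in V(K_k)$ in any embedding, so in fact $\wed(\psi) = k$ for all clique embeddings into $K_2$, and the bound is tight.
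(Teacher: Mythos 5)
Your proof is correct and essentially the same as the paper's: both exhibit the trivial clique embedding of $K_3$ into $K_2$ and observe that the single edge has weak depth $3$, giving $\clemb(K_2) \ge 3/3 = 1$. The only cosmetic difference is that you map each node of $K_3$ to $\{a,b\}$ whereas the paper maps it to the singleton $\{a\}$; both choices satisfy the definition and yield the same bound.
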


\begin{proof}
    Pick any node $a$ in $K_2$. We create a clique embedding $\psi : K_3 \to K_2$ by setting $\psi(v) = a$ for each node $v \in V(K_3)$. The depth of the only edge in $K_2$ is $3$, and thus $\wed(\psi) = 3$, which yields $\clemb(K_2) \ge 3 / 3 = 1$.
\end{proof}

\begin{lemma}[{\cite[Lemma 17]{FanKZ23}}] \label{cycle-lb}
    For any integer $k \ge 3$, we have $\clemb(C_k) \ge 2 - 1/\lceil \frac{k}{2} \rceil$.
\end{lemma}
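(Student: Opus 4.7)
The plan is to give an explicit clique embedding attaining the claimed ratio, treating odd and even $k$ separately. For odd $k = 2\ell - 1$ (so $\lceil k/2 \rceil = \ell$), I will exhibit an embedding $\psi : K_k \to C_k$ with $\wed(\psi) = \ell$, giving $k / \wed(\psi) = (2\ell-1)/\ell = 2 - 1/\ell$; the even case will then follow from an induced minor argument.

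For the odd case, label the vertices of $C_k$ cyclically as $v_0, v_1, \ldots, v_{k-1}$ and for each $i \in \{0, 1, \ldots, k-1\}$ set $\psi(u_i) := \{v_i, v_{i+1}, \ldots, v_{i+\ell-2}\}$, with indices taken modulo $k$, an arc of $\ell - 1$ consecutive vertices. Each image is a subpath of $C_k$, hence connected. For pairwise touching, observe that any two such arcs together cover at most $2(\ell-1) = k - 1$ vertices, so either they share a vertex (and touch via it) or they are disjoint and leave exactly one vertex $v_g$ uncovered; in the latter case the two arcs partition the path $C_k - v_g$, forcing them to be consecutive on the side opposite to $v_g$ and thus to touch via an edge. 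For the weak edge depth, each vertex $v_j$ lies in exactly $\ell-1$ arcs (namely $u_i$ for $i \in \{j - \ell + 2, \ldots, j\}$ mod $k$), and both endpoints of an edge $\{v_j, v_{j+1}\}$ lie together in exactly $\ell - 2$ arcs; inclusion-exclusion gives each edge weak depth exactly $(\ell-1) + (\ell-1) - (\ell-2) = \ell$. Hence $\clemb(C_k) \ge (2\ell-1)/\ell = 2 - 1/\lceil k/2 \rceil$.

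For the even case $k = 2\ell$, contracting any single edge of $C_{2\ell}$ produces the graph $C_{2\ell - 1}$, so $C_{2\ell - 1}$ is an induced minor of $C_{2\ell}$. Applying \cref{clemb-of-induced-minors} together with the odd case already established, $\clemb(C_{2\ell}) \ge \clemb(C_{2\ell - 1}) \ge 2 - 1/\ell = 2 - 1/\lceil (2\ell)/2 \rceil$, as required.

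The main step is the explicit construction in the odd case; the even case essentially reduces to it for free. I expect the only mild obstacle to be clean bookkeeping in the touching and depth verifications due to the cyclic indexing. The small boundary case $\ell = 2$ (i.e., $k = 3$, where the arcs are singletons) is handled by the same construction without any special treatment, since in $C_3$ any two distinct vertices are adjacent.
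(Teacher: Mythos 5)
The paper does not give its own proof of this lemma; it cites~\cite[Lemma~17]{FanKZ23} and relies on it as a black box. Your proposal is a correct standalone proof, and the construction (embedding each clique vertex into a length-$(\ell-1)$ arc of $C_{2\ell-1}$, then reducing even cycles to odd ones via an edge contraction and \cref{clemb-of-induced-minors}) is the standard one and is essentially the same arc-embedding template the paper uses in its own harder cycle-based proofs such as \cref{lm:pa2c-clemb} and \cref{lb-parameterized-bound-a-b-c-1}. Your verification of touching (two disjoint length-$(\ell-1)$ arcs partition the path $C_k - v_g$ and so must meet at an edge) is a special case of the paper's \cref{paths-touch}, and your verification of weak edge depth via inclusion--exclusion ($(\ell-1)+(\ell-1)-(\ell-2)=\ell$ for every edge) is cleaner for the fully symmetric cycle than the ``sweep along the cycle and track depth changes'' method the paper uses in its $P$-graph lemmas, though that sweeping argument is what the paper needs once the pattern is no longer vertex-transitive. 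The boundary case $k=3$ ($\ell=2$, singleton arcs) is handled correctly, and the modular ranges $[j-\ell+2,j]$ and $[j-\ell+3,j]$ each have length less than $k$, so no wrap-around issues arise. Overall the proof is sound and matches the technique the cited reference and the rest of this paper employ.
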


\begin{lemma}[{\cite[Proposition 19]{FanKZ23}}] \label{biclique-lb}
    For any integer $k \ge 2$, we have $\clemb(K_{2, k}) \ge 2 - \frac{1}{k}$.
\end{lemma}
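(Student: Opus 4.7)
The plan is to exhibit an explicit clique embedding $\psi \colon K_{2k-1} \to K_{2,k}$ whose weak edge depth is exactly $k$; taking the ratio $(2k-1)/k = 2 - 1/k$ then yields the claim directly from the definition of $\clemb$.

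Label the two parts of $K_{2,k}$ as $\{a_1, a_2\}$ and $\{b_1, \ldots, b_k\}$. The embedding $\psi$ will consist of the following $2k-1$ image multisets:
\begin{itemize}
    \item for each $i \in [k-1]$, one node of $K_{2k-1}$ is embedded into $\{a_1, b_i\}$;
    \item for each $i \in [k-1]$, one node of $K_{2k-1}$ is embedded into $\{a_2, b_i\}$;
    \item one final node is embedded into the singleton $\{b_k\}$.
\end{itemize}

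First I would verify that $\psi$ is a valid clique embedding. Each two-element set $\{a_j, b_i\}$ induces an edge of $K_{2,k}$ and is therefore connected, and $\{b_k\}$ is trivially connected. For pairwise touching, sets of the form $\{a_1, b_i\}$ all share $a_1$, sets of the form $\{a_2, b_i\}$ all share $a_2$, and for mixed pairs $\{a_1, b_i\}$ and $\{a_2, b_j\}$ we either share $b_i$ (when $i=j$) or use the edge $\{a_1, b_j\} \in E(K_{2,k})$. Finally, $\{b_k\}$ touches every other set because every other set contains some $a_s$, which is adjacent to $b_k$.

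Next I would compute $\wed(\psi)$ by bounding $d_\psi(e)$ for each edge $e = \{a_s, b_t\} \in E(K_{2,k})$. Using $d_\psi(e) = |\{u : a_s \in \psi(u)\}| + |\{u : b_t \in \psi(u)\}| - |\{u : \{a_s,b_t\} \subseteq \psi(u)\}|$, each $a_s$ lies in exactly $k-1$ image sets, each $b_t$ with $t \in [k-1]$ lies in exactly $2$ image sets (one containing $a_1$ and one containing $a_2$), and $b_k$ lies in exactly $1$ image set (namely $\{b_k\}$). A quick case split on whether $t \in [k-1]$ or $t = k$ shows that in every case $d_\psi(e) = k$, so $\wed(\psi) = k$. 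This gives $\clemb(K_{2,k}) \ge (2k-1)/\wed(\psi) = 2 - 1/k$, which is the desired bound. There is no real obstacle here: the construction is short, and the only care needed is to ensure the lone singleton $\{b_k\}$ does not cause a depth blow-up on the edges incident to $b_k$, which is precisely why exactly one such singleton (rather than two symmetric families of size $k$) is used.
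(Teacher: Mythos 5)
Your construction is correct: the $2k-1$ image sets are connected, pairwise touching, and a short inclusion--exclusion count shows every edge $\{a_s,b_t\}$ has weak depth exactly $k$ (the case $t=k$ giving $(k-1)+1$ and the case $t<k$ giving $(k-1)+2-1$), so $\clemb(K_{2,k}) \ge (2k-1)/k = 2-\tfrac1k$; note also that $2k-1 \ge 3$ for $k \ge 2$, so the embedding is admissible under the definition, which only considers cliques $K_{k'}$ with $k' \ge 3$. The paper itself does not prove this lemma -- it simply cites Proposition~19 of Fan, Koutris, and Zhao -- so there is no in-paper argument to compare against; your explicit embedding is exactly the kind of self-contained verification the paper gives for its other patterns (e.g.\ the goggles graph and the $P(\alpha,\beta,\gamma\times 2)$ families), and it buys independence from the external reference at essentially no extra length. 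The one stylistic difference is that you use a single singleton $\{b_k\}$ plus two families of doubletons, whereas embeddings in this area are sometimes phrased symmetrically with weights on repeated sets; your asymmetric choice is precisely what keeps the depth at $k$ on the edges incident to $b_k$, and you correctly flag that point.
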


\subsection{Lower bound for $P(\alpha, \gamma \times 2)$}
\label{sec:lower-bound-P-graphs-alpha-gamma}

\begin{lemma} \label{paths-touch}
    For integers $k \ge 3$ and $\ell_1, \ell_2 \in [0, k - 1]$ such that $\ell_1 + \ell_2 \ge k - 3$, any two paths of lengths $\ell_1$ and $\ell_2$ respectively in a cycle of length $k$ touch.
\end{lemma}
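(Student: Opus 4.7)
The plan is to prove the contrapositive: if two paths $P_1, P_2$ of lengths $\ell_1, \ell_2$ in $C_k$ fail to touch, then I will derive $\ell_1 + \ell_2 \le k-4$, contradicting the hypothesis $\ell_1 + \ell_2 \ge k-3$.

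First I would fix a cyclic labeling $v_0, v_1, \ldots, v_{k-1}$ of the vertices of $C_k$ and observe that any subpath of length $\ell$ of the cycle corresponds to an arc of exactly $\ell + 1$ consecutive vertices (with $\ell = 0$ giving a single-vertex path, which is consistent with $\ell + 1 = 1$). The bound $\ell_i \le k-1$ in the hypothesis ensures each such arc is a proper subset of the cycle, so removing $V(P_1) \cup V(P_2)$ from $V(C_k)$ (assuming the two paths are vertex-disjoint, which is the only case to consider since otherwise they already touch) leaves a complement that splits into exactly two arcs of sizes $g_1, g_2 \ge 0$. Counting vertices gives the identity
\[
(\ell_1 + 1) + (\ell_2 + 1) + g_1 + g_2 = k.
\]

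Next I would invoke the assumption that $P_1$ and $P_2$ are not joined by any edge of $C_k$. If one of the gaps were empty, say $g_1 = 0$, then the last vertex of one path and the first vertex of the other would be cyclically consecutive, hence connected by an edge of $C_k$, contradicting the non-touching assumption. Therefore $g_1, g_2 \ge 1$, and the displayed identity yields $\ell_1 + \ell_2 \le k - 4$, as required.

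The argument is essentially a vertex-counting exercise once the cyclic picture is set up, so I do not anticipate any serious obstacle; the only minor subtlety is ruling out the degenerate configurations (shared vertices or zero-length paths), but both are handled cleanly since a shared vertex already witnesses touching, and $\ell_i = 0$ just gives a single-vertex arc which the counting identity still accommodates.
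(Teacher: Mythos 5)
Your proof is correct. It takes essentially the same vertex-counting approach as the paper's, just phrased as a contrapositive: you count the two gaps separating the disjoint, non-touching arcs and deduce $\ell_1 + \ell_2 \le k - 4$, whereas the paper counts the vertices of $P_1$ together with the two neighbors of its endpoints (capped at $k$) and applies a pigeonhole argument to conclude that $P_2$ must meet this set.
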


\begin{proof}
    The first path of length $\ell_1$ goes through $\ell_1 + 1$ nodes.
    Therefore, if the second path contains any of these $\ell_1 + 1$ nodes or the ones adjacent to the ends of the first path, the two paths touch each other.
    There are $\min\{\ell_1 + 3, k\}$ such nodes.

    The second path goes through $\ell_2 + 1$ nodes.
    We have $\min\{l_1 + 3, k\} + \ell_2 + 1 \ge k + 1$ as $\ell_1 + \ell_2 \ge k - 3$.
    Therefore, by the pigeonhole principle, the second path goes through either one of the nodes of the first path or one of the nodes that are adjacent to its ends.
    Hence, the two paths touch each other.
\end{proof}

For a graph $P(\alpha, \gamma \times 2)$ for $\alpha \ge 3$ and $\gamma \ge 1$ we fix its node numbering as follows (see \cref{Pac-node-numbering}). We denote the nodes on the path of length $\alpha$ by $a_0$, $a_1$, $\ldots$, $a_{\alpha - 1}$, $a_{\alpha}$, where $a_0$ and $a_{\alpha}$ are the ends of the path. We denote the nodes on the paths of length $2$ that are adjacent to both $a_0$ and $a_{\alpha}$ by $c_1$, $c_2$, $\ldots$, $c_{\gamma}$.

\begin{figure}
\begin{center}
    \includegraphics[scale=0.5]{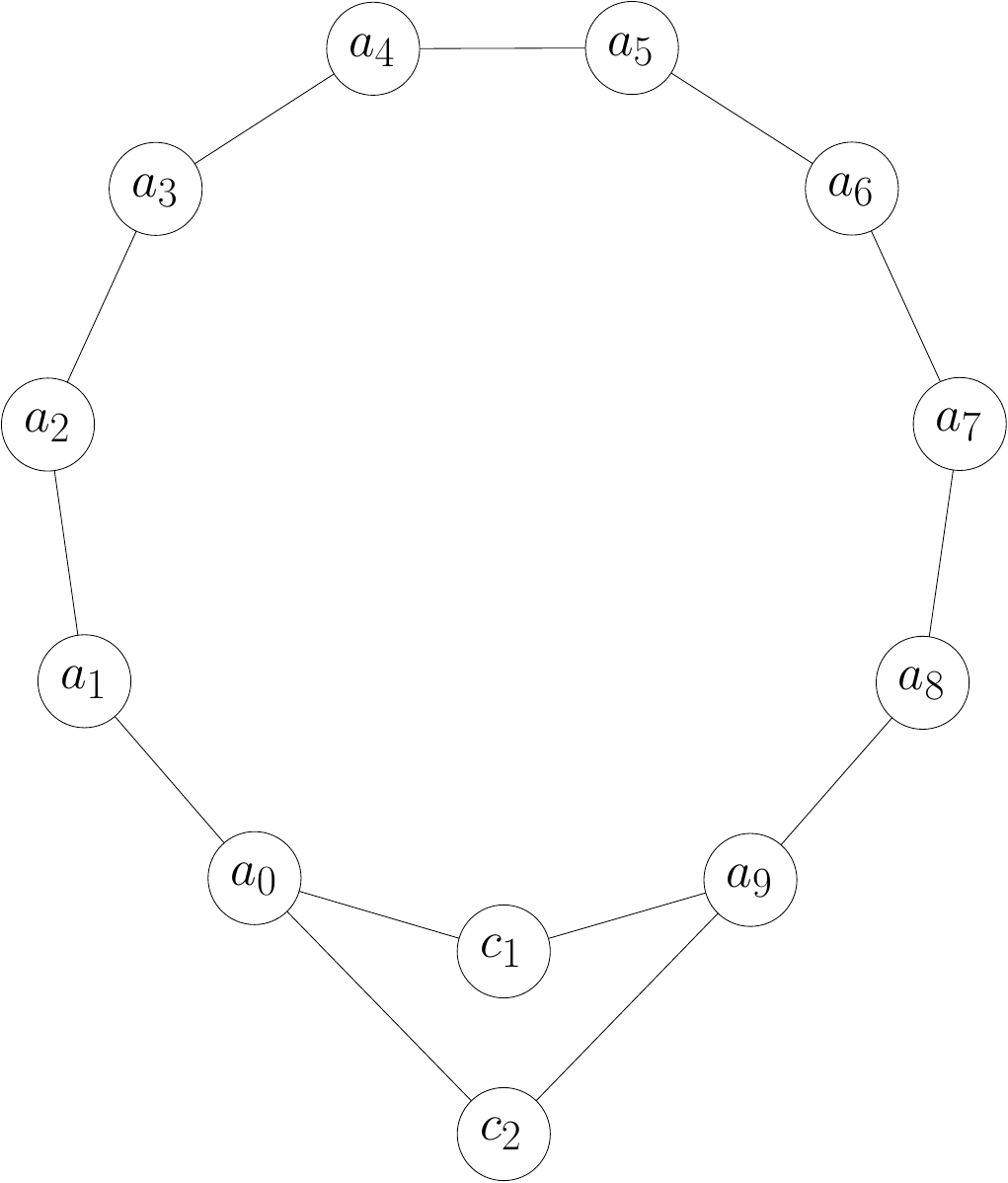}
\end{center}

\caption{An example of a node numbering for $P(\alpha, \gamma \times 2)$ for $\alpha = 9$ and $\gamma = 2$.}
\label{Pac-node-numbering}
\end{figure}

\begin{lemma} \label{lm:pa2c-clemb}
    For any $\alpha \ge 3$ and $\gamma \ge 2$ we have $\clemb(P(\alpha, \gamma \times 2)) \ge 2 - 1 / (2 \gamma + \left\lfloor \frac{\alpha - 1}{2} \right\rfloor)$.
\end{lemma}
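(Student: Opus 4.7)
The plan is to construct an explicit clique embedding $\psi : K_{2q-1} \to P(\alpha, \gamma \times 2)$ with $\wed(\psi) \le q$, where $q := 2\gamma + \lfloor (\alpha-1)/2 \rfloor$; this will give $\clemb(P(\alpha, \gamma \times 2)) \ge (2q-1)/q = 2 - 1/q$ as claimed. The construction combines two earlier ingredients: the biclique embedding from \cref{biclique-lb} (which on its own would contribute only the $2\gamma$ term coming from the $\gamma$ short paths $a_0 c_i a_\alpha$), and the cycle embedding of $C_{\alpha+2}$ (viewing the long path $a_0 a_1 \cdots a_\alpha$ together with some $c_i$ as a cycle, and using \cref{paths-touch} to obtain a family of subpaths of the cycle that pairwise touch), contributing the remaining $\lfloor(\alpha-1)/2\rfloor$. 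Heuristically, the budget $q$ is the sum of these two contributions.

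Concretely, I would place one clique node at each of the $2\gamma$ subgraphs $\{a_0, c_i\}$ and $\{c_i, a_\alpha\}$ for $i \in [\gamma]$, and place the remaining $2q - 1 - 2\gamma$ clique nodes at carefully chosen subpaths of the long path, sometimes augmented by a single $c_i$ in order to bridge the two endpoints of the path. Pairwise touching is then verified in three families: any two $c_i$-sets share one of $a_0, a_\alpha$, or a $c_i$, or are linked by an edge $a_0 c_j$ or $c_j a_\alpha$; any two subpath sets touch each other inside $C_{\alpha+2}$ by \cref{paths-touch}; and each subpath-set touches each $c_i$-set because it contains one of $\{a_0, a_1\}$ (giving an edge to $\{a_0, c_i\}$) and one of $\{a_{\alpha-1}, a_\alpha\}$ (giving an edge to $\{c_i, a_\alpha\}$), or has been deliberately extended by a $c_i$ to create such a bridge.

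I expect the main obstacle to be a bookkeeping case analysis when verifying the weak edge depth bound. The tightest depth constraints sit on the boundary edges: the $c_i$-incident edges $\{a_0, c_i\}, \{c_i, a_\alpha\}$ and the near-end path edges $\{a_0, a_1\}, \{a_{\alpha-1}, a_\alpha\}$ all feel both the $2\gamma$ $c_i$-sets and the long-path subpath-sets touching $a_0$ or $a_1$ (respectively, $a_{\alpha-1}$ or $a_\alpha$), and these are the edges that must saturate $q$. Interior long-path edges are only hit by the path subgraphs and stay comfortably below $q$. The naive count of subpath-sets obtained by taking the $C_{\alpha+2}$ embedding for $c_1$ and then, for each additional $c_i$ with $i \ge 2$, the analogous subpaths passing through $c_i$ instead of $c_1$, differs from the target $2q-1$ by a small correction depending on how $\alpha$ compares with $7$; for small $\alpha$ one needs to add a few ``booster'' multiplicities on a short central subpath, and for larger $\alpha$ one has to drop some redundant subpaths. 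To finish, I would split by the parity of $\alpha$ and handle the small cases $\alpha \in \{3, 4\}$ separately, checking in each subcase that the chosen multiplicities make every edge of $P(\alpha, \gamma \times 2)$ covered by at most $q$ embedded subgraphs.
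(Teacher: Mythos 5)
Your overall plan matches the paper's --- build an explicit clique embedding $\psi\colon K_{2q-1}\to P(\alpha,\gamma\times 2)$ with $\wed(\psi)\le q$ for $q=2\gamma+\lfloor(\alpha-1)/2\rfloor$ --- and the intuitive budget split (a cycle piece plus a biclique piece summing to $q$) is the right heuristic. However, the concrete construction you propose has a gap that no amount of bookkeeping will repair: placing $2\gamma$ clique nodes on the two-element sets $\{a_0,c_i\}$ and $\{c_i,a_\alpha\}$ is fatal. Since $c_i$ is adjacent only to $a_0$ and $a_\alpha$, any connected set $S$ that touches both $\{a_0,c_i\}$ and $\{c_i,a_\alpha\}$ must contain $a_0$, contain $a_\alpha$, or contain all of $a_1,\dots,a_{\alpha-1}$; augmenting $S$ by a $c_j$ does not relax this, because a connected set through $c_j$ still passes through $a_0$ or $a_\alpha$. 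So all $2q-1-2\gamma$ remaining sets are forced to cluster at the two ends, which overloads the near-endpoint edges already for moderate $\alpha$. For instance, with $\alpha=5,\gamma=2$ (so $q=6$ and $11$ clique nodes): adding up the weak-depth bounds on the four edges $\{a_0,c_i\},\{c_i,a_5\}$ shows that the number of path-sets containing both $a_0$ and $a_5$ is strictly less than the number of interior copies $\{a_1,a_2,a_3,a_4\}$; adding up the bounds on $\{a_0,a_1\}$ and $\{a_4,a_5\}$ then pins down exactly one interior copy, no both-endpoint set, three sets at each end, and no $c$-extensions; and finally the bounds on $\{a_3,a_4\},\{a_2,a_3\},\{a_1,a_2\}$ together with pairwise touching of the $a_0$-side and $a_5$-side paths force one $a_0$-side set down to $\{a_0\}$ alone, which then fails to touch the $a_5$-side sets. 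Contradiction.

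The paper's construction sidesteps this by never using two-element bridges. After reducing to odd $\alpha$ (via an induced minor), it sets $\ell:=(\alpha-1)/2$, treats the long path together with the single node $c_1$ as a cycle $\cC$ of length $\alpha+2$, and embeds \emph{only} length-$\ell$ paths: one clique node per length-$\ell$ subpath of $\cC$ (these pairwise touch by \cref{paths-touch}), multiplicity $\gamma$ instead of $1$ on the two subpaths $a_0\cdots a_\ell$ and $a_{\ell+1}\cdots a_\alpha$, and one extra clique node per path $c_i, a_0,\dots,a_{\ell-1}$ and $a_{\ell+2},\dots,a_\alpha, c_i$ for each $i\in\{2,\dots,\gamma\}$. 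The biclique contribution $2\gamma$ is realized via these multiplicities and extra $c_i$-paths, not via short bridges; every embedded set is a length-$\ell$ path lying on one of the $\gamma$ cycles through $a_0,a_\alpha,c_i$, so pairwise touching is automatic and the depth contributions spread evenly, saturating exactly $q$ on every edge.
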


\begin{proof}

    We prove the lemma for odd values of $\alpha$.
    For an even $\alpha$, the graph $P(\alpha, \gamma \times 2)$ has $P(\alpha - 1, \gamma \times 2)$ as an induced minor, and thus due to \cref{clemb-of-induced-minors} we get $\clemb(P(\alpha, \gamma \times 2)) \ge \clemb(P(\alpha - 1, \gamma \times 2)) \ge 2 - 1 / (2 \gamma + \left\lfloor \frac{\alpha - 2}{2} \right\rfloor) = 2 - 1 / (2 \gamma + \left\lfloor \frac{\alpha - 1}{2} \right\rfloor)$.
    
    Hence, assume $\alpha$ to be odd and let $\ell \coloneqq \frac{\alpha - 1}{2}$ (so $\alpha = 2\ell + 1$).
    We create a clique embedding $\psi$ from $K_{4 \gamma + 2\ell - 1}$ to $P(\alpha, \gamma \times 2)$ with $\wed(\psi) = 2 \gamma + \ell$, which yields $\clemb(P(\alpha, \gamma \times 2)) \ge \frac{4 \gamma + 2\ell - 1}{2 \gamma + \ell} = 2 - \frac{1}{2 \gamma + \ell}$.

    We treat the path of length $\alpha$ and node $c_1$ as a cycle $\cC$ of length $2 \ell + 3$.
    We use three different types of subgraphs that we embed into.
    See \cref{P922-alt-lower-bound} for an example.

    \begin{enumerate}
        \item[A.] We embed one node into each path of length $\ell$ in $\cC$.
        \item[B.] For the paths $a_0 \edg a_1 \edg \cdots \edg a_{\ell}$ and $a_{\ell + 1} \edg a_{\ell + 2} \edg \cdots \edg a_{\alpha}$ we embed additional $\gamma-1$ nodes (making their total embedded values $\gamma$).
        \item[C.] For each $i \in [2, \gamma]$, we embed one node into paths $c_i \edg a_0 \edg a_1 \edg \cdots \edg a_{\ell-1}$ and $c_i \edg a_{\alpha} \edg a_{\alpha - 1} \edg \cdots \edg a_{\ell + 2}$ each.
    \end{enumerate}

    In total, we embed one node into each one of $2\ell + 3$ subgraphs of type A, additional $\gamma-1$ node for two of subgraphs of type B, and one node for each of the $2 \gamma - 2$ subgraphs of type C.
    Hence, we embed $(2 \ell + 3) + 2 \cdot (\gamma - 1) + (2 \gamma - 2) = 4 \gamma + 2 \ell - 1$ nodes, which is exactly the number of nodes in $K_{4 \gamma + 2\ell - 1}$.

    It is clear that all subgraphs chosen in our embedding are connected. Let us argue that all pairs of embedded subgraphs touch.
    All pairs of subgraphs of type A and B touch due to \cref{paths-touch}.
    All pairs of subgraphs of type C touch either on the edge $\{a_0, c_i\}$ or on the edge $\{a_{\alpha}, c_i\}$ for some $i \in [\gamma]$.
    Furthermore, we claim that every path $p_1$ of type C touches every path $p_2$ of type A or B.
    Pick $i$ such that $p_1$ goes through $c_i$.
    If $a_0 \in p_2$, the two paths touch on the edge $\{a_0, c_i\}$.
    If $a_{\alpha} \in p_2$, they touch on the edge $\{a_{\alpha}, c_i\}$.
    Otherwise, $p_2$ lies inside the path $a_1 \edg a_2 \edg \cdots \edg a_{\alpha - 1}$.
    Hence, $p_1$ and $p_2$ lie inside the cycle consisting of nodes $a_0, a_1, \ldots, a_{\alpha}, c_i$.
    Such a cycle has length $2 \ell + 3$, and thus $p_1$ and $p_2$ touch due to \cref{paths-touch}.
    Hence $\psi$ is a valid clique-embedding function.
    
    It remains to show $\wed(\psi) = 2 \gamma + \ell$.
    In isolation, the subgraphs of type A give weak edge depth of $\ell + 2$ to each edge in $\cC$ because every edge intersects exactly $\ell+2$ paths of length $\ell$.

    We claim that the subgraphs of types B and C increase weak edge depths of all edges in $\cC$ by $2 \gamma - 2$.
    Edges on the path $c_1 \edg a_0 \edg a_1 \edg \cdots \edg a_{\ell}$ intersect $\gamma - 1$ subgraphs of type C and the subgraph $a_0 \edg a_1 \edg \cdots \edg a_{\ell}$ of type B with weight $\gamma - 1$.
    Edges on the path $c_1 \edg a_{\alpha} \edg a_{\alpha-1} \edg \cdots \edg a_{\ell + 1}$ touch the symmetric subgraphs.
    The remaining edge $\{a_{\ell}, a_{\ell+1}\}$ touches both subgraphs of type B with weight $\gamma - 1$ each and no subgraphs of type C.
    Hence, every edge in $\cC$ has weak edge depth $\ell + 2 + 2 \gamma - 2 = 2 \gamma + \ell$.

    It remains to argue that edges outside of $\cC$ have weak edge depth $2 \gamma + \ell$.
    Such edges have form $\{a_0, c_i\}$ or $\{a_{\alpha}, c_i\}$ for some $i \in [2, \gamma]$.
    We already know that weak edge depth of $\{a_0, c_1\}$ is $2 \gamma + \ell$.
    If we switch from this edge to $\{a_0, c_i\}$, we lose a single path of type A that ends in $c_1$ and gain a single path of form $c_i \edg a_{\alpha} \edg a_{\alpha - 1} \edg \cdots \edg a_{\ell + 2}$ of type C.
    Therefore, weak edge depth of $\{a_0, c_i\}$ is the same as for $\{a_0, c_1\}$.
    Edges of the form $\{a_{\alpha}, c_i\}$ can be checked symmetrically.
\end{proof}

\subsection{Lower bounds for $P(\alpha, \beta, \gamma \times 2)$}
\label{sec:lower-bound-P-graphs-alpha-beta-gamma}

For a graph $P(\alpha, \beta, \gamma \times 2)$ for $\alpha \ge \beta \ge 3$ and $\gamma \ge 1$ we fix its node numbering as follows (see \cref{Pabc-node-numbering}). We denote the nodes on the path of length $\alpha$ by $a_0$, $a_1$, $\ldots$, $a_{\alpha - 1}$, $a_{\alpha}$, where $a_0$ and $a_{\alpha}$ are the ends of the path. We denote the nodes on the path of length $\beta$ by $b_0$, $b_1$, $\ldots$, $b_{\beta - 1}$, $b_{\beta}$. Note that $a_0 = b_0$ and $a_{\alpha} = b_{\beta}$. We denote the nodes in the paths of length $2$ that are adjacent to both $a_0$ and $a_{\alpha}$ by $c_1$, $c_2$, $\ldots$, $c_{\gamma}$.

\begin{figure}
\begin{center}
    \includegraphics[scale=0.35]{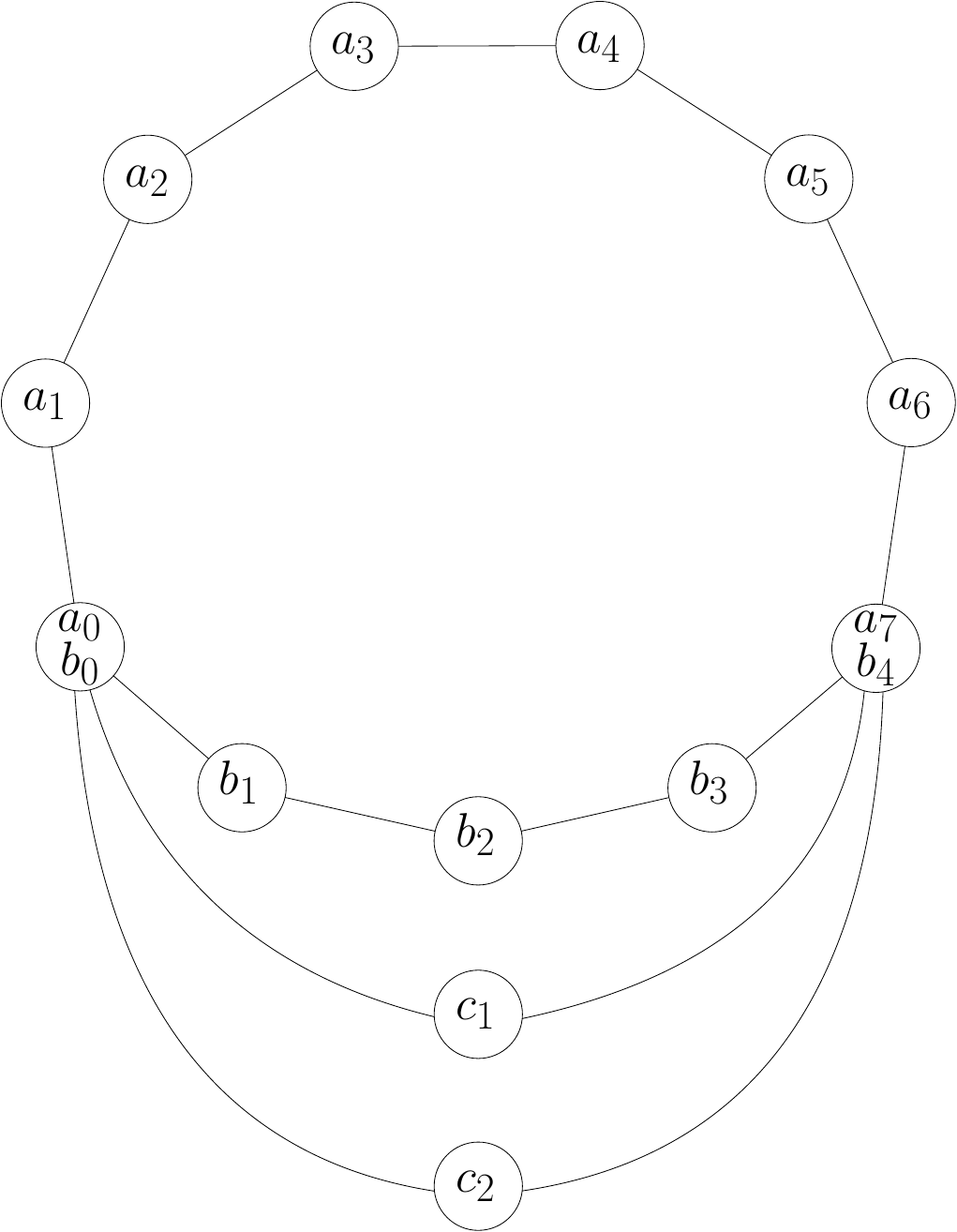}
\end{center}

\caption{An example of a node numbering for $P(\alpha, \beta, \gamma \times 2)$ for $\alpha = 7$, $\beta=4$, and $\gamma = 2$.}
\label{Pabc-node-numbering}
\end{figure}

\begin{lemma} \label{lb-parameterized-bound-a-b-c-1}
    For any $\alpha \ge \beta \ge 5$, $\alpha + \beta$ odd, and $\gamma \ge 1$, we have $\clemb(P(\alpha, \beta, \gamma \times 2)) \ge 2 - 1 / (2 \beta \gamma + \frac{\alpha \beta}{2} - \frac{\beta^2}{2} + 2 \beta - \frac{\alpha}{2} - 4 \gamma - \frac{3}{2})$.
\end{lemma}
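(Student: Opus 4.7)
The plan is to construct an explicit clique embedding $\psi : K_k \to P(\alpha, \beta, \gamma \times 2)$ with $k = 2F - 1$ and $\wed(\psi) = F$, where $F := 2\beta\gamma + \alpha\beta/2 - \beta^2/2 + 2\beta - \alpha/2 - 4\gamma - 3/2$ denotes the right-hand side of the claimed bound; this will yield $\clemb(P(\alpha,\beta,\gamma \times 2)) \ge k/F = 2 - 1/F$. Since $\alpha + \beta$ is odd, the $\alpha$-path and $\beta$-path together form an odd cycle $\cC$ of length $2L + 1$ with $L := (\alpha + \beta - 1)/2$, and this cycle serves as the backbone of the embedding, with the spokes $c_1,\ldots,c_\gamma$ attached at the two antipodal endpoints $a_0, a_\alpha$.

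Generalizing the three-family construction from \cref{lm:pa2c-clemb}, the embedding uses three kinds of connected subgraphs. \emph{Type A} consists of paths of length $L$ in $\cC$, one embedded node per path, providing a uniform baseline depth on every edge of $\cC$. \emph{Type B} adds additional multiplicity to specific type A paths whose interior lies predominantly on the $\beta$-side; this is the source of the $\beta(\alpha-\beta)/2$ and $+2\beta$ contributions to $F$ and compensates for the asymmetry between $\alpha$ and $\beta$ without breaking uniformity of the cycle-edge depths. \emph{Type C} consists, for each $i \in [\gamma]$, of paths entering $\cC$ through the spoke $c_i$ and extending into both sides of $\cC$; the Type C subgraphs produce the $2\beta\gamma$ contribution and simultaneously match the depth of the spoke edges $\{a_0, c_i\}, \{a_\alpha, c_i\}$ to the depth attained on cycle edges.

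Validity of the embedding is checked as in \cref{lm:pa2c-clemb}. Connectedness of every embedded subgraph is immediate since each is a path. Pairwise touching splits into three cases: two cycle paths touch by \cref{paths-touch} once we verify that the chosen lengths in $\cC$ sum to at least $|V(\cC)| - 3 = \alpha + \beta - 3$; a cycle path and a Type C bridge touch at $a_0$ or $a_\alpha$, since every Type C bridge contains both of these nodes and, by the length choice in Type A, every Type A path contains at least one of them in its closed neighborhood; two Type C bridges touch either by sharing their $c_i$ or by both containing $\{a_0, a_\alpha\}$.

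The main obstacle is the bookkeeping: the multiplicities of the Type B and Type C subgraphs must simultaneously satisfy three edge-depth equations---for edges of the $\alpha$-path, edges of the $\beta$-path, and the spoke edges $\{a_0,c_i\}, \{a_\alpha,c_i\}$---together with the total-count equation $k = 2F - 1$. The idiosyncratic constants $+2\beta - \alpha/2 - 4\gamma - 3/2$ appearing in $F$ arise precisely from boundary corrections at the two endpoints $a_0, a_\alpha$ of the cycle, where Type A, B and C paths interact asymmetrically because a Type A path of length $L$ near an endpoint can either include or exclude $a_0$ (respectively $a_\alpha$), and this parity of $\alpha + \beta$ shifts the count by a fixed amount. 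Once the multiplicities are pinned down, verifying that each of the three edge classes achieves depth exactly $F$ is a direct but tedious computation, with the total count then following by summing the multiplicities.
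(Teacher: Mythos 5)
You have correctly identified the high-level strategy: build an explicit clique embedding $\psi : K_{2F-1} \to P(\alpha,\beta,\gamma\times 2)$ with $\wed(\psi)=F$, treating the $\alpha$-path and $\beta$-path as an odd cycle $\cC$ and attaching the $\gamma$ spokes at $a_0, a_\alpha$. This is indeed the shape of the paper's proof. However, the proposal has a genuine gap: it never actually produces the embedding. What you call ``bookkeeping'' and ``pinning down the multiplicities'' is the entire substance of the lemma --- one must exhibit concrete families of paths with concrete weights and then verify, edge by edge around $\cC$, that the depth never exceeds $F$. Your three-family outline is only a plausibility argument; without the explicit construction you cannot know that the constraint system is even consistent, let alone that it achieves exactly the intricate value $F = 2\beta\gamma + \frac{\alpha\beta}{2} - \frac{\beta^2}{2} + 2\beta - \frac{\alpha}{2} - 4\gamma - \frac{3}{2}$ rather than something weaker. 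For comparison, the paper's construction for this lemma uses six distinct families (types A through F) with carefully chosen, non-uniform multiplicities, and the depth verification proceeds by sweeping around the cycle and tracking gained/lost paths at each edge --- it is not reducible to ``three edge-depth equations.''

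Two of your intermediate claims are also incorrect and would break the touching verification even at the sketch level. First, $a_0$ and $a_\alpha$ are not antipodal on $\cC$ when $\alpha > \beta$ --- they are at distances $\beta$ and $\alpha$ along the two arcs, while the antipodal distance would be $\lfloor(\alpha+\beta)/2\rfloor$. Second, you assert that every Type~C bridge contains both $a_0$ and $a_\alpha$; in the correct construction this is false (the analogues of the paper's types~E and~F each touch only one endpoint), so the claim that two such bridges ``touch by both containing $\{a_0,a_\alpha\}$'' fails for a pair consisting of one path through $a_0$ and another through $a_\alpha$ attached to different spokes $c_i \neq c_j$. The actual touching mechanism for those pairs goes through the spoke edges $\{a_0,c_j\}$ or $\{a_\alpha,c_i\}$, and this needs to be argued explicitly. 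In short, the proposal captures the correct strategy but leaves the proof's essential content undone and misstates the touching argument in a way that would not survive being made precise.
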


\begin{proof}
    As $\alpha + \beta$ is odd and $\alpha \ge \beta$, we have $\alpha \ge \beta + 1$.

    Define $X \coloneqq 2 \beta \gamma + \frac{\alpha \beta}{2} - \frac{\beta^2}{2} + 2 \beta - \frac{\alpha}{2} - 4 \gamma - \frac{3}{2} = (\beta - 1) \cdot (2 \gamma + \frac{\alpha - \beta + 3}{2}) - 2 \gamma$.
    We create a clique embedding $\psi$ from $K_{2X - 1}$ to $P(\alpha, \beta, \gamma \times 2)$ with $\wed(\psi) = X$, which yields $\clemb(P(\alpha, \beta, \gamma \times 2)) \ge \frac{2X - 1}{X} = 2 - \frac{1}{X}$, as desired.
    We treat the paths $a_0 \edg a_1 \edg \cdots \edg a_{\alpha}$ and $b_0 \edg b_1 \edg \cdots \edg b_{\beta}$ as a cycle $\cC$ of length $\alpha + \beta$.
    We use $6$ different types of subgraphs that we embed into. See \cref{lb-parameterized-bound-a-b-c-1-picture-alt}.

    \begin{enumerate}
        \item[A.] We embed one node into each path of the form $a_{\ell} \edg a_{\ell + 1} \edg \cdots \edg a_{\ell + \frac{\alpha + \beta - 3}{2} - 1}$ for $\ell \in [\frac{\alpha - \beta + 3}{2}]$.
        \item[B.] We embed $\gamma + \frac{\alpha - \beta + 3}{2}$ nodes into the path $b_{\beta - 2} \edg b_{\beta - 3} \edg \cdots \edg b_0 \edg a_1 \edg \cdots \edg a_{\frac{\alpha - \beta + 3}{2}-1}$.
            Furthermore, we embed $2\gamma + \frac{\alpha - \beta + 3}{2}$ nodes into each path of the form $b_{\beta - 3 - \ell} \edg b_{\beta - 4 - \ell} \edg \cdots \edg b_0 \edg a_1 \edg \cdots \edg a_{\frac{\alpha - \beta + 3}{2} + \ell}$ for $\ell \in [0, \beta - 4]$ (note that there is at least one such $\ell$ as $\beta \ge 5$).
        \item[C.] Symmetrically, we embed $\gamma + \frac{\alpha - \beta + 3}{2}$ nodes into the path $a_{\frac{\alpha + \beta - 3}{2} + 1} \edg a_{\frac{\alpha + \beta - 3}{2} + 2} \edg \cdots \edg a_{\alpha} \edg b_{\beta - 1} \edg \cdots \edg b_{2}$.
            Furthermore, we embed $2\gamma + \frac{\alpha - \beta + 3}{2}$ nodes into each path of the form $a_{\frac{\alpha + \beta - 3}{2} + \ell} \edg a_{\frac{\alpha + \beta - 3}{2} + \ell + 1} \edg \cdots \edg a_{\alpha} \edg b_{\beta - 1} \edg \cdots \edg b_{\ell + 3}$ for $\ell \in [0, \beta - 4]$ (note that there is at least one such $\ell$ as $\beta \ge 4$).
        \item[D.] We embed one node into each path of the form $a_{\frac{\alpha + \beta - 3}{2} + \ell} \edg a_{\frac{\alpha + \beta - 3}{2} + \ell + 1} \edg \cdots \edg a_{\alpha} \edg c_1 \edg a_0 \edg a_1 \edg \cdots \edg a_{\ell - 2}$ for $\ell \in [2, \frac{\alpha - \beta + 3}{2}]$ (note that there is at least one such $\ell$ as $\alpha - \beta \ge 1$).
        \item[E.] We embed one node into each path of the form $c_{\ell} \edg a_0 \edg a_1 \edg \cdots \edg a_{\frac{\alpha - \beta + 3}{2} - 1}$ for $\ell \in [\gamma]$.
        \item[F.] We embed one node into each path of the form $a_{\frac{\alpha + \beta - 3}{2} + 1} \edg a_{\frac{\alpha + \beta - 3}{2} + 2} \edg \cdots \edg a_{\alpha} \edg c_{\ell}$ for $\ell \in [\gamma]$.
    \end{enumerate}

    \begin{figure}
    \begin{center}
        \includegraphics[scale=0.8]{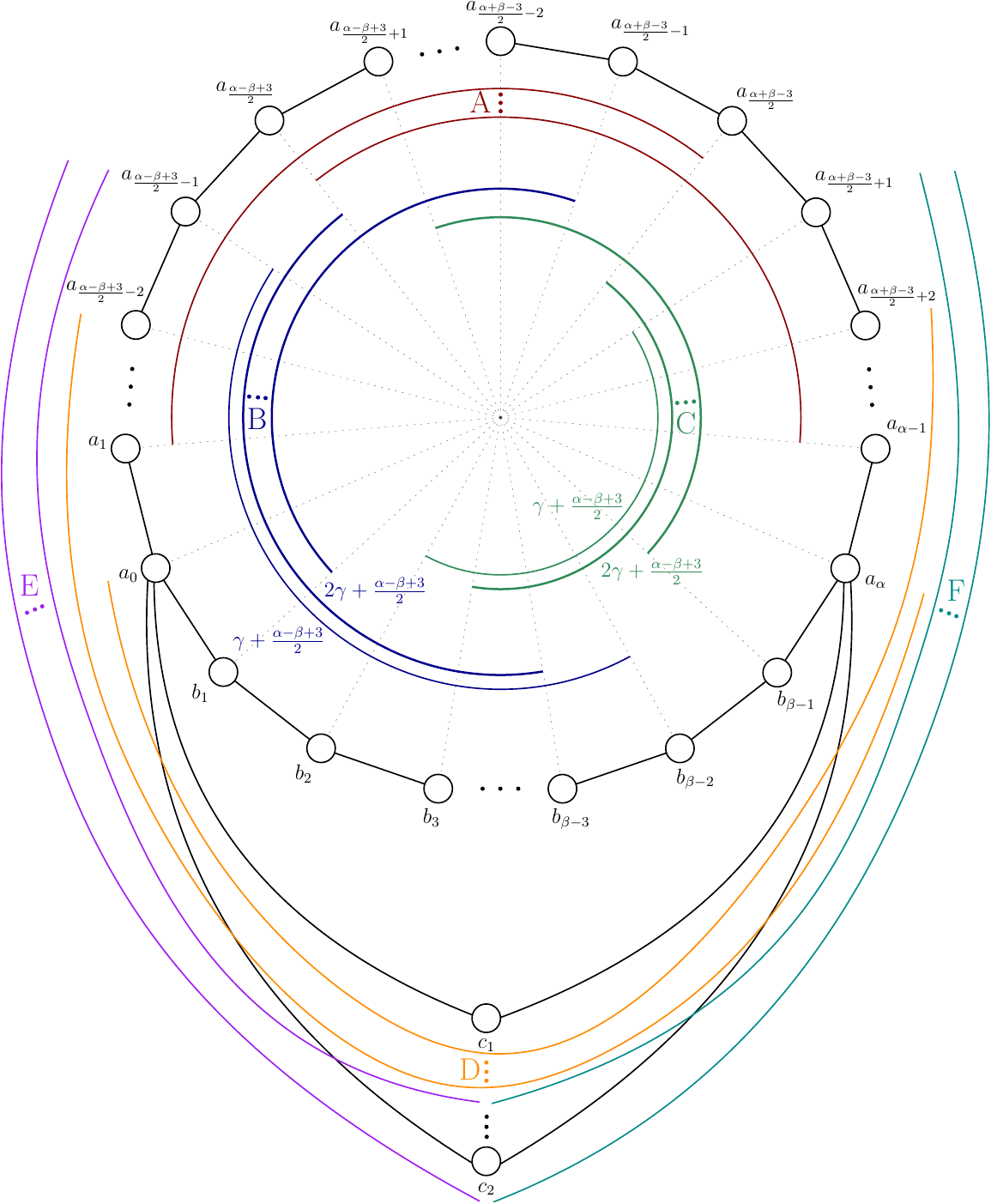}
    \end{center}

    \caption{Illustration of the proof of \cref{lb-parameterized-bound-a-b-c-1}. In addition to the abstract situation of \cref{lb-parameterized-bound-a-b-c-1}, the figure also shows the clique embedding for $P(12, 7, 2, 2)$.}
    \label{lb-parameterized-bound-a-b-c-1-picture-alt}
    \end{figure}

    We now analyze this construction.
    In total, we embed
    \begin{align*}
        &\frac{\alpha - \beta + 3}{2} \cdot 1 & \text{(A)}\\
        &+ (\gamma + \frac{\alpha - \beta + 3}{2}) + (\beta - 3) \cdot (2\gamma + \frac{\alpha - \beta + 3}{2}) & \text{(B)}\\
        &+ (\gamma + \frac{\alpha - \beta + 3}{2}) + (\beta - 3) \cdot (2\gamma + \frac{\alpha - \beta + 3}{2}) & \text{(C)}\\
        &+ (\frac{\alpha - \beta + 3}{2} - 2 + 1) \cdot 1 & \text{(D)}\\
        &+ \gamma \cdot 1 & \text{(E)}\\
        &+ \gamma \cdot 1 & \text{(F)}\\
        &= 2 \cdot ((\beta - 1) \cdot (2 \gamma + \frac{\alpha - \beta + 3}{2}) - 2 \gamma) - 1
	\end{align*}
    nodes, which is exactly the number of nodes in $K_{2X - 1}$.
    It is clear that all subgraphs into which we embed are connected.
    Next, we prove that all pairs of these subgraphs touch.
    For every type from A to F we show that the paths of this type touch each other and all later types.
    \begin{enumerate}
    \item[A.] All paths of type A start not later than node $a_{\frac{\alpha - \beta + 3}{2}}$ and end not earlier than node $a_{\frac{\alpha + \beta - 3}{2}}$.
        Note that $\frac{\alpha - \beta + 3}{2} < \frac{\alpha + \beta - 3}{2}$ as $\beta \ge 5$.
        Thus, all these paths contain all nodes $a_{\frac{\alpha - \beta + 3}{2}}, \ldots, a_{\frac{\alpha + \beta - 3}{2}}$, and all pairs of them touch.
        All paths of type B end not earlier than node $a_{\frac{\alpha - \beta + 3}{2} - 1}$, so they touch all paths of type A on the edge $\{a_{\frac{\alpha - \beta + 3}{2} - 1}, a_{\frac{\alpha - \beta + 3}{2}}\}$.
        Symmetrically, all paths of type C start not later than $a_{\frac{\alpha + \beta - 3}{2} + 1}$, so they all touch all paths of type A on the edge $\{a_{\frac{\alpha + \beta - 3}{2}}, a_{\frac{\alpha + \beta - 3}{2} + 1}\}$.
        For any fixed $\ell$, paths of types D (for $\ell=1$), E (for $\ell \in [\gamma]$), and F (for $\ell \in [\gamma]$) lie on the cycle $a_0 \edg a_1 \edg \cdots \edg a_{\alpha} \edg c_{\ell}$ of length $\alpha + 2$, which also contains all paths of type A.
        The paths of types D, E, and F have length $\frac{\alpha - \beta + 3}{2}$, while the paths of type A have length $\frac{\alpha + \beta - 3}{2} - 1$.
        Thus, \cref{paths-touch} implies that all pairs of them touch.

    \item[B.] All paths of type B go through node $a_0$, so all pairs of them touch.
        All paths of type B lie on the cycle $\cC$ of length $\alpha + \beta$ together with all paths of type C.
        All of these paths have length $\frac{\alpha + \beta - 3}{2}$.
        Thus, \cref{paths-touch} implies that each path of type B and each path of type C touch.
        For any fixed $\ell$, paths of types D (for $\ell=1$), E (for $\ell \in [\gamma]$), and F (for $\ell \in [\gamma]$) go through node $c_{\ell}$, so they touch paths of type B on the edge $\{a_0, c_{\ell}\}$.

    \item[C.] All paths of type C go through node $a_{\alpha}$, so all pairs of them touch.
        For any fixed $\ell$, paths of types D (for $\ell=1$), E (for $\ell \in [\gamma]$), and F (for $\ell \in [\gamma]$) go through node $c_{\ell}$, so they touch paths of type C on the edge $\{a_{\alpha}, c_{\ell}\}$.

    \item [D-F.] Any path of type D, E, or F goes either through node $a_0$ or through node $a_{\alpha}$.
        Any other such path goes through node $c_{\ell}$ for some $\ell \in [\gamma]$.
        Hence, they touch either on the edge $\{a_0, c_{\ell}\}$ or on the edge $\{a_{\alpha}, c_{\ell}\}$.
\end{enumerate}
    
We showed that all pairs of subgraphs we embed touch, so $\psi$ is a valid clique embedding. It remains to show that all edges have weak edge depth at most $X$.
There are
\begin{align*}
    &(\gamma + \frac{\alpha - \beta + 3}{2}) + (\beta - 3) \cdot (2\gamma + \frac{\alpha - \beta + 3}{2}) & \text{(B)}\\
    &+ (\frac{\alpha - \beta + 3}{2} - 2 + 1) \cdot 1 & \text{(D)}\\
    &+ \gamma \cdot 1 & \text{(E)}\\
    &= (\beta - 1) \cdot (2 \gamma + \frac{\alpha - \beta + 3}{2}) - 2 \gamma - 1\\
    &= X - 1
\end{align*}

embedded subgraphs that go through node $a_0$.
Node $b_1$ is not contained in any paths that $a_0$ is not contained in.
Hence, weak edge depth of $\{a_0, b_1\}$ is equal to $X - 1$.
For all edges of the form $\{a_0, c_{\ell}\}$ for $\ell \in [\gamma]$, the only new path that $c_{\ell}$ adds to the subgraphs that are already embedded into $a_0$ are paths of type F, contributing one to weak edge depth of these edges.
So each edge $\{a_0, c_{\ell}\}$ has weak edge depth $(X - 1) + 1 = X$.
The same holds for the edge $\{a_0, a_1\}$ as the only new path that $a_1$ adds to the subgraphs that are already embedded into $a_0$ is the path of type A that starts in $a_1$, which contributes one to weak edge depth of this edge.
Whenever we now move from an edge $\{a_{\ell}, a_{\ell + 1}\}$ to an edge $\{a_{\ell + 1}, a_{\ell + 2}\}$ for $\ell \in [0, \frac{\alpha - \beta + 3}{2} - 2]$ (note that there is at least one such $\ell$ as $\alpha \ge \beta + 1$), we lose a single path of type D that ends in $a_{\ell}$, and we gain a single path of type A that starts in $a_{\ell + 2}$.
Both have weight one, so weak edge depth remains at $X$.
When we move from the edge $\{a_{\frac{\alpha - \beta + 3}{2} - 1}, a_{\frac{\alpha - \beta + 3}{2}}\}$ to the edge $\{a_{\frac{\alpha - \beta + 3}{2}}, a_{\frac{\alpha - \beta + 3}{2} + 1}\}$, we lose a single path of type B with weight $\gamma + \frac{\alpha - \beta + 3}{2}$ that ends in $a_{\frac{\alpha - \beta + 3}{2} - 1}$ and all paths of type E with total weight $\gamma$.
At the same time we gain a single path of type C with weight $2\gamma + \frac{\alpha - \beta + 3}{2}$ that starts in node $a_{\frac{\alpha - \beta + 3}{2} + 1}$.
In total, weak edge depth of $\{a_{\frac{\alpha - \beta + 3}{2}}, a_{\frac{\alpha - \beta + 3}{2} + 1}\}$ is again $X$.
Next, whenever we move from an edge $\{a_{\ell}, a_{\ell + 1}\}$ to an edge $\{a_{\ell + 1}, a_{\ell + 2}\}$ for $\ell \in [\frac{\alpha - \beta + 3}{2}, \frac{\alpha + \beta - 3}{2} - 2]$ (there is at least one such $\ell$ because $\beta \ge 5$), we lose a single path of type B with weight $2\gamma + \frac{\alpha - \beta + 3}{2}$ that ends in node $a_{\ell}$, and we gain a single path of type C with weight $2\gamma + \frac{\alpha - \beta + 3}{2}$ that starts in node $a_{\ell + 2}$, so weak edge depth remains at $X$.
When we move from the edge $\{a_{\frac{\alpha + \beta - 3}{2} - 1}, a_{\frac{\alpha + \beta - 3}{2}}\}$ to the edge $\{a_{\frac{\alpha + \beta - 3}{2}}, a_{\frac{\alpha + \beta - 3}{2} + 1}\}$, we lose a single path of type B with weight $2\gamma + \frac{\alpha - \beta + 3}{2}$ that ends in node $a_{\frac{\alpha + \beta - 3}{2} - 1}$.
We gain a path of type C with weight $\gamma + \frac{\alpha - \beta + 3}{2}$ that starts in node $a_{\frac{\alpha + \beta - 3}{2} + 1}$ and all paths of type F with total weight $\gamma$.
Hence, weak edge depth of $\{a_{\frac{\alpha + \beta - 3}{2}}, a_{\frac{\alpha + \beta - 3}{2} + 1}\}$ is again $X$.
Whenever we move from an edge $\{a_{\ell}, a_{\ell + 1}\}$ to an edge $\{a_{\ell + 1}, a_{\ell + 2}\}$ for $\ell \in [\frac{\alpha + \beta - 3}{2}, \alpha - 2]$ (there is at least one such $\ell$ as $\alpha \ge \beta + 1$), we lose a single path of type A that ends in node $a_{\ell}$ and gain a single path of type D that starts in node $a_{\ell + 2}$.
Both of them have weight one, so weak edge depth of the current edge remains at $X$.
If we move from the edge $\{a_{\alpha - 1}, a_{\alpha}\}$ to an edge $\{a_{\alpha}, c_{\ell}\}$ for $\ell \in [\gamma]$, we lose a single path of type A that ends in $a_{\alpha - 1}$ and gain a single path of type E that starts in $c_{\ell}$, so each such edge has weak edge depth $X$.
When we move from the edge $\{a_{\alpha - 1}, a_{\alpha}\}$ to the edge $\{b_{\beta - 1}, b_{\beta}\}$, we do not gain any new subgraphs, but we lose a single path of type A that ends in  $a_{\alpha - 1}$, so weak edge depth of $\{b_{\beta - 1}, b_{\beta}\}$ is $X - 1$.
When we move from the edge $\{b_{\beta - 1}, b_{\beta}\}$ to the edge $\{b_{\beta - 2}, b_{\beta - 1}\}$, we lose all paths of types D and F of total weight $\frac{\alpha - \beta + 3}{2} - 1 + \gamma$, and we gain a single path of type B with weight $\gamma + \frac{\alpha - \beta + 3}{2}$ that starts in $b_{\beta - 2}$.
In total, weak edge depth of $\{b_{\beta - 2}, b_{\beta - 1}\}$ is $(X - 1) - (\frac{\alpha - \beta + 3}{2} - 1 + \gamma) + (\gamma + \frac{\alpha - \beta + 3}{2}) = X$.
Whenever we now move from an edge $\{b_{\ell + 1}, b_{\ell + 2}\}$ to an edge $\{b_{\ell}, b_{\ell + 1}\}$ for $\ell \in [\beta - 3]$ (note that there is at least one such $\ell$ as $\beta \ge 5$), we lose a single path of type C with weight $2\gamma + \frac{\alpha - \beta + 3}{2}$ that ends in node $b_{\ell + 2}$, and we gain a single path of type B with the same weight that starts in node $b_{\ell}$, so weak edge depth of the current edge remains at $X$.
This way we checked that weak edge depth of every edge is at most $X$, which concludes the proof of the fact that $\wed(\psi) = X$.
\end{proof}

The proofs of the following two lemmas are very similar to the previous one.

\begin{lemma} \label{lb-parameterized-bound-a-b-c-3}
    For any $\alpha \ge \beta \ge 3$, $\alpha + \beta$ odd, and $\gamma \ge 1$, we have $\clemb(P(\alpha, \beta, \gamma \times 2)) \ge 2 - 1 / (2\beta\gamma + \frac{\alpha\beta}{2} - \frac{\beta^2}{2} + \beta - \frac{\alpha}{2} - 2\gamma + \frac{3}{2})$.
\end{lemma}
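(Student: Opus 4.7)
The plan is to prove this lemma by exhibiting an explicit clique embedding, mirroring the structure of the proof of \cref{lb-parameterized-bound-a-b-c-1}. Setting $X := 2\beta\gamma + \frac{\alpha\beta}{2} - \frac{\beta^2}{2} + \beta - \frac{\alpha}{2} - 2\gamma + \frac{3}{2}$, one verifies by algebra that
\[
X \;=\; (\beta-1)\Bigl(2\gamma + \tfrac{\alpha-\beta+1}{2}\Bigr) + 2,
\]
and the goal is to construct a clique embedding $\psi \colon K_{2X-1} \to P(\alpha,\beta,\gamma\times 2)$ with $\wed(\psi) = X$; this immediately yields $\clemb(P(\alpha,\beta,\gamma\times 2)) \ge (2X-1)/X = 2 - 1/X$. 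As before, I would view the $\alpha$-path and $\beta$-path as a single cycle $\cC$ of length $\alpha + \beta$, with the pendant paths through $c_1, \ldots, c_\gamma$ hanging off at $a_0$ and $a_\alpha$, and reuse the same six families of embedded subgraphs (short paths on the $\alpha$-side of $\cC$; long paths through $a_0$ reaching into both sides; symmetric long paths through $a_\alpha$; a bridging family through $c_1$; and short paths from $a_0$ or from $a_\alpha$ through each $c_\ell$).

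The essential change from \cref{lb-parameterized-bound-a-b-c-1}, dictated by the regime $\beta < 2\gamma + 3$, is to replace the central B-type and C-type multiplicities $2\gamma + \tfrac{\alpha-\beta+3}{2}$ by the smaller $2\gamma + \tfrac{\alpha-\beta+1}{2}$, and to rebalance the outermost layers of types B, C, D, E, F so that the total count of embedded nodes becomes exactly $2X - 1$. The factorization above is what guides this rebalancing: the product $(\beta-1)\bigl(2\gamma + \tfrac{\alpha-\beta+1}{2}\bigr)$ accounts for $\beta - 1$ nested layers of B/C paths providing the dominant contribution to the depth of any central edge, while the additive $+2$ arises from a small number of short auxiliary paths needed to keep the boundary edges and pendant edges exactly at depth $X$. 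Verification then follows the same three-step template: connectedness of each embedded subgraph is trivial since each is a path; pairwise touching is checked using \cref{paths-touch} for pairs lying on $\cC$ (or on a cycle $a_0 \to \cdots \to a_\alpha \to c_\ell$), and by exhibiting a common pendant edge $\{a_0, c_\ell\}$ or $\{a_\alpha, c_\ell\}$ otherwise; and the depth bound $\wed(\psi) = X$ is established by computing the depth of a distinguished edge (say $\{a_0, a_1\}$) directly and then walking around $\cC$, verifying at every step that the total multiplicity of subgraphs entering the current edge equals the multiplicity leaving it, so the depth remains pinned at $X$.

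The main obstacle is the combinatorial bookkeeping, which is more delicate than in \cref{lb-parameterized-bound-a-b-c-1}. Small values of $\beta$, in particular $\beta = 3$, make several of the ``middle'' layers used in the previous construction collapse or become invalid, so the layer counts and multiplicities must be chosen carefully to ensure simultaneously that (i) the node count is exactly $2X - 1$, (ii) every pair of embedded subgraphs still touches despite the shorter central paths, and (iii) no edge of $P(\alpha,\beta,\gamma\times 2)$ exceeds depth $X$. Once the multiplicities are pinned down by the identity $X = (\beta-1)(2\gamma + \tfrac{\alpha-\beta+1}{2}) + 2$, the remaining verification steps are essentially the same as in \cref{lb-parameterized-bound-a-b-c-1} and amount to an edge-by-edge walk around $\cC$ with careful accounting of the gains and losses in subgraph coverage.
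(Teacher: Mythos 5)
Your proposal identifies the right algebraic identity ($X = (\beta-1)\bigl(2\gamma + \tfrac{\alpha-\beta+1}{2}\bigr) + 2$, which I checked) and the right high-level strategy --- explicit clique embedding of $K_{2X-1}$ with $\wed = X$, using six families of embedded subgraphs as in \cref{lb-parameterized-bound-a-b-c-1}. But there is a genuine gap: the explicit construction is not given, and that construction \emph{is} the proof. Phrases like ``rebalance the outermost layers of types B, C, D, E, F so that the total count of embedded nodes becomes exactly $2X-1$'' and ``the layer counts and multiplicities must be chosen carefully'' defer exactly the work that needs to be done. The factorization of $X$ constrains the total but does not determine the construction; many partitions of $2X-1$ nodes among path families exist, and most will fail either the pairwise-touching requirement or the edge-depth bound. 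Without the specific paths and multiplicities, the ``edge-by-edge walk around $\cC$'' you describe cannot be carried out.

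Moreover, the ``essential change'' you name --- lowering the B/C multiplicity from $2\gamma + \tfrac{\alpha-\beta+3}{2}$ to $2\gamma + \tfrac{\alpha-\beta+1}{2}$ --- understates the modification the paper actually makes. In the paper's construction for this lemma, the outermost B-layer drops to weight $1$ (from weight $\gamma + \tfrac{\alpha-\beta+3}{2}$ in \cref{lb-parameterized-bound-a-b-c-1}), a new B-path of weight $\gamma+1$ is introduced on the $\alpha$-side of the cycle, the A-type paths become one edge longer, and their number decreases by two. These adjustments are not a small ``rebalancing''; they are structural and are precisely what makes the construction work for $\beta = 3$ and $\beta = 4$, where the version in \cref{lb-parameterized-bound-a-b-c-1} (which requires $\beta \ge 5$) breaks down. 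You correctly flag $\beta = 3$ as a danger point, but you do not resolve it. A complete proof would need to spell out all six families (with exact path endpoints and multiplicities), verify the node count equals $2X-1$, verify pairwise touching (via \cref{paths-touch} and common pendant edges), and walk around $\cC$ checking every edge has depth exactly $X$; that is the content of the lemma and it is missing here.
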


\begin{proof}
    As $\alpha + \beta$ is odd and $\alpha \ge \beta$, we have $\alpha \ge \beta + 1$.

    Define $X \coloneqq 2\beta\gamma + \frac{\alpha\beta}{2} - \frac{\beta^2}{2} + \beta - \frac{\alpha}{2} - 2\gamma + \frac{3}{2} = (\beta - 1) \cdot (2\gamma + \frac{\alpha - \beta - 1}{2} + 1) + 2$.
    We create a clique embedding $\psi$ from $K_{2X - 1}$ to $P(\alpha, \beta, \gamma \times 2)$ with $\wed(\psi) = X$, which yields $\clemb(P(\alpha, \beta, \gamma \times 2)) \ge \frac{2X - 1}{X} = 2 - \frac{1}{X}$, as desired.
    We treat the paths $a_0 \edg a_1 \edg \cdots \edg a_{\alpha}$ and $b_0 \edg b_1 \edg \cdots \edg b_{\beta}$ as a cycle $\cC$ of length $\alpha + \beta$.
    We use $6$ different types of subgraphs that we embed into. See \cref{lb-parameterized-bound-a-b-c-3-picture}.

    \begin{enumerate}
        \item[A.] We embed one node into each path of the form $a_{\ell} \edg a_{\ell + 1} \edg \cdots \edg a_{\ell + \frac{\alpha + \beta + 1}{2} - 2}$ for $\ell \in [\frac{\alpha - \beta - 1}{2} + 1]$.
        \item[B.] We embed one node into the path $b_{\beta - 1} \edg b_{\beta - 2} \edg \cdots \edg b_1 \edg a_0 \edg \cdots \edg a_{\frac{\alpha - \beta - 1}{2}}$.
    Furthermore, we embed $2\gamma + \frac{\alpha - \beta - 1}{2} + 1$ nodes into each path of the form $b_{\beta - \ell - 1} \edg b_{\beta - \ell - 2} \edg \cdots \edg b_{1} \edg a_0 \edg a_1 \edg \cdots \edg a_{\frac{\alpha - \beta - 1}{2} + \ell}$ for $\ell \in [\beta - 2]$.
    Moreover, we embed $\gamma + 1$ nodes into the path $a_0 \edg a_1 \edg \cdots \edg a_{\frac{\alpha + \beta + 1}{2} - 2}$.
        \item[C.] Symmetrically, we embed one node into the path $a_{\frac{\alpha + \beta + 1}{2}} \edg a_{\frac{\alpha + \beta + 1}{2} + 1} \edg \cdots \edg a_{\alpha - 1} \edg b_{\beta} \edg b_{\beta - 1} \edg \cdots \edg b_{1}$.
    Furthermore, we embed $2\gamma + \frac{\alpha - \beta - 1}{2} + 1$ nodes  into each path of the form $a_{\frac{\alpha - \beta - 1}{2} + \ell} \edg a_{\frac{\alpha - \beta - 1}{2} + \ell + 1} \edg \cdots \edg a_{\alpha - 1} \edg b_{\beta} \edg b_{\beta - 1} \edg \cdots \edg b_{\beta - \ell + 2}$ for $\ell \in [3, \beta]$.
    Moreover, we embed $\gamma + 1$ nodes into the path $a_{\frac{\alpha - \beta - 1}{2} + 2} \edg a_{\frac{\alpha - \beta - 1}{2} + 3} \edg \cdots \edg a_{\alpha - 1} \edg a_{\alpha}$.
        \item[D.] If $\alpha \ge \beta + 3$, we embed one node into each path of the form $a_{\frac{\alpha + \beta + 1}{2} + \ell} \edg a_{\frac{\alpha + \beta + 1}{2} + \ell + 1} \edg \cdots \edg a_{\alpha} \edg c_{1} \edg a_0 \edg a_1 \edg \cdots \edg a_{\ell - 1}$ for $\ell \in [\frac{\alpha - \beta - 1}{2}]$.
        \item[E.] We embed one node into each path of the form $c_{\ell} \edg a_0 \edg a_1 \edg \cdots \edg a_{\frac{\alpha - \beta - 1}{2}}$ for $\ell \in [\gamma]$.
        \item[F.] We embed one node into each path of the form $a_{\frac{\alpha + \beta + 1}{2}} \edg a_{\frac{\alpha + \beta + 1}{2} + 1} \edg \cdots \edg a_{\alpha} \edg c_{\ell}$ for $\ell \in [\gamma]$.
    \end{enumerate}

    \begin{figure}
    \begin{center}

        \includegraphics[scale=0.7]{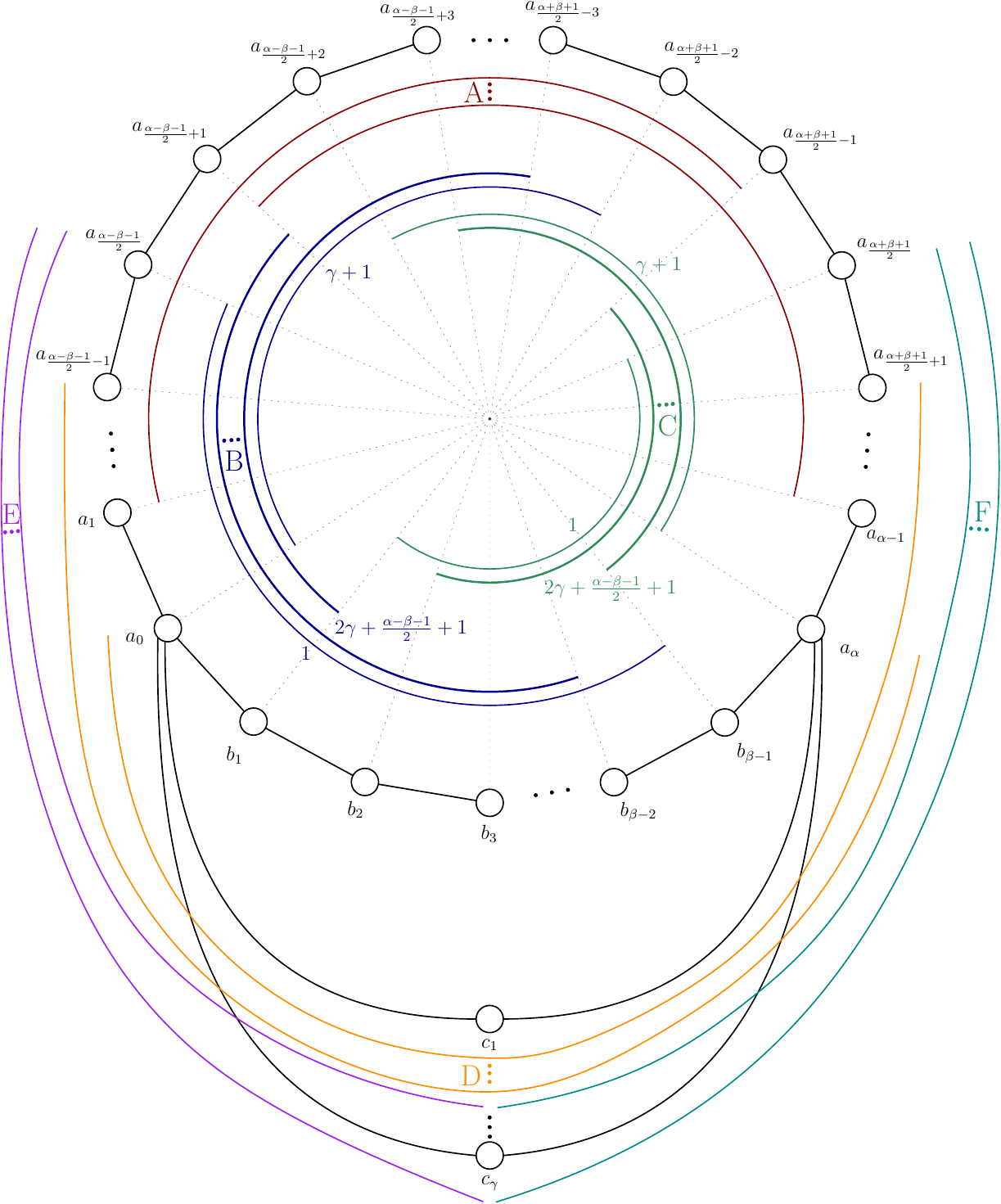}
    \end{center}

    \caption{Illustration of the proof of \cref{lb-parameterized-bound-a-b-c-3}. In addition to the abstract situation of \cref{lb-parameterized-bound-a-b-c-3}, the figure also shows the clique embedding for $P(13, 6, 2, 2)$.}
    \label{lb-parameterized-bound-a-b-c-3-picture}
    \end{figure}

    We now analyze this construction.
    In total, we embed
    \begin{align*}
        &(\frac{\alpha - \beta - 1}{2} + 1) \cdot 1 & \text{(A)}\\
        &+ 1 + (\beta - 2) \cdot (2\gamma + \frac{\alpha - \beta - 1}{2} + 1) + (\gamma + 1) & \text{(B)}\\
        &+ 1 + (\beta - 3 + 1) \cdot (2\gamma + \frac{\alpha - \beta - 1}{2} + 1) + (\gamma + 1) & \text{(C)}\\
        &+ \frac{\alpha - \beta - 1}{2} \cdot 1 & \text{(D)}\\
        &+ \gamma \cdot 1 & \text{(E)}\\
        &+ \gamma \cdot 1 & \text{(F)}\\
        &= 2 \cdot ((\beta - 1) \cdot (2\gamma + \frac{\alpha - \beta - 1}{2} + 1) + 2) - 1
	\end{align*}
    nodes, which is exactly the number of nodes in $K_{2X - 1}$.
    It is clear that all subgraphs into which we embed are connected.
    Next, we prove that all pairs of these subgraphs touch.
    For every type from A to F we show that the paths of this type touch each other and all later types.
    \begin{enumerate}
    \item[A.] All paths of type A start not later than node $a_{\frac{\alpha - \beta - 1}{2} + 1}$ and end not earlier than node $a_{\frac{\alpha + \beta + 1}{2} - 1}$.
        Note that $\frac{\alpha - \beta - 1}{2} + 1 < \frac{\alpha + \beta + 1}{2} - 1$ because $\beta \ge 3$.
        Thus, all these paths contain all nodes $a_{\frac{\alpha - \beta - 1}{2} + 1}, \ldots, a_{\frac{\alpha + \beta + 1}{2} - 1}$, and all pairs of them touch.
        All paths of type B end not earlier than node $a_{\frac{\alpha - \beta - 1}{2}}$, so they touch all paths of type A on the edge $\{a_{\frac{\alpha - \beta - 1}{2}}, a_{\frac{\alpha - \beta - 1}{2} + 1}\}$.
        Symmetrically, all paths of type C start not later than $a_{\frac{\alpha + \beta + 1}{2}}$, so they touch all paths of type A on the edge $\{a_{\frac{\alpha + \beta + 1}{2} - 1}, a_{\frac{\alpha + \beta + 1}{2}}\}$.
        For any fixed $\ell$, paths of types D (for $\ell=1$), E (for $\ell \in [\gamma]$), and F (for $\ell \in [\gamma]$) lie on the cycle $a_0 \edg a_1 \edg \cdots \edg a_{\alpha} \edg c_{\ell}$ of length $\alpha + 2$, which also contains all paths of type A.
        The paths of types D, E, and F have length $\frac{\alpha - \beta - 1}{2} + 1$, while the paths of type A have length $\frac{\alpha + \beta + 1}{2} - 2$.
        Thus, \cref{paths-touch} implies that all pairs of them touch.

    \item[B.] All paths of type B go through node $a_0$, so all pairs of them touch.
        All paths of type B lie on the cycle $\cC$ of length $\alpha + \beta$ together with all paths of type C.
        All of these paths have length $\frac{\alpha + \beta + 1}{2} - 2$.
        Thus, \cref{paths-touch} implies that each path of type B and each path of type C touch.
        For any fixed $\ell$, paths of types D (for $\ell=1$), E (for $\ell \in [\gamma]$), and F (for $\ell \in [\gamma]$) go through node $c_{\ell}$, so they touch all paths of type B on the edge $\{a_0, c_{\ell}\}$.

    \item[C.] All paths of type C go through node $a_{\alpha}$, so all pairs of them touch.
        For any fixed $\ell$, paths of types D (for $\ell=1$), E (for $\ell \in [\gamma]$), and F (for $\ell \in [\gamma]$) go through node $c_{\ell}$, so they touch paths of type C on the edge $\{a_{\alpha}, c_{\ell}\}$.

    \item [D-F.] Any path of type D, E, or F goes either through node $a_0$ or through node $a_{\alpha}$.
        Any other such path goes through node $c_{\ell}$ for some $\ell \in [\gamma]$.
        Hence, they touch either on the edge $\{a_0, c_{\ell}\}$ or on the edge $\{a_{\alpha}, c_{\ell}\}$.
\end{enumerate}

We showed that all pairs of subgraphs we embed touch, so $\psi$ is a valid clique embedding. It remains to show that all edges have weak edge depth at most $X$.
There are
\begin{align*}
    & 1 + (\beta - 2) \cdot (2\gamma + \frac{\alpha - \beta - 1}{2} + 1) + (\gamma + 1) & \text{(B)}\\
    &+ \frac{\alpha - \beta - 1}{2} \cdot 1 & \text{(D)}\\
    &+ \gamma \cdot 1 & \text{(E)}\\
    &=X - 1
\end{align*}

embedded subgraphs that go through node $a_0$.
The only subgraph that node $b_1$ is contained in that $a_0$ is not contained in is the path of type C with weight one that ends in $b_1$.
Hence, weak edge depth of $\{a_0, b_1\}$ is equal to $X$.
For all edges of the form $\{a_0, c_{\ell}\}$ for $\ell \in [\gamma]$, the only new path that $c_{\ell}$ adds to the subgraphs that are already embedded into $a_0$ are paths of type F, contributing one to weak edge depth of these edges.
So each edge $\{a_0, c_{\ell}\}$ has weak edge depth $(X - 1) + 1 = X$.
The same holds for the edge $\{a_0, a_1\}$ as the only new path that $a_1$ adds to the subgraphs that are already embedded into $a_0$ is the path of type A that starts in $a_1$, which contributes one to weak edge depth of this edge.
If $\alpha \ge \beta + 3$, whenever we now move from an edge $\{a_{\ell}, a_{\ell + 1}\}$ to an edge $\{a_{\ell + 1}, a_{\ell + 2}\}$ for $\ell \in [0, \frac{\alpha - \beta - 1}{2} - 1]$, we lose a single path of type D that ends in $a_{\ell}$, and we gain a single path of type A that starts in $a_{\ell + 2}$.
Both have weight one, so weak edge depth remains at $X$.
When we move from the edge $\{a_{\frac{\alpha - \beta - 1}{2}}, a_{\frac{\alpha - \beta - 1}{2} + 1}\}$ to the edge $\{a_{\frac{\alpha - \beta - 1}{2} + 1}, a_{\frac{\alpha - \beta - 1}{2} + 2}\}$, we lose a single path of type B with weight one that ends in $a_{\frac{\alpha - \beta - 1}{2}}$ and all paths of type E with total weight $\gamma$.
At the same time we gain a single path of type C with weight $\gamma + 1$ that starts in node $a_{\frac{\alpha - \beta - 1}{2} + 2}$.
In total, weak edge depth of $\{a_{\frac{\alpha - \beta - 1}{2} + 1}, a_{\frac{\alpha - \beta - 1}{2} + 2}\}$ is again $X$.
Next, whenever we move from an edge $\{a_{\ell}, a_{\ell + 1}\}$ to an edge $\{a_{\ell + 1}, a_{\ell + 2}\}$ for $\ell \in [\frac{\alpha - \beta - 1}{2} + 1, \frac{\alpha + \beta + 1}{2} - 3]$ (there is at least one such $\ell$ as $\beta \ge 3$), we lose a single path of type B with weight $2\gamma + \frac{\alpha - \beta - 1}{2} + 1$ that ends in node $a_{\ell}$, and we gain a single path of type C with weight $2\gamma + \frac{\alpha - \beta - 1}{2} + 1$ that starts in node $a_{\ell + 2}$ so weak edge depth remains at $X$.
When we move from the edge $\{a_{\frac{\alpha + \beta + 1}{2} - 2}, a_{\frac{\alpha + \beta + 1}{2} - 1}\}$ to the edge $\{a_{\frac{\alpha + \beta + 1}{2} - 1}, a_{\frac{\alpha + \beta + 1}{2}}\}$, we lose a single path of type B with weight $\gamma + 1$ that ends in node $a_{\frac{\alpha + \beta + 1}{2} - 2}$.
We gain a path of type C with weight one that starts in node $a_{\frac{\alpha + \beta + 1}{2}}$ and all paths of type F with total weight $\gamma$. 
Hence, weak edge depth of $\{a_{\frac{\alpha + \beta + 1}{2} - 1}, a_{\frac{\alpha + \beta + 1}{2}}\}$ is again $X$.
Now if $\alpha \ge \beta + 3$, whenever we move from an edge $\{a_{\ell}, a_{\ell + 1}\}$ to an edge $\{a_{\ell + 1}, a_{\ell + 2}\}$ for $\ell \in [\frac{\alpha + \beta + 1}{2} - 1, \alpha - 2]$, we lose a single path of type A that ends in node $a_{\ell}$, and we gain a single path of type D that starts in node $a_{\ell + 2}$.
Both of them have weight one, so weak edge depth of the current edge remains at $X$.
If we move from the edge $\{a_{\alpha - 1}, a_{\alpha}\}$ to an edge $\{a_{\alpha}, c_{\ell}\}$ for $\ell \in [\gamma]$, we lose a single path of type A that ends in $a_{\alpha - 1}$, and we gain a single path of type E that starts in $c_{\ell}$, so each such edge has weak edge depth $X$.
When we move from the edge $\{a_{\alpha - 1}, a_{\alpha}\}$ to the edge $\{b_{\beta}, b_{\beta - 1}\}$, we similarly lose a single path of type A with weight one that ends in $a_{\alpha - 1}$, and we gain a single path of type B with weight one that starts in $b_{\beta - 1}$, so weak edge depth of $\{b_{\beta}, b_{\beta - 1}\}$ is $X$.
Now when we move from the edge $\{b_{\beta}, b_{\beta-1}\}$ to the edge $\{b_{\beta-1}, b_{\beta - 2}\}$, we lose a single path of type C with weight $\gamma + 1$ that ends in node $b_{\beta}$ and all paths of types D and F of total weight $\frac{\alpha - \beta - 1}{2} + \gamma$.
At the same time we gain a single path of type B with weight $2\gamma + \frac{\alpha - \beta - 1}{2} + 1$ that starts in node $b_{\beta - 2}$.
In total, weak depth is again $X$.
Now if $\beta \ge 4$, whenever we move from an edge $\{b_{\ell + 2}, b_{\ell + 1}\}$ to an edge $\{b_{\ell + 1}, b_{\ell}\}$ for $\ell \in [\beta - 3]$, we lose a single path of type C with weight $2\gamma + \frac{\alpha - \beta - 1}{2} + 1$ that ends in $b_{\ell + 2}$, and we gain a single path of type B with the same weight that starts in node $b_{\ell}$, so weak edge depth of the current edge remains at $X$.
This way we checked that weak edge depth of every edge is at most $X$, which concludes the proof of the fact that $\wed(\psi) = X$.
\end{proof}

\begin{lemma} \label{lb-parameterized-bound-a-b-c-2}
    For any $\alpha \ge \beta \ge 3$, $\alpha + \beta$ even, and $\gamma \ge 1$, we have $\clemb(P(\alpha, \beta, \gamma \times 2)) \ge 2 - 1 / (2\beta\gamma + \frac{\alpha\beta}{2} - \frac{\beta^2}{2} + \frac{3\beta}{2} - \frac{\alpha}{2} - 3\gamma)$.
\end{lemma}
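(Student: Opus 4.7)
The plan is to construct an explicit clique embedding $\psi \colon K_{2X-1} \to P(\alpha, \beta, \gamma \times 2)$ of weak edge depth exactly $X$, where $X \coloneqq 2\beta\gamma + \frac{\alpha\beta}{2} - \frac{\beta^2}{2} + \frac{3\beta}{2} - \frac{\alpha}{2} - 3\gamma$; this will yield $\clemb(P(\alpha,\beta,\gamma \times 2)) \ge (2X-1)/X = 2 - 1/X$, as desired. Setting $k \coloneqq (\alpha - \beta)/2$, which is a nonnegative integer by the even-parity hypothesis, one has the more convenient closed form $X = \beta(2\gamma + k + 1) - k - 3\gamma$. The construction will mirror the six-type template already used in \cref{lb-parameterized-bound-a-b-c-1,lb-parameterized-bound-a-b-c-3}, with all path lengths and weights shifted to fit the even-length outer cycle $\cC$ obtained by concatenating the $\alpha$- and $\beta$-paths.

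The six families of embedded connected subgraphs, parameterized by their total multiplicities (i.e., the number of $K_{2X-1}$-nodes mapped into each family), will be as follows. Type \textbf{(A)}: $k+1$ paths of length $\frac{\alpha+\beta}{2}-1$ sliding inside the $\alpha$-path, positioned so that all of them contain a common central stretch. Type \textbf{(B)}: a collection of nested paths anchored on the $\beta$-side and running through $a_0$ into the $\alpha$-side, with individual multiplicities summing to $1 + (\beta - 2)(2\gamma + k + 1)$. Type \textbf{(C)}: the mirror image of (B) through $a_\alpha$, with the same total multiplicity. Type \textbf{(D)}: $k$ unit-multiplicity paths that route through $c_1$ and bridge the two ends of the $\alpha$-path. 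Types \textbf{(E)} and \textbf{(F)}: $\gamma$ unit-multiplicity paths each, one through each spoke $c_\ell$ for $\ell \in [\gamma]$, attached at the $a_0$-side and the $a_\alpha$-side respectively. A direct sum then gives $(k+1) + 2\bigl(1 + (\beta-2)(2\gamma + k + 1)\bigr) + k + 2\gamma = 2X - 1$, matching the size of $K_{2X-1}$.

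Pairwise touching is mostly routine and follows the odd-parity template: paths of the same family meet via their shared central stretch (A--A) or a shared endpoint (B--B through $a_0$, C--C through $a_\alpha$); opposite families living on $\cC$ (most notably B--C) touch by \cref{paths-touch} since their combined length is $\alpha + \beta - 2 \ge \alpha + \beta - 3$; and every spoke path (D, E, F) passes through some $c_\ell$ and hence touches any $\cC$-path along one of $\{a_0, c_\ell\}$, $\{a_\alpha, c_\ell\}$. The main obstacle, as in the odd case, is the weak-edge-depth bookkeeping: starting from an edge whose depth is visibly $X$ (for instance $\{a_0, a_1\}$ after counting contributions from (B), (D), (E) and adding the single (A) path starting at $a_1$), one walks around both $\cC$ and each spoke to $c_\ell$ and checks that at every transition a subgraph of some weight leaves while one of precisely the same weight enters, so the depth stays at $X$ throughout. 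The delicate points are the two ``seams'' in the middle of the $\alpha$-path where (B)-paths are replaced by (C)-paths, and the junctions at $a_0$ and $a_\alpha$ where families (D), (E), (F) come and go; the weights $1$, $\gamma + k + 1$, and $2\gamma + k + 1$ are calibrated precisely so that these in/out contributions cancel exactly, leaving weak edge depth equal to $X$ on every edge.
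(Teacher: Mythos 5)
Your overall plan matches the paper's approach: the same number $X$, the same six-family template (A through F), the same node count, the same touching argument via \cref{paths-touch}, and the same weak-edge-depth walking verification. But there is a genuine flaw in your description of family (C).

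You say "(C): the mirror image of (B) through $a_\alpha$, with the same total multiplicity." This is indeed what happens in the odd-parity lemmas (\cref{lb-parameterized-bound-a-b-c-1,lb-parameterized-bound-a-b-c-3}), where the outer cycle $\cC$ has odd length and the construction is symmetric. Here, however, $\cC$ has even length $\alpha + \beta$, which forces an asymmetry between (B) and (C). In the paper, (B) places multiplicity $1$ on its shortest path and $2\gamma + k + 1$ (where $k = (\alpha-\beta)/2$) on the remaining $\beta - 2$ paths, but (C) places $\gamma + 1$ on its shortest path, $2\gamma + k + 1$ on the $\beta - 3$ middle paths, and $\gamma + k + 1$ on its longest path (the one that reaches all the way to $b_2$). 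The totals agree, so your arithmetic $(k+1) + 2\bigl(1 + (\beta-2)(2\gamma+k+1)\bigr) + k + 2\gamma = 2X-1$ is fine, but the distribution inside (C) is different from the distribution inside (B). If you literally mirror (B), the shortest (C)-path has multiplicity $1$ and the one reaching $b_2$ has multiplicity $2\gamma + k + 1$; then at the edge $\{b_1, b_2\}$ one picks up all of (B) (multiplicity $1 + (\beta-2)(2\gamma+k+1)$) from $b_1$, plus multiplicity $1 + (2\gamma+k+1)$ from the two (C)-paths touching $b_1$ or $b_2$, for a total of $2 + (\beta-1)(2\gamma+k+1) = X + \gamma + 1 > X$. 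Your list of calibrated weights ($1$, $\gamma+k+1$, $2\gamma+k+1$) omits the crucial value $\gamma+1$ that the paper needs on the shortest (C)-path; without this redistribution the weak-edge-depth bound fails.
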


\begin{proof}
    Define $X \coloneqq 2\beta\gamma + \frac{\alpha\beta}{2} - \frac{\beta^2}{2} + \frac{3\beta}{2} - \frac{\alpha}{2} - 3\gamma = (\beta - 2) \cdot (2\gamma + \frac{\alpha - \beta}{2} + 1) + \frac{\alpha - \beta}{2} + \gamma + 2$.
    We create a clique embedding $\psi$ from $K_{2X - 1}$ to $P(\alpha, \beta, \gamma \times 2)$ with $\wed(\psi) = X$, which yields $\clemb(P(\alpha, \beta, \gamma \times 2)) \ge \frac{2X - 1}{X} = 2 - \frac{1}{X}$, as desired.
    We treat the paths $a_0 \edg a_1 \edg \cdots \edg a_{\alpha}$ and $b_0 \edg b_1 \edg \cdots \edg b_{\beta}$ as a cycle $\cC$ of length $\alpha + \beta$.
    We use $6$ different types of subgraphs that we embed into. See \cref{lb-parameterized-bound-a-b-c-2-picture}.

    \begin{enumerate}
        \item[A.] We embed one node into each path of the form $a_{\ell} \edg a_{\ell + 1} \edg \cdots \edg a_{\ell + \frac{\alpha + \beta}{2} - 2}$ for $\ell \in [\frac{\alpha - \beta}{2} + 1]$.
        \item[B.] We embed one node into the path $b_{\beta - 1} \edg b_{\beta - 2} \edg \cdots \edg b_1 \edg a_0 \edg \cdots \edg a_{\frac{\alpha - \beta}{2}}$.
    Furthermore, we embed $2\gamma + \frac{\alpha - \beta}{2} + 1$ nodes into each path of the form $b_{\beta - \ell - 1} \edg b_{\beta - \ell - 2} \edg \cdots \edg b_1 \edg a_0 \edg \cdots \edg a_{\frac{\alpha - \beta}{2} + \ell}$ for $\ell \in [\beta - 2]$.
        \item[C.] We embed $\gamma + 1$ nodes into the path $a_{\frac{\alpha - \beta}{2} + 2} \edg a_{\frac{\alpha - \beta}{2} + 3} \edg \cdots \edg a_{\alpha}$.
        Furthermore, if $\beta \ge 4$, we embed $2\gamma + \frac{\alpha - \beta}{2} + 1$ nodes into each path of the form $a_{\frac{\alpha - \beta}{2} + \ell} \edg a_{\frac{\alpha - \beta}{2} + \ell + 1} \edg \cdots \edg a_{\alpha} \edg b_{\beta - 1} \edg \cdots \edg b_{\beta - \ell + 2}$ for $\ell \in [3, \beta - 1]$.
        Moreover, we embed $\gamma + \frac{\alpha - \beta}{2} + 1$ nodes into the path $a_{\frac{\alpha + \beta}{2}} \edg a_{\frac{\alpha + \beta}{2} + 1} \edg \cdots \edg a_{\alpha} \edg b_{\beta - 1} \edg \cdots \edg b_{2}$.
        \item[D.] If $\alpha > \beta$, we embed one node into each path of the form $a_{\frac{\alpha + \beta}{2} + \ell} \edg a_{\frac{\alpha + \beta}{2} + \ell + 1} \edg \cdots \edg a_{\alpha} \edg c_1 \edg a_0 \edg a_1 \edg \cdots \edg a_{\ell - 1}$ for $\ell \in [\frac{\alpha - \beta}{2}]$.
        \item[E.] We embed one node into each path of the form $c_{\ell} \edg a_0 \edg a_1 \edg \cdots \edg a_{\frac{\alpha - \beta}{2}}$ for $\ell \in [\gamma]$.
        \item[F.] We embed one node into each path of the form $a_{\frac{\alpha + \beta}{2}} \edg a_{\frac{\alpha + \beta}{2} + 1} \edg \cdots \edg a_{\alpha} \edg c_{\ell}$ for $\ell \in [\gamma]$.
    \end{enumerate}

    \begin{figure}
    \begin{center}

        \includegraphics[scale=0.7]{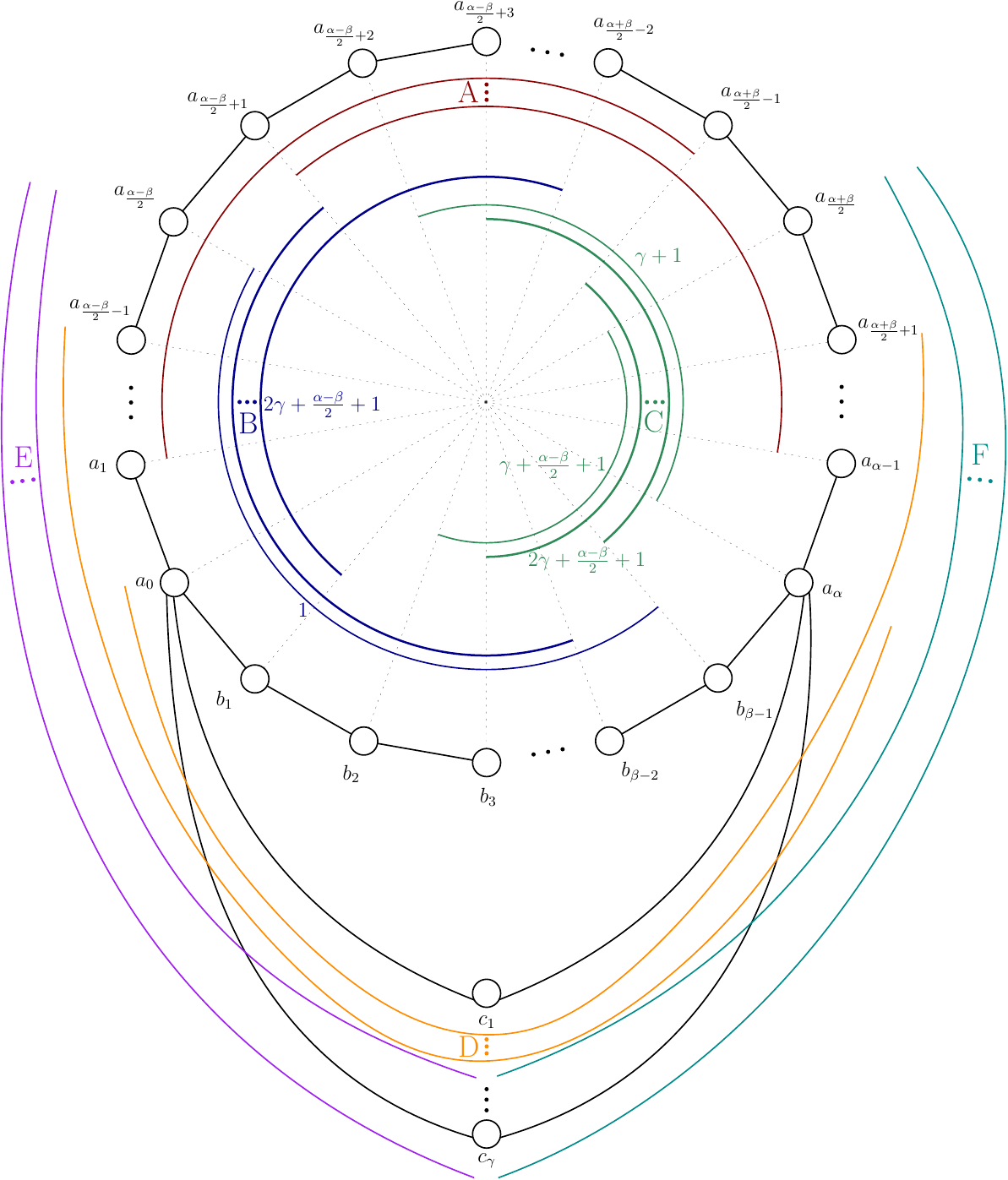}
    \end{center}

    \caption{Illustration of the proof of \cref{lb-parameterized-bound-a-b-c-2}. In addition to the abstract situation of \cref{lb-parameterized-bound-a-b-c-2}, the figure also shows the clique embedding for $P(12, 6, 2, 2)$.}
    \label{lb-parameterized-bound-a-b-c-2-picture}
    \end{figure}

    We now analyze this construction.
    In total, we embed
    \begin{align*}
        &(\frac{\alpha - \beta}{2} + 1) \cdot 1 & \text{(A)}\\
        &+ 1 + (\beta - 2) \cdot (2 \gamma + \frac{\alpha - \beta}{2} + 1) & \text{(B)}\\
        &+ (\gamma + 1) + (\beta - 3) \cdot (2 \gamma + \frac{\alpha - \beta}{2} + 1) + (\gamma + \frac{\alpha - \beta}{2} + 1) & \text{(C)}\\
        &+ \frac{\alpha - \beta}{2} \cdot 1 & \text{(D)}\\
        &+ \gamma \cdot 1 & \text{(E)}\\
        &+ \gamma \cdot 1 & \text{(F)}\\
        &= 2 \cdot ((\beta - 2) \cdot (2\gamma + \frac{\alpha - \beta}{2} + 1) + \frac{\alpha - \beta}{2} + \gamma + 2) - 1
    \end{align*}
    nodes, which is exactly the number of nodes in $K_{2X - 1}$.
    It is clear that all subgraphs into which we embed are connected.
    Next, we prove that all pairs of these subgraphs touch.
    For every type from A to F we show that the paths of this type touch each other and all later types.
    \begin{enumerate}
    \item[A.] All paths of type A start not later than node $a_{\frac{\alpha - \beta}{2} + 1}$ and end not earlier than node $a_{\frac{\alpha + \beta}{2} - 1}$.
        Note that $\frac{\alpha - \beta}{2} + 1 < \frac{\alpha + \beta}{2} - 1$ as $\beta \ge 3$.
        Thus, all these paths contain all the nodes $a_{\frac{\alpha - \beta}{2} + 1}, \ldots, a_{\frac{\alpha + \beta}{2} - 1}$, and all pairs of them touch.
        All paths of type B end not earlier than node $a_{\frac{\alpha - \beta}{2}}$, so they touch all paths of type A on the edge $\{a_{\frac{\alpha - \beta}{2}}, a_{\frac{\alpha - \beta}{2} + 1}\}$.
        Symmetrically, all paths of type C start not later than $a_{\frac{\alpha + \beta}{2}}$, so they all touch all paths of type A on the edge $\{a_{\frac{\alpha + \beta}{2} - 1}, a_{\frac{\alpha + \beta}{2}}\}$.
        For any fixed $\ell$, paths of types D (for $\ell=1$), E (for $\ell \in [\gamma]$), and F (for $\ell \in [\gamma]$) lie on the cycle $a_0 \edg a_1 \edg \cdots \edg a_{\alpha} \edg c_{\ell}$ of length $\alpha + 2$, which also contains all paths of type $A$.
        The paths of types D, E, and F have length $\frac{\alpha - \beta}{2} + 1$, while the paths of type A have length $\frac{\alpha + \beta}{2} - 2$.
        Thus, \cref{paths-touch} implies that all pairs of them touch.

    \item[B.] All paths of type B go through node $a_0$, so all pairs of them touch.
        All paths of type B lie on the cycle $\cC$ of length $\alpha + \beta$ together with all paths of type C.
        Paths of type B have length $\frac{\alpha + \beta}{2} - 1$, and paths of type C have length $\frac{\alpha + \beta}{2} - 2$.
        Thus, \cref{paths-touch} implies that each path of type B and each path of type C touch.
        For any fixed $\ell$, paths of types D (for $\ell=1$), E (for $\ell \in [\gamma]$), and F (for $\ell \in [\gamma]$) go through node $c_{\ell}$, so they touch paths of type B on the edge $\{a_0, c_{\ell}\}$.

    \item[C.] All paths of type C go through node $a_{\alpha}$, so all pairs of them touch.
        For any fixed $\ell$, paths of types D (for $\ell=1$), E (for $\ell \in [\gamma]$), and F (for $\ell \in [\gamma]$) go through node $c_{\ell}$, so they touch paths of type C on the edge $\{a_{\alpha}, c_{\ell}\}$.

    \item [D-F.] Any path of type D, E, or F goes either through node $a_0$ or through node $a_{\alpha}$.
        Any other such path goes through node $c_{\ell}$ for some $\ell \in [\gamma]$.
        Hence, they touch either on the edge $\{a_0, c_{\ell}\}$ or on the edge $\{a_{\alpha}, c_{\ell}\}$.
\end{enumerate}

We showed that all pairs of subgraphs we embed touch, so $\psi$ is a valid clique embedding. It remains to show that all edges have weak edge depth at most $X$.
There are
\begin{align*}
    &1 + (\beta - 2) \cdot (2\gamma + \frac{\alpha - \beta}{2} + 1) & \text{(B)}\\
    &+ \frac{\alpha - \beta}{2} \cdot 1 & \text{(D)}\\
    &+ \gamma \cdot 1 & \text{(E)}\\
    &= X - 1
\end{align*}
embedded subgraphs that go through node $a_0$.
Node $b_1$ is not contained in any paths that $a_0$ is not contained in.
Hence, weak edge depth of $\{a_0, b_1\}$ is equal to $X - 1$.
For all edges of the form $\{a_0, c_{\ell}\}$ for $\ell \in [\gamma]$, the only new path that $c_{\ell}$ adds to the subgraphs that are already embedded into $a_0$ are paths of type F, contributing one to weak edge depth of these edges.
So each edge $\{a_0, c_{\ell}\}$ has weak edge depth $(X - 1) + 1 = X$.
The same holds for the edge $\{a_0, a_1\}$ as the only new path that $a_1$ adds to the subgraphs that are already embedded into $a_0$ is the path of type A that starts in $a_1$, which contributes one to weak edge depth of this edge.
If $\alpha > \beta$, whenever we now move from an edge $\{a_{\ell}, a_{\ell + 1}\}$ to an edge $\{a_{\ell + 1}, a_{\ell + 2}\}$ for $\ell \in [0, \frac{\alpha - \beta}{2} - 1]$, we lose a single path of type D that ends in $a_{\ell}$, and we gain a single path of type A that starts in $a_{\ell + 2}$.
Both have weight one, so  weak edge depth remains at $X$.
When we move from the edge $\{a_{\frac{\alpha - \beta}{2}}, a_{\frac{\alpha - \beta}{2} + 1}\}$ to the edge $\{a_{\frac{\alpha - \beta}{2} + 1}, a_{\frac{\alpha - \beta}{2} + 2}\}$, we lose a single path of type B with weight one that ends in $a_{\frac{\alpha - \beta}{2}}$ and all paths of type E with total weight $\gamma$.
At the same time we gain a single path of type C with weight $\gamma + 1$ that starts in node $a_{\frac{\alpha - \beta}{2} + 2}$.
In total, weak edge depth of $\{a_{\frac{\alpha - \beta}{2} + 1}, a_{\frac{\alpha - \beta}{2} + 2}\}$ is again $X$.
Next, if $\beta \ge 4$, whenever we move from an edge $\{a_{\ell}, a_{\ell + 1}\}$ to an edge $\{a_{\ell + 1}, a_{\ell + 2}\}$ for $\ell \in [\frac{\alpha - \beta}{2} + 1, \frac{\alpha + \beta}{2} - 3]$, we lose a single path of type B with weight $2\gamma + \frac{\alpha - \beta}{2} + 1$ that ends in node $a_{\ell}$ and gain a single path of type C with weight $2\gamma + \frac{\alpha - \beta}{2} + 1$ that starts in node $a_{\ell + 2}$, so weak edge depth remains at $X$.
When we move from the edge $\{a_{\frac{\alpha + \beta}{2} - 2}, a_{\frac{\alpha + \beta}{2} - 1}\}$ to the edge $\{a_{\frac{\alpha + \beta}{2} - 1}, a_{\frac{\alpha + \beta}{2}}\}$, we lose a single path of type B with weight $2\gamma + \frac{\alpha - \beta}{2} + 1$ that ends in node $a_{\frac{\alpha + \beta}{2} - 2}$.
We gain a path of type C with weight $\gamma + \frac{\alpha - \beta}{2} + 1$ that starts in node $a_{\frac{\alpha + \beta}{2}}$ and all paths of type F with total weight $\gamma$.
Hence, weak edge depth of $\{a_{\frac{\alpha + \beta}{2} - 1}, a_{\frac{\alpha + \beta}{2}}\}$ is again $X$.
Now, if $\alpha > \beta$, whenever we move from an edge $\{a_{\ell}, a_{\ell + 1}\}$ to an edge $\{a_{\ell + 1}, a_{\ell + 2}\}$ for $\ell \in [\frac{\alpha + \beta}{2} - 1, \alpha - 2]$, we lose a single path of type A that ends in node $a_{\ell}$ and gain a single path of type D that starts in node $a_{\ell + 2}$.
Both of them have weight one, so weak edge depth of the current edge remains at $X$.
If we move from the edge $\{a_{\alpha - 1}, a_{\alpha}\}$ to an edge $\{a_{\alpha}, c_{\ell}\}$ for $\ell \in [\gamma]$, we lose a single path of type A that ends in $a_{\alpha - 1}$ and gain a single path of type E that starts in $c_{\ell}$, so each such edge has weak edge depth $X$.
When we move from the edge $\{a_{\alpha - 1}, a_{\alpha}\}$ to the edge $\{b_{\beta}, b_{\beta - 1}\}$, we lose a single path of type A with weight one that ends in $a_{\alpha - 1}$ and gain a single path of type B with weight one that starts in $b_{\beta - 1}$, so weak edge depth of $\{b_{\beta}, b_{\beta - 1}\}$ is $X$.
Now when we move from the edge $\{b_{\beta}, b_{\beta - 1}\}$ to the edge $\{b_{\beta - 1}, b_{\beta - 2}\}$, we lose a single path of type C with weight $\gamma + 1$ that ends in node $b_{\beta}$ and all paths of types D and F of total weight $\frac{\alpha - \beta}{2} + \gamma$.
At the same time we gain a single path of type B with weight $2\gamma + \frac{\alpha - \beta}{2} + 1$ that starts in node $b_{\beta - 2}$.
In total weak edge depth remains at $X$.
Now if $\beta \ge 4$, whenever we move from an edge $\{b_{\ell + 2}, b_{\ell + 1}\}$ to an edge $\{b_{\ell + 1}, b_{\ell}\}$ for $\ell \in [\beta - 3]$, we lose a single path of type C with weight $2\gamma + \frac{\alpha - \beta}{2} + 1$ that ends in $b_{\ell + 2}$, and we gain a single path of type B with the same weight that starts in node $b_{\ell}$, so weak edge depth of the current edge remains at $X$.
This way we checked that weak edge depth of every edge is at most $X$, which concludes the proof of the fact that $\wed(\psi) = X$.
\end{proof}

\begin{lemma} \label{lm:pabc-clemb}
    For $\alpha \ge \beta \ge 3$ and $\gamma \ge 1$, we have $\clemb(P(\alpha, \beta, \gamma \times 2)) \ge 2 - 1 / \funcc(\alpha, \beta, \gamma)$.
\end{lemma}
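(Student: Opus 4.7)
The proof proceeds by splitting into the eight cases of Definition \ref{def:funcc}. The three ``generic'' cases---case 2 ($\alpha + \beta$ even, moderate $\beta$), case 4 ($\alpha+\beta$ odd, small $\beta$), and case 5 ($\alpha+\beta$ odd, moderate $\beta$)---follow directly from Lemmas \ref{lb-parameterized-bound-a-b-c-2}, \ref{lb-parameterized-bound-a-b-c-3}, and \ref{lb-parameterized-bound-a-b-c-1}, respectively. The only subtlety is for case~5: Lemma \ref{lb-parameterized-bound-a-b-c-1} requires $\beta \ge 5$, but case~5's hypothesis $\beta \ge 2\gamma + 3$ combined with $\gamma \ge 1$ guarantees $\beta \ge 5$, so this is immediate.

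For cases 1 and 3 the generic construction of Lemma \ref{lb-parameterized-bound-a-b-c-2} gives a clique embedding whose weak edge depth is strictly worse than the target value $\funcc(\alpha,\beta,\gamma)$ (one checks $\funcc_{\text{case 1}} - \funcc_{\text{case 2}} = -\beta + \gamma + 2 > 0$ exactly when $\beta < \gamma+2$, and similarly $\funcc_{\text{case 3}} - \funcc_{\text{case 2}} = 3\beta/2 - \alpha/2 - 3\gamma - 4 > 0$ exactly when $3\beta > \alpha + 6\gamma + 8$). I would therefore give two new direct clique embeddings for these regimes, following the six-type template used in Lemma \ref{lb-parameterized-bound-a-b-c-2}: one family of paths along the cycle $\cC$ formed by the $\alpha$- and $\beta$-paths (types A, B, C) and three families through the $c_i$ nodes (types D, E, F), but with the multiplicities of each type re-optimized. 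In case 1 the short $\beta$-path forces us to use smaller weights on types B and C; in case 3 the long $\beta$-path allows larger weights. An edge-by-edge verification sweep analogous to the final part of the proof of Lemma \ref{lb-parameterized-bound-a-b-c-2} (tracking how the weak edge depth changes as one traverses $\cC$ and then the spokes) shows the claimed depth.

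For cases 6, 7, and 8, the condition $2\beta > \alpha + 4\gamma + 6$ signals that $\beta$ exceeds the optimum for the construction, and indeed the formula for $\funcc$ no longer depends on $\beta$. Here I would apply Lemma \ref{clemb-of-induced-minors}: for any integer $3 \le \beta' \le \beta$, contracting $\beta - \beta'$ internal nodes of the $\beta$-path exhibits $P(\alpha, \beta', \gamma \times 2)$ as an induced minor of $P(\alpha, \beta, \gamma \times 2)$, so $\clemb(P(\alpha,\beta,\gamma \times 2)) \ge \clemb(P(\alpha,\beta',\gamma \times 2))$. Choosing $\beta' \in \{\lfloor \alpha/2 + 2\gamma + 2 \rfloor,\, \lfloor \alpha/2 + 2\gamma + 2 \rfloor + 1\}$ with parity selected so that $(\alpha,\beta',\gamma)$ lies in one of the already-handled cases 2, 3, 4, or 5 (the specific case depending on $\alpha \bmod 4$, which is why cases 6--8 are split by this residue), one verifies by a short algebraic computation that $\funcc(\alpha,\beta',\gamma)$ evaluates to the target value in cases 6--8. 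For instance, for case~6 (with $\alpha \equiv 0 \pmod 4$), taking $\beta' = \alpha/2 + 2\gamma + 2$ makes $\alpha+\beta'$ even, places $(\alpha,\beta',\gamma)$ in case~3, and substitution yields $\funcc(\alpha,\beta',\gamma) = 2\gamma^2 + \alpha\gamma + \alpha^2/8 + \alpha/2$, matching case~6.

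The main obstacle is the sheer multi-case algebraic bookkeeping: the two new clique embeddings for cases~1 and~3, and the parity-and-residue analysis for cases~6--8. The geometric constructions themselves are natural variants of what already appears in Lemmas \ref{lb-parameterized-bound-a-b-c-1}--\ref{lb-parameterized-bound-a-b-c-3}; the difficulty lies in picking the multiplicities and the auxiliary parameter $\beta'$ so that the weak edge depth matches $\funcc(\alpha,\beta,\gamma)$ exactly.
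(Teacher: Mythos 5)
Your breakdown into cases and your diagnosis are on target: you correctly compute that $\funcc$ in Case~1 exceeds the Case~2 value by $\gamma + 2 - \beta > 0$ when $\beta < \gamma + 2$, and similarly for Case~3, so that Lemma~\ref{lb-parameterized-bound-a-b-c-2} alone gives a too-weak bound in those regimes. You also correctly handle Cases~2, 4, 5 directly from Lemmas~\ref{lb-parameterized-bound-a-b-c-1}--\ref{lb-parameterized-bound-a-b-c-3}, including the check $\beta \ge 2\gamma + 3 \ge 5$ for Case~5, and your reduction of Cases~6--8 to a smaller $\beta'$ via Lemma~\ref{clemb-of-induced-minors} is the right idea; your substitution for $\alpha \equiv 0 \pmod 4$ checks out.

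However, there is a genuine gap in Cases~1 and~3. You propose ``two new direct clique embeddings\dots with multiplicities re-optimized,'' but you never exhibit them. Lemmas~\ref{lb-parameterized-bound-a-b-c-1}--\ref{lb-parameterized-bound-a-b-c-3} are each page-long constructions with a delicate edge-by-edge verification sweep, so ``follow the template with re-optimized multiplicities'' is an announcement of intent rather than a proof. Worse, the extra work is unnecessary: the same induced-minor trick you use for Cases~6--8 resolves Cases~1 and~3 with a one-line parity flip. In Case~1, the hypothesis $\alpha > \beta$ with $\alpha + \beta$ even forces $\alpha - 1 \ge \beta$; contracting one edge of the $\alpha$-path gives $P(\alpha-1, \beta, \gamma \times 2)$ with $\alpha-1+\beta$ odd and $\beta < \gamma + 2 < 2\gamma + 3$, so Lemma~\ref{lb-parameterized-bound-a-b-c-3} applies to $(\alpha-1, \beta, \gamma)$ and a direct substitution yields exactly the Case~1 formula $\funcc(\alpha,\beta,\gamma)$. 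In Case~3, the hypothesis $3\beta \ge \alpha + 6\gamma + 8$ forces $\beta - 1 \ge 3\gamma + 3 \ge 5$; contracting one edge of the $\beta$-path gives $P(\alpha, \beta - 1, \gamma \times 2)$ with $\alpha + \beta - 1$ odd, and Lemma~\ref{lb-parameterized-bound-a-b-c-1} applied to $(\alpha, \beta - 1, \gamma)$ again yields exactly the Case~3 formula. Combined with Lemma~\ref{clemb-of-induced-minors} this closes both cases without any new geometric construction. The paper uses precisely this reduction; the observation you should have made is that the induced-minor step that you already deployed for Cases~6--8 (shrinking $\beta$) has two further uses, shrinking $\alpha$ by~1 or $\beta$ by~1 to change parity, and these are exactly what Cases~1 and~3 need.
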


\begin{proof}
    We consider all the cases from the definition of $\funcc$.
    \begin{enumerate}
        \item \casea As $\alpha > \beta$ in this case, we have $\alpha - 1 \ge \beta$.
            Furthermore, as $\alpha + \beta$ is even, $(\alpha - 1) + \beta$ is odd.
            Finally, $P(\alpha - 1, \beta, \gamma \times 2)$ is an induced minor of $P(\alpha, \beta, \gamma \times 2)$.
            Hence, due to \cref{clemb-of-induced-minors} and \cref{lb-parameterized-bound-a-b-c-3}, we obtain
            \begin{align*}
                \clemb(P(\alpha, \beta, \gamma \times 2)) &\ge \clemb(P(\alpha - 1, \beta, \gamma \times 2))\\
                                &\ge 2 - 1 / (2\beta\gamma + \frac{(\alpha - 1)\beta}{2} - \frac{\beta^2}{2} + \beta - \frac{\alpha - 1}{2} - 2\gamma + \frac{3}{2})\\
                                &= 2 - 1 / (2\beta\gamma + \frac{\alpha\beta}{2} - \frac{\beta^2}{2} + \frac{\beta}{2} - \frac{\alpha}{2} - 2\gamma + 2)\\
                                &= 2 - 1 / \funcc(\alpha, \beta, \gamma).
            \end{align*}
        \item \caseb Follows from \cref{lb-parameterized-bound-a-b-c-2}.
        \item \casec In this case $3 \beta \ge \alpha + 6 \gamma + 8 \ge \beta + 6 \gamma + 8$, so $\alpha > \beta - 1 \ge (3 \gamma + 4) - 1 \ge 5$.
            Furthermore, as $\alpha + \beta$ is even, $\alpha + (\beta - 1)$ is odd.
            Finally, $P(\alpha, \beta - 1, \gamma \times 2)$ is an induced minor of $P(\alpha, \beta, \gamma \times 2)$.
            Hence, due to \cref{clemb-of-induced-minors} and \cref{lb-parameterized-bound-a-b-c-1}, we obtain
            \begin{align*}
                \clemb(P(\alpha, \beta, \gamma \times 2)) &\ge \clemb(P(\alpha, \beta - 1, \gamma \times 2))\\
                                    &\ge 2 - 1 / (2 (\beta - 1) \gamma + \frac{\alpha (\beta - 1)}{2} - \frac{(\beta - 1)^2}{2} + 2 (\beta - 1) - \frac{\alpha}{2} - 4 \gamma - \frac{3}{2})\\
                                    &= 2 - 1 / (2 \beta \gamma + \frac{\alpha \beta}{2} - \frac{\beta^2}{2} + 3 \beta - \alpha - 6 \gamma - 4)\\
                                    &= 2 - 1 / \funcc(\alpha, \beta, \gamma).
            \end{align*}
        \item \cased Follows from \cref{lb-parameterized-bound-a-b-c-3}.
        \item \casee This case follows from \cref{lb-parameterized-bound-a-b-c-1} as $\beta \ge 2 \gamma + 3 \ge 5$.
        \item \casef Here we consider all cases $6$ through $8$ together.
            Define $x \coloneqq (((\alpha - 1) \bmod 4) + 3) / 2$.
            Denote $\beta' \coloneqq \frac{\alpha}{2} + 2\gamma + x$.
            It is easy to see that $\beta'$ is an integer.
            Furthermore, $2 \alpha + 2 \beta' = 2 \alpha + \alpha + 4 \gamma + (\alpha - 1) + 3 = 2 \bmod 4$, hence $\alpha + \beta'$ is odd.
            Moreover, we have $\beta' = \frac{\alpha}{2} + 2 \gamma + x \ge \frac{3}{2} + 2 + \frac{3}{2} = 5$.
            Furthermore, $\beta' < \beta \le \alpha$ because $x \le 3$ and $2 \beta > \alpha + 4 \gamma + 6$ by case assumption, and thus $P(\alpha, \beta', \gamma \times 2)$ is an induced minor of $P(\alpha, \beta, \gamma \times 2)$.
            Hence, due to \cref{clemb-of-induced-minors} and \cref{lb-parameterized-bound-a-b-c-1}, we obtain
            \begin{align*}
                \clemb(P(\alpha, \beta, \gamma \times 2)) &\ge \clemb(P(\alpha, \beta', \gamma \times 2))\\
                                        &\ge 2 - 1 / (2 \beta' \gamma + \frac{\alpha \beta'}{2} - \frac{\beta'^2}{2} + 2 \beta' - \frac{\alpha}{2} - 4 \gamma - \frac{3}{2})\\
                                        &= 2 - 1 / (2 (\frac{\alpha}{2} + 2 \gamma + x) \gamma + \frac{\alpha (\frac{\alpha}{2} + 2 \gamma + x)}{2} - \frac{(\frac{\alpha}{2} + 2 \gamma + x)^2}{2} \\&+ 2 (\frac{\alpha}{2} + 2 \gamma + x) - \frac{\alpha}{2} - 4 \gamma - \frac{3}{2})\\
                                        &= 2 - 1 / (2 \gamma^2 + \alpha \gamma + \frac{\alpha^2}{8} + \frac{\alpha}{2} + (2x - \frac{x^2}{2} - \frac{3}{2}))\\
                                        &=2 - 1 / \fabc,
            \end{align*}
            where the last inequality can be checked for each value of $\alpha \bmod 4$.
            \begin{enumerate}
                \item $\alpha = 0 \bmod 4$. In this case $x = (3 + 3) / 2 = 3$. Hence, $2x - \frac{x^2}{2} - \frac{3}{2} = 0$.
                \item $\alpha = 1 \bmod 4$. In this case $x = (0 + 3) / 2 = 3 / 2$. Hence, $2x - \frac{x^2}{2} - \frac{3}{2} = 3 / 8$.
                \item $\alpha = 2 \bmod 4$. In this case $x = (1 + 3) / 2 = 2$. Hence, $2x - \frac{x^2}{2} - \frac{3}{2} = 1 / 2$.
                \item $\alpha = 3 \bmod 4$. In this case $x = (2 + 3) / 2 = 5 / 2$. Hence, $2x - \frac{x^2}{2} - \frac{3}{2} = 3 / 8$.
            \end{enumerate}
        \item \caseg See case $6$.
        \item \caseh See case $6$.
    \end{enumerate}
\end{proof}

\subsection{Proof of \cref{lm:p-graph-lower-bound}}
\label{sec:lower-bound-P-graphs-combination}

Now we can put the pieces together to prove \cref{lm:p-graph-lower-bound}.

\pgraphlowerbound*

\begin{proof}[Proof of \cref{lm:p-graph-lower-bound}]
    Let $(\alpha, \beta, \gamma) \in \Tfamily$.

    For $(\alpha, \beta, \gamma) = (1, 0, 0)$, we have $\clemb(\pabc) \ge 1 = 2 - 1 / 1$ by \cref{edge-lb} and $\fabc = 1$ by \cref{lm:p1-time-complexity}.
    For $(\alpha, \beta, \gamma) = (2, 1, 0)$, we have $\clemb(\pabc) \ge 2 - 1 / 2$ by \cref{cycle-lb} and $\fabc = 2$ by \cref{lm:c3-time-complexity}.
    For $(\alpha, \beta, \gamma) = (k - 2, 2, 0)$ with $k \ge 4$, we have $\clemb(\pabc) \ge 2 - 1 / \left\lceil k / 2 \right\rceil$ by \cref{cycle-lb} and $\fabc = \left\lceil k / 2 \right\rceil$ by \cref{lm:ck-time-complexity}.

    Otherwise, we have $\alpha \ge \beta \ge 2$ and $\gamma \ge 1$.
    If $\alpha = \beta = 2$, then we have $\clemb(\pabc) \ge 2 - 1 / (\gamma + 2)$ by \cref{biclique-lb} and $\fabc = \gamma + 2$ by \cref{lm:kk2-time-complexity}.
    If $\alpha > \beta = 2$, then we have $\funcc(\alpha, \beta, \gamma) = 2 \gamma + 2 + \left\lfloor \frac{\alpha - 1}{2} \right\rfloor$ by \cref{a-c-time-complexity}, and \cref{lm:pa2c-clemb} proves the desired inequality.
    Finally, if $\alpha \ge \beta \ge 3$, then \cref{lm:pabc-clemb} implies the desired inequality. This covers all $(\alpha, \beta, \gamma) \in \Tfamily$.
\end{proof}

\section{Algorithms: from General Case to Family \boldmath$\family$} \label{sec:upper-bounds-until-p-graphs}

In this section we prove \cref{all-reductions,induced-minor-encoding,clique-separator-algo-combination,tc-from-D}. This covers most of the claims from \cref{sec:proof-overview-algorithms}, except for \cref{lm:p-graph-upper-bound} whose proof is postponed to \cref{sec:p-graph-upper-bounds}.

\smallskip
We start by presenting a reduction to tree patterns.

\begin{lemma} \label{lm:reduce-encoding-to-tree}
    Let $H$ be a pattern.
    There is an algorithm that, given a (weighted) $m$-edge host graph $G$ and a partial $H$-encoding $\partenc$ of size $s$, in time $\Oh(m + s)$ computes a tree $T'$ and a (weighted) host graph $G'$ for pattern $T'$ such that there is a (weight-preserving) bijection between $H$-subgraphs of $G$ encoded by $\partenc$ and $T'$-subgraphs of $G'$.
    Furthermore, given any $T'$-subgraph of $G'$, the corresponding $H$-subgraph of $G$ can be computed in constant time.
\end{lemma}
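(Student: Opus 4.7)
The plan is to build $T'$ and $G'$ from the partial encoding $\partenc = (\mathcal{T}, \{S_B\}_B)$ in three stages, using an adjacency oracle for $G$ obtained in $\Oh(m)$ preprocessing. First, I would normalize $\mathcal{T}$ to have at least one edge: in the degenerate case $|V(\mathcal{T})|=1$, attach a dummy leaf bag $B^\star := \emptyset$ with $S_{B^\star}$ the single empty tuple, which does not change the set of encoded $H$-subgraphs. Then define $T'$ as the tree obtained from $\mathcal{T}$ by subdividing every edge once, so $V(T')$ partitions into \emph{bag nodes} (one per bag $B$ of $\mathcal{T}$) and \emph{intersection nodes} $(B_1,B_2)$ (one per edge of $\mathcal{T}$, representing $B_1 \cap B_2$), with every edge of $T'$ joining a bag node to an incident intersection node.

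Second, I would build $G'$. I first refine every submaterialization to $S_B' := \{\bu \in S_B : \{u_a,u_b\} \in E(G) \text{ for all } \{a,b\} \in E(H[B])\}$, which takes $\Oh(s)$ time since $|E(H[B])| = \Oh(1)$ and does not alter the set of encoded $H$-subgraphs. Set $V'_B := S_B'$ for each bag node and $V'_{(B_1,B_2)} := \{\prj{\bu}{B_1 \cap B_2} : \bu \in S_{B_1}' \cup S_{B_2}'\}$ for each intersection node, collected via a hash table. For every $\bu \in S_B'$ and every intersection node $(B,B^*)$ incident to $B$ in $T'$, add the edge $\{\bu,\prj{\bu}{B \cap B^*}\}$ to $G'$. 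Since the degree of each bag in $\mathcal{T}$ is $\Oh(1)$ (as $|V(H)| = \Oh(1)$), this yields $|V(G')| + |E(G')| = \Oh(s)$.

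For weights I would fix, for each $e = \{a,b\} \in E(H)$, one bag $B_e$ of $\mathcal{T}$ containing both endpoints (which exists by the tree decomposition property), and for each bag $B$ an arbitrary incident intersection node $(B,B_B^\ast)$ in $T'$ (which exists thanks to the dummy leaf in the degenerate case). I would assign the edge $\{\bu,\prj{\bu}{B \cap B_B^\ast}\}$ for $\bu \in S_B'$ the weight $\sum_{e=\{a,b\} : B_e = B} w(u_a,u_b)$, and set the weight of every remaining edge of $G'$ to $0$. Summed over a $T'$-subgraph this charges each $e \in E(H)$ exactly once.

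The final step is the weight-preserving bijection. Forward: an encoded $\bv$ gives $\bu_B := \prj{\bv}{B} \in S_B'$ (the edges of $H[B]$ hold since $\bv$ is an $H$-subgraph) and $\bu_{(B_1,B_2)} := \prj{\bv}{B_1 \cap B_2}$, and this tuple is a $T'$-subgraph of $G'$ of the same weight. Reverse: the edges of $G'$ force $\prj{\bu_{B_1}}{B_1 \cap B_2} = \bu_{(B_1,B_2)} = \prj{\bu_{B_2}}{B_1 \cap B_2}$, so because for each $a \in V(H)$ the bags of $\mathcal{T}$ containing $a$ form a connected subtree (and the corresponding intersection nodes also contain $a$), the value $v_a$ is well-defined; for each $\{a,b\} \in E(H)$ some bag $B$ contains both endpoints and $\bu_B \in S_B'$ enforces $\{v_a,v_b\} \in E(G)$, and $\prj{\bv}{B} = \bu_B \in S_B$, so $\bv$ is encoded. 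Recovering $\bv$ reads only $\Oh(|V(H)|) = \Oh(1)$ entries. The main technical care lies in (i) routing compatibility through intersection nodes so that $|E(G')| = \Oh(s)$ rather than the $\Omega(s^2)$ one would get by adding direct compatibility edges between adjacent bags, and (ii) handling $|V(\mathcal{T})|=1$ via the dummy leaf so that edge weights can always be placed on edges of $T'$.
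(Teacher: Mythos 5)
Your proof is correct and follows essentially the same approach as the paper: filter the submaterializations so that every surviving tuple induces an $H[B]$-subgraph, subdivide the tree of the tree decomposition to obtain $T'$, turn tuples into nodes and projections into edges, and route each $w(u_a,u_b)$ through exactly one chosen $(B_e,\text{intersection})$ edge so the total is preserved. The one place you go slightly beyond the paper's write-up is the degenerate case $|V(\mathcal{T})|=1$: the paper asserts that $T'$ is a pattern without addressing that a single-bag decomposition gives a single-node $T'$, whereas your dummy leaf $B^\star=\emptyset$ cleanly forces $|V(T')|\ge 2$ and also guarantees every bag has an incident intersection node on which to place weights. Your time analysis is a bit terser (hash table vs.\ the paper's radix sort), but since the paper explicitly admits randomized algorithms, the expected-$\Oh(s)$ bound suffices.
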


\begin{proof}
	It suffices to proof the weighted setting, as the unweighted setting follows by setting all weights to zero.
	Recall that a partial encoding $\partenc$ consists of a tree decomposition $\mathcal{T}$, which consists of a tree $T$ together with bags $B(v)$ for each node $v$ of $T$, as well as a submaterialization $S_{B(v)}$ for every bag $B(v)$ of $\mathcal{T}$. 
    
    We start by filtering submaterializations in time $\Oh(s)$.
    For each bag $B$ of $\mathcal{T}$, we remove all $\bv \in S_B$ such that $\bv$ does not form an $H[B]$-subgraph in $G[H[B]]$.
    Note that the filtered partial $H$-encoding encodes exactly the same $H$-subgraphs as before the filtering. Indeed, if $\bv \in S_B$ does not form an $H[B]$-subgraph in $G[H[B]]$, it cannot be a projection of any $H$-subgraph in $G$ onto $B$.

	We construct the new pattern $T'$ by subdividing every edge of $T$, i.e., we let $T' \coloneqq (V(T) \cup E(T), \{\{a, e\} \mid e \in E(T), a \in e\})$. Clearly, $T'$ is a tree and a pattern.
	
    For any edge $e = \{a, b\} \in E(T)$, let $B(e) \coloneqq B(a) \cap B(b)$ and $S_{B(e)} \coloneqq \prj{S_{B(a)}}{B(e)} \cup \prj{S_{B(b)}}{B(e)}$.
    With this notation, we construct the nodes of the new host graph $G'$ as follows. 
    For every $a \in V(T)$, we let $V'_a \coloneqq S_{B(a)}$ be the part of $G'$ corresponding to $a$.
    For every $e \in E(T)$, we let $V'_e \coloneqq S_{B(e)}$ be the part of $G'$ corresponding to $e$. 

	The edges of $G'$ simply encode projections:
    For every $\{a, e\} \in E(T')$, we let the edges between $V'_a$ and $V'_e$ be $\{\{v_a, \prj{v_a}{B(e)}\} \mid v_a \in S_{B(a)}\}$.
    
    We next define the edge weights. Note that, by the definition of a tree decomposition, for every edge $e' = \{x, y\} \in E(H)$ there is some node $a \in T$ with $x, y \in B(a)$.
    We pick any such node $a$ and an arbitrary edge $e \in E(T)$ incident to $a$ and denote $\varphi(e') \coloneqq \{a, e\}$.
    We define the edge weights in $G'$ by
    \[ w'(\{v_a, \prj{v_a}{B(e)}\}) \coloneqq \sum_{\{x, y\} \in E(H) :\, \varphi(\{x, y\}) = \{a, e\}} w(\{\prj{v_a}{x}, \prj{v_a}{y}\}). \]

    We now show how to build $G'$ in time $\Oh(m + s)$.
    We can build $V'_a$ for each $a \in V(T)$ trivially in time $\Oh(|S_{B(a)}|) = \Oh(s)$.
    To build $V'_e$ for each $e = \{a, b\} \in E(T)$, we first project $S_{B(a)}$ and $S_{B(b)}$ onto $B(e)$ in time $\Oh(s)$ and unite the two sequences.
    The resulting sequence is $S_{B(e)}$ but with repetitions.
    To delete repetitions, we sort the sequence lexicographically in time $\Oh(m + s)$ using radix sort.
    By remembering throughout this process for each element of $S_{B(e)}$ from which elements of $S_{B(a)}$ and $S_{B(b)}$ it was created through projection, we can build edges between $V'_a$ and $V'_e$ and between $V'_b$ and $V'_e$ in time $\Oh(s)$.
    To compute the edge weights $w'$ of edges between $V'_a$ and $V'_e$ for each $\{a, e\} \in E(T')$, we iterate over $\{x, y\} \in E(H)$, and if $\varphi(\{x, y\}) = \{a, e\}$, we add the corresponding weights from $G$ to $G'$.
    For a fixed $\{x, y\} \in E(H)$ with $\varphi(\{x, y\}) = \{a, e\}$, we first sort edges between $V_x$ and $V_y$ in $G$ in time $\Oh(m)$ using radix sort.
    Note that $x, y \in B(a)$ holds by the definition of $\varphi$.
    We then use radix sort to lexicographically sort $V'_a$ in time $\Oh(m + s)$ where $x$ and $y$ come first in the ordering of $B(a)$ according to which we sort lexicographically.
    We then use two pointers to go through all $v_a \in V'_a$ and corresponding edges $\{\prj{v_a}{x}, \prj{v_a}{y}\}$.
    We then add $w(\{\prj{v_a}{x}, \prj{v_a}{y}\})$ to the weight $w'$ of the edge $\{v_a, \prj{v_a}{B(e)}\}$, which is the single outgoing edge from $v_a$ to $V'_e$.
    Thus, in time $\Oh(m + s)$ we build $G'$.

    We now prove that there is a weight-preserving bijection between $H$-subgraphs of $G$ encoded by $\partenc$ and $T'$-subgraphs of $G'$.
    Fix some $H$-subgraph $\bv$ of $G$ encoded by $\partenc$.
    Let $I(\bv) \coloneqq \bv' \coloneqq (\prj{\bv}{B(a)})_{a \in V(T')}$.
    We claim that $\bv'$ is a $T'$-subgraph in $G'$ with the same weight as the $H$-subgraph $\bv$ in $G$.
    The fact that $\bv'$ is a $T'$-subgraph in $G'$ follows from the definition of $G'$.
    The weight is preserved as for every edge $\{x, y\} \in E(H)$, there is exactly one edge $\{a, e\} \in E(T')$, such that the weights of edges between $V_x$ and $V_y$ are encoded on the edges between $V'_a$ and $V'_e$.

    Furthermore, if $\bu$ is some other $H$-subgraph of $G$ encoded by $\partenc$, there is some $x \in V(H)$ such that $v_x \neq u_x$.
    There is some node $a \in V(T)$ such that $x \in B(a)$, and thus $I(\bv)_x \neq I(\bu)_x$.
    Hence, $I$ is an injection.

    It remains to show that $I$ is a surjection and how to compute $I^{-1}$ in constant time.
    Fix some $T'$-subgraph $\bv'$ of $G'$.
    For each $x \in V(H)$, fix some node $\nu(x) \in V(T)$ such that $x \in B(\nu(x))$.
    Let $\bv \coloneqq (\prj{v'_{\nu(x)}}{x})_{x \in V(H)}$.
    It is clear that $\bv$ can be computed from $\bv'$ in constant time.
    We claim that $\bv$ is an $H$-subgraph of $G$ encoded by $\partenc$ and $I(\bv) = \bv'$.
    First, we claim that $v_x = \prj{v'_a}{x}$ for any $a \in V(T')$ with $x \in B(a)$.
    That is, the definition of $\bv$ does not depend on the specific choice of nodes $\nu(x)$.
    Note that as $T$ corresponds to some tree decomposition of $H$, nodes $a \in V(T)$ with $x \in B(a)$ form a connected subgraph of $T$.
    After subdivision this fact still holds.
    That is, nodes $a \in V(T')$ with $x \in B(a)$ form a connected subgraph of $T'$.
    For any two such adjacent nodes $a, b \in V(T')$, by the definition of $G'$, we have $\prj{v'_a}{x} = \prj{v'_b}{x}$ as $v'_a$ and $v'_b$ are adjacent in $G'$ and edges in $G'$ are projections.
    Hence, by transitivity we get $\prj{v'_a}{x} = \prj{v'_{\nu(x)}}{x} = v_x$.
    From this, it follows that $I(\bv) = \bv'$.
    Furthermore, we claim that $\bv$ is an $H$-subgraph in $G$.
    For each $\{x, y\} \in E(H)$, by the definition of a tree decomposition there is some $a \in V(T)$ with $x, y \in B(a)$.
    As $\prj{\bv}{B(a)} = v'_a \in S_{B(a)}$, we have that $\prj{\bv}{B(a)}$ forms an $H[B(a)]$-subgraph in $G[H[B(a)]]$ as we initially have filtered out all elements of $S_{B(a)}$ that do not form $H[B(a)]$-subgraphs in $G[H[B(a)]]$.
    In particular, we have $\{v_x, v_y\} \in E(G)$ as $x, y \in B(a)$.
    As this holds for each $\{x, y\} \in E(H)$, we get that $\bv$ is indeed an $H$-subgraph in $G$.
    Finally, for each $a \in V(T)$, we have $\prj{\bv}{B(a)} = v'_a \in S_{B(a)}$, and thus $\bv$ is encoded by $\partenc$.
    This concludes the proof of the fact that $\bv$ is an $H$-subgraph of $G$ encoded by $\partenc$ and $I(\bv) = \bv'$, thus proving that $I$ is a surjection.
\end{proof}

\allreductions*

\begin{proof}[Proof of \cref{all-reductions}]
    For the first claim, we first solve \Henciso on $G$ in time $T(m)$, which returns an $H$-encoding consisting of partial $H$-encodings $\partenc_1, \partenc_2, \ldots, \partenc_k$. Recall that $k$ must be constant, i.e., it may depend on $H$ but not on $G$.
    Since the running time is $T(m)$, each partial $H$-encoding has size $\Oh(T(m))$. 
    For each partial encoding $\partenc_i$, we apply \cref{lm:reduce-encoding-to-tree} to build a tree $H'_i$ and a host graph $G'_i$ for pattern $H'_i$.
    We solve \miniso{H'_i} on $G'_i$ in linear time as $H'_i$ is a tree.
    This yields a weight for each $i$, and we return the minimum of all these weights.
    
    Since $\partenc_1, \partenc_2, \ldots, \partenc_k$ is a full $H$-encoding, the minimum-weight $H$-subgraph of $G$ is encoded by some partial encoding $\partenc_i$, and by the weight-preserving property of \cref{lm:reduce-encoding-to-tree} we obtain the same minimum weight by solving \miniso{H'_i} on $G'_i$. This proves correctness. 
    The running time of this algorithm is $\Oh(T(m))$ as $k$ is constant.

    \medskip

    The second claim of the lemma follows similarly.
    We apply \cref{lm:reduce-encoding-to-tree} (in the unweighted setting) to each $\partenc_i$.
    Since $H'_i$ is a tree, we can then solve \listiso{H'_i} on $G'_i$ in time $\Oh(T(m) + t_i)$ where $t_i$ is the number of $H$-subgraph of $G$ encoded by $\partenc_i$ (see, e.g.,~\cite{BaganDG07}).
    By the definition of a full $H$-encoding, we have $\sum_i t_i = t$, and thus in total we work in time $\Oh(T(m) + t)$.
    Each $H'_i$-subgraph of $G'_i$ can be converted into an $H$-subgraph of $G$ in constant time.

    \medskip

    It remains to show the third claim of the lemma.
    In the preprocessing phase we solve \Henciso on $G$ in time $\Oh(T(m))$ and apply \cref{lm:reduce-encoding-to-tree} to each $\partenc_i$ to build corresponding instances $G'_i$ of \enumiso{H'_i}.
    Since $H'_i$ is a tree, \enumiso{H'_i} can be solved in preprocessing time $\Oh(|E(G'_i)|) = \Oh(T(m))$ and constant delay for each $i$ (see, e.g.,~\cite{BaganDG07}).
    Thus, the preprocessing takes time $\Oh(T(m))$ in total.

    To enumerate the next $H$-subgraph of $G$, we pick any $i$ such that we did not yet enumerate all $H'_i$-subgraphs in $G'_i$ and ask \enumiso{H'_i} to generate a new  $H'_i$-subgraph of $G'_i$. This takes constant delay.
    Then, in constant time we can convert the computed $H'_i$-subgraph of $G'_i$ into an $H$-subgraph of $G$.
    As $\partenc_1, \partenc_2, \ldots, \partenc_k$ are a full $H$-encoding, every $H$-subgraph of $G$ is enumerated exactly once.
    Hence, we obtain constant-time delay.
\end{proof}

\inducedminorencoding*

\begin{proof}[Proof of \cref{induced-minor-encoding}]
    Given a host graph $G'$ for \enciso{H'}, we apply \cref{induced-minor-reduction} (in the unweighted setting) to build a host graph $G$ for \Henciso in time $\Oh(m)$. Given $G$, we solve \Henciso in time $\Oh(T(m))$. As a result, we get a full $H$-encoding of $G$ consisting of partial $H$-encodings $\partenc_1, \partenc_2, \ldots, \partenc_k$ of $G$. There is a bijection $\corr$ between $H$-subgraphs in $G$ and $H'$-subgraphs in $G'$.
    We will create partial $H'$-encodings $\partenc'_1, \ldots, \partenc'_k$ of $G'$, such that $\partenc'_i$ encodes exactly the $H'$-subgraphs of  $G'$ corresponding to the $H$-subgraphs of $G$ that are encoded by $\partenc_i$, which solves the problem.
    
    Fix some partial encoding $\partenc_i$.
    Given its underlying tree decomposition $\mathcal{T}$ of $H$, we can create a tree decomposition $\mathcal{T}'$ of $H'$ by preserving the tree structure of $\mathcal{T}$, and for each bag $B$ of~$\mathcal{T}$ creating a bag $\varphi^{-1}(B) \coloneqq \{x \in V(H') \mid \varphi(x) \cap B \neq \emptyset\}$ in $\mathcal{T}'$, where $\varphi$ is taken from the proof of \cref{induced-minor-reduction}.
    Consider the bijection $\corr$ (and its inverse $\corr^{-1}$) between $H$-subgraphs of $G$ and $H'$-subgraphs of $G'$ from \cref{induced-minor-reduction}, and denote the submaterializations of $\partenc_i$ by $S_B$, for each bag $B$ of $\mathcal{T}$.
    Fix some $\bu \in S_B$.
    For each $x \in \varphi^{-1}(B)$, pick some $a \in \varphi(x) \cap B$.
    Let $j$ be such that $u_a = v_{a, j}$.
    Denote $u'_x = v'_{x, j}$.
    Similarly to \cref{induced-minor-reduction}, one can see that the definition of $u'_x$ does not depend on the choice of $a$.
    Define $\corr^{-1}(\bu) \coloneqq (u'_x)_{x \in \varphi^{-1}(B)}$.
    We define the submaterialization of bag $\varphi^{-1}(B)$ by setting $S'_{\varphi^{-1}(B)} \coloneqq \{\corr^{-1}(\bu) \mid \bu \in S_B\}$.
    As the algorithm for \Henciso works in time $\Oh(T(m))$, all submaterializations of $\partenc_i$ have size at most $\Oh(T(m))$.
    Thus, creating $S'_{\varphi^{-1}(B)}$ takes time $\Oh(|S_B|) = \Oh(T(m))$.

    We now prove that the constructed $\partenc'_i$ is a partial $H'$-encoding.
    First, we prove that $\mathcal{T}'$ is a tree decomposition of $H'$.
    Consider any edge $\{x, y\} \in E(H')$.
    As $H'$ is an induced minor of $H$, there is an edge $\{a, b\} \in E(H)$ such that $a \in \varphi(x)$ and $b \in \varphi(y)$.
    There is a bag $B$ of $\mathcal{T}$ that contains $a$ and $b$.
    Hence, $\varphi^{-1}(B)$ contains $x$ and $y$.

    Now consider some $x \in V(H')$.
    Note that $x$ is contained in exactly the bags $\varphi^{-1}(B)$ of $\mathcal{T}'$, for which $\varphi(x) \cap B \neq \emptyset$.
    Note that $\varphi(x)$ is a connected subset of nodes of $H$.
    Nodes from a connected subset of nodes of $H$ lie in the bags of a connected subtree of the tree decomposition $\mathcal{T}$ of $H$.
    Therefore, $x$ lies in the bags of a connected subtree of $\mathcal{T}'$.
    Hence, $\mathcal{T}'$ is a tree decomposition of $H'$.
    Furthermore, for each bag $B$ of $\mathcal{T}$, it is clear that $S'_{\varphi^{-1}(B)}$ is a submaterialization of $\varphi^{-1}(B)$.
    Therefore, $\partenc'_i$ is a partial $H'$-encoding.

    It remains to show that $\partenc'_i$ encodes exactly the $H'$-subgraphs of $G'$ that correspond to the $H$-subgraphs of $G$ encoded by $\partenc_i$.
    Consider some $H$-subgraph $\bu$ in $G$ encoded by $\partenc_i$.
    We have $\prj{\bu}{B} \in S_B$ for all bags $B$ of $\mathcal{T}$.
    By the definition of $S'_{\varphi^{-1}(B)}$, we have that $\corr^{-1}(\prj{\bu}{B}) \in S'_{\varphi{-1}(B)}$.
    As $\corr^{-1}(\prj{\bu}{B}) = \prj{\corr^{-1}(\bu)}{\varphi^{-1}(B)}$, we have that $\partenc'_i$ encodes $\corr^{-1}(\bu)$.
    Furthermore, if $\partenc'_i$ encodes some $H'$-subgraph $\bu'$ of $G'$, we have $\corr(\prj{\bu'}{\varphi^{-1}(B)}) \in S_B$ for all bags $B$ of $\mathcal{T}$, and as $\corr(\prj{\bu'}{\varphi^{-1}(B)}) = \prj{\corr(\bu')}{B}$, we have that $\corr(\bu')$ is encoded by $\partenc_i$.
\end{proof}

\cliqueseparatoralgocombination*

\begin{proof}
    For the given host graph $G$, for each $i \in [k]$ solve \enciso{H[S_i \cup C]} on $G[H[S_i \cup C]]$ to obtain an encoding $\partenc_{i,1},\ldots,\partenc_{i,k_i}$. 
    We now consider all combinations of partial $H[S_i \cup C]$-encodings over all $i \in [k]$, i.e., we consider all tuples $(\partenc_{1,i_1},\ldots,\partenc_{k,i_k})$ with $i_j \in \{1,\ldots,k_j\}$ for all $j \in \{1,\ldots,k\}$.
    Note that there is a constant number of such combinations.
    By \cref{induced-clique-td} and since $C$ is a clique, each tree decomposition of each partial $H[S_i \cup C]$-encodings has a bag that contains $C$.
    We connect such a bag from the tree decomposition of $\partenc_{1,i_1}$ with the corresponding bags from the tree decompositions of $\partenc_{2,i_2},\ldots,\partenc_{k,i_k}$.
    Observe that this yields a tree decomposition of $H$.
    We preserve the already created submaterializations of all bags, thus creating a partial $H$-encoding of $G$.
    Note that any $H$-subgraph $\bv$ of $G$ is encoded exactly once, as for each of $H[S_i \cup C]$ we have that $\prj{\bv}{S_i \cup C}$ is encoded exactly once in the solution of \enciso{H[S_i \cup C]}, thus only the combination corresponding to all such partial $H[S_i \cup C]$-encodings encodes $\bv$.

    We now analyze the time complexity. Running the algorithm of \enciso{H[S_i \cup C]} for all $i \in [k]$ takes time $\Oh(T(m) \cdot k) = \Oh(T(m))$.
    As these algorithms work in time $\Oh(T(m))$, all partial encodings have size $\Oh(T(m))$.
    Hence, combining the partial encodings takes time $\Oh(T(m))$.
\end{proof}

\tcfromD*

\begin{proof}[Proof of \cref{tc-from-D}]
    We prove the lemma by induction on $|V(H)|$.
    If $H$ does not have a clique separator, then $D(H) = \{H\}$, and the claim is obvious.
    Otherwise, there is some minimal clique separator $C$ in $H$.
    Let $S_1, S_2, \ldots, S_k$ be the connected components of $H - C$.
    Note that by \cref{d-set-computation}, we have $D(H[S_i \cup C]) \subseteq D(H)$ for each $i$.
    Hence, for each $H' \in D(H[S_i \cup C])$, \enciso{H'} can be solved in time $\Oh(T(m))$.
    Consequently, by the induction hypothesis, \enciso{H[S_i \cup C]} can be solved in time $\Oh(T(m))$ for each $i$.
    Applying \cref{clique-separator-algo-combination}, we get that \Henciso can be solved in time $\Oh(T(m))$. Since $H$ has constant size, the number of induction steps is constant, so the overhead remains constant.
\end{proof}

\section{Algorithms for Patterns in \boldmath$\family$} \label{sec:p-graph-upper-bounds}

In this section we prove the following lemma that was stated in \cref{sec:proof-overview}.

\pgraphupperbound*

Throughout this section, $\alpha, \beta$, and $\gamma$ denote integers with $\alpha \ge \beta \ge 3$ and $\gamma \ge 1$.
We start by considering three simple graph families, see \cref{edge-ub,cycle-ub,biclique-ub}. Then in \cref{sec:upper-bound-P-graphs-alpha-gamma} we cover all graphs of the form $P(\alpha, \gamma \times 2)$, and in \cref{sec:upper-bound-P-graphs-alpha-beta-gamma} we cover all graphs of the form $P(\alpha, \beta, \gamma \times 2)$.
Finally, in \cref{sec:upper-bound-P-graphs-combination} we put the pieces together to prove \cref{lm:p-graph-lower-bound}.

\begin{lemma} \label{edge-ub}
    For $H = K_2$, \Henciso can be solved in time $\Oh(m)$.
\end{lemma}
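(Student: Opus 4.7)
The plan is to produce a single partial $H$-encoding $\partenc_1$ that trivially lists all edges of the host graph respecting the coloring. Since $K_2$ has only two nodes $a,b$ connected by one edge, I take the tree decomposition $\mathcal{T}$ of $K_2$ consisting of a single bag $B = \{a,b\}$, and define the submaterialization
\[ S_B := \{(v_a,v_b) \in V_a \times V_b : \{v_a,v_b\} \in E(G)\}. \]
This is built in a straightforward linear scan: iterate over all $m$ edges of $G$, and for each edge $e = \{u,v\}$ check (in constant time) whether one endpoint lies in $V_a$ and the other in $V_b$; if so, append the corresponding ordered pair to $S_B$. The output $\partenc = (\partenc_1)$ is a full $H$-encoding with a single part, satisfying the requirement that the number of partial encodings is constant.

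For correctness, $\mathcal{T}$ is a valid tree decomposition of $K_2$ since its only bag contains both endpoints of the unique edge. Every $K_2$-subgraph of $G$ is by definition a pair $\bv = (v_a,v_b)$ with $v_a \in V_a$, $v_b \in V_b$, and $\{v_a,v_b\} \in E(G)$, so $\prj{\bv}{B} = \bv \in S_B$, meaning $\bv$ is encoded by $\partenc_1$. Conversely, nothing in $S_B$ comes from a non-edge, and since there is only one partial encoding, each $K_2$-subgraph is encoded by exactly one $\partenc_i$. The total size $|S_B| \le m$ matches the claimed $\Oh(m)$ bound, and the running time is clearly $\Oh(m)$.

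There is essentially no obstacle in this base case; it serves purely as the initialization for the inductive / case-based proof of \cref{lm:p-graph-upper-bound}, where more elaborate patterns in $\family$ will require the hyper-degree splitting described in the proof overview.
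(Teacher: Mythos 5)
Your proof is correct and matches the paper's argument: a single partial encoding with the one-bag tree decomposition $B = V(K_2)$, materialized by scanning the edges of $G$. The paper states this in two sentences; your version just spells out the same construction and its correctness in more detail.
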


\begin{proof}
    We create a single tree decomposition consisting of a single bag $B \coloneqq V(K_2)$. It takes $\Oh(m)$ time to materialize $B$ by picking adjacent pairs of nodes.
\end{proof}

\begin{lemma} \label{cycle-ub}
    For $H = C_k$, \Henciso can be solved in time $\Oh(m^{2 - 1 / \left\lceil \frac{k}{2} \right\rceil })$ for any $k \ge 3$.
\end{lemma}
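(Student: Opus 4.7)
Plan for the proof of \cref{cycle-ub}. Let $\ell := \lceil k/2 \rceil$. I would follow the classical Alon--Yuster--Zwick strategy of degree splitting at threshold $\Delta := m^{1/\ell}$. Call a node $u \in V(G)$ \emph{high} if $\deg_G(u) > \Delta$ and \emph{low} otherwise; by \cref{high-degree-small-cnt-nodes} the number of high nodes is at most $2m/\Delta = 2m^{1 - 1/\ell}$. For each of the $2^k$ (constantly many) functions $\sigma : V(C_k) \to \{\hi, \lo\}$, I would produce one partial $C_k$-encoding $\partenc_\sigma$ that encodes precisely those $C_k$-subgraphs $(u_1, \ldots, u_k)$ for which $u_i$ is high iff $\sigma(v_i) = \hi$. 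Since the $2^k$ patterns partition the set of $C_k$-subgraphs, the $\partenc_\sigma$ together form a full $H$-encoding. For the underlying tree decomposition in each $\partenc_\sigma$, I would use the standard width-two decomposition of $C_k$ with bags $B_i := \{v_1, v_i, v_{i+1}\}$ for $i = 2, \ldots, k-1$, arranged in a path.

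The technical core is the following counting claim: for any $j \le \ell$, the number of length-$j$ walks $u_0 - u_1 - \cdots - u_j$ in $G$ whose intermediate nodes $u_1, \ldots, u_{j-1}$ are all low is at most $2m \cdot \Delta^{j-1} \le 2m^{2 - 1/\ell}$; this is proved by first picking the oriented edge $\{u_0, u_1\}$ in at most $2m$ ways and then extending by the at most $\Delta$ choices for each subsequent low-degree neighbor. I would split the cycle at $v_1$ and at its (nearly) antipodal node $v_p$ (with $p := \ell+1$ for even $k$ and $p := \ell$ for odd $k$) into two edge-disjoint paths of lengths at most $\ell$, and process each side independently. On one side, I enumerate all matching tuples in $G$ as follows: at every position $v_i$ with $\sigma(v_i) = \hi$, iterate over the at most $2m^{1-1/\ell}$ candidate high-degree nodes, thereby splitting the side into maximal low-degree sub-paths; for each resulting sub-path of length $j'$, apply the counting claim to enumerate its tuples in time $\Oh(m \Delta^{j'-1})$. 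An easy induction (using $\Delta \cdot (m/\Delta) = m$) shows that inserting a high-degree split simultaneously costs a factor $m/\Delta$ and saves a factor $\Delta$ by shortening the sub-path, so the product for one whole side is $\Oh(m^{2-1/\ell})$. I do the same for the other side, and for each tuple on a side I add its projections onto each bag $B_i$ to the submaterialization $S_{B_i}$.

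Correctness of $\partenc_\sigma$ is immediate: every $C_k$-subgraph matching $\sigma$ projects into all of the relevant side-tuples and hence into every $S_{B_i}$. The total submaterialization size and the running time are both $\Oh(m^{2-1/\ell})$ per $\sigma$, and hence $\Oh(m^{2-1/\ell})$ overall since only constantly many $\sigma$ are considered. The main obstacle I anticipate is bookkeeping the case analysis cleanly across all high/low patterns $\sigma$ and verifying the inductive claim that splitting at a $\hi$ position exactly balances the loss of a path step with the gain of an $m^{1-1/\ell}$ factor; once this balance is established the rest is routine. The odd-$k$ case introduces no additional difficulty because the short side then has length $\ell - 1 < \ell$, which only strengthens the counting bound.
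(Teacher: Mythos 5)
Your fixed tree decomposition $B_i := \{v_1, v_i, v_{i+1}\}$ cannot yield a subquadratic encoding for every degree pattern $\sigma$: it breaks whenever $v_1$ is low but some other vertex is high. Concretely, take $k=8$, $\ell=4$, $\Delta = m^{1/4}$, and build a host graph in which $V_{a_3}=\{h\}$ and $V_{a_6}=\{h'\}$ are singletons joined to all $N$ vertices of each of their two neighboring parts, while every remaining pair of adjacent parts (including $V_{a_8}$ and $V_{a_1}$) is joined by a perfect matching on $N$ vertices, so $m = 8N$. Then $h,h'$ have degree $2N$ (high), every other vertex has degree $2$ (low), and there are exactly $N^2$ $C_8$-subgraphs, one for each free choice of $v_2 \in V_{a_2}$ and $v_5 \in V_{a_5}$; all of them match the pattern $\sigma$ with $v_3,v_6$ high and the rest low. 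These subgraphs project onto $B_3 = \{v_1,v_3,v_4\}$ to $N^2 = \Theta(m^2)$ distinct triples $(v_1,h,v_4)$ because $v_1$ and $v_4$ vary independently with $v_2$ and $v_5$. Hence any submaterialization $S_{B_3}$ in $\partenc_\sigma$ has size $\Omega(m^2)$, no matter how it is computed, so the encoding cannot even be written down in time $\Oh(m^{7/4})$. The same example already has $\Theta(m^2)$ side-$1$ tuples, which refutes the inductive claim: the cost $m/\Delta$ of a high split strictly exceeds the saving $\Delta$ as soon as $\Delta^2 < m$, i.e.\ for every $\ell > 2$, so each high split makes the side count larger, not equal.

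The missing idea, which is exactly what the paper's proof supplies, is to choose the tree decomposition \emph{depending on $\sigma$}. If $\sigma(v_j)=\hi$ for some $j$, center the width-two decomposition at $v_j$ rather than at $v_1$: the extra vertex in every bag then has only $\Oh(m^{1-1/\ell})$ candidates by \cref{high-degree-small-cnt-nodes}, and each bag is materialized trivially in $\Oh(m \cdot m^{1-1/\ell})$ time as a Cartesian product of one edge and one high-degree vertex. If every $\sigma(v_i)=\lo$, there is no good center vertex, and the paper instead switches to a two-bag decomposition with $B_1 = \{v_1,\ldots,v_{\ell+1}\}$ and $B_2 = \{v_{\ell+1},\ldots,v_k,v_1\}$, materializing each by one edge at an endpoint followed by $\ell-1$ low-degree extensions; this piece is essentially your all-low side enumeration, which is the correct part of your proposal.
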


\begin{proof}
    We prove the lemma for even values of $k$ because for odd values of $k$, the desired time complexity for $C_k$ is the same as for $C_{k+1}$, and thus by \cref{induced-minor-encoding}, the claim follows from the even case.

    Denote the nodes of $C_k$ by $a_1$, $a_2$, $\ldots$, $a_k$ in counterclockwise order. Define $\ell \coloneqq \frac{k}{2}$ (so $k=2\ell$). We want to solve the problem in time $\Oh(m^{2-1 / \ell})$.

    Consider some $H$-subgraph $\bv$ of $G$. We create several partial $H$-encodings of $G$ and claim that exactly one of them encodes $\bv$.

    We split the nodes in $V_{a_1}, V_{a_2}, \ldots, V_{a_k}$ into the ones that have degrees smaller than $m^{1 / \ell}$ (low-degree nodes, sets $V_{a_i}^{\textup{lo}}$) and the ones that have degrees at least $m^{1 / \ell}$ (high-degree nodes, sets $V_{a_i}^{\textup{hi}}$).
    For each of the $2^k$ choices of high or low degrees in all parts, we construct a partial $H$-encoding that encodes exactly the $H$-subgraphs that satisfy these degree constraints.
    As $\bv$ satisfies exactly one of these $2^k$ cases, it will be encoded exactly once. Once we fix for each part of $G$ whether we consider a low-degree or a high-degree node in that part, we filter out all other nodes.
    We distinguish two cases.

    \begin{itemize}
        \item \emph{Case 1: $v_{a_i}$ has high degree for some  $i$.}
            In this case \cref{high-degree-small-cnt-nodes} implies $|V_{a_i}^{\textup{hi}}| \le m^{1 - 1 / \ell}$.
            Since $H - \{a_i\}$ is a tree, it has a tree decomposition where each bag consists of two adjacent nodes.
            Adding $a_i$ to every bag, we obtain a tree decomposition of $H$ (see \cref{C12-upper-bound-1}).
            All of these bags can be materialized in time $\Oh(m^{2 - 1 / \ell})$ because there are $\Oh(m)$ choices for the two adjacent nodes and $\Oh(m^{1 - 1 / \ell})$ choices for a node from $V_{a_i}^{\textup{hi}}$.
    \begin{figure}
    \begin{center}
        \includegraphics[scale=0.75]{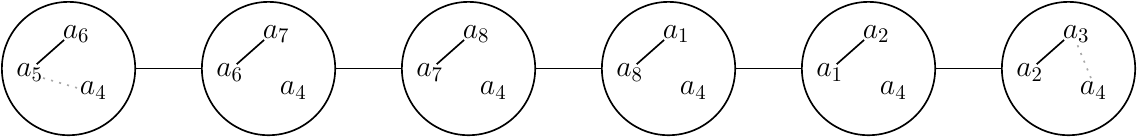}
    \end{center}

    \caption{An example of a tree decomposition of $C_k$ for Case 1 of \cref{cycle-ub} for $k = 8$  and $i = 4$. Here and later solid black lines inside a bag represent edges used to create a submaterialization of the bag, and dotted grey edges represent all other edges that this bag covers.}
    \label{C12-upper-bound-1}
    \end{figure}
    
    \item \emph{Case 2: $v_{a_i}$ has low degree for all $i$.}
        In this case we create a tree decomposition of $H$ consisting of two bags: $B_1 \coloneqq \{a_1, a_2, \ldots, a_{\ell+1}\}$ and $B_2 \coloneqq \{a_{\ell + 1}, a_{\ell + 2}, \ldots, a_k, a_1\}$. 
        It is easy to see that it is indeed a valid tree decomposition of $H$ (see \cref{C12-upper-bound-2}).
        To materialize $B_1$, note that there are $\Oh(m)$  choices for a pairs of adjacent nodes from $V_{a_1}^{\textup{lo}}$ and $V_{a_2}^{\textup{lo}}$, and then for each $i \in [3,\ell+1]$ there are $\Oh(m^{1 / \ell})$ choices for a neighbor in $V_{a_i}^{\textup{lo}}$ of the chosen node in $V_{a_{i - 1}}^{\textup{lo}}$, because all these nodes have degree at most $m^{1 / \ell}$. Thus, we materialize $B_1$ in time  $\Oh(m^{1 + (\ell - 1) \cdot 1 / \ell}) = \Oh(m^{2 - 1 / \ell})$ in total. Bag $B_2$ is materialized analogously.
    \begin{figure}
    \begin{center}
        \includegraphics[scale=0.85]{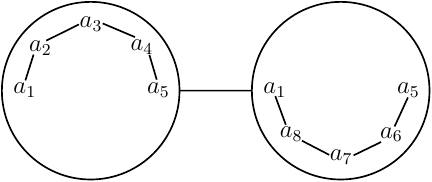}
    \end{center}

    \caption{An example of a tree decomposition of $C_k$ for Case 2 of \cref{cycle-ub} for $k = 8$.}
    \label{C12-upper-bound-2}
    \end{figure}
    \end{itemize}

    The cases are mutually exclusive and cover all $H$-subgraphs of $G$.
\end{proof}

\begin{lemma} \label{biclique-ub}
    Let $H = K_{2, k}$ for $k \ge 2$. \Henciso can be solved in time $\Oh(m^{2 - \frac{1}{k}})$.
\end{lemma}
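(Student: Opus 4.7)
The plan is to build a full $K_{2,k}$-encoding consisting of two partial encodings whose submaterializations have total size $\Oh(m^{2-1/k})$; by \cref{all-reductions} this yields the claimed time bound. Label the two nodes on the small side of $K_{2,k}$ as $u, w$ and the $k$ nodes on the other side as $c_1, \ldots, c_k$, so that $\{u, c_i\}, \{w, c_i\} \in E(K_{2,k})$ for all $i \in [k]$. After removing any edges of $G$ that do not correspond to edges of $K_{2,k}$, fix the threshold $T := m^{1/k}$ and split $V_u = V_u^{\textup{lo}} \cup V_u^{\textup{hi}}$ by whether $\deg(v_u) < T$, and analogously for $V_w$; by \cref{high-degree-small-cnt-nodes} this gives $|V_u^{\textup{hi}}|, |V_w^{\textup{hi}}| \le m/T = m^{1-1/k}$. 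I will create two partial encodings covering the two disjoint cases (i) both $v_u \in V_u^{\textup{lo}}$ and $v_w \in V_w^{\textup{lo}}$, and (ii) at least one of $v_u, v_w$ is high-degree.

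For case (i) the key idea is to use a non-standard tree decomposition of $K_{2,k}$ consisting of just two bags $B_u := \{u, c_1, \ldots, c_k\}$ and $B_w := \{w, c_1, \ldots, c_k\}$ connected by a single edge; validity is immediate since every edge of $K_{2,k}$ lies in $B_u$ or in $B_w$ and every node appears in a connected subtree of bags. I restrict $S_{B_u}$ to tuples $(v_u, v_{c_1}, \ldots, v_{c_k})$ with $v_u \in V_u^{\textup{lo}}$ and $v_u v_{c_i} \in E(G)$ for all $i$. For each fixed $v_u$ the number of such tuples equals $\prod_i |N(v_u) \cap V_{c_i}|$, and since after preprocessing $\sum_i |N(v_u) \cap V_{c_i}| = \deg(v_u)$, the AM-GM inequality yields $\prod_i |N(v_u) \cap V_{c_i}| \le (\deg(v_u)/k)^k$. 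Using $\deg(v_u) < T$ together with $\sum_{v_u} \deg(v_u) \le 2m$ then gives $|S_{B_u}| \le T^{k-1} \cdot 2m / k^k = \Oh(m^{2-1/k})$, and $|S_{B_w}|$ is bounded symmetrically.

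For case (ii) I use the natural tree decomposition with bags $\{u, w, c_i\}$ for $i \in [k]$ arranged as a path sharing $\{u, w\}$, restricting each submaterialization to triples $(v_u, v_w, v_{c_i})$ with $v_u v_{c_i}, v_w v_{c_i} \in E(G)$ and at least one of $v_u \in V_u^{\textup{hi}}$, $v_w \in V_w^{\textup{hi}}$ holding. The count splits into two sub-sums; the sub-sum over $v_u \in V_u^{\textup{hi}}$ is at most $\sum_{v_{c_i}} |N(v_{c_i}) \cap V_u^{\textup{hi}}| \cdot |N(v_{c_i}) \cap V_w| \le |V_u^{\textup{hi}}| \cdot 2m = \Oh(m^{2-1/k})$ by pulling out the bound $|N(v_{c_i}) \cap V_u^{\textup{hi}}| \le |V_u^{\textup{hi}}|$ and using $\sum_{v_{c_i}} |N(v_{c_i}) \cap V_w| \le 2m$; the symmetric sub-sum is bounded the same way. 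The two partial encodings cover every $K_{2,k}$-subgraph of $G$ exactly once, and each can be built in time proportional to its size by straightforward enumeration, yielding total running time $\Oh(m^{2-1/k})$. I do not foresee a significant obstacle: the crux is recognizing that the two-bag ``star'' tree decomposition lets AM-GM collapse the $k$-fold product in case (i), while case (ii) is dispatched by standard degree-splitting counting.
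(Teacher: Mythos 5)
Your proof is correct and takes essentially the same approach as the paper: degree-split the two small-side parts at threshold $m^{1/k}$, use the two-bag decomposition $\{u, c_1, \ldots, c_k\}$, $\{w, c_1, \ldots, c_k\}$ when both small-side nodes are low-degree, and a decomposition keeping both small-side nodes in every bag when some small-side node is high-degree. The only differences are cosmetic: you merge the paper's two high-degree sub-cases into a single partial encoding, and you invoke AM-GM for the low-degree size bound where the paper uses a more direct enumeration argument ($\Oh(m)$ choices for the edge $(v_u, v_{c_1})$ times $\Oh(m^{(k-1)/k})$ for the remaining $k-1$ neighbors) — the latter has the minor advantage of immediately doubling as the construction algorithm.
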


\begin{proof}
    Denote the nodes in the first part of the bipartite graph as $a_1$, $a_2$, and the nodes in the second part as $b_1$, $b_2$, $\ldots$, $b_k$.

    Consider some $H$-subgraph $\bv$ of $G$. We create several partial $H$-encodings of $G$ and claim that exactly one of them encodes $\bv$.

    We split the nodes in $V_{a_1}$ and $V_{a_2}$ into the ones that have degrees smaller than $m^{1 / k}$ (low-degree nodes, sets $V_{a_1}^{\textup{lo}}$ and $V_{a_2}^{\textup{lo}}$) and the ones that have degrees at least $m^{1 / k}$ (high-degree nodes, sets $V_{a_1}^{\textup{hi}}$ and $V_{a_2}^{\textup{hi}}$).
    For each of the four choices of high or low degrees in parts $V_{a_1}$ and $V_{a_2}$, we construct a partial $H$-encoding that encodes exactly the $H$-subgraphs that satisfy these degree constraints.
    As $\bv$ satisfies exactly one of these four cases, it will be encoded exactly once.
    Once we fix for $V_{a_1}$ and $V_{a_2}$ whether we consider a low-degree or a high-degree node in that part, we filter out all other nodes. We distinguish three cases.

    \begin{itemize}
        \item \emph{Case 1: $v_{a_1}$ has high degree.}
            In this case Observation \ref{high-degree-small-cnt-nodes} implies $|V_{a_1}^{\textup{hi}}| \le m^{1 - 1/k}$.
            Since $H - \{a_1\}$ is a tree, it has a tree decomposition where each bag consists of two adjacent nodes.
            Adding $a_1$ to every bag, we obtain a tree decomposition of $H$.
            All of these bags can be materialized in time $\Oh(m^{2 - 1 / k}$ because there are $\Oh(m)$ choices for the two adjacent nodes and $\Oh(m^{1 - 1 / k})$ choices for a node from $V_{a_1}^{\hi}$.
        \item \emph{Case 1': $v_{a_1}$ has low degree and $v_{a_2}$ has high degree.} This case is symmetric to the previous one.
        \item \emph{Case 2: $v_{a_1}$ and $v_{a_2}$ have low degrees.}
        In this case we create a tree decomposition of $H$ consisting of two adjacent bags: $B_1 \coloneqq \{a_1, b_1, b_2, \ldots, b_k\}$ and $B_2 \coloneqq \{a_2, b_1, b_2, \ldots, b_k\}$.
        It is easy to see that it is indeed a valid tree decomposition of $H$.
        To materialize $B_1$, note that there are $\Oh(m)$ choices for a pairs of adjacent nodes from $V_{a_1}^{\lo}$ and $V_{b_1}$, and then for each $i \in [2, k]$ there are $\Oh(m^{1 / k})$ choices for a neighbor in $V_{b_i}$ of the chosen node in $V_{a_1}^{\lo}$, because all nodes in $V_{a_1}^{\lo}$ have degree at most $m^{1 / k}$. Thus, we materialize $B_1$ in time $\Oh(m^{1 + (k - 1) \cdot 1 / k}) = \Oh(m^{2 - 1 / k})$. Bag $B_2$ is materialized analogously.
    \end{itemize}

    The cases are mutually exclusive and cover all $H$-subgraphs of $G$.
\end{proof}

\subsection{Algorithm for $P(\alpha, \gamma \times 2)$}
\label{sec:upper-bound-P-graphs-alpha-gamma}

\begin{lemma}[Biased Cycle] \label{lm:biased-cycle-algorithm}
    Let $H$ be a $(k+1)$-cycle on nodes $d_0, d_1, \ldots, d_{k}$. Let $H$ be a host graph, such that between all adjacent parts except $V_{d_0}$ and $V_{d_{k}}$ there are at most $m$ blue edges, and between $V_{d_0}$ and $V_{d_{k}}$ there are $\Oh(m^{2 - \ell f})$ red edges for some $\ell \in \NN$ and $f \in (0, 1)$. Say $\jj_i \in \NN$ for $i \in [0, k]$ are such that all nodes in part $V_{d_i}$ have at most $m^{\jj_i f}$ blue adjacent edges. Furthermore, if $\jj_i > 1$, then $|V_{d_i}| \le m^{1 - f}$. We call such a graph $G$ \emph{an instance of Biased Cycle}. Then there is an algorithm that solves \Henciso on $G$ in the following time with a single partial $H$-encoding.

    \begin{itemize}
        \item $\Oh(m^{2 - f} + m^{1 + (\jj_{\ell \dd k - \ell} + \ell - 1) f})$ if $\ell \le \frac{k + 1}{2}$, where $\jj_{x \dd y} \coloneqq \sum_{x \le i \le y} \jj_i$ for any $x, y \in [0, k]$.\footnote{Note that if $\ell = \frac{k + 1}{2}$, we have $\jj_{\ell \dd k - \ell} = \jj_{\ell \dd \ell - 1} = 0$.}
        \item $\Oh(m^{2 - f})$ if $\ell > \frac{k + 1}{2}$.
    \end{itemize}
    Furthermore, if $\ell \le \frac{k + 1}{2}$ and $\jj_r > 1$ for some $r \in \{0, \ldots, \ell - 1, k - \ell + 1, \ldots, k\}$, the algorithm works in time $\Oh(m^{2 - f})$.
    We call such an instance of Biased Cycle \emph{$\ell$-reachable}.

\end{lemma}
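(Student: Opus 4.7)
I will construct a single partial $H$-encoding whose tree decomposition has two conceptual parts: a \emph{middle} bag $B_M = \{d_{\ell - 1}, d_\ell, \ldots, d_{k - \ell + 1}\}$ covering the path between the two wing boundaries, and one or more \emph{red-side} bags that carry the red edge together with the two length-$(\ell-1)$ wings. The middle bag $B_M$ (which degenerates to a single edge when $\ell > (k+1)/2$) is materialized by starting from an arbitrary blue middle edge and extending the path in both directions, multiplying by $m^{\jj_i f}$ at each extension step for $i \in \{\ell, \ell+1, \ldots, k-\ell\}$. This yields a submaterialization of size $O(m^{1 + \jj_{\ell \dd k - \ell} f})$, matching the second term of the target bound.

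\textbf{Red-side bags, case $\ell \le (k+1)/2$ with every wing $\jj_r \le 1$.} I use two wing bags $B_L = \{d_0, d_1, \ldots, d_{\ell-1}, d_k\}$ and $B_R = \{d_0, d_k, d_{k-\ell+1}, \ldots, d_{k-1}\}$, each of size $\ell + 1$ and each containing both endpoints of the red edge so that the red edge can be used as the anchor of the materialization. Enumerating the $O(m^{2-\ell f})$ red edges $(v_0, v_k)$ and then extending the wing by $\ell - 1$ blue steps (each step multiplying by $m^f$ since the corresponding $\jj_r \le 1$) gives submaterializations of size $O(m^{2 - \ell f + (\ell - 1)f}) = O(m^{2 - f})$. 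A small connector bag $\{d_0, d_{\ell-1}, d_{k-\ell+1}, d_k\}$ glues $B_L, B_R, B_M$ into a valid tree decomposition; crucially, because both $B_L$ and $B_R$ already enforce the red-edge constraint, the projections joining across $\{d_0, d_k\}$ cannot inflate beyond $O(m^{2-f})$.

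\textbf{Other cases.} If $\ell > (k+1)/2$ the middle is trivial and both wings overlap, so every vertex lies within $\ell - 1$ blue steps of $d_0$ or $d_k$. A chain of small bags each containing at least one of $d_0, d_k$ (together with the red-edge triangle bags) then suffices, and each such bag is materialized by enumerating red edges and extending at most $\ell - 1$ steps, giving $O(m^{2 - f})$. If some wing index $r$ has $\jj_r > 1$ then by hypothesis $|V_{d_r}| \le m^{1-f}$; I pivot on $v_r$, paying $m^{1-f}$ outer iterations, and extend toward both the red edge and toward the middle boundary, using the red-edge rareness together with the blue-degree bounds to keep the total within $O(m^{2-f})$.

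\textbf{Main obstacle.} The central technical point is to verify, in the first (dominant) case, that the join across the connector bag respects the $O(m^{2-f})$ budget: a priori, summing the product of left-wing and right-wing extension counts over red edges could be as large as $m^{2 + (\ell - 2) f}$. The claim holds because both wing bags already project to subsets of red edges, so the join on $\{d_0, d_k\}$ is a ``filter'' rather than a cross product, and no bag needs to simultaneously extend both wings from a common red edge. Making this rigorous — together with the careful bookkeeping of shared vertices in each of the three parameter regimes so that a single tree decomposition covers all cases — is the bulk of the proof; once those bag sizes are verified, summing them gives the stated bound.
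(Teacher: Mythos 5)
There is a genuine gap in the central case ($\ell \le (k+1)/2$ with all wing degrees $\jj_r = 1$). The three bags $B_L, B_M, B_R$ do not by themselves form a tree decomposition (for instance $d_0 \in B_L \cap B_R$ but $d_0 \notin B_M$), so the connector bag $B_C = \{d_0, d_{\ell-1}, d_{k-\ell+1}, d_k\}$ is unavoidable, and it must carry its own submaterialization $S_{B_C}$ containing $\prj{\bv}{B_C}$ for \emph{every} $H$-subgraph $\bv$ of $G$. Your claim that the join on $\{d_0,d_k\}$ is a ``filter rather than a cross product'' and that ``no bag needs to simultaneously extend both wings from a common red edge'' is false: $B_C$ is precisely a bag that records both wing endpoints of a common red edge, and the number of valid projections can be $\Theta\!\left(m^{2-\ell f}\cdot m^{(\ell-1)f}\cdot m^{(\ell-1)f}\right) = \Theta\!\left(m^{2+(\ell-2)f}\right)$, which for $\ell\ge 2$ strictly exceeds $m^{2-f}$ and, when $\jj_{\ell\dd k-\ell}$ is small, also the claimed bound. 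Concretely, take the $4$-cycle ($k=3$), $\ell=2$, $f=1/2$, all parts of size $m^{1/2}$, all blue layers complete bipartite (so $\jj_i=1$ for all $i$), and $\Theta(m)=\Theta(m^{2-\ell f})$ red edges; then $B_C=\{d_0,d_1,d_2,d_3\}$ and every quadruple is an $H$-subgraph, so $|S_{B_C}|=\Theta(m^2)$ while the lemma promises $O(m^{3/2})$.

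The paper avoids the connector entirely by never separating the middle from a wing. In the analogous case it uses only two bags: $B_1=\{d_k,d_0,\ldots,d_{\ell-1}\}$ (red edge plus left wing, materialized from the $O(m^{2-\ell f})$ red edges with $\ell-1$ low-degree extensions, giving $O(m^{2-f})$) and $B_2=\{d_{\ell-1},d_\ell,\ldots,d_k\}$ (middle \emph{together with} the right wing, materialized from the $O(m)$ blue edges between $V_{d_{k-1}}$ and $V_{d_k}$ by walking leftward, giving $O(m^{1+\jj_{\ell\dd k-1}f})=O(m^{1+(\jj_{\ell\dd k-\ell}+\ell-1)f})$ since the right-wing $\jj$-values are $1$). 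Since $B_1\cap B_2=\{d_{\ell-1},d_k\}$, these two bags already form a valid tree decomposition. The key idea you miss is to anchor the ``middle plus right-wing'' bag at the far blue edge $\{d_{k-1},d_k\}$ rather than at the red edge: this absorbs the right-wing factor $m^{(\ell-1)f}$ without ever multiplying it against the left-wing factor over a shared red edge. Your sketches for the pivot-on-high-degree-vertex and $\ell>(k+1)/2$ cases are roughly in the spirit of the paper's Cases 1, $1'$, and 2, but the case you identify as dominant is exactly where your decomposition breaks.
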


\begin{proof}
    We consider four cases depending on given values $\jj$, and for each case present a single partial $H$-encoding of $G$ that encodes all $H$-subgraphs in $G$.

    \begin{itemize}
        \item \emph{Case 1: $\jj_r > 1$ for some $r \in [0, \min\{\ell - 1, k - 1\}]$.}
            Choose minimal such $r$. We create the following tree decomposition of $H$.
            Bag $B_0$ consists of nodes $d_k, d_0, d_1, \ldots, d_r$.
            Furthermore, we create bags $B_i \coloneqq \{d_r, d_i, d_{i + 1}\}$ for $i \in [r + 1, k - 1]$.
            We connect them in a line: $B_0 \edg B_{k - 1} \edg B_{k - 2} \edg \cdots \edg B_{r + 1}$.
            It is clear that this is a tree decomposition of $H$.
            Note that as $\jj_r > 1$, the statement of the lemma implies $|V_{d_r}| \le m^{1 - f}$.
            Hence, each bag $B_i$ for $i \in [r + 1, k - 1]$ can be materialized in time $\Oh(m^{2 - f})$ because there are $\Oh(m)$ choices for the two adjacent nodes from $V_{d_i}$ and $V_{d_{i + 1}}$ and $\Oh(m^{1 - f})$ choices for a node from $V_{d_r}$.
        To materialize $B_0$, note that there are $\Oh(m^{2 - \ell f})$ choices for an adjacent pair of nodes from $V_{d_0}$ and $V_{d_k}$, and for each $i \in [r]$ we pick a node from $V_{d_i}$ as one of the $\Oh(m^f)$ neighbors of the chosen node from $V_{d_{i - 1}}$.
        Thus, we materialize $B_0$ in time $\Oh(m^{2 - \ell f} \cdot m^{r f}) \le \Oh(m^{2 - f})$ as $r \le \ell - 1$.
        \item \emph{Case 1': $\jj_r = 1$ for all $r \in [0, \min\{\ell - 1, k - 1\}]$ and $\jj_r > 1$ for some $r \in [\max\{k - \ell + 1, 1\}, k]$.} This case is symmetric to the previous one.
        \item \emph{Case 2: $\jj_r = 1$ for all $r \in [0, \min\{\ell - 1, k - 1\}] \cup [\max\{k - \ell + 1, 1\}, k]$ and $\ell > \frac{k + 1}{2}$.}
            In this case $k \le 2 \ell - 2$. Hence, $\jj_r = 1$ for all $r \in [0, k]$.
            We create a tree decomposition of $H$ consisting of two bags: $B_1 \coloneqq \{d_k, d_0, d_1, \ldots, d_{\min\{\ell - 1, k - 1\}\}}$ and $B_2 \coloneqq \{d_0, d_k, d_{k - 1}, \ldots, d_{\max\{k - \ell + 1, 1\}\}}$.
            It is easy to see that it is a tree decomposition of $H$.
            To materialize $B_1$, note that there are $\Oh(m^{2 - \ell f})$ choices for an adjacent pair of nodes from $V_{d_0}$ and $V_{d_k}$, and then for each $i \in [\min\{\ell - 1, k - 1\}]$ there are $\Oh(m^f)$ choices to pick a node from $V_{d_i}$ as a neighbor of the chosen node from $V_{d_{i - 1}}$.
            Hence, we materialize $B_1$ in time $\Oh(m^{2 - \ell f} \cdot m^{\min\{\ell - 1, k - 1\} \cdot f}) \le \Oh(m^{2 - f})$.
            Bag $B_2$ can be materialized in time $\Oh(m^{2 - f})$ analogously.
        \item \emph{Case 3: $\jj_r = 1$ for all $r \in [0, \min\{\ell - 1, k - 1\}] \cup [\max\{k - \ell + 1, 1\}, k]$ and $\ell \le \frac{k + 1}{2}$.}
            In this case $\min\{\ell - 1, k - 1\} = \ell - 1$ and $\max\{k - \ell + 1, 1\} = k - \ell + 1$.
            Furthermore, $\ell - 1 < k - \ell + 1$.
            We create a tree decomposition of $H$ consisting of two bags: $B_1 \coloneqq \{d_k, d_0, d_1, \ldots, d_{\ell - 1}\}$ and $B_2 \coloneqq \{d_{\ell - 1}, d_{\ell}, \ldots, d_k\}$.
            It is easy to see that it is a tree decomposition of $H$.
            We materialize $B_1$ in time $\Oh(m^{2 - f})$ the same way it was materialized in the previous case.
            To materialize $B_2$, note that there are $\Oh(m)$ choices for an adjacent pair of nodes from $V_{d_k}$ and $V_{d_{k - 1}}$, and then for each $i \in \{k - 2, k - 3, \ldots, \ell - 1\}$ there are at most $m^{\jj_{i + 1} f}$ choices to pick a node from $V_{d_i}$ as a neighbor of the chosen node from $V_{d_{i + 1}}$.
            Hence, we materialize $B_2$ in time $\Oh(m \cdot m^{\jj_{\ell \dd k - 1} f})$.
            As we know that $\jj_{k - \ell + 1} = \jj_{k - \ell + 2} = \cdots = \jj_{k - 1} = 1$, we get that $\Oh(m \cdot m^{\jj_{\ell \dd k - 1} f}) = \Oh(m^{1 + (\jj_{\ell \dd k - \ell} + \ell - 1) f})$.
            Hence, we materialize both $B_1$ and $B_2$ in time $\Oh(m^{2 - f} + m^{1 + (\jj_{\ell \dd k - \ell} + \ell - 1) f})$.
    \end{itemize}

    Note that if $\ell > \frac{k + 1}{2}$, only Cases 1 and 2 appear, and if $G$ is an $\ell$-reachable instance of Biased Cycle, only Case 1 appears, and thus the time complexity is $\Oh(m^{2 - f})$.
\end{proof}

\begin{lemma} \label{lm:pa2c-ub}
    For $\alpha \ge 3$, $\gamma \ge 2$, and $H = P(\alpha, \gamma \times 2)$, $H$-subgraph isomorphism encoding can be solved in time $\Oh(m^{2 - 1 / (2 \gamma + \left\lfloor \frac{\alpha - 1}{2} \right\rfloor)})$.
\end{lemma}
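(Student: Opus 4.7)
The plan is to cast $H = P(\alpha, \gamma \times 2)$ as the $(\alpha+2)$-cycle $C = (a_0, a_1, \ldots, a_\alpha, c_1)$ enriched with $\gamma - 1$ further length-$2$ paths between $a_0$ and $a_\alpha$ through $c_2, \ldots, c_\gamma$, and then invoke the Biased Cycle framework of \cref{lm:biased-cycle-algorithm}. Setting $X := 2\gamma + \lfloor (\alpha-1)/2 \rfloor$, $f := 1/X$, and degree threshold $\Delta := m^f$, I would split each part $V_p$ into $V_p^{\lo}$ (degree $< \Delta$) and $V_p^{\hi}$ (degree $\ge \Delta$), using \cref{high-degree-small-cnt-nodes} to bound $|V_p^{\hi}| \le m^{1-f}$. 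Enumerating over the constantly many degree profiles an $H$-subgraph $\bv$ can have, I produce one partial $H$-encoding per profile and argue that each can be built in time $\Oh(m^{2-f})$.

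When an internal cycle node $v_{a_i}$ (with $1 \le i \le \alpha-1$) is declared high-degree, I include $a_i$ in every bag of a tree decomposition of $H - a_i$; since $|V_{a_i}^{\hi}| \le m^{1-f}$ and $H - a_i$ has a simple decomposition using two half-paths plus the parallel common-neighbor gadgets, each bag can be materialized in time $\Oh(m^{2-f})$ in the spirit of Case~1 of \cref{cycle-ub}. When some $v_{c_j}$ for $j \ge 2$ is high-degree, I handle it analogously, which inductively reduces to the pattern $P(\alpha, (\gamma-1) \times 2)$ attached at $a_0, a_\alpha$. When $v_{a_0}$ or $v_{a_\alpha}$ is high-degree, I branch on it directly and treat the residual as above.

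The main case is when every node of $\bv$ is low-degree. Here I prepare a \emph{red-edge set} $R \subseteq V_{a_0}^{\lo} \times V_{a_\alpha}^{\lo}$ consisting of pairs admitting a common neighbor in each part $V_{c_j}^{\lo}$ for $j \in [2, \gamma]$, together with the choices of those common neighbors. Because each $v_{c_j} \in V_{c_j}^{\lo}$ has degree at most $\Delta$, the number of length-$2$ paths through $V_{c_j}^{\lo}$ is at most $\sum_{v \in V_{c_j}^{\lo}} d(v)^2 \le \Delta \sum_v d(v) \le 2m\Delta = \Oh(m^{1+f})$, so the projection of $R$ onto $V_{a_0}^{\lo} \times V_{a_\alpha}^{\lo}$ has size $\Oh(m^{1+f}) \le \Oh(m^{2-\gamma f})$, where the last inequality uses $X \ge 2\gamma$ and hence $(\gamma+1)f \le 1$. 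I then invoke \cref{lm:biased-cycle-algorithm} on the cycle $C$ with $k = \alpha + 1$, $\ell = \gamma$, and red edges $R$; since every $\jj_i$ equals $1$ in the all-low-degree regime, the Biased Cycle bound collapses to $\Oh(m^{2-f}) = \Oh(m^{2 - 1/X})$. Finally, I attach $\gamma - 1$ extra bags $\{a_0, a_\alpha, c_j\}$ to the cycle encoding to record the common-neighbor certificates stored alongside $R$.

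The main obstacle will be handling the mixed high/low degree profiles uniformly, particularly verifying that whenever a Biased-Cycle parameter $\jj_i$ exceeds $1$ the corresponding part satisfies $|V_{d_i}| \le m^{1-f}$, and that the specific value $\ell = \gamma$ correctly unites the cycle contribution $\lfloor(\alpha-1)/2\rfloor$ (obtained by walking half of $C$ via low-degree steps) with the common-neighbor contribution $2\gamma$. The algebraic identity $X = 2\gamma + \lfloor(\alpha-1)/2\rfloor$ is precisely tuned so that these two contributions cancel out against the exponent delivered by the Biased Cycle algorithm.
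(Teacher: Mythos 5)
Your proposal does not contain the paper's central new ingredient for this lemma, and as a result the hard case is left unresolved.

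\textbf{The fundamental gap.} You split $V_{c_1},\ldots,V_{c_\gamma}$ by ordinary degree, using the low-degree bound to argue that the red-edge set $R \subseteq V_{a_0}^{\lo} \times V_{a_\alpha}^{\lo}$ has only $\Oh(m^{1+f})$ elements. But this creates a case — some $v_{c_j}$ has high degree — that you cannot close: your claim that this ``inductively reduces to $P(\alpha,(\gamma-1)\times 2)$'' does not give the target bound. The encoding time for $P(\alpha,(\gamma-1)\times 2)$ is $\Oh(m^{2-1/(X-2)})$, which already exceeds $\Oh(m^{2-1/X})$, so there is no budget left to attach $c_j$; and adding $c_j$ (high degree, so up to $m^{1-f}$ choices) to bags of size $3$ such as $\{a_0,a_\alpha,c_{j'}\}$ gives bags costing $\Omega(m^2)$ to materialize. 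The paper sidesteps all this by \emph{never degree-splitting $V_{c_j}$}. Instead it performs \emph{hyper-degree splitting}: it materializes the bag $\{a_0,c_1,\ldots,c_\gamma\}$ (choosing the $c_j$'s as the $\le m^f$ neighbors of a low-degree $v_{a_0}$), then classifies tuples $(v_{c_1},\ldots,v_{c_\gamma})$ by how many common neighbors they have in $V_{a_0}^{\lo}$. There are at most $m^{1-f}$ high-degree tuples, and low-degree tuples have at most $m^{\gamma f}$ common neighbors, which is what makes both sub-cases tractable. Nothing in your proposal plays this role, and without it the case distinction does not close.

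\textbf{Secondary issues.} (i) You pass $\ell=\gamma$ to \cref{lm:biased-cycle-algorithm}, but the bound $|R|=\Oh(m^{1+f})=\Oh(m^{2-(X-1)f})$ supports $\ell=X-1$, and you need that: with $\ell=\gamma$ and all $\jj_i=1$ the biased-cycle cost is $\Oh(m^{1+(k-\gamma)f})$, which exceeds $m^{2-f}$ once $\alpha$ is even moderately large relative to $\gamma$ (e.g.\ $\alpha=10,\gamma=2$). With the correct $\ell=X-1$ one gets $\ell>(k+1)/2$ for all $\gamma\ge 2$, so the biased cycle runs in $\Oh(m^{2-f})$ regardless of the $\jj$ profile on $a_1,\ldots,a_{\alpha-1}$. (ii) Your biased-cycle setup is internally inconsistent: you declare $C=(a_0,\ldots,a_\alpha,c_1)$ with $k=\alpha+1$, for which the red edges must go between $V_{d_0}=V_{a_0}$ and $V_{d_k}=V_{c_1}$, yet you place $R$ between $V_{a_0}$ and $V_{a_\alpha}$; you should take the cycle to be $(a_0,\ldots,a_\alpha)$ with $k=\alpha$ and the artificial red edges closing it. (iii) Your handling of a high-degree \emph{interior} $v_{a_i}$ by ``adding $a_i$ to every bag of a tree decomposition of $H-a_i$'' fails because $H-a_i$ is not a tree (it contains $K_{2,\gamma}$), so the bags have size $\ge 3$, and after appending $a_i$ the resulting size-$4$ bags $\{a_i,a_0,a_\alpha,c_j\}$ cannot be materialized in $\Oh(m^{2-f})$; this case must instead be absorbed into the biased cycle (which, with $\ell=X-1$, handles it), or treated via hyper-degree splitting as the paper does in its Case~3.
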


\begin{proof}
    We prove the lemma for even values of $\alpha$ because for odd values of $\alpha$, the desired time complexity for $P(\alpha, \gamma \times 2)$ is the same as for $P(\alpha + 1, \gamma \times 2)$, and thus by \cref{induced-minor-encoding}, the claim follows from the even case.

    Define $\ell \coloneqq \frac{\alpha - 2}{2}$ (so $\alpha = 2\ell + 2$) and $f \coloneqq \frac{1}{2 \gamma + \ell}$.
    We need to solve the problem in time $\Oh(m^{2-f}) = \Oh(m^{1 + (2\gamma + \ell - 1) f})$.

    Consider some $H$-subgraph $\bv$ of $G$.
    We create several partial $H$-encodings of $G$ and claim that exactly one of them encodes $\bv$.
    
    We split the nodes in $V_{a_0}, V_{a_1}, V_{a_2}, \ldots, V_{a_{\alpha}}$ of $G$ into the ones that have degrees smaller than $m^f$ (low-degree nodes, sets $V_{a_i}^{\textup{lo}}$) and the ones that have degrees at least $m^f$ (high-degree nodes, sets $V_{a_i}^{\textup{hi}}$).
    For each of the $2^{\alpha + 1} = \Oh(1)$ choices of high and low degrees, we construct several partial $H$-encodings that encode exactly the $H$-subgraphs that satisfy these degree constraints.
    As $\bv$ satisfies exactly one of these $2^{\alpha + 1}$ cases, it will be encoded exactly once.
    Once we fix for each of the parts $V_{a_0} \ldots, V_{a_{\alpha}}$ whether we consider a low-degree or a high-degree node in that part, we filter out all other nodes.
    By a slight abuse of notation we still call the new filtered graph $G$.

    \medskip

    We consider four cases.
    \begin{itemize}
        \item \emph{Case 1: $v_{a_0}$ has high degree.} In this case $|V_{a_0}^{\hi}| \le m^{1 - f}$ due to \cref{high-degree-small-cnt-nodes}.
            Since $H - \{a_0\}$ is a tree, it has a tree decomposition where every bag consists of exactly two adjacent nodes.
            Adding $a_0$ to all these bags, we obtain a tree decomposition of $H$.
            Each one of these bags can be materialized in time $\Oh(m^{2-f})$ because there are $\Oh(m)$ choices for the two adjacent nodes and $|V_{a_0}^{\hi}| = \Oh(m^{1 - f})$ choices for $v_{a_0}$.
        \item \emph{Case 1': $v_{a_0}$ has low degree, and $v_{a_{\alpha}}$ has high degree.}
            This case is symmetric to the previous one.
        \item \emph{Case 2: $v_{a_0}, v_{a_1}, \ldots, v_{a_{\ell}}, v_{a_{\alpha - \ell}}, \ldots, v_{a_{\alpha - 1}}$, and $v_{a_{\alpha}}$ have low degrees.}
            As $\alpha - \ell = \ell + 2$, all nodes $v_{a_i}$ except for $v_{a_{\ell + 1}}$ have low degrees.
            We create a tree decomposition consisting of two adjacent bags: $B_1 \coloneqq \{a_0, a_1, \ldots, a_{\ell + 1}, c_1, c_2, \ldots, c_{\gamma}\}$ and $B_2 \coloneqq \{a_{\ell + 1}, a_{\ell + 2}, \ldots, a_{\alpha}, c_1, c_2,\allowbreak \ldots, c_{\gamma}\}$.
        It is easy to see that it is a tree decomposition of $H$ (see \cref{Pac-upper-bound-2}).
        To materialize $B_1$, note that there are $\Oh(m)$ choices for an adjacent pair of nodes from $V_{a_0}^{\lo}$ and $V_{a_1}^{\lo}$. Then, for each $c_i$, for $i \in [\gamma]$, there are $\Oh(m^f)$ choices to pick a node in $V_{c_i}$ as a neighbor of the chosen node in $V_{a_0}^{\lo}$.
        Finally, for each $i \in [2, \ell]$, there are $\Oh(m^f)$ choices to pick a node in $V_{a_i}^{\lo}$ as a neighbor of the chosen node in $V_{a_{i - 1}}^{\lo}$, and there are $\Oh(m^f)$ choices to pick a node in $V_{a_{\ell+1}}$ as a neighbor of the chosen node in $V_{a_\ell}^{\lo}$.
        In total, to materialize $B_1$, we use $\Oh(m^{1 + \gamma f + \ell f}) \le \Oh(m^{1 + (2\gamma + \ell - 1) f}) = \Oh(m^{2-f})$ time.
        We materialize $B_2$ analogously.
    \begin{figure}
    \begin{center}
        \includegraphics[scale=1.0]{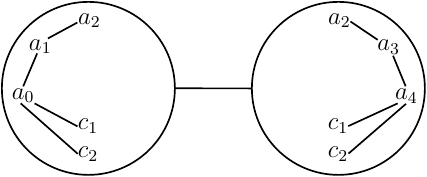}
    \end{center}

    \caption{An example of a tree decomposition of $P(\alpha, \gamma \times 2)$ for Case 2 of \cref{lm:pa2c-ub} for $\alpha = 4$ and $\gamma = 2$.}
    \label{Pac-upper-bound-2}
    \end{figure}
        \item \emph{Case 3: $v_{a_0}$ and $v_{a_{\alpha}}$ have low degrees, and at least one of $v_{a_1}, \ldots, v_{a_{\ell}}, v_{a_{\alpha - \ell}}, \ldots, v_{a_{\alpha - 1}}$ has a high degree.}
            As the degrees of nodes in $V_{a_0}^{\lo}$ are smaller than $m^f$, in time $\Oh(m^{1 + (\gamma - 1) f})$ we can materialize a bag $B_{a_0 \to c} \coloneqq \{a_0, c_1, c_2, \ldots, c_{\gamma}\}$: there are $\Oh(m)$ choices for an adjacent pair of nodes from $V_{a_0}^{\lo}$ and $V_{c_1}$, and there are $\Oh(m^f)$ choices to pick a node in $V_{c_i}$ as a neighbor of the chosen node in $V_{a_0}^{\lo}$, for each $i \in [2, \gamma]$.
            By using a binary search tree as a dictionary we can store for every tuple of nodes from parts $V_{c_1},V_{c_2}, \ldots, V_{c_{\gamma}}$ all their common neighbors in $V_{a_0}^{\lo}$ in time $\tOh(m^{1 + (\gamma - 1) f}) \le \Oh(m^{1 + (2\gamma + \ell - 1) f}) = \Oh(m^{2 - f})$.
            Note that $(v_{c_1}, \ldots, v_{c_{\gamma}})$ is one of the listed tuples as they have a common neighbor $v_{a_0} \in V_{a_0}^{\lo}$.
            Now we split the listed tuples of nodes into the ones that have less than $m^{\gamma f}$ common neighbors in $V_{a_0}^{\lo}$ (low-degree tuples) and the ones that have at least $m^{\gamma f}$ common neighbors (high-degree tuples).
            Similarly to nodes from parts $V_{a_i}$, we deal with low- and high-degree tuples separately.
            We distinguish two cases.
    \begin{itemize}
        \item \emph{Case 3.1: $(v_{c_1}, \ldots, v_{c_{\gamma}})$ is a high-degree tuple.}
            The total number of copies of $B_{a_0 \to c}$ that we generated is at most $m^{1 + (\gamma-1)f}$.
            Hence, there are at most $m^{1-f}$ high-degree tuples of nodes from parts $V_{c_1}, V_{c_2}, \ldots, V_{c_{\gamma}}$.
            Consider the graph $H - \{c_1, \ldots, c_{\gamma}\}$.
            Since it is a tree, it has a tree decomposition where each bag consists of two adjacent nodes.
            Adding $c_1, c_2, \ldots, c_{\gamma}$ to every bag, we obtain a tree decomposition of $H$ (see \cref{Pac-upper-bound-1}).
            We can materialize each bag in time $\Oh(m^{2-f})$, because there are $\Oh(m)$ choices for the two adjacent nodes and at most $m^{1 - f}$ choices for a high-degree tuple from $V_{c_1} \times V_{c_2} \times \cdots \times V_{c_{\gamma}}$.
            Note that this partial $H$-encoding encodes exactly such $H$-subgraphs $\bv$, for which $(v_{c_1}, \ldots, v_{c_{\gamma}})$ is a high-degree tuple because for all bags, only high-degree tuples $(v_{c_1}, \ldots, v_{c_{\gamma}})$ are listed.
    \begin{figure}
    \begin{center}
        \includegraphics[scale=1.0]{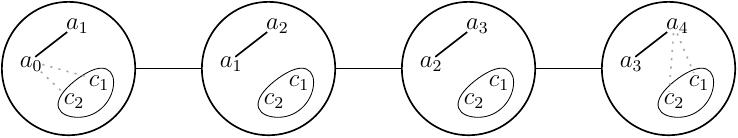}
    \end{center}

    \caption{An example of a tree decomposition of $P(\alpha, \gamma \times 2)$ for Case 3.1 of \cref{lm:pa2c-ub} for $\alpha = 4$ and $\gamma = 2$.}
    \label{Pac-upper-bound-1}
    \end{figure}

        \item \emph{Case 3.2: $(v_{c_1}, \ldots, v_{c_{\gamma}})$ is a low-degree tuple.}
            In time $\Oh(m^{1 + (\gamma - 1) f})$ we materialize a bag $B_{a_{\alpha} \to c} \coloneqq \{a_{\alpha}, c_1, c_2, \ldots, c_{\gamma} \}$ analogously to the materialization of $B_{a_0 \to c}$.
            If the chosen nodes from parts $V_{c_1}, V_{c_2}, \ldots, V_{c_{\gamma}}$ form a low-degree tuple, we spend time $\Oh(m^{\gamma f})$ to materialize all their common neighbors in $V_{a_0}$ by accessing the dictionary.
            This way we create a submaterialization $S_{\textup{down}}$ of $B_{\textup{down}} \coloneqq B_{a_{\alpha} \to c} \cup \{a_0\}$ in time $\Oh(m^{1 + (\gamma - 1) f} \cdot (\log m + m^{\gamma f})) = \Oh(m^{1 + (2\gamma-1) f}) = \Oh(m^{2 - (\ell + 1) f})$.
            Note that for an $H$-subgraph $\bv$, we have that $(v_{c_1}, \ldots, v_{c_{\gamma}})$ is a low-degree tuple if and only if $\prj{\bv}{B_{\textup{down}}} \in S_{\textup{down}}$.
        Thus, it remains to attach a decomposition of the path $a$ to $B_{\textup{down}}$.
        We attach a bag $B_{\textup{ends}} \coloneqq \{a_0, a_{\alpha}\}$ to $B_{\textup{down}}$ and create its submaterialization $S_{\textup{ends}}$ as a set of projections of the elements of $S_{\textup{down}}$ onto $B_{\textup{ends}}$.
        This takes time $\tOh(|S_{\textup{down}}|) = \tOh(m^{2 - (\ell + 1) f}) \le \tOh(m^{2 - 2f}) \le \Oh(m^{2 - f})$, and we have $|S_{\textup{ends}}| \le |S_{\textup{down}}| = \Oh(m^{2 - (\ell + 1) f})$.
        We consider $V_{a_0}, \ldots, V_{a_{\alpha}}$ as an instance of Biased Cycle, where we think of edges of $G$ between $V_{a_{i - 1}}$ and $V_{a_i}$ for $i \in [\alpha]$ as the blue edges, and we think of $S_{\textup{ends}}$ as the red edges between $V_{a_0}$ and $V_{a_{\alpha}}$.
        The values $\jj_i$ are defined as 1 if we pick low-degree nodes in $V_{a_i}$, and as $1 / f$ if we pick high-degree nodes in $V_{a_i}$.
        Since at least one of $v_{a_1}, \ldots, v_{a_{\ell}}, v_{a_{\alpha - \ell}}, \ldots, v_{a_{\alpha - 1}}$ has high degree in this case and $\ell + 1 \le \frac{\alpha + 1}{2}$, such an instance of Biased Cycle is $(\ell + 1)$-reachable.
        Thus, we can apply \cref{lm:biased-cycle-algorithm} to create a full encoding of this Biased Cycle in time $\Oh(m^{2 - f})$.
        As there is an edge $\{a_0, a_{\alpha}\}$ in this instance of Biased Cycle, we can attach the resulting tree decomposition with a partial encoding to $B_{\textup{ends}}$ by a bag containing both $a_0$ and $a_{\alpha}$.
        Hence, we create a partial $H$-encoding of $G$.
        Note that this partial $H$-encoding encodes exactly such $H$-subgraphs $\bv$, for which $(v_{c_1}, \ldots, v_{c_{\gamma}})$ is a low-degree tuple, because for $B_{\textup{ends}}$ only low-degree tuples $(v_{c_1}, \ldots, v_{c_{\gamma}})$ are listed.
    \end{itemize}
    \end{itemize}

    \medskip

    We note that the cases are mutually exclusive and cover all $H$-subgraphs $\bv$ in $G$.
    Hence, we create a full $H$-encoding of $G$.
\end{proof}

\subsection{Algorithm for $P(\alpha, \beta, \gamma \times 2)$}
\label{sec:upper-bound-P-graphs-alpha-beta-gamma}

\begin{lemma} \label{lm:pabc-ub}
    For $\alpha \ge \beta \ge 3$, $\gamma \ge 1$, and $H = P(\alpha, \beta, \gamma \times 2)$, \Henciso can be solved in time $\Oh(m^{2 - \frac{1}{\funcc(\alpha, \beta, \gamma)}})$.
\end{lemma}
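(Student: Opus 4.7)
The plan is to extend the framework from \cref{lm:pa2c-ub} (the $\beta=2$ case) to general $P(\alpha,\beta,\gamma\times 2)$, using degree splitting, hyper-degree splitting on tuples involving the $c$-nodes, and ultimately the Biased Cycle algorithm of \cref{lm:biased-cycle-algorithm} applied to the $(\alpha+\beta)$-cycle formed by the $a$- and $b$-paths. Fix $f := 1/\funcc(\alpha,\beta,\gamma)$; the goal is a full $H$-encoding in time $\Oh(m^{2-f})$. First I would perform standard degree splitting, partitioning each part into low-degree ($<m^{f}$) and high-degree ($\ge m^{f}$) nodes and handling each of the $\Oh(1)$ configurations with its own partial $H$-encoding. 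If an endpoint $v_{a_0}$ or $v_{a_\alpha}$ is high-degree, we are in the easy case: at most $m^{1-f}$ such nodes exist by \cref{high-degree-small-cnt-nodes}, and since removing an endpoint from $H$ yields a tree, we take an edge-bag tree decomposition of that tree and adjoin the endpoint to every bag, materializing each bag in time $\Oh(m \cdot m^{1-f}) = \Oh(m^{2-f})$.

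Next, in the core case both endpoints are low-degree. Here I would materialize the bag $\{a_0, c_1, \dots, c_\gamma\}$ by iterating over nodes $v_{a_0} \in V_{a_0}^{\lo}$ and picking $\gamma$ neighbors in time $\Oh(m^{1+(\gamma-1)f})$, and store the result in a dictionary indexed by the $c$-tuple. Then I would apply hyper-degree splitting to the tuples $(v_{c_1},\dots,v_{c_\gamma})$: call a tuple \emph{high-hyper-degree} if it has at least $m^{\gamma f}$ common neighbors in $V_{a_0}^{\lo}$, and \emph{low-hyper-degree} otherwise. In the high-hyper-degree branch there are only $\Oh(m^{1-f})$ such tuples, so we attach the whole $c$-tuple to every bag of an edge-bag tree decomposition of $H[\{a_0,\dots,a_\alpha,b_1,\dots,b_{\beta-1}\}]$ (a cycle, decomposed as in \cref{cycle-ub}), again obtaining time $\Oh(m^{2-f})$. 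In the low-hyper-degree branch I would build an enlarged bag $B_{\textup{down}} := \{a_0, a_\alpha, c_1, \dots, c_\gamma\}$ of size $\Oh(m^{1+(2\gamma-1)f})$ by, for each low-hyper-degree $c$-tuple, listing its $\le m^{\gamma f}$ common neighbors in $V_{a_0}^{\lo}$ and analogously for $V_{a_\alpha}^{\lo}$, then projecting onto $\bends := \{a_0,a_\alpha\}$.

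The set $S_{\bends}$ thus produced plays the role of the red edges in an instance of Biased Cycle on the $(\alpha+\beta)$-cycle $a_0\edg a_1\edg\cdots\edg a_\alpha\edg b_{\beta-1}\edg\cdots\edg b_1\edg a_0$: the blue edges are the actual edges of $G$ along the $a$- and $b$-paths, and the values $\jj_i$ are $1$ for low-degree parts and $1/f$ for high-degree parts. To conclude with \cref{lm:biased-cycle-algorithm} outputting time $\Oh(m^{2-f})$, we need to choose a value $\ell$ such that (i) $|S_{\bends}| = \Oh(m^{2-\ell f})$, which constrains $\ell$ by the $c$-side cost, and (ii) the instance is $\ell$-reachable, which is an inequality on the sum of the $\jj_i$'s around the cycle. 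For multiple sub-cases (depending on which of the $a$-path or $b$-path nodes are high-degree) one may need a second level of hyper-degree splitting, this time on tuples anchored at the opposite endpoint $a_\alpha$ or spanning short initial segments of the $b$-path, before invoking Biased Cycle on the remainder.

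The main obstacle is the eight-way case analysis of \cref{def:funcc}: each case fixes a different threshold $\funcc(\alpha,\beta,\gamma)$ and hence a different $f$, and one must exhibit a specific hyper-degree splitting scheme plus choice of $\ell$ that simultaneously (a) makes the $c$-side materialization fit in $\Oh(m^{2-f})$, (b) yields an $S_{\bends}$ small enough to serve as red edges, and (c) certifies $\ell$-reachability of the resulting Biased Cycle given the parities and magnitude constraints $2\beta \lessgtr \alpha+4\gamma+6$, $3\beta\lessgtr \alpha+6\gamma+8$, $\beta\lessgtr 2\gamma+3$, etc. The algebraic identities that align all three budgets are exactly the ones consolidated in \cref{sec:algebraic-lemmas}, and the concrete example $P(9,2,2)$ worked out in \cref{sec:p922-example} should serve as the guiding template for the full case analysis.
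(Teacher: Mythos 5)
Your high-level plan -- degree splitting, then hyper-degree splitting on $c$-tuples, then Biased Cycle -- is the right family of ideas, but there are several concrete gaps that would prevent it from actually achieving the exponent $2 - 1/\funcc(\alpha,\beta,\gamma)$ in all eight cases.

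First, your Biased Cycle instance is structurally wrong. You propose applying \cref{lm:biased-cycle-algorithm} to the full $(\alpha+\beta)$-cycle $a_0\edg\cdots\edg a_\alpha\edg b_{\beta-1}\edg\cdots\edg b_1\edg a_0$, with $S_{\bends}$ on $\{a_0,a_\alpha\}$ as the red edges. But in a Biased Cycle on $(k+1)$ nodes $d_0,\ldots,d_k$, the red edges must sit on the edge $\{d_0,d_k\}$ of that cycle; $a_0$ and $a_\alpha$ are not adjacent in the $(\alpha+\beta)$-cycle, so this does not type-check. The paper instead splits the work: it first builds a partial $H'$-encoding for $H' := H - \{a_1,\ldots,a_{\alpha-1}\}$ (the $b$-path plus all $c$-nodes), producing a small materialization $S_{\textup{ends}}$ of $\{b_0,b_\beta\} = \{a_0,a_\alpha\}$; inside that step, the $b$-path is itself turned into a Biased Cycle using $S_{\textup{ends}}$ as the artificial short-cut edge. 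Then, separately, the $a$-path with $S_{\textup{ends}}$ as red edge is a second Biased Cycle instance. You need two applications, not one on a combined cycle.

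Second, your degree splitting is too coarse. You split each $V_{a_i}, V_{b_i}$ into low/high around a single threshold $m^f$, but the paper performs full degree uniformization into $\funcc(\alpha,\beta,\gamma)$ levels ($[1,m^f), [m^f,m^{2f}),\ldots$). The Biased Cycle running time bound $\Oh(m^{1+(\jj_{\ell\dd k-\ell}+\ell-1)f})$ and the $\ell$-reachability condition depend on the precise per-node exponents $\jj_i$, not a binary distinction; moreover the hyper-degree threshold and the eventual choice of $\ell$ are driven by $\jmax$, which only makes sense with the fine uniformization.

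Third, your hyper-degree threshold $m^{\gamma f}$ (borrowed from the $\beta=2$ case) is not correct in general: the paper's Option~1 uses $m^{1-(\jmax-\gamma-1)f}$, which tunes the split to the maximum observed degree level. And your ``second level of hyper-degree splitting'' is vaguer than what is actually needed: the paper's Option~2 is a genuinely different split, on the tuple $(v_{b_2},v_{c_1},\ldots,v_{c_\gamma})$ with threshold $m^{(\gamma+\jj_{b_1}+1)f}$, counting common neighbor \emph{pairs} in $\subp{b_0}\times\subp{b_1}$. The algorithm then chooses between Option~1 and Option~2 based on $\ell = \max\{k,k'\}$ with $k=\jmax-2\gamma$ and $k'=\fabc-(2\gamma+1+\jj_{b_1}+\jj_{b_{3\dd\beta-1}})$. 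Without this selection mechanism and without the five ``special cases'' the paper dispatches first (in particular Special Cases 2--5, which handle $\jmax\le 2\gamma+1$ and certain degree patterns via plain degree-splitting and do not reduce to the hyper-degree machinery), the budget analysis in (a)--(c) of your plan cannot be closed; the algebraic lemmas in \cref{sec:algebraic-lemmas} are stated precisely to verify $\ell$-reachability \emph{after} those reductions have been made.
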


\begin{proof}
    Let $f \coloneqq \frac{1}{\funcc(\alpha, \beta, \gamma)}$. We want to solve the problem in time $\Oh(m^{2-f}) = \Oh(m^{1 + (\funcc(\alpha, \beta, \gamma) - 1) f})$.

    Consider some $H$-subgraph $\bv$ of $G$.
    We create several partial $H$-encodings of $G$ and claim that exactly one of them encodes $\bv$.

    We split the nodes in $V_{a_0}, V_{a_1}, V_{a_2}, \ldots, V_{a_{\alpha}}, V_{b_1}, V_{b_2}, \ldots, V_{b_{\beta - 1}}$ of $G$ into groups called subparts by their degrees: degrees in $[1, m^f)$ (sets $V_{a_i}^1$ and $V_{b_i}^1$), in $[m^f, m^{2f})$ (sets $V_{a_i}^2$ and $V_{b_i}^2$), $\ldots$, in $[m^{(\funcc(\alpha, \beta, \gamma) - 1) \cdot f}, m]$ (sets $V_{a_i}^{\funcc(\alpha, \beta, \gamma)}$ and $V_{b_i}^{\funcc(\alpha, \beta, \gamma)}$).
    For each of the $\fabc^{\alpha + \beta} = \Oh(1)$ choices of subparts in parts $V_{a_i}$ and $V_{b_i}$, we construct several partial $H$-encodings that encode exactly the $H$-subgraphs that satisfy these degree constraints.
    As $\bv$ satisfies exactly one of these $\fabc^{\alpha + \beta}$ cases, it will be encoded exactly once.
    Let $\jj_d$ be such that $v_d$ belongs to $V_d^{\jj_d}$ for any $d \in \{a_i \mid i \in [0, \alpha] \} \cup \{b_i \mid i \in [\beta - 1] \}$.
    Note that all nodes in $V_d^{\jj_d}$ have degrees at most $m^{\jj_d f}$, and $|V_d^{\jj_d}| \le m^{1 - (\jj_d - 1) f}$ due to \cref{high-degree-small-cnt-nodes} as all nodes in $V_d^{\jj_d}$ have degrees at least $m^{(\jj_d - 1) f}$ in $G$.
    Denote $\jmax \coloneqq \max \{ \jj_d \mid d \in \{a_i \mid i \in [0, \alpha] \} \cup \{b_i \mid i \in [\beta - 1]\} \}$ and $\vmax \coloneqq \argmax \{ \jj_d \mid d \in \{a_i \mid i \in [0, \alpha] \} \cup \{b_i \mid i \in [\beta - 1]\} \}$.
    Furthermore, we denote $\jj_{a_{x \dd y}} \coloneqq \sum_{x \le i \le y} \jj_{a_i}$ for any $x, y \in [0, \alpha]$ and $\jj_{b_{x \dd y}} \coloneqq \sum_{x \le i \le y} \jj_{b_i}$ for any $x, y \in [0, \beta]$.
    Once we fix for each of the parts $V_{a_i}$ and $V_{b_i}$ to which subpart belongs the corresponding node of $\bv$, we filter out all other nodes.
    By a slight abuse of notation we still call the new filtered graph $G$.
    
    After the degree splitting, in some cases simple algorithms are already applicable. We first deal with some such cases.

\subsubsection{High-degree-low-degree Algorithms}

We consider seven special cases. The exact condition of each case includes negations of all previous cases, but we omit them for conciseness.

\paragraph*{Special Case 1: \boldmath$\jj_{a_0} \ge 2$.}

In this case $|V_{a_0}^{\jj_{a_0}}| \le m^{1 - f}$.
Since $H - \{a_0\}$ is a tree, it has a tree decomposition where every bag consists of exactly two adjacent nodes.
Adding $a_0$ to all these bags, we obtain a tree decomposition of $H$.
Each one of these bags can be materialized in time $\Oh(m^{2-f})$ because there are $\Oh(m)$ choices for the two adjacent nodes and $|V_{a_0}^{\jj_{a_0}}| \le m^{1 - f}$ choices for a node in~$V_{a_0}^{\jj_{a_0}}$.

\paragraph*{Special Case 1': \boldmath$\jj_{a_{\alpha}} \ge 2$.}

This case is symmetric to the previous one.

\smallskip

Hence, from now on we may assume $\jj_{a_0} = \jj_{a_{\alpha}} = 1$.

\paragraph*{Special Case 2: \boldmath$\jmax \le 2 \gamma + 1$ and $\alpha = \beta$.}

We create a tree decomposition of $H$ consisting of two adjacent bags:
\[B_1 \coloneqq \{a_0, a_1, \ldots, a_{\alpha - 1}, b_1, c_1, c_2, \ldots, c_{\gamma}\}\]
and
\[B_2 \coloneqq \{a_{\alpha - 1}, b_1, b_2, \ldots, b_{\beta - 1}, b_{\beta}, c_1, c_2, \ldots, c_{\gamma}\}.\]
It is easy to see that it is indeed a valid tree decomposition of $H$ (see \cref{Paac-upper-bound-all-small-degrees}).

To materialize $B_1$, note that there are $\Oh(m)$ choices for an adjacent pair of nodes from $V_{a_0}^{\jj_{a_0}}$ and $V_{b_1}^{\jj_{b_1}}$.
Then, for each $i \in [\gamma]$, there are $\Oh(m^f)$ choices to pick a node in $V_{c_i}$ as a neighbor of the chosen node in $V_{a_0}^{\jj_{a_0}}$.
Furthermore, for each $i \in [\alpha - 1]$, there are $\Oh(m^{\jj_{a_{i - 1}} f})$ choices to pick a node in $V_{a_i}^{\jj_{a_i}}$ as a neighbor of the chosen node in $V_{a_{i-1}}^{\jj_{a_{i-1}}}$.
In total, this takes time
\begin{align*}
    \Oh(m^{1 + (\gamma + \jj_{a_{0 \dd \alpha - 2}}) f}) &\le \Oh(m^{1 + (\gamma + 1 + (\alpha - 2) \cdot \jmax) f})\\
                                            &\le \Oh(m^{1 + (\gamma + 1 + (\alpha - 2) \cdot (2 \gamma + 1)) f})\\
                                               &= \Oh(m^{1 + ((2\alpha - 3) \gamma + \alpha - 1) f})\\
                                               &\le \Oh(m^{2 - f}),
\end{align*}
where the last inequality holds due to \cref{a-a-c-linear-bound}.

We materialize $B_2$ analogously.
Hence, we get a full $H$-encoding of $G$.

\begin{figure}
    \begin{center}
        \includegraphics[scale=0.8]{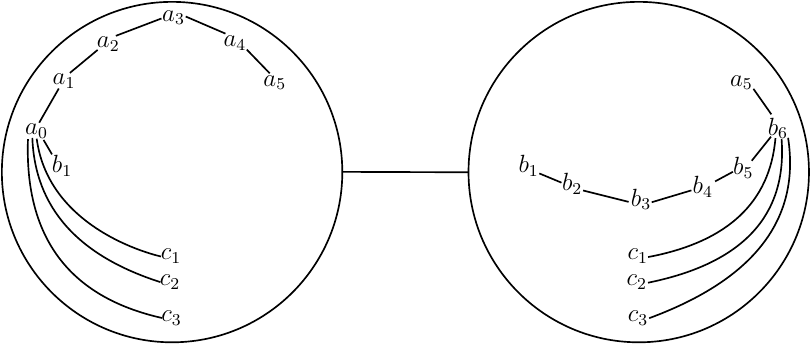}
    \end{center}

    \caption{An example of a tree decomposition of $P(\alpha, \beta, \gamma \times 2)$ for Special Case 2 of \cref{lm:pabc-ub} for $\alpha = \beta = 6$ and $\gamma = 3$.}
    \label{Paac-upper-bound-all-small-degrees}
\end{figure}

\paragraph*{Special Case 3: \boldmath$\jmax \le 2 \gamma + 1$ and $\alpha > \beta$.}

We create a tree decomposition of $H$. See \cref{Pabc-upper-bound-all-small-degrees}.
Let $B_b \coloneqq \{b_0, b_1, \ldots, b_{\beta}\}$.
To create a submaterialization $S_b$ of $B_b$, 
note that there are $\Oh(m)$ choices for an adjacent pair of nodes from $V_{b_0}^{\jj_{b_0}}$ and $V_{b_1}^{\jj_{b_1}}$. Then, for each $i \in [2, \beta]$, there are $\Oh(m^{\jj_{b_{i - 1}} f}) \le \Oh(m^{\jmax f})$ choices to pick a node in $V_{b_i}^{\jj_{b_i}}$ as a neighbor of the chosen node in $V_{b_{i-1}}^{\jj_{b_{i-1}}}$.
Hence, we materialize $B_b$ in time $\Oh(m^{1 + (\beta - 1) \jmax f}) \le \Oh(m^{1 + (\beta - 1) (2 \gamma + 1) f}) = \Oh(m^{2 - \ell f})$, where $\ell \coloneqq \funcc(\alpha, \beta, \gamma) - (\beta - 1) \cdot (2\gamma + 1)$.
Note that $\ell \ge 2$ due to \cref{a-b-c-linear-bound}.

Furthermore, we create bags $B_{c_i} \coloneqq B_b \cup \{c_i\}$ for all $i \in [\gamma]$. We materialize $B_{c_i}$ by starting from the submaterialization $S_b$ of $B_b$ and extending each tuple $u \in S_b$ by picking a node in $V_{c_i}$ as one of the $\Oh(m^{\jj_{b_0} f}) = \Oh(m^f)$ neighbors of the node in $V_{b_0}^{\jj_{b_0}}$ that is picked by $u$.
Hence, each $B_{c_i}$ is materialized in time $\Oh(m^{(2 - \ell f) + f}) \le \Oh(m^{2 - f})$.
Furthermore, we create a bag $B_{\textup{ends}} \coloneqq \{a_0, a_{\alpha}\}$ and create its materialization $S_{\textup{ends}}$ as a set of projections of the elements of $S_b$ onto $B_{\textup{ends}}$.
This takes time $\tOh(|S_b|) = \tOh(m^{2 - \ell f}) \le \Oh(m^{2 - f})$, and we have $|S_{\textup{ends}}| \le |S_b| = \Oh(m^{2 - \ell f})$.

We connect the created bags in a line: $B_b \edg B_{c_1} \edg B_{c_2} \edg \cdots \edg B_{c_{\gamma}} \edg B_{\textup{ends}}$.
To create a tree decomposition of $H$, it remains to attach a decomposition of the path $a_0 \edg a_1 \edg \cdots \edg a_{\alpha}$ to $B_{\textup{ends}}$.

We consider $\subp{a_0}, \ldots, \subp{a_{\alpha}}$ as an instance of Biased Cycle, where we think of edges of $G$ between $\subp{a_{i - 1}}$ and $\subp{a_i}$ for $i \in [\alpha]$ as the blue edges, and we think of $S_{\textup{ends}}$ as the red edges between $\subp{a_0}$ and $\subp{a_{\alpha}}$.
We apply \cref{lm:biased-cycle-algorithm} to create a full encoding of this Biased Cycle.
If $\ell > \frac{\alpha + 1}{2}$, it works in time $\Oh(m^{2 - f})$, and if $\ell \le \frac{\alpha + 1}{2}$, it works in time \begin{align*}
    \Oh(m^{2 - f} + m^{1 + (\jj_{a_{\ell \dd \alpha - \ell}} + \ell - 1) f}) &\le \Oh(m^{2 - f} + m^{1 + ((\alpha - 2 \ell + 1) \cdot \jmax + \ell - 1) f})\\
                                                                &\le \Oh(m^{2 - f} + m^{1 + ((\alpha - 2 \ell + 1) \cdot (2 \gamma + 1) + \ell - 1) f})\\
                                                                &\le \Oh(m^{2 - f}),
\end{align*}
where the last inequality holds due to \cref{a-b-c-linear-bound-2}.

As there is an edge $\{a_0, a_{\alpha}\}$ in this instance of Biased Cycle, we can attach the resulting tree decomposition with a partial encoding to $B_{\textup{ends}}$ by a bag containing both $a_0$ and $a_{\alpha}$.
Therefore, we create a full $H$-encoding of $G$.

\begin{figure}
    \begin{center}
        \includegraphics[scale=0.85]{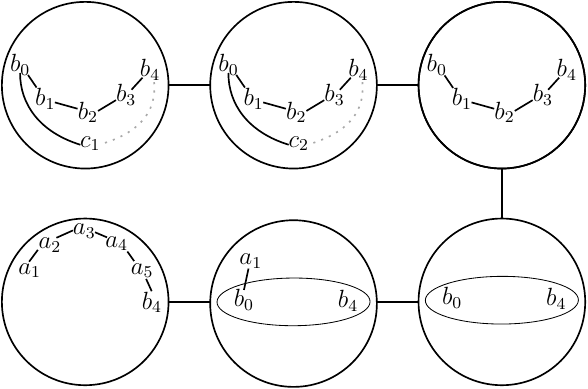}
    \end{center}

    \caption{An example of a tree decomposition of $P(\alpha, \beta, \gamma \times 2)$ for Special Case 3 of \cref{lm:pabc-ub} for $\alpha = 6$, $\beta = 4$, and $\gamma = 2$.}
    \label{Pabc-upper-bound-all-small-degrees}
\end{figure}

\medskip

Hence, from now on we may assume $\jmax \ge 2 \gamma + 2$.

\paragraph*{Special Case 4: for some positive integer \boldmath$\ell \le \alpha - 1$, we have $\jj_{a_1} = \jj_{a_2} = \cdots = \jj_{a_{\ell - 1}} = 1$, $\jj_{a_{\ell}} \ge \gamma + 2$, and $\jj_{b_{\beta - 1}} \ge \gamma + \ell + 1$.} \label{prg:case4}

We create the following tree decomposition of $H$ (see \cref{Pabc-upper-bound-split-by-middle-node}).
We define a bag
\[B_{1} \coloneqq \{ a_0, a_1, a_2, \ldots, a_{\ell}, b_{\beta-1}, c_1, c_2, \ldots, c_{\gamma} \},\]
a bag
\[B_{2} \coloneqq \{ a_{\ell}, b_{\beta-1}, b_{\beta}, c_1, c_2, \ldots, c_{\gamma} \},\]
bags $B_{b_i} \coloneqq \{b_{i-1}, b_i, b_{\beta-1}\}$ for $i \in [\beta-2]$, and bags $B_{a_i} \coloneqq \{a_{\ell}, a_i, a_{i + 1}\}$ for $i \in [\ell + 1, \alpha - 1]$.
We connect these bags in a line: $B_{b_{\beta-2}} \edg B_{b_{\beta-3}} \edg \cdots \edg B_{b_1} \edg B_{1} \edg B_{2} \edg B_{a_{\alpha-1}} \edg B_{a_{\alpha-2}} \edg \cdots \edg B_{a_{\ell+1}}$.
It is easy to see that it is indeed a valid tree decomposition of $H$.

To materialize $B_1$, note that the number of choices to pick a node from $V_{b_{\beta - 1}}^{\jj_{b_{\beta - 1}}}$ is $\Oh(|V_{b_{\beta - 1}}^{\jj_{b_{\beta - 1}}}|) = \Oh(m^{1 - (\jj_{b_{\beta - 1}} - 1) f}) \le \Oh(m^{1 - (\gamma + \ell) f})$. 
There are $\Oh(m)$ choices for an adjacent pair of nodes from $V_{a_0}^{\jj_{a_0}}$ and $V_{a_1}^{\jj_{a_1}}$.
Then, for each $i \in [\gamma]$, there are $\Oh(m^f)$ choices to pick a node in $V_{c_i}$ of the chosen node in $V_{a_0}^{\jj_{a_0}}$.
Finally, for each $i \in [2, \ell]$, there are $\Oh(m^{\jj_{a_{i - 1}} f}) = \Oh(m^f)$ choices to pick a node in $V_{a_i}^{\jj_{a_i}}$ as a neighbor of the chosen node in $V_{a_{i-1}}^{\jj_{a_{i-1}}}$.
In total, this takes time $\Oh(m^{(1 - (\gamma + \ell) f) + 1 + (\gamma + (\ell - 1)) f}) = \Oh(m^{2 - f})$.

To materialize $B_2$, note that the number of choices to pick a node from $V_{a_{\ell}}^{\jj_{a_{\ell}}}$ is $\Oh(|V_{a_{\ell}}^{\jj_{a_{\ell}}}|) = \Oh(m^{1 - (\jj_{a_{\ell}} - 1) f}) \le \Oh(m^{1 - (\gamma + 1) f})$.
There are $\Oh(m)$ choices to pick an adjacent pair of nodes from $V_{b_{\beta - 1}}^{\jj_{b_{\beta - 1}}}$ and $V_{b_{\beta}}^{\jj_{b_{\beta}}}$.
Then, for each $i \in [\gamma]$, there are $\Oh(m^{f})$ choices to pick a node in $V_{c_i}$ as a neighbor of the chosen node in $V_{b_{\beta}}^{\jj_{b_{\beta}}}$.
In total, this takes time $\Oh(m^{(1 - (\gamma + 1) f) + 1 + \gamma f}) = \Oh(m^{2 - f})$.

Each bag $B_{b_i}$, for $i \in [\beta - 2]$, can be materialized in time $\Oh(m^{2 - f})$, because there are $\Oh(m)$ choices for a pair of adjacent nodes from $\subp{b_{i - 1}}$ and $\subp{b_i}$, and $\Oh(|\subp{b_{\beta - 1}}|) \le \Oh(m^{1 - (\gamma + \ell) f}) \le \Oh(m^{1 - f})$ choices for a node from $\subp{b_{\beta - 1}}$.

Analogously, bags $B_{a_i}$ for $i \in [\ell + 1, \alpha - 1]$ can also be materialized in time $\Oh(m^{2 - f})$.
Hence, we get a full $H$-encoding of $G$ in time $\Oh(m^{2 - f})$.

\begin{figure}
\begin{center}
    \includegraphics[scale=0.85]{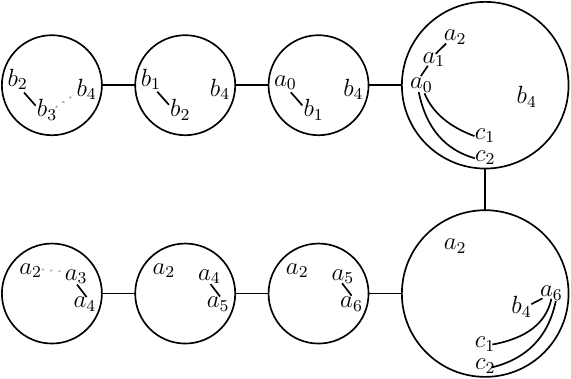}
\end{center}

    \caption{An example of a tree decomposition of $P(\alpha, \beta, \gamma \times 2)$ for Special Case 4 of \cref{lm:pabc-ub} for $\alpha = 6$, $\beta = 5$, $\gamma = 2$, and $\ell = 2$.}
\label{Pabc-upper-bound-split-by-middle-node}
\end{figure}

\paragraph*{Special Case 4': for some positive integer \boldmath$\ell \le \alpha - 1$, we have $\jj_{a_{\alpha - \ell + 1}} = \cdots = \jj_{a_{\alpha - 2}} = \jj_{\alpha - 1} = 1$, $\jj_{a_{\alpha - \ell}} \ge \gamma + 2$, and $\jj_{b_1} \ge \gamma + \ell + 1$.} \label{prg:case4p}

This case is symmetric to the previous one.

\paragraph*{Special Case 5: for some positive integer \boldmath$\ell \le \alpha / 2$, we have $\jj_{a_r} = 1$ for all $r \in \{0, \ldots, \ell - 1, \alpha - \ell + 1, \ldots, \alpha\}$, $\jj_{a_{\ell}} \ge \gamma + 2$, $\jj_{b_1} \le \gamma + \ell$, and $\beta = 3$.} \label{prg:case5}

We create the following tree decomposition of $H$ (see \cref{Pa3c-upper-bound-split-by-middle-node}).
We define a bag
\[B_{1} \coloneqq \{ a_0, a_1, \ldots, a_{\ell}, b_1, b_2, c_1, c_2, \ldots, c_{\gamma} \},\]
a bag
\[B_{2} \coloneqq \{ a_{\ell}, b_2, b_3, c_1, c_2, \ldots, c_{\gamma} \},\]
and bags $B_{a_i} \coloneqq \{a_{\ell}, a_i, a_{i + 1}\}$ for $i \in [\ell + 1, \alpha - 1]$.
We connect these bags in a line: $B_{1} \edg B_{2} \edg B_{a_{\alpha-1}} \edg B_{a_{\alpha-2}} \edg \cdots \edg B_{a_{\ell+1}}$.
It is easy to see that it is indeed a valid tree decomposition of $H$.

To materialize $B_1$, note that there are $\Oh(m)$ choices for an adjacent pair of nodes from $V_{a_0}^{\jj_{a_0}}$ and $V_{b_1}^{\jj_{b_1}}$.
Then there are $\Oh(m^{\jj_{b_1} f}) \le \Oh(m^{(\gamma + \ell) f})$ choices to pick a node in $\subp{b_2}$ as a neighbor of the chosen node in $V_{a_0}^{\jj_{a_0}}$. 
Then, for each $i \in [\gamma]$, there are $\Oh(m^f)$ choices to pick a node in $V_{c_i}$ as a neighbor of the chosen node in $V_{a_0}^{\jj_{a_0}}$. 
Finally, for each $i \in [\ell]$, there are $\Oh(m^{\jj_{a_{i - 1}} f}) = \Oh(m^f)$ choices to pick a node in $V_{a_i}^{\jj_{a_i}}$ as a neighbor of the chosen node in $V_{a_{i-1}}^{\jj_{a_{i-1}}}$.
In total, this takes time $\Oh(m^{1 + ((\gamma + \ell) + \gamma + \ell) f}) \le \Oh(m^{1 + (2 \gamma + \alpha) f}) \le \Oh(m^{2 - f})$ due to the fact that $\ell \le \alpha / 2$ and \cref{lm:a3c-linear-bound}.

To materialize $B_2$, note that the number of choices to pick a node in $V_{a_{\ell}}^{\jj_{a_{\ell}}}$ is $\Oh(|V_{a_{\ell}}^{\jj_{a_{\ell}}}|) = \Oh(m^{1 - (\jj_{a_{\ell}} - 1) f}) \le \Oh(m^{1 - (\gamma + 1) f})$.
There are $\Oh(m)$ choices to pick an adjacent pair of nodes $V_{b_2}^{\jj_{b_2}}$ and $\subp{b_3}$.
Then, for each $i \in [\gamma]$, there are $\Oh(m^f)$ choices to pick a node in $V_{c_i}$ as a neighbor of the chosen node in $\subp{b_3}$.
In total, this takes time $\Oh(m^{(1 - (\gamma + 1) f) + 1 + \gamma f}) = \Oh(m^{2 - f})$.

Each bag $B_{a_i}$, for $i \in [\ell + 1, \alpha - 1]$, can be materialized in time $\Oh(m^{2 - f})$, because there are $\Oh(m)$ choices for a pair of adjacent nodes from $\subp{a_i}$ and $\subp{a_{i + 1}}$, and the number of choices for a node from $\subp{a_{\ell}}$ is $\Oh(|\subp{a_{\ell}}|) \le \Oh(m^{1 - (\gamma + 1) f}) \le \Oh(m^{1 - f})$.
Hence, we get a full $H$-encoding of $G$ in time $\Oh(m^{2 - f})$.

\begin{figure}
    \begin{center}
        \includegraphics[scale=0.85]{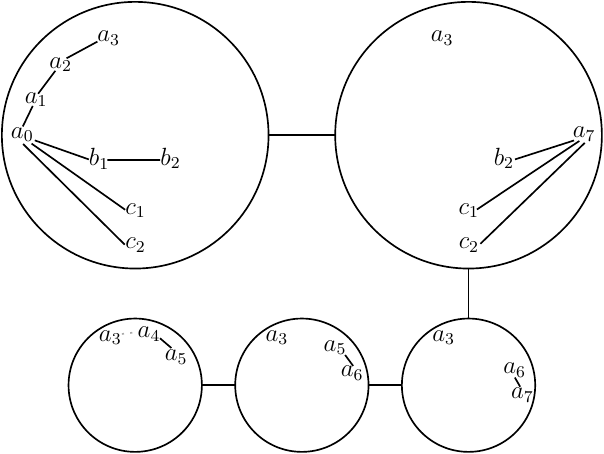}
    \end{center}

    \caption{An example of a tree decomposition of $P(\alpha, \beta, \gamma \times 2)$ for Special Case 5 of \cref{lm:pabc-ub} for $\alpha = 7$, $\beta = 3$, $\gamma = 2$, and $\ell = 3$.}
    \label{Pa3c-upper-bound-split-by-middle-node}
\end{figure}

\bigskip

This concludes the list of high-degree-low-degree algorithms we use. We now turn to the general case.

\subsubsection{General Case: Algorithms Involving ``Hyperdegree'' Splitting}

Let us recall which assumptions we already have. We have $\jj_{a_0} = \jj_{a_{\alpha}} = 1$, $\jmax \ge 2 \gamma + 2$, and none of the cases above are applicable.

We first build a partial $H$-encodings $\partenc_1$ of $G$ and a partial $H'$-encoding $\partenc'_2$ of $G[H']$, where $H' \coloneqq H - \{a_1, \ldots, a_{\alpha - 1}\}$, such that $\partenc'_2$ has $\bends \coloneqq \{b_0, b_{\beta}\}$ as one of the bags of its underlying tree decomposition.
We will later extend $\partenc'_2$ to a partial $H$-encoding $\partenc_2$ of $G$, such that $\partenc_1$ and $\partenc_2$ together constitute a full $H$-encoding of $G$.
That is, each $H$-subgraph $\bv$ of $G$ is encoded by exactly one of them.

We have two options of how we will accomplish that.

\paragraph*{Option 1}

In the first option in time $\Oh(m^{2 - f})$ we create $\partenc_1$ and $\partenc'_2$, where the submaterialization $S_{\textup{ends}}$ of $\bends$ in $\partenc'_2$ has size $\Oh(m^{2 - kf})$ for $k \coloneqq \jmax - 2 \gamma \ge 2$.

As the degrees of nodes in $\subp{b_0}$ are smaller than $m^f$, in time $\Oh(m^{1 + (\gamma - 1) f})$ we can materialize a bag $B_{b_0 \to c} \coloneqq \{b_0, c_1, c_2, \ldots, c_{\gamma}\}$: there are $\Oh(m)$ choices for an adjacent pair of nodes from $\subp{b_0}$ and $V_{c_1}$, and there are $\Oh(m^f)$ choices to pick a node in $V_{c_i}$ as a neighbor of the chosen node in $\subp{b_0}$, for each $i \in [2, \gamma]$.
By using a binary search tree as a dictionary we can store for every tuple of nodes from parts $V_{c_1},V_{c_2}, \ldots, V_{c_{\gamma}}$ all their common neighbors in $\subp{b_0}$ in time $\tOh(m^{1 + (\gamma - 1) f}) \le \Oh(m^{2 - f})$ (due to \cref{F-gets-big}).
Note that $(v_{c_1}, \ldots, v_{c_{\gamma}})$ is one of the listed tuples as they have a common neighbor $v_{b_0} \in \subp{b_0}$.
Now we split the listed tuples of nodes into the ones that have less than $m^{1 - (\jmax - \gamma - 1) f}$ common neighbors in $\subp{b_0}$ (low-degree tuples $C^{\textup{lo}}$) and the ones that have at least $m^{1 - (\jmax - \gamma - 1) f}$ common neighbors (high-degree tuples $C^{\textup{hi}}$).
Similarly to nodes from parts $V_{a_i}$ and $V_{b_i}$, we deal with low- and high-degree tuples separately.
We distinguish two cases, in one of which we create $\partenc_1$, and in the other one we create $\partenc'_2$.

\subparagraph{Option 1: Case 1: \boldmath$(v_{c_1}, \ldots, v_{c_{\gamma}})$ is a high-degree tuple.}

The total number of copies of $B_{b_0 \to c}$ that we generated is at most $m^{1 + (\gamma - 1) f}$.
Hence, there are $\Oh(m^{1 + (\gamma - 1) f} / m^{1 - (\jmax - \gamma - 1) f}) = \Oh(m^{(\jmax - 2) f})$ high-degree tuples of nodes from parts $V_{c_1}, V_{c_2}, \ldots, V_{c_{\gamma}}$.
Furthermore, due to \cref{high-degree-small-cnt-nodes}, there are at most $m^{1 - (\jmax - 1) f}$ nodes in $\subp{\vmax}$.
Consider the graph $H - \{\vmax, c_1, \ldots, c_{\gamma}\}$.
Since it is a tree, it has a tree decomposition where each bag consists of two adjacent nodes.
Adding $\vmax, c_1, c_2, \ldots, c_{\gamma}$ to every bag, we obtain a tree decomposition of $H$ (see \cref{Pabc-upper-bound-take-out-C}).

We materialize each bag from this tree decomposition as follows.
There are $\Oh(m)$ choices for the two adjacent nodes.
There are $\Oh(m^{(\jmax - 2) f})$ choices for a high-degree tuple from $V_{c_1} \times V_{c_2} \times \cdots \times V_{c_{\gamma}}$.
Finally, there are at most $m^{1 - (\jmax - 1) f}$ choices for a node from $\subp{\vmax}$.
Hence, we can materialize these bags in time $\Oh(m^{1 + (\jmax - 2) f + (1 - (\jmax - 1) f)}) = \Oh(m^{2 - f})$.

Note that this partial $H$-encoding $\partenc_1$ encodes exactly such $H$-subgraphs $\bv$ of $G$, for which $(v_{c_1}, \ldots, v_{c_{\gamma}})$ is a high-degree tuple because for all bags, only high-degree tuples $(v_{c_1}, \ldots, v_{c_{\gamma}})$ are listed.

\begin{figure}
    \begin{center}
        \includegraphics[scale=0.85]{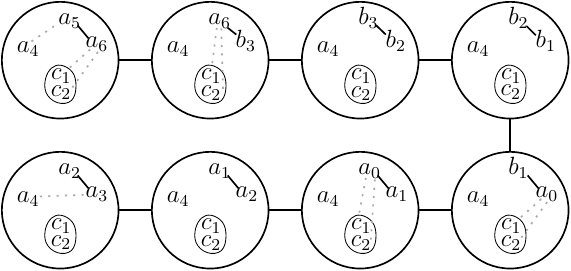}
    \end{center}

    \caption{An example of a tree decomposition of $P(\alpha, \beta, \gamma \times 2)$ for Case 1 of Option 1 of \cref{lm:pabc-ub} for $\alpha = 6$, $\beta = 4$, $\gamma = 2$, and $\vmax = a_4$.}
    \label{Pabc-upper-bound-take-out-C}
\end{figure}

\subparagraph{Option 1: Case 2: \boldmath$(v_{c_1}, \ldots, v_{c_{\gamma}})$ is a low-degree tuple.}

In time $\Oh(m^{1 + (\gamma - 1)f})$ we materialize a bag $B_{b_{\beta} \to c} \coloneqq \{b_{\beta}, c_1, c_2, \ldots, c_{\gamma}\}$ analogously to the materialization of $B_{b_0 \to c}$.
Then for each low-degree tuple of nodes from $V_{c_1}, V_{c_2}, \ldots, V_{c_{\gamma}}$, we spend time $\Oh(m^{1 - (\jmax - \gamma - 1) f})$ to materialize all their common neighbors in $\subp{b_0}$ by accessing the dictionary.
This way we create a submaterialization $S_{\textup{down}}$ of $B_{\textup{down}} \coloneqq B_{b_{\beta} \to c} \cup \{b_0\}$ in time $\Oh(m^{1 + (\gamma - 1) f} \cdot (\log m + m^{1 - (\jmax - \gamma - 1) f})) = \Oh(m^{2 - (\jmax - 2\gamma) f}) = \Oh(m^{2-kf}) \le \Oh(m^{2-2f})$.
Note that for an $H$-subgraph $\bv$, we have that $(v_{c_1}, \ldots, v_{c_{\gamma}})$ is a low-degree tuple if and only if $\prj{\bv}{B_{\textup{down}}} \in S_{\textup{down}}$.
We attach a bag $B_{\textup{ends}} \coloneqq \{b_0, b_{\beta}\}$ to $B_{\textup{down}}$ and create its submaterialization $S_{\textup{ends}}$ as a set of projections of the elements of $S_{\textup{down}}$ onto $B_{\textup{ends}}$.
    This takes time $\tOh(|S_{\textup{down}}|) = \tOh(m^{2 - k f}) \le \tOh(m^{2 - 2f}) \le \Oh(m^{2 - f})$, and we have $|S_{\textup{ends}}| \le |S_{\textup{down}}| = \Oh(m^{2 - k f})$ as promised.
    It remains to attach the decomposition of the path $b_0 \edg b_1 \edg \cdots \edg b_{\beta}$ to $\bends$ to complete the partial $H'$-encoding $\partenc'_2$ of $G[H']$.

We consider $V_{b_0}, \ldots, V_{b_{\beta}}$ as an instance of Biased Cycle for $\ell = k$, where we think of edges of $G$ between $V_{b_{i - 1}}$ and $V_{b_i}$ for $i \in [\beta]$ as the blue edges, and we think of $S_{\textup{ends}}$ as the red edges between $V_{b_0}$ and $V_{b_{\beta}}$.
We apply \cref{lm:biased-cycle-algorithm} to create a full encoding of this Biased Cycle.
If $k > \frac{\beta + 1}{2}$, it works in time $\Oh(m^{2 - f})$, and if $k \le \frac{\beta + 1}{2}$, it works in time
\begin{align*}
    \Oh(m^{2 - f} + m^{1 + (\jj_{b_{k \dd \beta- k}} + k - 1) f}) &\le \Oh(m^{2 - f} + m^{1 + ((\beta - 2 k + 1) \cdot \jmax + k - 1) f})\\
                                                            &= \Oh(m^{2 - f} + m^{1 + ((\beta - 2 k + 1) \cdot (k + 2 \gamma) + k - 1) f})\\
                                                            &\le \Oh(m^{2 - f}),
\end{align*}
where the last inequality holds due to \cref{a-b-c-quadratic-bound-b}.

As there is an edge $\{b_0, b_{\beta}\}$ in this instance of Biased Cycle, we can attach the resulting tree decomposition with a partial encoding to $B_{\textup{ends}}$ by a bag containing both $b_0$ and $b_{\beta}$.
Hence, we create a partial $H'$-encoding $\partenc'_2$ of $G[H']$.
Note that $\partenc'_2$ encodes exactly such $H'$-subgraphs $\bv$ of $G[H']$, for which $(v_{c_1}, \ldots, v_{c_{\gamma}})$ is a low-degree tuple because for $B_{\textup{ends}}$, only low-degree tuples $(v_{c_1}, \ldots, v_{c_{\gamma}})$ are listed.

Combining the two cases together, we get that $\partenc_1$ encodes only such $H$-subgraphs $\bv$ of $G$, for which $\prj{\bv}{\{c_1, \ldots, c_{\gamma}\}}$ is a high-degree tuple, and $\partenc'_2$ encodes only such $H'$-subgraphs $\bv'$ of $G[H']$, for which $\prj{\bv'}{\{c_1, \ldots, c_{\gamma}\}}$ is a low-degree tuple.
Hence, when we extend $\partenc'_2$ to $\partenc_2$, no $H$-subgraph of $G$ can be encoded by both $\partenc_1$ and $\partenc_2$.
And on the other hand, for each $H$-subgraph $\bv$ of $G$, we have that $\prj{\bv}{\{c_1, \ldots, c_{\gamma}\}}$ is either a high-degree tuple or a low-degree tuple.
Hence, either $\bv$ is encoded by $\partenc_1$ or $\prj{\bv}{V(H')}$ is encoded by $\partenc'_2$, and thus we later will encode $\bv$ by $\partenc_2$.

\paragraph*{Option 2}

In the second option in time $\Oh(m^{2 - f}) + \tOh(m^{2 - k' f})$ we create $\partenc_1$ and $\partenc'_2$, where the submaterialization $S_{\textup{ends}}$ of $\bends$ in $\partenc'_2$ has size $\Oh(m^{2 - k'f})$ for $k' \coloneqq \fabc - (2 \gamma + 1 + \jj_{b_1} + \jj_{b_{3 \dd \beta - 1}})$.

We materialize the bag
\[B_1 \coloneqq \{ b_0, b_1, b_2, c_1, c_2, $\ldots$, c_{\gamma} \}.\]
To this end, note that there are $\Oh(m)$ choices for a pair of adjacent nodes from $\subp{b_0}$ and $\subp{b_1}$.
Then there are $\Oh(m^{\jj_{b_1} f})$ choices to pick a node in $\subp{b_2}$ as a neighbor of the chosen node in $\subp{b_1}$. 
Finally, for each $i \in [\gamma]$, there are $\Oh(m^f)$ choices to pick a node in $V_{c_i}$ as a neighbor of the chosen node in $\subp{b_0}$.
In total, this takes time $\Oh(m^{1 + (\jj_{b_1} + \gamma) f}) \le \Oh(m^{2 - k' f})$.
By using a binary search tree as a dictionary we can store for every tuple of nodes from parts $\subp{b_2}, V_{c_1}, V_{c_2}, \ldots, V_{c_{\gamma}}$ all their common pairs of neighbors in $\subp{b_0}$ and $\subp{b_1}$ in time $\tOh(m^{1 + (\jj_{b_1} + \gamma) f}) \le \Oh(m^{2 - k' f})$.
Note that $(v_{b_2}, v_{c_1}, \ldots, v_{c_{\gamma}})$ is one of the listed tuples as they have a common pair of neighbors $(v_{b_0}, v_{b_1}) \in \subp{b_0} \times \subp{b_1}$.
Now we split the listed tuples of nodes into the ones that have less than $m^{(\gamma + \jj_{b_1} + 1) f}$ common pairs of neighbors in $\subp{b_0}$ and $\subp{b_1}$ (low-degree tuples $D^{\textup{lo}}$) and the ones that have at least $m^{(\gamma + \jj_{b_1} + 1) f}$ common pairs of neighbors (high-degree tuples $D^{\textup{hi}}$).
Similarly to nodes from parts $V_{a_i}$ and $V_{b_i}$, we deal with low- and high-degree tuples separately.
We distinguish two cases, in one of which we create $\partenc_1$, and in the other one we create $\partenc'_2$.

\subparagraph{Option 2: Case 1: \boldmath$(v_{b_1}, v_{c_1}, \ldots, v_{c_{\gamma}})$ is a high-degree tuple.}

The total number of copies of $B_1$ that we generated is $\Oh(m^{1 + (\jj_{b_1} + \gamma) f})$.
Hence, there are $\Oh(m^{1 + (\jj_{b_1} + \gamma) f} / m^{(\gamma + \jj_{b_1} + 1) f}) = \Oh(m^{1 - f})$ high-degree tuples of nodes from parts $\subp{b_2}, V_{c_1}, \ldots, V_{c_{\gamma}}$.
Consider the graph $H - \{b_2, c_1, \ldots, c_{\gamma}\}$.
Since it is a tree, it has a tree decomposition, where each bag consists of two adjacent nodes.
Adding $b_2, c_1, c_2, \ldots, c_{\gamma}$ to every bag, we obtain a tree decomposition of $H$.
For each bag of this tree decomposition, we create its submaterialization in time $\Oh(m^{2 - f})$, as there are $\Oh(m)$ choices for the two adjacent nodes and $\Oh(m^{1 - f})$ choices for a high-degree tuple.

Note that this partial $H$-encoding $\partenc_1$ encodes exactly such $H$-subgraphs $\bv$, for which $(v_{b_2}, v_{c_1}, \ldots, v_{c_{\gamma}})$ is a high-degree tuple as for all bags, only high-degree tuples $(v_{b_2}, v_{c_1}, \ldots, v_{c_{\gamma}})$ are listed.

\subparagraph{Option 2: Case 2: \boldmath$(v_{b_2}, v_{c_1}, \ldots, v_{c_{\gamma}})$ is a low-degree tuple.}

We materialize the bag
\[B_2 \coloneqq \{ b_2, b_3, \ldots, b_{\beta}, c_1, c_2, \ldots, c_{\gamma} \}.\]
To this end, note that there are $\Oh(m)$ choices for a pair of adjacent nodes from $\subp{b_{\beta}}$ and $\subp{b_{\beta - 1}}$.
Then, for each $i \in [\gamma]$, there are $\Oh(m^f)$ choices to pick a node in $V_{c_i}$ as a neighbor of the chosen node in $\subp{b_{\beta}}$.
Finally, for each $i \in \{\beta - 2, \beta - 3, \ldots, 2\}$, there are $\Oh(m^{\jj_{b_{i + 1}}})$ choices to pick a node in $\subp{b_i}$ as a neighbor of the chosen node in $\subp{b_{i+1}}$.
In total, this takes time $\Oh(m^{1 + (\gamma + \jj_{b_{3 \dd \beta - 1}}) f}) \le \Oh(m^{2 - k' f})$.

If the chosen nodes from parts $V_{b_2}, V_{c_1}, V_{c_2}, \ldots, V_{c_{\gamma}}$ form a low-degree tuple, we spend time $\Oh(m^{(\gamma + \jj_{b_1} + 1) f})$ to materialize all their common pairs of neighbors in $\subp{b_0}$ and $\subp{b_1}$ by accessing the dictionary.
This way we create a submaterialization $S_{\textup{down}}$ of $B_{\textup{down}} \coloneqq B_2 \cup \{b_0, b_1\}$ in time \[\Oh(m^{1 + (\gamma + \jj_{b_{3 \dd \beta - 1}}) f} \cdot (\log m + m^{(\gamma + \jj_{b_1} + 1) f})) = \Oh(m^{1 + (2 \gamma + \jj_{b_1} + \jj_{b_{3 \dd \beta - 1}} + 1) f}) = \Oh(m^{2-k' f}).\]
Note that for an $H$-subgraph $\bv$, we have that $(v_{b_2}, v_{c_1}, \ldots, v_{c_{\gamma}})$ is a low-degree tuple if and only if $\prj{\bv}{B_{\textup{down}}} \in S_{\textup{down}}$.
We attach a bag $B_{\textup{ends}} \coloneqq \{b_0, b_{\beta}\}$ to $B_{\textup{down}}$ and create its submaterialization $S_{\textup{ends}}$ as a set of projections of the elements of $S_{\textup{down}}$ onto $B_{\textup{ends}}$.
    It takes time $\tOh(|S_{\textup{down}}|) = \tOh(m^{2 - k' f})$, and we have $|S_{\textup{ends}}| \le |S_{\textup{down}}| = \Oh(m^{2 - k' f})$ as promised.
    Note that $B_{\textup{down}} \edg \bends$ is a tree decomposition of $H'$, hence we get the partial $H'$-encoding $\partenc'_2$ of $G[H']$.
    Note that $\partenc'_2$ encodes exactly such $H'$-subgraphs $\bv$ of $G[H']$, for which $(v_{b_2}, v_{c_1}, \ldots, v_{c_{\gamma}})$ is a low-degree tuple because for $B_{\textup{ends}}$, only low-degree tuples $(v_{b_2}, v_{c_1}, \ldots, v_{c_{\gamma}})$ are listed.

    Combining the two cases together, we get that $\partenc_1$ encodes only such $H$-subgraphs $\bv$ of $G$, for which $\prj{\bv}{\{b_2, c_1, \ldots, c_{\gamma}\}}$ is a high-degree tuple, and $\partenc'_2$ encodes only such $H'$-subgraphs $\bv'$ of $G[H']$, for which $\prj{\bv'}{\{b_2, c_1, \ldots, c_{\gamma}\}}$ is a low-degree tuple.
Hence, when we extend $\partenc'_2$ to $\partenc_2$, no $H$-subgraph of $G$ can be encoded by both $\partenc_1$ and $\partenc_2$.
And on the other hand, for each $H$-subgraph $\bv$ of $G$, we have that $\prj{\bv}{\{b_2, c_1, \ldots, c_{\gamma}\}}$ is either a high-degree tuple or a low-degree tuple.
Hence, either $\bv$ is encoded by $\partenc_1$ or $\prj{\bv}{V(H')}$ is encoded by $\partenc'_2$, and thus we later will encode $\bv$ by $\partenc_2$.

\bigskip

We now decide on which of the two options we use.
Let $\ell \coloneqq \max \{k, k'\}$, where we recall that $k = \jmax - 2 \gamma \ge 2$ and $k' = \fabc - (2 \gamma + 1 + \jj_{b_1} + \jj_{b_{3 \dd \beta - 1}})$.
We claim that in time $\Oh(m^{2 - f})$ we can create a partial $H$-encoding $\partenc_1$ of $G$ and a partial $H'$-encoding $\partenc'_2$ of $G[H']$, such that $\partenc'_2$ has $\bends \coloneqq \{b_0, b_{\beta}\}$ as one of its bags with submaterialization $S_{\textup{ends}}$ of size $\Oh(m^{2 - \ell f})$. To do that, we use Option 1 if $k \ge k'$ and Option 2 if $k' \ge k$.
Hence, indeed $|S_{\textup{ends}}| = \Oh(m^{2 - \ell f})$.
Furthermore, in Option 1 we work in time $\Oh(m^{2 - f})$, and in the second option we work in time $\Oh(m^{2 - f}) + \tOh(m^{2 - k' f})$.
We use the second option only if $k' \ge k \ge 2$, and hence we have $\Oh(m^{2 - f}) + \tOh(m^{2 - k' f}) = \Oh(m^{2 - f})$.

We have a partial $H'$-encoding $\partenc'_2$ of $G[H']$, and now we need to extend it to a partial $H$-encoding $\partenc_2$ of $G$, such that $\partenc_2$ encodes some $H$-subgraph $\bv$ if and only if $\partenc'_2$ encodes $\prj{\bv}{V(H')}$.

We consider $\subp{a_0}, \ldots, \subp{a_{\alpha}}$ as an instance of Biased Cycle, where we think of edges of $G$ between $\subp{a_{i - 1}}$ and $\subp{a_i}$ for $i \in [\alpha]$ as the blue edges, and we think of $S_{\textup{ends}}$ as the red edges between $\subp{a_0}$ and $\subp{a_{\alpha}}$.
We apply the algorithm from \cref{lm:biased-cycle-algorithm} to create a full encoding of this Biased Cycle. As we prove later, it works in time $\Oh(m^{2 - f})$.
As there is an edge $\{a_0, a_{\alpha}\}$ in this instance of Biased Cycle, we can attach the resulting tree decomposition with a partial encoding to $B_{\textup{ends}}$ by a bag containing both $a_0$ and $a_{\alpha}$.
Hence, we create the desired partial $H$-encoding $\partenc_2$ of $G$.

It remains to show that when we use the algorithm from \cref{lm:biased-cycle-algorithm}, it runs in $\Oh(m^{2 - f})$ time.
To prove it, we consider eight cases.
The exact condition of each case includes negations of all previous cases, but we sometimes omit them for conciseness.
One can verify that it is a complete set of cases.

\subparagraph{Case 1: \boldmath$\alpha \le 2 \ell - 2$.}

In this case $\ell > \frac{\alpha + 1}{2}$, and the algorithm from \cref{lm:biased-cycle-algorithm} works in time $\Oh(m^{2 - f})$.

\subparagraph{Case 2: \boldmath$\alpha = 2 \ell - 1$.}

In this case the algorithm from \cref{lm:biased-cycle-algorithm} works in time
\begin{align*}
    \Oh(m^{2 - f} + m^{1 + (\jj_{a_{\ell \dd \alpha - \ell}} + \ell - 1) f}) &= \Oh(m^{2 - f} + m^{1 + (\ell - 1) f})\\
                                                            &= \Oh(m^{2 - f} + m^{1 + (\alpha - 1) / 2 \cdot f})\\
                                                            &\le \Oh(m^{2 - f}),
\end{align*}
where the last inequality holds due to \cref{F-gets-big}.

\subparagraph{Case 3: \boldmath$\alpha \ge 2 \ell$ and $\jj_{a_r} \ge 2$ for some $r \in \{0, \ldots, \ell - 1, \alpha - \ell + 1, \ldots, \alpha\}$.}

In this case the instance of Biased Cycle is $\ell$-reachable, and the algorithm from \cref{lm:biased-cycle-algorithm} works in time $\Oh(m^{2 - f})$.

\subparagraph{Case 4: \boldmath$\alpha \ge 2 \ell$, $\jj_{a_r} = 1$ for all $r \in \{0, \ldots, \ell - 1, \alpha - \ell + 1, \ldots, \alpha\}$, $\jj_{a_{\ell}} \ge \gamma + 2$, and $\jj_{b_{\beta - 1}} \ge \gamma + \ell + 1$.}

This cannot happen as Special Case 4 (see \ref{prg:case4}) would be applicable.

\subparagraph{Case 4': \boldmath$\alpha \ge 2 \ell$, $\jj_{a_r} = 1$ for all $r \in \{0, \ldots, \ell - 1, \alpha - \ell + 1, \ldots, \alpha\}$, $\jj_{a_{\alpha - \ell}} \ge \gamma + 2$, and $\jj_{b_1} \ge \gamma + \ell + 1$.}

This cannot happen as Special Case 4' (see \ref{prg:case4p}) would be applicable.

\subparagraph{Case 5: \boldmath$\alpha \ge 2 \ell$ and $\jj_{a_{\ell}} \le \gamma + 1$.}

In this case the algorithm from \cref{lm:biased-cycle-algorithm} runs time
\begin{align*}
    \Oh(m^{2 - f} + m^{1 + (\jj_{a_{\ell \dd \alpha - \ell}} + \ell - 1) f}) &\le \Oh(m^{2 - f} + m^{1 + (\jj_{a_{\ell}} + (\alpha - 2 \ell) \cdot \jmax + \ell - 1) f})\\
                                                            &\le \Oh(m^{2 - f} + m^{1 + (\gamma + 1 + (\alpha - 2 \ell) \cdot (2 \gamma + k) + \ell - 1) f})\\
                                                            &\le \Oh(m^{2 - f} + m^{1 + (\gamma + (\alpha - 2 \ell) \cdot (2 \gamma + \ell) + \ell) f})\\
                                                            &= \Oh(m^{2 - f}).
\end{align*}

The last equality holds as $\ell \ge k \ge 2$,
\begin{align*}
    \ell &\ge k'\\
         &= \fabc - (2 \gamma + 1 + \jj_{b_1} + \jj_{b_{3 \dd \beta - 1}})\\
         &\ge \fabc - (2 \gamma + 1 + (\beta - 2) \cdot \jmax)\\
         &= \fabc - (2 \gamma + 1 + (\beta - 2) \cdot (2 \gamma + k))\\
         &\ge \fabc - (2 \gamma + 1 + (\beta - 2) \cdot (2 \gamma + \ell)),
\end{align*}
and because of \cref{a-b-c-quadratic-bound-a-t-2}.

\subparagraph{Case 5': \boldmath$\alpha \ge 2 \ell$ and $\jj_{a_{\alpha - \ell}} \le \gamma + 1$.}

This case is symmetric to the previous one.

\subparagraph{Case 6: \boldmath$\alpha \ge 2 \ell$, $\jj_{b_1} \le \gamma + \ell$, $\jj_{b_{\beta - 1}} \le \gamma + \ell$, and $\beta \ge 4$.}

In this case the algorithm from \cref{lm:biased-cycle-algorithm} runs in time
\begin{align*}
    \Oh(m^{2 - f} + m^{1 + (\jj_{a_{\ell \dd \alpha - \ell}} + \ell - 1) f}) &\le \Oh(m^{2 - f} + m^{1 + ((\alpha - 2 \ell + 1) \cdot \jmax + \ell - 1) f})\\
                                                            &= \Oh(m^{2 - f} + m^{1 + ((\alpha - 2 \ell + 1) \cdot (2 \gamma + k) + \ell - 1) f})\\
                                                            &\le \Oh(m^{2 - f} + m^{1 + ((\alpha - 2 \ell + 1) \cdot (2 \gamma + \ell) + \ell - 1) f})\\
                                                            &= \Oh(m^{2 - f}).
\end{align*}

The last equality holds as $\ell \ge k \ge 2$,
\begin{align*}
    \ell &\ge k'\\
         &= \fabc - (2 \gamma + 1 + \jj_{b_1} + \jj_{b_{3 \dd \beta - 1}})\\
         &\ge \fabc - (2 \gamma + 1 + (\gamma + \ell) + (\beta - 4) \cdot \jmax + (\gamma + \ell))\\
         &= \fabc - (2 \gamma + 1 + (\gamma + \ell) + (\beta - 4) \cdot (2 \gamma + k) + (\gamma + \ell))\\
         &\ge \fabc - (2 \gamma + 1 + (\beta - 4) \cdot (2 \gamma + \ell) + 2 \cdot (\gamma + \ell)),
\end{align*}
and because of \cref{a-b-c-quadratic-bound-a-t-0}.

\subparagraph{Case 7: \boldmath$\alpha \ge 2 \ell$, $\jj_{a_r} = 1$ for all $r \in \{0, \ldots, \ell - 1, \alpha - \ell + 1, \ldots, \alpha\}$, $\jj_{a_{\ell}} \ge \gamma + 2$, $\jj_{b_1} \le \gamma + \ell$, and $\beta = 3$.}

This cannot happen as Special Case 5 (see \ref{prg:case5}) would be applicable.

\bigskip

This concludes the proof of \cref{lm:pabc-ub}.
\end{proof}

\subsection{Proof of \cref{lm:p-graph-upper-bound}}
\label{sec:upper-bound-P-graphs-combination}

Now we can put the pieces together to prove \cref{lm:p-graph-upper-bound}.

\pgraphupperbound*

\begin{proof}[Proof of \cref{lm:p-graph-upper-bound}]
    Let $(\alpha, \beta, \gamma) \in \Tfamily$.

    For $(\alpha, \beta, \gamma) = (1, 0, 0)$, \Henciso can be solved in time $\Oh(m) = \Oh(m^{2 - 1 / 1})$ by \cref{edge-ub} and $\fabc = 1$ by \cref{lm:p1-time-complexity}.
    For $(\alpha, \beta, \gamma) = (2, 1, 0)$, \Henciso can be solved in time $\Oh(m^{2 - 1 / 2})$ by \cref{cycle-ub} and $\fabc = 2$ by \cref{lm:c3-time-complexity}.
    For $(\alpha, \beta, \gamma) = (k - 2, 2, 0)$ with $k \ge 4$, \Henciso can be solved in time $\Oh(m^{2 - \left\lceil k / 2 \right\rceil})$ by \cref{cycle-ub} and $\fabc = \left\lceil k / 2 \right\rceil$ by \cref{lm:ck-time-complexity}.

    Otherwise, we have $\alpha \ge \beta \ge 2$ and $\gamma \ge 1$.
    If $\alpha = \beta = 2$, then \Henciso can be solved in time $\Oh(m^{2 - 1 / (\gamma + 2)})$ by \cref{biclique-ub} and $\fabc = \gamma + 2$ by \cref{lm:kk2-time-complexity}.
    If $\alpha > \beta = 2$, then $\fabc = 2 \gamma + 2 + \left\lfloor \frac{\alpha - 1}{2} \right\rfloor$ by \cref{a-c-time-complexity}, and \cref{lm:pa2c-ub} shows an algorithms with the desired time complexity.
    Finally, if $\alpha \ge \beta > 2$, \cref{lm:pabc-ub} shows an algorithm with the desired time complexity.
\end{proof}

\section{Algebraic Lemmas} \label{sec:algebraic-lemmas}

Here we prove inequalities on $\fabc$ that we used in the previous sections.

In this section numbers $\alpha$, $\beta$, $\gamma$, $k$, and $\ell$ are always assumed to be integers.

Let us recall that $\fabc=$

\[ 
            \begin{cases}
                2 \beta \gamma + \frac{\alpha \beta}{2} - \frac{\beta^2}{2} + \frac{\beta}{2} - \frac{\alpha}{2} - 2 \gamma + 2, & \text{if $\alpha + \beta$ even, $\alpha > \beta$, and $\beta < \gamma + 2$;}\\
                2 \beta \gamma + \frac{\alpha \beta}{2} - \frac{\beta^2}{2} + \frac{3\beta}{2} - \frac{\alpha}{2} - 3 \gamma, & \text{if $\alpha + \beta$ even, $3 \beta < \alpha + 6 \gamma + 8$, and ($\alpha = \beta$ or $\beta \ge \gamma + 2$);}\\
                2 \beta \gamma + \frac{\alpha \beta}{2} - \frac{\beta^2}{2} + 3\beta - \alpha - 6 \gamma - 4, & \text{if $\alpha + \beta$ even, $2 \beta \le \alpha + 4 \gamma + 6$, and $3 \beta \ge \alpha + 6 \gamma + 8$;}\\
                2 \beta \gamma + \frac{\alpha \beta}{2} - \frac{\beta^2}{2} + \beta - \frac{\alpha}{2} - 2 \gamma + \frac{3}{2}, & \text{if $\alpha + \beta$ odd and $\beta < 2 \gamma + 3$;}\\
                2 \beta \gamma + \frac{\alpha \beta}{2} - \frac{\beta^2}{2} + 2 \beta - \frac{\alpha}{2} - 4 \gamma - \frac{3}{2}, & \text{if $\alpha + \beta$ odd, $2 \beta \le \alpha + 4 \gamma + 6$, and $\beta \ge 2 \gamma + 3$;}\\
                2 \gamma^2 + \alpha \gamma + \frac{\alpha^2}{8} + \frac{\alpha}{2}, & \text{if $\alpha = 0 \bmod 4$ and $2 \beta > \alpha + 4 \gamma + 6$;}\\
                2 \gamma^2 + \alpha \gamma + \frac{\alpha^2}{8} + \frac{\alpha}{2} + \frac{3}{8}, & \text{if $\alpha$ is odd and $2 \beta > \alpha + 4 \gamma + 6$;}\\
                2 \gamma^2 + \alpha \gamma + \frac{\alpha^2}{8} + \frac{\alpha}{2} + \frac{1}{2}, & \text{if $\alpha = 2 \bmod 4$ and $2 \beta > \alpha + 4 \gamma + 6$.}\\
            \end{cases}
    \]

\begin{lemma} \label{lm:funcc-correctness}
    The function $\funcc$ is well defined. That is, the eight cases from the definition of $\funcc$ are disjoint and cover all $(\alpha, \beta, \gamma) \in \ZZ^3_{\ge 0}$.
\end{lemma}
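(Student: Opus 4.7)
The proof will be a direct arithmetic case analysis. My plan is to show that the eight cases naturally partition the domain into three disjoint regions, based on the sign of $2\beta - (\alpha + 4\gamma + 6)$ and, in the negative case, the parity of $\alpha + \beta$. Specifically, Cases 6--8 should cover exactly $\{2\beta > \alpha + 4\gamma + 6\}$, separated internally by whether $\alpha \bmod 4$ equals $0$, is odd, or equals $2$; Cases 1--3 should cover exactly $\{2\beta \le \alpha + 4\gamma + 6\} \cap \{\alpha + \beta \text{ even}\}$; and Cases 4--5 should cover exactly $\{2\beta \le \alpha + 4\gamma + 6\} \cap \{\alpha + \beta \text{ odd}\}$.

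For cross-region disjointness I will observe that Cases 3 and 5 carry $2\beta \le \alpha + 4\gamma + 6$ as an explicit hypothesis, that Case 1's hypothesis $\beta < \gamma + 2$ yields $2\beta < 2\gamma + 4 \le \alpha + 4\gamma + 6$ (using $\alpha, \gamma \ge 0$), and that similarly Case 4 gives $2\beta < 4\gamma + 6 \le \alpha + 4\gamma + 6$. The only nontrivial check is Case 2: assuming both $2\beta > \alpha + 4\gamma + 6$ and $3\beta < \alpha + 6\gamma + 8$, eliminating $\alpha$ via $\alpha < 2\beta - 4\gamma - 6$ yields $\beta < 2\gamma + 2$; then ``$\alpha = \beta$ or $\beta \ge \gamma + 2$'' combined with $\alpha \ge \beta$ forces $\beta > 4\gamma + 6$, a contradiction.

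Within-region disjointness follows in each region by inspection: Cases 6, 7, 8 are separated by $\alpha \bmod 4$; Cases 1 and 2 have mutually exclusive conjunctions $(\alpha > \beta) \wedge (\beta < \gamma + 2)$ versus $(\alpha = \beta) \vee (\beta \ge \gamma + 2)$; Case 3 is separated from Cases 1 and 2 by the comparison of $3\beta$ with $\alpha + 6\gamma + 8$ (explicit in Cases 2 and 3, and implicit for Case 1 since $\beta < \gamma + 2$ gives $3\beta < 3\gamma + 6 < \alpha + 6\gamma + 8$); and Cases 4 and 5 are separated by $\beta < 2\gamma + 3$ versus $\beta \ge 2\gamma + 3$. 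For exhaustiveness I will split on $2\beta$ versus $\alpha + 4\gamma + 6$: in the ``$>$'' subcase $\alpha \bmod 4$ selects exactly one of Cases 6--8; in the ``$\le$'' subcase I further split on the parity of $\alpha + \beta$ to enter either Cases 4--5 (selected by $\beta$ versus $2\gamma + 3$) or Cases 1--3, where I split on $3\beta$ versus $\alpha + 6\gamma + 8$ (yielding Case 3 or not) and, in the latter subcase, on whether $\alpha = \beta$ (yielding Case 2) or $\alpha > \beta$ (yielding Case 1 if $\beta < \gamma + 2$, and Case 2 otherwise).

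Since every verification reduces to a small linear arithmetic check, the only real obstacle is the careful bookkeeping of implicit implications (in particular, that Case 1's hypothesis silently entails both $2\beta \le \alpha + 4\gamma + 6$ and $3\beta < \alpha + 6\gamma + 8$) and the mild use of $\alpha \ge \beta$ in both the Case 2 cross-region argument and the Case 1 versus Case 2 exhaustiveness split.
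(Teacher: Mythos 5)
Your approach matches the paper's: split on $2\beta$ versus $\alpha + 4\gamma + 6$, then on the parity of $\alpha + \beta$, and on $\alpha \bmod 4$ in the last region; the within-region arguments are the same. The one place where the paper's argument is cleaner is the Case 2 cross-region disjointness: the paper observes that $2\beta > \alpha + 4\gamma + 6$ by itself forces $3\beta > \tfrac{3}{2}\alpha + 6\gamma + 9 \ge \alpha + 6\gamma + 9 > \alpha + 6\gamma + 8$, using only $\alpha \ge 0$, so neither the third Case 2 condition nor $\alpha \ge \beta$ enters. Your detour through $\beta < 2\gamma + 2$ is fine but could also be closed without $\alpha \ge \beta$: substitute back into $\alpha < 2\beta - 4\gamma - 6$ to get $\alpha < -2$, contradicting $\alpha \ge 0$.

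Your flag on the Case 1/Case 2 exhaustiveness split is more interesting. You are right that $\alpha \ge \beta$ is genuinely needed there, and this actually exposes that the lemma as literally stated — covering all of $\ZZ^3_{\ge 0}$ — is false. Take $(\alpha, \beta, \gamma) = (0, 2, 5)$: here $\alpha + \beta$ is even, $2\beta \le \alpha + 4\gamma + 6$, and $3\beta < \alpha + 6\gamma + 8$, yet Case 1 fails ($\alpha \le \beta$), Case 2 fails ($\alpha \ne \beta$ and $\beta < \gamma + 2$), and Case 3 fails. The paper's own proof writes ``Otherwise, cases two and three cover all triples\ldots'' and thereby silently assumes $\alpha \ge \beta$ at exactly the same spot you do. The gap is harmless because $\funcc$ is only ever evaluated on $\Tfamily$, where $\alpha \ge \beta$ always holds, but a precise statement of the lemma would restrict to the half-space $\alpha \ge \beta$ (or to $\Tfamily$) — so your explicit acknowledgment of the extra hypothesis is the correct instinct, not a defect in your argument.
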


\begin{proof}
    Fix some $(\alpha, \beta, \gamma) \in \ZZ^3_{\ge 0}$. We prove that there exists exactly one case from the definition of $\funcc$ that $(\alpha, \beta, \gamma)$ satisfies.

    If $2 \beta > \alpha + 4 \gamma + 6$, then the last three cases from the definition of $\funcc$ disjointly cover all triples $(\alpha, \beta, \gamma) \in \ZZ^3_{\ge 0}$ depending on the reminder of $\alpha$ modulo four.
    On the other hand, if $2 \beta > \alpha + 4 \gamma + 6$, none of the first five cases are applicable.
    Case one fails as $2 \beta > \alpha + 4 \gamma + 6 > 2 \gamma + 4$ contradicts $\beta < \gamma + 2$.
    Case two fails as $2 \beta > \alpha + 4 \gamma + 6$ implies $3 \beta = 3 / 2 \cdot (2 \beta) \ge 3 / 2 \alpha + 3 / 2 \cdot (4 \gamma) + 3 / 2 \cdot 6 \ge \alpha + 6 \gamma + 9$, contradicting $3 \beta < \alpha + 6 \gamma + 8$.
    Conditions of case three directly contradict $2 \beta > \alpha + 4 \gamma + 6$.
    Case four fails as $2 \beta > \alpha + 4 \gamma + 6 \ge 4 \gamma + 6$ contradicts $\beta < 2 \gamma + 3$.
    And finally conditions of case five directly contradict $2 \beta > \alpha + 4 \gamma + 6$.

    Therefore, from now on we may assume $2 \beta \le \alpha + 4 \gamma + 6$.
    Depending on the parity of $\alpha + \beta$, either one of the first three cases can be applicable, or one of the fourth and the fifth. We consider these cases separately.

    First, assume that $\alpha + \beta$ is even. If $\alpha > \beta$ and $\beta < \gamma + 2$, then case one is applicable. Otherwise, cases two and three cover all triples $(\alpha, \beta, \gamma) \in \ZZ^3_{\ge 0}$ depending on whether $3 \beta < \alpha + 6 \gamma + 8$ or not.
    Furthermore, cases one, two, and three are disjoint. Indeed, case two requires one of the conditions from case one to fail. Conditions of case three directly contradicts conditions of case two. And if case three holds, we have $3 \beta \ge \alpha + 6 \gamma + 8 \ge 3 \gamma + 6$ contradicting one of the conditions of case one.

    Now assume that $\alpha + \beta$ is odd. In this case assuming $2 \beta \le \alpha + 4 \gamma + 6$, cases four and five disjointly cover all triples $(\alpha, \beta, \gamma) \in \ZZ^3_{\ge 0}$ depending on whether $\beta < 2 \gamma + 3$ holds or not.

    Therefore, indeed for every $(\alpha, \beta, \gamma) \in \ZZ^3_{\ge 0}$, there exists exactly one case from the definition of $\funcc$ that is applicable.
\end{proof}

\begin{lemma} \label{lm:p1-time-complexity}
    For $\alpha = 1, \beta = 0$, and $\gamma = 0$, we have $\funcc(\alpha, \beta, \gamma) = 1$.
\end{lemma}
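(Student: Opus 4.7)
The plan is to verify the claim by direct substitution into the definition of $\funcc$ from \cref{def:funcc}. First I would identify which of the eight cases applies when $(\alpha,\beta,\gamma) = (1,0,0)$. Since $\alpha + \beta = 1$ is odd, only cases 4 and 5 from \cref{def:funcc} are candidates. Moreover, $\beta = 0 < 3 = 2\gamma + 3$, so case 4 is the one that applies (and case 5 does not, since its condition $\beta \ge 2\gamma + 3$ fails). By \cref{lm:funcc-correctness} (assumed as stated earlier) this case is unique, so there is no ambiguity.

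Next I would plug the values $\alpha = 1$, $\beta = 0$, $\gamma = 0$ into the case-4 formula
\[
\funcc(\alpha,\beta,\gamma) = 2\beta\gamma + \tfrac{\alpha\beta}{2} - \tfrac{\beta^2}{2} + \beta - \tfrac{\alpha}{2} - 2\gamma + \tfrac{3}{2}.
\]
All terms involving $\beta$ or $\gamma$ vanish, leaving $-\tfrac{1}{2} + \tfrac{3}{2} = 1$, which yields the claim.

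The argument is entirely routine; there is no real obstacle. The only thing that requires a moment of care is confirming that case 4 (and not some other case) applies to the boundary triple $(1,0,0)$, which is why I would explicitly cite \cref{lm:funcc-correctness} to justify uniqueness of the case and then just perform the one-line arithmetic.
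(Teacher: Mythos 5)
Your proof is correct and takes essentially the same approach as the paper: identify that case~4 of \cref{def:funcc} applies (since $\alpha+\beta$ is odd and $\beta < 2\gamma+3$) and then substitute to obtain $1$. The extra remark invoking \cref{lm:funcc-correctness} to confirm uniqueness of the applicable case is fine but not strictly necessary, since the paper has already established well-definedness of $\funcc$.
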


\begin{proof}
    In this case $\alpha + \beta$ is odd and $\beta < \gamma + 3$.
    Hence, the fourth case from the definition of $\funcc$ is applicable.
    We substitute and get $\fabc = 1$.
\end{proof}

\begin{lemma} \label{lm:c3-time-complexity}
    For $\alpha = 2, \beta = 1$, and $\gamma = 0$, we have $\fabc = 2$.
\end{lemma}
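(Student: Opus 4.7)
The plan is to check which of the eight cases in Definition~\ref{def:funcc} applies to the triple $(\alpha,\beta,\gamma)=(2,1,0)$ and then substitute. First I would observe that $\alpha+\beta = 3$ is odd, which immediately rules out cases 1, 2, 3, 6, 7, and 8 (which all require either $\alpha+\beta$ even or a specific parity of $\alpha$, noting that $\alpha=2$ is $2 \bmod 4$ but the condition $2\beta > \alpha+4\gamma+6$ reduces to $2>8$, which fails, so case~8 is excluded too). Between cases 4 and 5, we check the condition $\beta < 2\gamma+3$, which for our values becomes $1 < 3$ and holds, so case~4 is the relevant one.

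Next I would plug in $(\alpha,\beta,\gamma)=(2,1,0)$ into the formula of case~4, namely
\[
\funcc(\alpha,\beta,\gamma) = 2\beta\gamma + \frac{\alpha\beta}{2} - \frac{\beta^2}{2} + \beta - \frac{\alpha}{2} - 2\gamma + \frac{3}{2}.
\]
This gives $0 + 1 - \tfrac{1}{2} + 1 - 1 - 0 + \tfrac{3}{2} = 2$, as claimed. There is no real obstacle here; the only thing to be careful about is confirming that the other seven cases are genuinely excluded, which follows from the disjointness guaranteed by Lemma~\ref{lm:funcc-correctness}, so once we verify case~4's hypotheses hold the computation is mechanical.
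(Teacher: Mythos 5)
Your proof is correct and takes essentially the same approach as the paper: identify that case~4 of Definition~\ref{def:funcc} applies (since $\alpha+\beta$ is odd and $\beta < 2\gamma+3$) and substitute. Your extra care in ruling out the other cases is fine but unnecessary, since Lemma~\ref{lm:funcc-correctness} already guarantees the cases are disjoint, so verifying case~4's hypotheses suffices.
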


\begin{proof}
    In this case $\alpha + \beta$ is odd and $\beta < \gamma + 3$.
    Hence, the fourth case from the definition of $\funcc$ is applicable.
    We substitute and get $\fabc = 2$.
\end{proof}

\begin{lemma} \label{lm:funcc-is-positive-integer}
    For any $(\alpha, \beta, \gamma) \in \mathcal{P}'$, we have $\fabc \in \mathbb{N}_{>0}$.
\end{lemma}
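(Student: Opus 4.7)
The plan is a case analysis over the eight cases in Definition~\ref{def:funcc}. For each case I verify (i) integrality of $\fabc$ and (ii) positivity, namely $\fabc \ge 1$; together these give $\fabc \in \mathbb{N}_{>0}$. The well-definedness of the case split is already established by \cref{lm:funcc-correctness}.

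Integrality is a routine parity check. In Cases 1--5 I write the expression over the common denominator $2$ and note that the ``half-integer'' terms $\frac{\alpha\beta}{2} - \frac{\beta^2}{2} = \frac{\beta(\alpha-\beta)}{2}$ together with the remaining linear half-terms assemble into an integer using the assumed parity of $\alpha+\beta$: in Cases 1--3 we have $\alpha \equiv \beta \pmod 2$, making $\alpha-\beta$ even, while in Cases 4--5 the opposite parity is compensated by the constant $\frac{3}{2}$ or $-\frac{3}{2}$. For Cases 6--8 the fractional part is $\frac{\alpha^2}{8} + \frac{\alpha}{2} + c$ with $c \in \{0,\frac{3}{8},\frac{1}{2}\}$; substituting $\alpha = 4k$, $\alpha = 2k+1$, or $\alpha = 4k+2$ respectively collapses this to an integer (e.g.\ for $\alpha=2k+1$ we get $\frac{(k+1)(k+2)}{2}$).

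For positivity, the crucial tool is a handful of algebraic identities that expose $\fabc$ as a sum of manifestly nonnegative terms plus a positive constant. Specifically,
\begin{align*}
\tfrac{\beta(\alpha-\beta)+\beta-\alpha}{2} &= \tfrac{(\alpha-\beta)(\beta-1)}{2}, \\
\tfrac{\beta(\alpha-\beta)+3\beta-\alpha}{2} &= \tfrac{(\alpha-\beta)(\beta-1)}{2} + \beta, \\
\tfrac{\beta(\alpha-\beta)+2\beta-\alpha+3}{2} &= \tfrac{(\alpha-\beta+1)(\beta-1)}{2} + 2, \\
\tfrac{\beta(\alpha-\beta)+4\beta-\alpha-3}{2} &= \tfrac{(\alpha-\beta+3)(\beta-1)}{2},
\end{align*}
apply to Cases 1, 2, 4, 5 respectively, and a similar identity handles Case 3. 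After these substitutions, each of Cases 1--5 becomes a sum of terms of the shape $2\gamma(\beta-c)$, $\frac{(\alpha-\beta+c')(\beta-1)}{2}$, and a positive constant, all of which are nonnegative under the combined Case-condition plus $\Tfamily$-condition; e.g.\ Case 1 becomes $\fabc = 2\gamma(\beta-1)+\frac{(\alpha-\beta)(\beta-1)}{2}+2 \ge 2$, using that Case 1 is only reached in the generic regime $\alpha \ge \beta \ge 2$, $\gamma \ge 1$ with $\alpha-\beta \ge 2$. Cases 6--8 are immediate: in these cases one has $\alpha \ge 4$, $\alpha \ge 3$, $\alpha \ge 2$ respectively, and the dominant terms $2\gamma^2, \alpha\gamma, \frac{\alpha^2}{8}$ together are already $\ge 1$.

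The main obstacle is the ``boundary'' triples $(1,0,0),(2,1,0),(k-2,2,0) \in \Tfamily$, for which the assumptions $\alpha \ge \beta \ge 2$ and $\gamma \ge 1$ fail; here the factorizations above can produce negative factors (e.g.\ $\beta-1 < 0$ when $\beta = 0$). I handle them by identifying explicitly which case of $\funcc$ they activate and computing directly: $(1,0,0)$ and $(2,1,0)$ fall into Case 4 (values $1$ and $2$, already recorded in \cref{lm:p1-time-complexity,lm:c3-time-complexity}), and $(k-2,2,0)$ falls into Case 2 for even $k$ (value $k/2$) and Case 4 for odd $k$ (value $(k+1)/2$); a short check using the case conditions rules out that any boundary triple lands in the remaining cases. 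This finishes the verification.
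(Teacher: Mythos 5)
Your proposal is correct and follows essentially the same strategy as the paper's proof: handle boundary triples by direct evaluation and prove the generic case by re-factoring $\fabc$ as a sum of manifestly nonnegative integer terms, using the parity of $\alpha+\beta$ (or of $\alpha$) for integrality and the case conditions plus $\Tfamily$-conditions for positivity. Your four algebraic identities for Cases 1, 2, 4, 5 are exactly the factorizations the paper uses, and your Case 6--8 parity argument matches theirs.

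The small organizational differences are worth noting but do not affect correctness. You quarantine all of $\{(1,0,0),(2,1,0)\}\cup\{(k-2,2,0):k\ge 4\}$ as boundary cases requiring separate verification, whereas the paper only separates out $(1,0,0)$ and $(2,1,0)$ and absorbs the cycle family $(k-2,2,0)$ into the generic analysis by working under the weaker hypothesis $\gamma\ge 0$ (which its factorizations tolerate, since terms like $2\gamma(\beta-1)$ simply vanish). You also treat Case 5 by a direct factorization ($2\gamma(\beta-2)+\tfrac{(\alpha-\beta+3)(\beta-1)}{2}$), while the paper derives it as the Case 4 value plus $\beta-2\gamma-3\ge 0$; both are fine, though your phrasing that each of Cases 1--5 ends in ``a positive constant'' is not literally true for Case 5, where positivity instead comes from $\beta\ge 2\gamma+3\ge 5$ forcing $\tfrac{(\alpha-\beta+3)(\beta-1)}{2}>0$. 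Finally, you say ``a similar identity handles Case 3'' without supplying it — the relevant rearrangement is $2\beta\gamma+\tfrac{(\alpha-\beta)\beta}{2}+(3\beta-\alpha-6\gamma-8)+4$, where the penultimate term is nonnegative by the Case 3 condition — a gap you should fill in a final write-up, but not a conceptual one.
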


\begin{proof}
    If $(\alpha, \beta, \gamma) = (1, 0, 0)$, \cref{lm:p1-time-complexity} implies that $\fabc = 1$.
    If $(\alpha, \beta, \gamma) = (2, 1, 0)$, \cref{lm:c3-time-complexity} implies that $\fabc = 2$.
    Otherwise, for all $(\alpha, \beta, \gamma) \in \mathcal{P}'$, it holds that $\alpha \ge \beta \ge 2$ and $\gamma \ge 0$.
    We consider all the cases from the definition of $\funcc$.
    \begin{enumerate}
        \item \casea In this case $\fabc = 2 \beta \gamma + \frac{\alpha \beta}{2} - \frac{\beta^2}{2} + \frac{\beta}{2} - \frac{\alpha}{2} - 2 \gamma + 2 = 2 (\beta - 1) \gamma + 2 + \frac{\alpha - \beta}{2} \cdot (\beta - 1)$. As $\alpha + \beta$ is required to be even in this case, all the summands are integers. Furthermore, all of them are non-negative, and the summand $2$ is positive. Hence, $\fabc \in \NN_{>0}$.
        \item \caseb In this case $\fabc = 2 \beta \gamma + \frac{\alpha \beta}{2} - \frac{\beta^2}{2} + \frac{3\beta}{2} - \frac{\alpha}{2} - 3 \gamma = (2 \beta - 3) \cdot \gamma + \frac{\alpha - \beta}{2} \cdot (\beta - 1) + \beta$. As $\alpha + \beta$ is required to be even in this case, all the summands are integers. Furthermore, all of them are non-negative, and the summand $\beta$ is positive. Hence, $\fabc \in \NN_{>0}$.
        \item \casec In this case $\fabc = 2 \beta \gamma + \frac{\alpha \beta}{2} - \frac{\beta^2}{2} + 3\beta - \alpha - 6 \gamma - 4 = 2 \beta \gamma + \frac{\alpha - \beta}{2} \cdot \beta + (3 \beta - \alpha - 6 \gamma - 8) + 4$. As $\alpha + \beta$ is required to be even in this case, all the summands are integers. Furthermore, as $3 \beta \ge \alpha + 6 \gamma + 8$ is required in this case, all the summands are non-negative, and the summand $4$ is positive. Hence, $\fabc \in \NN_{>0}$.
        \item \cased In this case $\fabc = 2 \beta \gamma + \frac{\alpha \beta}{2} - \frac{\beta^2}{2} + \beta - \frac{\alpha}{2} - 2 \gamma + \frac{3}{2} = 2 (\beta - 1) \gamma + \frac{\alpha - \beta + 1}{2} \cdot (\beta - 1) + 2$. As $\alpha + \beta$ is required to be odd in this case, all the summands are integers. Furthermore, all of them are non-negative, and the summand $2$ is positive. Hence, $\fabc \in \NN_{>0}$.
        \item \casee In this case the value of $\fabc$ is larger by $\beta - 2 \gamma - 3$ than the one in the fourth case. As $\alpha + \beta$ is odd in both cases, $\fabc$ is also an integer in this case. Furthermore, as $\beta \ge 2 \gamma + 3$ is required in this case, $\beta - 2 \gamma - 3 \ge 0$, and the value of $\fabc$ in this case is also positive.
        \item \casef All the summands are non-negative and $\frac{\alpha}{2}$ is positive. Hence, $\fabc > 0$. Furthermore, as $\alpha$ is divisible by four, all the summands are integers, and thus $\fabc \in \NN_{>0}$.
        \item \caseg All the summands are non-negative and $\frac{\alpha}{2}$ is positive. Hence, $\fabc > 0$. Furthermore, $\fabc = 2 \gamma^2 + \alpha \gamma + \frac{(\alpha + 1) \cdot (\alpha + 3)}{8}$.
            The first two summands are clearly integers. As $\alpha$ is odd in this case, $\alpha + 1$ and $\alpha + 3$ are both divisible by two. Furthermore, as $\alpha + 1$ and $\alpha + 3$ differ by exactly two, one of them is also divisible by four. Therefore, $(\alpha + 3) \cdot (\alpha + 1)$ is divisible by eight, and the last summand is also an integer.
        \item \caseh All the summands are non-negative and $\frac{\alpha}{2}$ is positive. Hence, $\fabc > 0$.
            Furthermore, $\fabc = 2 \gamma^2 + \alpha \gamma + \frac{(\alpha + 2)^2}{8}$. As $\alpha + 2$ is divisible by four in this case, $(\alpha + 2)^2$ is divisible by eight, and thus all the summands are integers.
    \end{enumerate}
\end{proof}

\begin{lemma} \label{lm:funcc-last-cases}
    For $(\alpha, \beta, \gamma) \in \mathcal{P}'$, such that $2 \beta > \alpha + 4 \gamma + 6$, we have $\fabc = \left\lceil 2 \gamma^2 + \alpha \gamma + \frac{\alpha^2}{8} + \frac{\alpha}{2} \right\rceil$.
\end{lemma}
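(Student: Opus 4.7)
The plan is a straightforward case analysis based on $\alpha \bmod 4$, using the fact that only the last three cases of \cref{def:funcc} are applicable under the hypothesis $2\beta > \alpha + 4\gamma + 6$.

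First I would set $X := 2\gamma^2 + \alpha\gamma + \frac{\alpha^2}{8} + \frac{\alpha}{2}$ and observe that $2\gamma^2 + \alpha\gamma$ is an integer, so the fractional part of $X$ equals the fractional part of $\frac{\alpha^2 + 4\alpha}{8} = \frac{\alpha(\alpha+4)}{8}$. Consequently, $\lceil X \rceil - X$ is determined by the residue of $\alpha(\alpha+4) \bmod 8$.

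Next I would split on $\alpha \bmod 4$ and match each residue class to the corresponding case of $\funcc$:
\begin{itemize}
    \item If $\alpha \equiv 0 \pmod 4$, then $\alpha(\alpha+4) \equiv 0 \pmod 8$, so $X$ is already an integer, giving $\lceil X \rceil = X = \funcc(\alpha,\beta,\gamma)$ by case 6 of \cref{def:funcc}.
    \item If $\alpha$ is odd, a direct computation (writing $\alpha = 2t+1$ yields $\alpha(\alpha+4) = (2t+1)(2t+5) = 4t^2 + 12t + 5 \equiv 5 \pmod 8$) shows the fractional part of $X$ equals $\frac{5}{8}$, so $\lceil X \rceil = X + \frac{3}{8} = \funcc(\alpha,\beta,\gamma)$ by case 7.
    \item If $\alpha \equiv 2 \pmod 4$, writing $\alpha = 4t+2$ gives $\alpha(\alpha+4) = (4t+2)(4t+6) = 16t^2 + 32t + 12 \equiv 4 \pmod 8$, so the fractional part of $X$ equals $\frac{1}{2}$, yielding $\lceil X \rceil = X + \frac{1}{2} = \funcc(\alpha,\beta,\gamma)$ by case 8.
\end{itemize}

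These three cases disjointly cover all nonnegative integers $\alpha$ and exhaust the applicable branches of \cref{def:funcc} when $2\beta > \alpha + 4\gamma + 6$ (as verified in \cref{lm:funcc-correctness}), so the identity $\funcc(\alpha,\beta,\gamma) = \lceil X \rceil$ follows. There is no real obstacle here: the proof is a short arithmetic check, and the only thing to be careful about is to correctly match each residue class to the right case of $\funcc$ and to verify the claimed residues of $\alpha(\alpha+4) \bmod 8$.
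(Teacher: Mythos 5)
Your proof is correct, and it takes a genuinely different route from the paper's. The paper's proof is extremely short: it observes that in the last three cases of \cref{def:funcc} the difference $\fabc - (2\gamma^2 + \alpha\gamma + \frac{\alpha^2}{8} + \frac{\alpha}{2})$ lies in $[0,1)$, and then invokes \cref{lm:funcc-is-positive-integer} (that $\fabc$ is an integer) to conclude $\fabc = \lceil X \rceil$ immediately, without any residue computation. Your argument instead computes the fractional part of $X$ directly in each residue class of $\alpha \bmod 4$ and verifies that the case-specific correction term ($0$, $\tfrac{3}{8}$, or $\tfrac{1}{2}$) equals exactly $\lceil X \rceil - X$; your arithmetic checks out ($\alpha(\alpha+4) \equiv 0, 5, 4 \pmod 8$ in the three cases). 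The trade-off is that your proof is self-contained and does not need to know a priori that $\fabc$ is an integer, while the paper's proof is shorter by reusing the earlier integrality lemma. Either is a perfectly valid way to establish the identity.
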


\begin{proof}
    If $2 \beta > \alpha + 4 \gamma + 6$, one of the last three cases from the definition of $\funcc$ is applicable.
    Note that in these cases $0 \le \fabc - (2 \gamma^2 + \alpha \gamma + \frac{\alpha^2}{8} + \frac{\alpha}{2}) < 1$.
    As \cref{lm:funcc-is-positive-integer} implies that $\fabc$ is an integer, we obtain $\fabc = \left\lceil 2 \gamma^2 + \alpha \gamma + \frac{\alpha^2}{8} + \frac{\alpha}{2} \right\rceil$.
\end{proof}

\begin{lemma} 
\label{F-gets-big} 
    For any $(\alpha,\beta,\gamma) \in \Tfamily$ we have $\funcc(\alpha,\beta,\gamma) \ge \tfrac \alpha 2 + \gamma$.
\end{lemma}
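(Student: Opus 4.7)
The plan is to prove the inequality by a case analysis over $\Tfamily$, handling first the four ``atomic'' tuples and then each of the eight cases in \cref{def:funcc}.

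First I would dispose of the isolated members of $\Tfamily$. For $(1,0,0)$ we have $\funcc=1\ge 1/2$ by \cref{lm:p1-time-complexity}; for $(2,1,0)$ we have $\funcc=2\ge 1$ by \cref{lm:c3-time-complexity}; and for $(k-2,2,0)$ with $k\ge 4$ we have $\funcc=\lceil k/2\rceil\ge (k-2)/2$. So from now on assume $\alpha\ge\beta\ge 2$ and $\gamma\ge 1$, and consider the eight cases of \cref{def:funcc}.

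For the final three cases (those with $2\beta>\alpha+4\gamma+6$), I would invoke \cref{lm:funcc-last-cases} to get $\funcc(\alpha,\beta,\gamma)\ge 2\gamma^2+\alpha\gamma+\alpha^2/8+\alpha/2$, so the claim reduces to $2\gamma^2+\alpha\gamma+\alpha^2/8\ge \gamma$, which is immediate from $\gamma\ge 1$ since $2\gamma^2\ge 2\gamma$. For each of the first five cases I would rewrite $\funcc$ as a manifestly nonnegative combination plus the needed $\alpha/2+\gamma$ term. For instance, in Case~1 one checks the identity
\[
\funcc(\alpha,\beta,\gamma)=2(\beta-1)\gamma+\tfrac{(\alpha-\beta)(\beta-1)}{2}+2,
\]
so $\funcc-\alpha/2-\gamma=(2\beta-3)\gamma+\tfrac{(\alpha-\beta)(\beta-1)}{2}-\alpha/2+2$, and using $\beta\ge 2$, $\beta-1\ge 1$, and the case hypothesis $\beta<\gamma+2$ one sees this is nonnegative. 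Analogous algebraic rewritings work in Cases~2--5: in each case the formula decomposes as a sum of a $\gamma$-term with coefficient $\ge 1$, a term proportional to $(\alpha-\beta)$ or $(\beta-1)$, and a constant, with the case's hypothesis on $\beta$ (resp.\ on $3\beta-\alpha-6\gamma-8$, or on $\beta-2\gamma-3$) ensuring nonnegativity after subtracting $\alpha/2+\gamma$.

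The main obstacle is purely the algebraic bookkeeping: the eight cases each have their own formula and their own side conditions, so the proof becomes a list of short but distinct verifications. However, there is no conceptual difficulty, because in every case the quantity $\funcc-\alpha/2-\gamma$ is a polynomial in $\alpha,\beta,\gamma$ that factors into a sum of terms each of which is nonnegative under the case's hypotheses; and the boundary case $\gamma=1$, $\beta=2$ (which is the tightest) can be checked by hand in each of the first five cases to confirm the decomposition is sharp enough.
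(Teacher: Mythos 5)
Your proposal is correct, and it follows a genuinely different (and in fact more robust) route than the paper. You handle the three atomic members and the last three cases in essentially the same way the paper does, but for Cases~1--5 of \cref{def:funcc} your strategy differs: you rewrite $\funcc$ in a factored form (e.g., $\funcc=2(\beta-1)\gamma+\tfrac{(\alpha-\beta)(\beta-1)}{2}+2$ in Case~1), subtract $\alpha/2+\gamma$, and bound each term using $\beta\ge 2$ and the case hypothesis, giving a manifestly nonnegative expression. I verified the arithmetic in Cases~1 and~2; the same pattern goes through in the remaining cases.

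The paper's proof takes a different path for Cases~1--5: after using $2\beta\gamma\ge 4\gamma$ it reduces to a lower bound of the form $\gamma+q(\beta)$ where $q(\beta)=\tfrac{\alpha\beta}{2}-\tfrac{\beta^2}{2}+\tfrac{\beta}{2}-\tfrac{\alpha}{2}+2$, and then states that $q$ is \emph{minimized} at the critical point $\beta=(\alpha+1)/2$, plugging that in to get $\funcc\ge \gamma+(\alpha-1)^2/8+2$. But $q$ is concave in $\beta$, so the critical point is a \emph{maximum}, not a minimum; plugging it in gives an upper bound on the lower-bound expression, which does not yield a valid chain of inequalities. One can see the intermediate claim fail concretely: for $(\alpha,\beta,\gamma)=(10,2,1)$, which lies in Case~1, $\funcc=8$, yet $\gamma+(\alpha-1)^2/8+2 = 13.125>8$. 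The lemma itself is still correct (the honest minimum of $q$ over the admissible range of $\beta$ occurs at an endpoint, e.g.\ $q(2)=\alpha/2+1\ge\alpha/2$), but the paper's calculus step as written does not establish it. Your factored-form argument sidesteps this issue entirely, so on this point your write-up is actually sounder than the paper's; you should, however, spell out the ``analogous rewritings'' for Cases~2--5 explicitly, since each case needs its own identity and its own use of the case hypothesis.
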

\begin{proof}
  Since $\funcc(1,0,0) = 1$ due to \cref{lm:p1-time-complexity} and $\funcc(2,1,0) = 2$ due to \cref{lm:c3-time-complexity}, the claim is obvious for $(1,0,0), (2,1,0) \in \Tfamily$. All remaining $(\alpha,\beta,\gamma) \in \Tfamily$ have $\beta \ge 2$. We use this inequality to simplify $\funcc(\alpha,\beta,\gamma)$:
  
  Case 1: $\funcc(\alpha,\beta,\gamma) \ge 4 \gamma + \tfrac{\alpha \beta}2 - \tfrac{\beta^2}2 + \tfrac \beta 2 - \tfrac \alpha 2 - 2 \gamma + 2 \ge \gamma + \tfrac{\alpha \beta}2 - \tfrac{\beta^2}2 + \tfrac \beta 2 - \tfrac \alpha 2 + 2$. By computing the derivative, we see that this is minimized for $\beta = (\alpha + 1)/2$. Plugging this in, we obtain $\funcc(\alpha,\beta,\gamma) \ge 2 \gamma + (\alpha-1)^2/8 + 2$. Finally, $(\alpha-1)^2/8 + 2 \ge \alpha / 2$ yields the claimed bound.
  
  Case 2: $\funcc(\alpha,\beta,\gamma) \ge 4 \gamma + \tfrac{\alpha \beta}2 - \tfrac{\beta^2}2 + (\tfrac \beta 2 + 2) - \tfrac \alpha 2 - 3 \gamma = \gamma + \tfrac{\alpha \beta}2 - \tfrac{\beta^2}2 + \tfrac \beta 2 - \tfrac \alpha 2 + 2$. From here we can argue as in the previous case.
  
  Case 3: Since in this case $3 \beta \ge \alpha + 6 \gamma + 8 \ge \beta + 8$, we must have $\beta \ge 4$. We use this to bound $\funcc(\alpha,\beta,\gamma) \ge 8 \gamma + \tfrac{\alpha \beta}2 - \tfrac{\beta^2}2 + (\beta + 8) - \alpha - 6 \gamma - 4 \ge \gamma + \tfrac{\alpha \beta}2 - \tfrac{\beta^2}2 + \beta - \alpha + 4$. By computing the derivative, we see that this is minimized for $\beta = \alpha/2 + 1$. Plugging this in, we obtain $\funcc(\alpha,\beta,\gamma) \ge \gamma + (\alpha-2)^2/8 + 4$. Finally, $(\alpha-2)^2/8 + 4 \ge \alpha / 2$ yields the claimed bound.
  
  Case 4: Using $\tfrac \beta 2 \ge \tfrac 1 2$ we see that the function value in this case it at least the function value in case 1, so we can argue as in case 1.
  
  Case 5: Since $\beta \ge 2 \gamma + 3$ we have $\beta \ge 3$. Using this, we get $\funcc(\alpha,\beta,\gamma) \ge 6 \gamma + \tfrac{\alpha \beta}2 - \tfrac{\beta^2}2 + 2 \beta - \tfrac \alpha 2 - 4 \gamma - \tfrac 3 2$. Using $\tfrac 3 2 \beta \ge \tfrac 9 2 \ge \tfrac 3 2 + 2$ we obtain $\funcc(\alpha,\beta,\gamma) \ge \gamma + \tfrac{\alpha \beta}2 - \tfrac{\beta^2}2 + \tfrac \beta 2 - \tfrac \alpha 2 + 2$. From here we can argue as in case 1.
  
  Cases 6-8: Clear since $2 \gamma^2 \ge \gamma$.
\end{proof}

\begin{lemma} \label{lm:alternative-definition-of-funcc}
    Let
    \[f_2(\alpha, \beta, \gamma) \coloneqq 2 \beta \gamma + \frac{\alpha \beta}{2} - \frac{\beta^2}{2} + \frac{3 \beta}{2} - \frac{\alpha}{2} - 3 \gamma,\]
    \[f_4(\alpha, \beta, \gamma) \coloneqq 2 \beta \gamma + \frac{\alpha \beta}{2} - \frac{\beta^2}{2} + \beta - \frac{\alpha}{2} - 2 \gamma + \frac{3}{2},\]
    and
    \[f_5(\alpha, \beta, \gamma) \coloneqq 2 \beta \gamma + \frac{\alpha \beta}{2} - \frac{\beta^2}{2} + 2 \beta - \frac{\alpha}{2} - 4 \gamma - \frac{3}{2}.\]

    For every $(\alpha, \beta, \gamma) \in \family'$, we have $\fabc=$
    \[ 
            \begin{cases}
                f_4(\alpha - 1, \beta, \gamma), & \text{if $\alpha + \beta$ is even, $\alpha > \beta$, and $\beta < \gamma + 2$;}\\
                f_2(\alpha, \beta, \gamma), & \text{if $\alpha + \beta$ is even, $3 \beta < \alpha + 6 \gamma + 8$, and ($\alpha = \beta$ or $\beta \ge \gamma + 2$);}\\
                f_5(\alpha, \beta - 1, \gamma), & \text{if $\alpha + \beta$ is even, $2 \beta \le \alpha + 4 \gamma + 6$, and $3 \beta \ge \alpha + 6 \gamma + 8$;}\\
                f_4(\alpha, \beta, \gamma), & \text{if $\alpha + \beta$ is odd and $\beta < 2 \gamma + 3$;}\\
                f_5(\alpha, \beta, \gamma), & \text{if $\alpha + \beta$ is odd, $2 \beta \le \alpha + 4 \gamma + 6$, and $\beta \ge 2 \gamma + 3$;}\\
                f_5(\alpha, 2 \gamma + \frac{\alpha + ((\alpha - 1) \bmod 4) + 3}{2}, \gamma), & \text{if $2 \beta > \alpha + 4 \gamma + 6$.}\\
            \end{cases}
    \]
    Furthermore, if $2 \beta > \alpha + 4 \gamma + 6$, then $2 \gamma + \frac{\alpha + ((\alpha - 1) \bmod 4) + 3}{2}$ is a positive integer and $2 \gamma + \frac{\alpha + ((\alpha - 1) \bmod 4) + 3}{2} \le \beta - 1$.

    In other words, Case 1 from the definition of $\funcc$ can be defined via Case 4 for a smaller triple $(\alpha - 1, \beta, \gamma)$,
    Case 3 from the definition of $\funcc$ can be defined via Case 5 for a smaller triple $(\alpha, \beta - 1, \gamma)$,
    and Cases 6, 7, and 8 can be defined via Case 5 for a smaller triple $(\alpha, 2 \gamma + \frac{\alpha + ((\alpha - 1) \bmod 4) + 3}{2}, \gamma)$.
\end{lemma}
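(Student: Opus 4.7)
The plan is to verify each of the three claimed reductions by direct algebraic substitution and then check the integrality and inequality side conditions for the last reduction.

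For Case 1, I would substitute $\alpha \mapsto \alpha - 1$ into $f_4$ and expand: this gives $2 \beta \gamma + \tfrac{(\alpha-1)\beta}{2} - \tfrac{\beta^2}{2} + \beta - \tfrac{\alpha-1}{2} - 2 \gamma + \tfrac{3}{2}$, and collecting the constant terms $\beta - \tfrac{\beta}{2} = \tfrac{\beta}{2}$ and $-\tfrac{\alpha-1}{2} + \tfrac{3}{2} = -\tfrac{\alpha}{2} + 2$ recovers exactly the Case~1 expression. For Case 3, I would substitute $\beta \mapsto \beta - 1$ into $f_5$ and expand: the square term gives $-\tfrac{(\beta-1)^2}{2} = -\tfrac{\beta^2}{2} + \beta - \tfrac{1}{2}$, the cross terms contribute $-2\gamma$ and $-\tfrac{\alpha}{2}$, and $2(\beta-1) = 2\beta - 2$, which combine to yield $2\beta\gamma + \tfrac{\alpha\beta}{2} - \tfrac{\beta^2}{2} + 3\beta - \alpha - 6\gamma - 4$, exactly the Case~3 formula.

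For Cases 6--8, the underlying computation has already been carried out in the proof of \cref{lm:pabc-clemb}: writing $x := \tfrac{((\alpha-1) \bmod 4) + 3}{2}$ and $\beta^* := \tfrac{\alpha}{2} + 2\gamma + x$, a direct expansion yields
\[
f_5(\alpha, \beta^*, \gamma) \;=\; 2\gamma^2 + \alpha\gamma + \tfrac{\alpha^2}{8} + \tfrac{\alpha}{2} + \Bigl(2x - \tfrac{x^2}{2} - \tfrac{3}{2}\Bigr).
\]
I would then evaluate $2x - \tfrac{x^2}{2} - \tfrac{3}{2}$ in each of the four residue classes of $\alpha$ mod $4$: it equals $0$ for $\alpha \equiv 0$, $\tfrac{3}{8}$ for $\alpha \equiv 1$ and $\alpha \equiv 3$, and $\tfrac{1}{2}$ for $\alpha \equiv 2$, matching Cases 6, 7, and 8 respectively.

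It remains to verify that $\beta^* = 2\gamma + \tfrac{\alpha + ((\alpha-1) \bmod 4) + 3}{2}$ is a positive integer and satisfies $\beta^* \le \beta - 1$. For integrality, I would note that $(\alpha-1) \bmod 4$ has the opposite parity of $\alpha$, so $\alpha + ((\alpha-1) \bmod 4)$ is always odd and $\alpha + ((\alpha-1) \bmod 4) + 3$ is even; positivity follows from $\alpha, \gamma \ge 0$. For the inequality, rearranging gives the equivalent condition $2\beta \ge \alpha + 4\gamma + ((\alpha-1) \bmod 4) + 5$. Splitting on $\alpha \bmod 4$, the case assumption $2\beta > \alpha + 4\gamma + 6$ (so $2\beta \ge \alpha + 4\gamma + 7$) combined with the parity of both sides is enough: the only delicate subcase is $\alpha \equiv 0 \pmod 4$, where we need $2\beta \ge \alpha + 4\gamma + 8$, but there both $2\beta$ and $\alpha + 4\gamma + 6$ are even, so $2\beta > \alpha + 4\gamma + 6$ forces $2\beta \ge \alpha + 4\gamma + 8$ automatically. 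All other subcases are even looser.

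None of these steps is conceptually difficult; the only mild obstacle is bookkeeping the four residue classes of $\alpha$ modulo $4$ consistently both in the algebraic identity for $f_5(\alpha,\beta^*,\gamma)$ and in the side-condition inequality. I would handle both by a single table indexed by $\alpha \bmod 4$ rather than writing out four separate calculations.
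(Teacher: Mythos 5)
Your proposal is correct and follows essentially the same approach as the paper: direct algebraic substitution for Cases 1, 3, and 6--8 (Cases 2, 4, 5 being definitional), then verifying integrality and the bound $\beta^* \le \beta - 1$ via a parity argument split on $\alpha \bmod 4$. The one step you leave implicit is the verification that each reduced triple actually satisfies the conditions of the target case in the definition of $\funcc$ (e.g., that $(\alpha-1,\beta,\gamma)$ satisfies Case~4's conditions, that $(\alpha,\beta-1,\gamma)$ satisfies Case~5's, and that $(\alpha,\beta^*,\gamma)$ satisfies Case~5's three conditions $\alpha+\beta^*$ odd, $\beta^* \ge 2\gamma+3$, $2\beta^* \le \alpha+4\gamma+6$); the paper carries these out explicitly, and they are what justify the ``in other words'' sentence rather than just the raw polynomial identities. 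The missing checks are short consequences of the hypotheses already in hand, so this is a presentation gap rather than a substantive one.
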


\begin{proof}
    We consider all the cases from the definition of $\funcc$.
    \begin{enumerate}
        \item \casea As $\alpha > \beta$ in this case, we have $\alpha - 1 \ge \beta$.
            Furthermore, as $\alpha + \beta$ is even, $(\alpha - 1) + \beta$ is odd.
            As $\beta < \gamma + 2$ in this case, we have $\beta < 2 \gamma + 3$.
            Therefore, triple $(\alpha - 1, \beta, \gamma)$ satisfies the conditions of Case 4 from the definition of $\funcc$.
            Finally, we have
            \begin{align*}
                f_4(\alpha - 1, \beta, \gamma) &= 2\beta\gamma + \frac{(\alpha - 1)\beta}{2} - \frac{\beta^2}{2} + \beta - \frac{\alpha - 1}{2} - 2\gamma + \frac{3}{2}\\
                                &= 2\beta\gamma + \frac{\alpha\beta}{2} - \frac{\beta^2}{2} + \frac{\beta}{2} - \frac{\alpha}{2} - 2\gamma + 2\\
                                &= \funcc(\alpha, \beta, \gamma).
            \end{align*}
        \item \caseb This case is trivial.
        \item \casec In this case $3 \beta \ge \alpha + 6 \gamma + 8 \ge \beta + 6 \gamma + 8$, so $\beta - 1 \ge (3 \gamma + 4) - 1 \ge 2 \gamma + 3$.
            Furthermore, as $\alpha + \beta$ is even, $\alpha + (\beta - 1)$ is odd.
            Moreover, $2 (\beta - 1) < 2 \beta \le \alpha + 4 \gamma + 6$ is required in this case.
            Therefore, triple $(\alpha, \beta - 1, \gamma)$ satisfies the conditions of Case 5 from the definition of $\funcc$.
            Finally, we have
            \begin{align*}
                f_5(\alpha, \beta - 1, \gamma) &= 2 (\beta - 1) \gamma + \frac{\alpha (\beta - 1)}{2} - \frac{(\beta - 1)^2}{2} + 2 (\beta - 1) - \frac{\alpha}{2} - 4 \gamma - \frac{3}{2}\\
                                    &= 2 \beta \gamma + \frac{\alpha \beta}{2} - \frac{\beta^2}{2} + 3 \beta - \alpha - 6 \gamma - 4\\
                                    &= \funcc(\alpha, \beta, \gamma).
            \end{align*}
        \item \cased This case is trivial.
        \item \casee This case is trivial.
        \item \casef Here we consider all cases $6$ through $8$ together.
            Define $x \coloneqq (((\alpha - 1) \bmod 4) + 3) / 2$.
            Denote $\beta' \coloneqq \frac{\alpha}{2} + 2\gamma + x$.
            It is easy to see that $\beta'$ is a positive integer.
            Furthermore, $2 \alpha + 2 \beta' = 2 \alpha + \alpha + 4 \gamma + (\alpha - 1) + 3 = 2 \bmod 4$, hence $\alpha + \beta'$ is odd.
            By the case condition we have $\alpha + \beta \ge 2 \beta > \alpha + 4 \gamma + 6$, therefore $\alpha \ge \beta \ge 4 \gamma + 6 \ge 6$.
            Thus, we have $\beta' = \frac{\alpha}{2} + 2 \gamma + x \ge 2 \gamma + 3$.
            Furthermore, $\beta' < \beta \le \alpha$ because $x \le 3$ and $2 \beta > \alpha + 4 \gamma + 6$ by the case assumption.
            Finally, $2 \beta' \le \alpha + 4 \gamma + 6$ as $x \le 3$.
            Therefore, triple $(\alpha, \beta', \gamma)$ satisfies the conditions of Case 5 from the definition of $\funcc$.
            We obtain
            \begin{align*}
                f_5(\alpha, \beta', \gamma) &= 2 \beta' \gamma + \frac{\alpha \beta'}{2} - \frac{\beta'^2}{2} + 2 \beta' - \frac{\alpha}{2} - 4 \gamma - \frac{3}{2}\\
                                        &= 2 (\frac{\alpha}{2} + 2 \gamma + x) \gamma + \frac{\alpha (\frac{\alpha}{2} + 2 \gamma + x)}{2} - \frac{(\frac{\alpha}{2} + 2 \gamma + x)^2}{2} \\&+ 2 (\frac{\alpha}{2} + 2 \gamma + x) - \frac{\alpha}{2} - 4 \gamma - \frac{3}{2}\\
                                        &= 2 \gamma^2 + \alpha \gamma + \frac{\alpha^2}{8} + \frac{\alpha}{2} + (2x - \frac{x^2}{2} - \frac{3}{2})\\
                                        &=\fabc,
            \end{align*}
            where the last inequality can be checked for each value of $\alpha \bmod 4$.
            \begin{enumerate}
                \item $\alpha = 0 \bmod 4$. In this case $x = (3 + 3) / 2 = 3$. Hence, $2x - \frac{x^2}{2} - \frac{3}{2} = 0$.
                \item $\alpha = 1 \bmod 4$. In this case $x = (0 + 3) / 2 = 3 / 2$. Hence, $2x - \frac{x^2}{2} - \frac{3}{2} = 3 / 8$.
                \item $\alpha = 2 \bmod 4$. In this case $x = (1 + 3) / 2 = 2$. Hence, $2x - \frac{x^2}{2} - \frac{3}{2} = 1 / 2$.
                \item $\alpha = 3 \bmod 4$. In this case $x = (2 + 3) / 2 = 5 / 2$. Hence, $2x - \frac{x^2}{2} - \frac{3}{2} = 3 / 8$.
            \end{enumerate}
        \item \caseg See case $6$.
        \item \caseh See case $6$.
    \end{enumerate}
\end{proof}

\begin{lemma} \label{lm:ck-time-complexity}
    For $\alpha \ge 2, \beta = 2$, and $\gamma = 0$, we have $\fabc = \left\lceil (\alpha + \beta) / 2 \right\rceil $.
\end{lemma}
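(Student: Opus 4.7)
The plan is to simply do a case split on the parity of $\alpha$, identify which case of \cref{def:funcc} is triggered when $\beta = 2$ and $\gamma = 0$, and then verify the claimed closed form by direct substitution.

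First I would handle the case where $\alpha$ is even, so $\alpha + \beta = \alpha + 2$ is even. Among the three even-parity branches of $\funcc$: Case 1 requires $\beta < \gamma + 2$, which fails here since $\beta = 2 = \gamma + 2$; Case 3 requires $3\beta \ge \alpha + 6\gamma + 8$, i.e.\ $6 \ge \alpha + 8$, which contradicts $\alpha \ge 2$; and Case 2 is applicable since $3\beta = 6 < \alpha + 8 = \alpha + 6\gamma + 8$ and $\beta = 2 \ge \gamma + 2$. Plugging $\beta = 2, \gamma = 0$ into the Case 2 formula yields
\[
\funcc(\alpha,2,0) = 0 + \alpha - 2 + 3 - \tfrac{\alpha}{2} - 0 = \tfrac{\alpha}{2} + 1 = \tfrac{\alpha+2}{2} = \left\lceil \tfrac{\alpha+\beta}{2} \right\rceil.
\]

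Next I would handle the case where $\alpha$ is odd, so $\alpha + \beta$ is odd. Here Case 4 is applicable since $\beta = 2 < 3 = 2\gamma + 3$ (and Case 5 is ruled out by the same inequality). Plugging $\beta = 2, \gamma = 0$ into the Case 4 formula gives
\[
\funcc(\alpha,2,0) = 0 + \alpha - 2 + 2 - \tfrac{\alpha}{2} - 0 + \tfrac{3}{2} = \tfrac{\alpha+3}{2} = \left\lceil \tfrac{\alpha+2}{2} \right\rceil,
\]
where the last equality uses that $\alpha$ is odd.

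There is no real obstacle here: the only subtlety is bookkeeping to check which case of $\funcc$ applies, which is handled cleanly by parity. Well-definedness is already guaranteed by \cref{lm:funcc-correctness}, so a single consistent branch per parity suffices.
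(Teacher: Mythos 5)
Your proof is correct and follows essentially the same route as the paper: split on the parity of $\alpha$, identify that Case 2 (respectively Case 4) of \cref{def:funcc} applies, and substitute. The only cosmetic difference is that you explicitly rule out the competing cases, whereas the paper simply cites that Case 2 (resp.\ Case 4) holds and relies on \cref{lm:funcc-correctness} for uniqueness; both are fine.
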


\begin{proof}
    If $\alpha$ is even, we have that $\alpha + \beta$ is even, $3 \beta < \alpha + 6 \gamma + 8$, and $\beta \ge \gamma + 2$.
    Hence, the second case from the definition of $\funcc$ is applicable.
    We substitute and get $\fabc = \alpha / 2 + 1 = \left\lceil (\alpha + \beta) / 2 \right\rceil$.

    If, on the other hand, $\alpha$ is odd, we have that $\alpha + \beta$ is odd and $\beta < \gamma + 3$.
    Hence, the fourth case from the definition of $\funcc$ is applicable.
    We substitute and get $\fabc = \alpha / 2 + 3 / 2 = \left\lceil (\alpha + \beta) / 2 \right\rceil$.
\end{proof}

\begin{lemma} \label{lm:kk2-time-complexity}
    For $\alpha = 2, \beta = 2$, and $\gamma \ge 0$, we have $\fabc = \gamma + 2$.
\end{lemma}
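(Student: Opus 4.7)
The plan is to verify that the triple $(\alpha,\beta,\gamma) = (2,2,\gamma)$ with $\gamma \ge 0$ falls into Case 2 of \cref{def:funcc}, and then substitute into the corresponding polynomial.

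First, I would check the case selection. Since $\alpha + \beta = 4$ is even, only one of Cases 1, 2, or 3 can apply (Cases 4 and 5 require odd parity, and Cases 6–8 require $2\beta > \alpha + 4\gamma + 6$, which here reads $4 > 8 + 4\gamma$, clearly false). Case 1 requires $\alpha > \beta$, which fails since $\alpha = \beta = 2$. Case 3 requires $3\beta \ge \alpha + 6\gamma + 8$, i.e., $6 \ge 10 + 6\gamma$, which fails for all $\gamma \ge 0$. Case 2 requires $\alpha + \beta$ even (✓), $3\beta < \alpha + 6\gamma + 8$ (i.e., $6 < 10 + 6\gamma$, which holds), and ($\alpha = \beta$ or $\beta \ge \gamma + 2$); the first disjunct $\alpha = \beta$ holds. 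Thus Case 2 applies.

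Then I would substitute $\alpha = \beta = 2$ into the Case 2 formula:
\[
\funcc(2,2,\gamma) = 2 \cdot 2 \cdot \gamma + \frac{2 \cdot 2}{2} - \frac{2^2}{2} + \frac{3 \cdot 2}{2} - \frac{2}{2} - 3\gamma = 4\gamma + 2 - 2 + 3 - 1 - 3\gamma = \gamma + 2,
\]
which is exactly the claimed value. No obstacles are anticipated; this is a direct case check followed by a one-line arithmetic simplification.
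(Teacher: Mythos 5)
Your proof is correct and takes essentially the same approach as the paper: verify that Case 2 of the definition of $\funcc$ applies (the paper simply checks the conditions of Case 2 hold, relying on Lemma~\ref{lm:funcc-correctness} to guarantee the cases are disjoint, while you additionally rule out the other cases explicitly), and then substitute $\alpha = \beta = 2$ to obtain $\gamma + 2$. The arithmetic checks out.
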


\begin{proof}
    In this case $\alpha + \beta$ is even, $3 \beta < \alpha + 6 \gamma + 8$, and $\alpha = \beta$.
    Hence, the second case from the definition of $\funcc$ is applicable.
    We substitute and get $\fabc = \gamma + 2$.
\end{proof}

\begin{lemma} \label{a-c-time-complexity}
For $\alpha \ge 3$, $\beta = 2$, and $\gamma \ge 1$, we have $\funcc(\alpha, \beta, \gamma) = 2 \gamma + 2 + \left\lfloor \frac{\alpha - 1}{2} \right\rfloor$.
\end{lemma}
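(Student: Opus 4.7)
The plan is to do a direct case analysis based on the parity of $\alpha$, since $\alpha + \beta = \alpha + 2$ has the same parity as $\alpha$. With $\beta = 2$ and $\gamma \ge 1$, several of the side conditions in the definition of $\funcc$ simplify drastically: we have $\beta = 2 < \gamma + 2$ (so $\beta \ge \gamma+2$ fails), $\beta = 2 < 2\gamma + 3$, $3\beta = 6 < \alpha + 6\gamma + 8$, and $2\beta = 4 \le \alpha + 4\gamma + 6$. In particular, none of Cases 3, 5, 6, 7, 8 can be triggered (Cases 6--8 require $2\beta > \alpha + 4\gamma + 6$, Case 5 requires $\beta \ge 2\gamma+3$, Case 3 requires $3\beta \ge \alpha + 6\gamma + 8$), and Case 2 fails since it requires $\alpha = \beta$ or $\beta \ge \gamma+2$, neither of which holds (we have $\alpha \ge 3 > 2 = \beta$).

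If $\alpha$ is even (so $\alpha \ge 4$ and $\alpha + \beta$ is even), then Case 1 applies since additionally $\alpha > \beta$ and $\beta < \gamma+2$. Plugging $\beta = 2$ into the Case 1 formula, I would simplify
\[
    \funcc(\alpha, 2, \gamma) \;=\; 4\gamma + \alpha - 2 + 1 - \tfrac{\alpha}{2} - 2\gamma + 2 \;=\; 2\gamma + 2 + \tfrac{\alpha - 2}{2},
\]
and note that $\lfloor (\alpha-1)/2 \rfloor = (\alpha-2)/2$ for even $\alpha$, giving the claimed identity.

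If $\alpha$ is odd (so $\alpha + \beta$ is odd), then Case 4 applies since $\beta = 2 < 2\gamma + 3$. Plugging $\beta = 2$ into the Case 4 formula, I would simplify
\[
    \funcc(\alpha, 2, \gamma) \;=\; 4\gamma + \alpha - 2 + 2 - \tfrac{\alpha}{2} - 2\gamma + \tfrac{3}{2} \;=\; 2\gamma + 2 + \tfrac{\alpha - 1}{2},
\]
and note that $\lfloor (\alpha-1)/2 \rfloor = (\alpha-1)/2$ for odd $\alpha$, again giving the claimed identity.

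There is no real obstacle here: the whole proof is a two-case verification that the correct case of Definition~\ref{def:funcc} is selected and that the resulting polynomial specializes to $2\gamma + 2 + \lfloor (\alpha-1)/2 \rfloor$ after substituting $\beta = 2$. The only thing to be careful about is explicitly ruling out Cases 2, 3, 5, 6, 7, 8 using the stated inequalities on $\beta, \gamma, \alpha$, which I would do up front so that the subsequent case split on the parity of $\alpha$ is clean.
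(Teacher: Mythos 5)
Your proof is correct and takes essentially the same two-case approach as the paper: split on the parity of $\alpha$, apply Case~1 (even) or Case~4 (odd) of Definition~\ref{def:funcc}, and substitute $\beta = 2$. Your extra step of explicitly ruling out the other six cases is sound but not strictly needed, since Lemma~\ref{lm:funcc-correctness} already guarantees the cases are disjoint, so verifying that one case's conditions hold suffices.
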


\begin{proof}
    Consider two cases.
    \begin{enumerate}
        \item \emph{$\alpha$ is even.} Then $\alpha + \beta$ is even, $\alpha > \beta$, and $\beta < \gamma + 2$. So the first case from the definition of $\funcc$ applies.
            We substitute and get $\fabc = 4\gamma + \alpha - 2 + 1 - \frac{\alpha}{2} - 2\gamma + 2 = 2\gamma + \frac{\alpha}{2} + 1 = 2 \gamma + 2 + \left\lfloor \frac{\alpha - 1}{2} \right\rfloor$.
        \item \emph{$\alpha$ is odd.} Then $\alpha + \beta$ is odd and $\beta < 2\gamma + 3$. So the fourth case from the definition of $\funcc$ applies.
            We substitute and get $\fabc = 4\gamma + \alpha - 2 + 2 - \frac{\alpha}{2} - 2\gamma + \frac{3}{2} = 2 \gamma + 2 + \left\lfloor \frac{\alpha - 1}{2} \right\rfloor$.
    \end{enumerate}
\end{proof}

\begin{lemma} \label{a-a-c-linear-bound}
    For $\alpha = \beta \ge 3$ and $\gamma \ge 1$ we have $\funcc(\alpha, \beta, \gamma) \ge (2\alpha - 3) \cdot \gamma + \alpha$.
\end{lemma}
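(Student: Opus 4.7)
\textbf{Proof Plan for Lemma~\ref{a-a-c-linear-bound}.} Since $\alpha = \beta$, the sum $\alpha+\beta$ is even, so only the ``even'' cases of \cref{def:funcc} (Cases~1--3) and the ``large~$\beta$'' cases (Cases~6--8) can apply. Case~1 requires $\alpha>\beta$ and thus is ruled out. With $\alpha=\beta$, the remaining case conditions become cleanly separated by the value of $\beta$ relative to $\gamma$: Case~2 is active when $\beta < 3\gamma+4$, Case~3 when $3\gamma+4 \le \beta \le 4\gamma+6$, and Cases~6--8 when $\beta \ge 4\gamma+7$. The plan is to verify the desired inequality $\funcc(\alpha,\beta,\gamma) \ge (2\alpha-3)\gamma + \alpha$ in each of these three regimes separately.

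In Case~2 the $\beta^2/2$ terms cancel when $\alpha=\beta$, and a direct substitution yields $\funcc(\alpha,\beta,\gamma) = 2\beta\gamma + \beta - 3\gamma$, which equals $(2\beta-3)\gamma + \beta$ exactly, so the inequality holds with equality. In Case~3 substitution gives $\funcc(\alpha,\beta,\gamma) = 2\beta\gamma + 2\beta - 6\gamma - 4$, and the desired inequality reduces after cancellation to $\beta \ge 3\gamma + 4$, which is precisely the defining condition of Case~3. These two subcases therefore require only an arithmetic verification.

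For Cases~6--8 (so $\beta \ge 4\gamma+7$) I will use \cref{lm:funcc-last-cases}, which gives $\funcc(\alpha,\beta,\gamma) \ge 2\gamma^2 + \alpha\gamma + \alpha^2/8 + \alpha/2$. After substituting $\alpha = \beta$, the inequality to establish becomes
\[
\tfrac{\beta(\beta-4)}{8} \;\ge\; \gamma\bigl(\beta - 2\gamma - 3\bigr).
\]
The plan here is to view both sides as functions of $\beta$ with $\gamma$ fixed: the derivative in $\beta$ of the difference (LHS $-$ RHS) is $(\beta-2)/4 - \gamma$, which is nonnegative for $\beta \ge 4\gamma+2$; in particular it is nonnegative throughout the regime $\beta \ge 4\gamma+7$. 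Thus it suffices to check the inequality at the boundary $\beta = 4\gamma+7$, where a direct expansion gives LHS $-$ RHS $= \gamma + 21/8 > 0$. This is the only step requiring a genuine (if short) calculation and is the main -- though still modest -- obstacle; the other two subcases are pure substitutions.
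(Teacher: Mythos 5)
Your proof is correct and follows essentially the same route as the paper: it goes case-by-case through the definition of $\funcc$ after observing that $\alpha=\beta$ rules out the odd-parity cases and Case~1, and that the remaining cases partition by the size of $\beta$ relative to $\gamma$. The only notable difference is in Cases~6--8: the paper verifies the inequality by rewriting $2\gamma^2 + \alpha\gamma + \tfrac{\alpha^2}{8} + \tfrac{\alpha}{2}$ as $\tfrac{1}{2}\bigl(\tfrac{\alpha}{2} - 2\gamma\bigr)\bigl(\tfrac{\alpha}{2} - 2\gamma - 2\bigr) + \gamma + (2\alpha\gamma + \alpha - 3\gamma)$ and discarding the nonnegative leading terms, whereas you reduce to $\tfrac{\beta(\beta-4)}{8} \ge \gamma(\beta - 2\gamma - 3)$ and argue via monotonicity in $\beta$ plus a boundary check at $\beta = 4\gamma+7$. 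Both methods are valid; the paper's algebraic identity is more self-contained, while your derivative-plus-boundary argument is arguably easier to discover. Your use of \cref{lm:funcc-last-cases} to unify Cases~6--8 in one stroke is a small simplification over the paper's treatment, which handles Case~6 explicitly and then observes Cases~7 and~8 only add a positive constant. All arithmetic in your outline checks out.
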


\begin{proof}
    We need to prove $\funcc(\alpha, \alpha, \gamma) \ge 2\alpha\gamma + \alpha - 3\gamma$. We consider all the cases from the definition of $\funcc$.
    \begin{enumerate}
        \item \casea This case requires $\alpha > \beta$ which contradicts the fact that $\alpha = \beta$.
        \item \caseb We have $\funcc(\alpha, \beta, \gamma) = 2\beta\gamma + \frac{\alpha\beta}{2} - \frac{\beta^2}{2} + \frac{3\beta}{2} - \frac{\alpha}{2} - 3\gamma = 2\alpha\gamma + \alpha - 3\gamma$.
        \item \casec In this case $3\beta \ge \alpha + 6\gamma + 8$ is required which implies $\alpha \ge 3\gamma + 4$ because $\alpha=\beta$. Thus, $\funcc(\alpha, \beta, \gamma) = 2\beta\gamma + \frac{\alpha\beta}{2} - \frac{\beta^2}{2} + 3\beta - \alpha - 6\gamma - 4 = 2\alpha\gamma + 2\alpha - 6\gamma - 4 \ge 2\alpha\gamma + \alpha + (3\gamma + 4) - 6\gamma - 4 = 2\alpha\gamma + \alpha - 3\gamma$.
        \item \cased This case requires $\alpha + \beta$ being odd which contradicts the fact that $\alpha=\beta$.
        \item \casee This case requires $\alpha + \beta$ being odd which contradicts the fact that $\alpha=\beta$.
        \item \casef In this case $2\beta > \alpha + 4\gamma + 6$ is required which implies $\alpha > 4\gamma + 6$ because $\alpha=\beta$.
            Thus, $\frac{\alpha}{2} \ge 2 \gamma + 2$.
            Therefore, we get $\fabc = 2\gamma^2 + \alpha\gamma + \frac{\alpha^2}{8} + \frac{\alpha}{2} = \frac{1}{2} \cdot (\frac{\alpha}{2} - 2\gamma) \cdot (\frac{\alpha}{2} - 2\gamma - 2) + \gamma + 2\alpha\gamma + \alpha - 3\gamma \ge 2\alpha\gamma + \alpha - 3\gamma$.
        \item \caseg In this case the value of $\funcc$ is larger by $\frac{3}{8}$ than the one in the sixth case while $2\beta > \alpha + 4\gamma + 6$ still holds.
            Thus, the inequality follows from the sixth case.
        \item \caseh In this case the value of $\funcc$ is larger by $\frac{1}{2}$ than the one in the sixth case while $2\beta > \alpha + 4\gamma + 6$ still holds.
            Thus, the inequality follows from the sixth case.
    \end{enumerate}
\end{proof}

\begin{lemma} \label{a-b-c-linear-bound}
    For $\alpha > \beta \ge 3$ and $\gamma \ge 1$ we have $\funcc(\alpha, \beta, \gamma) \ge (\beta - 1) \cdot (2\gamma + 1) + 2$.
\end{lemma}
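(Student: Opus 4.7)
The plan is to verify the inequality case-by-case across the eight cases of \cref{def:funcc}, mirroring the structure of \cref{a-a-c-linear-bound}. Let $L := (\beta - 1)(2\gamma + 1) + 2 = 2\beta\gamma + \beta - 2\gamma + 1$. For each case I would substitute the closed form of $\fabc$ and rewrite $\fabc - L$ as a sum of manifestly non-negative terms, using the global assumptions ($\alpha > \beta \ge 3$, $\gamma \ge 1$) together with the case-specific inequalities.

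First I would handle Cases 1--5 (those with $2\beta \le \alpha + 4\gamma + 6$), where the formula has the common quadratic part $2\beta\gamma + \tfrac{\alpha\beta}{2} - \tfrac{\beta^2}{2}$. Factoring shows $\tfrac{\alpha\beta}{2} - \tfrac{\beta^2}{2} = \tfrac{(\alpha - \beta)(\beta - 1)}{2} + \tfrac{\alpha - \beta}{2}$, and since $\alpha > \beta$ (and $\alpha + \beta$ has the required parity in each even/odd subcase) we have $\alpha - \beta \ge 1$, and in fact $\alpha - \beta \ge 2$ in Cases 1, 2, 3. Combining this with $\beta \ge 3$ should make $\fabc - L$ a sum of non-negative terms, possibly after using the case-specific inequalities (e.g.\ in Case 3 the bound $3\beta \ge \alpha + 6\gamma + 8$ turns into the useful estimate $\beta \ge 3\gamma + 4$ plus slack, as in \cref{a-a-c-linear-bound}; in Case 5 the condition $\beta \ge 2\gamma + 3$ cancels the extra $-2\gamma$ term against $+\beta$). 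Cases 2 and 4 are the tightest, and there I expect to land essentially on equality after minimal slack.

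Cases 6--8 (where $2\beta > \alpha + 4\gamma + 6$) are the main obstacle, since the formula shape changes entirely to $2\gamma^2 + \alpha\gamma + \tfrac{\alpha^2}{8} + \tfrac{\alpha}{2}$ plus a small constant. The key observation is that the case condition $2\beta > \alpha + 4\gamma + 6$ combined with $\alpha \ge \beta$ yields $\alpha \ge \beta$ and $\beta \le \tfrac{\alpha}{2} + 2\gamma + 3$ (from rearranging $2\beta \le 2\alpha$ with the strict bound, one gets $\beta < \alpha$ plus $\beta \le \tfrac{\alpha + 4\gamma + 6}{2}$ does not hold here, but rather we get an upper bound on $\beta$ in terms of $\alpha, \gamma$ in the opposite direction). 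I would then bound $\fabc$ from below by substituting $\beta \le \tfrac{\alpha}{2} + 2\gamma + \tfrac{7}{2}$ (or a similar tight inequality derived from the case conditions) into $L$, and check that the resulting quadratic expression in $\alpha, \gamma$ is dominated by $2\gamma^2 + \alpha\gamma + \tfrac{\alpha^2}{8} + \tfrac{\alpha}{2}$. Concretely, one rewrites the difference as a quadratic $\tfrac{1}{2}\bigl(\tfrac{\alpha}{2} - 2\gamma - c\bigr)\bigl(\tfrac{\alpha}{2} - 2\gamma - c'\bigr) + (\text{non-negative remainder})$ for suitable constants $c, c'$, analogous to the trick in Case 6 of \cref{a-a-c-linear-bound}.

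The small numeric adjustments between Cases 6, 7, 8 (adding $0$, $\tfrac{3}{8}$, or $\tfrac{1}{2}$) only help the inequality, so once Case 6 is established, Cases 7 and 8 follow immediately. The main risk is an off-by-one in the tight cases (Cases 2 and 4), so I would double-check by verifying the inequality at the boundary $\alpha = \beta + 1$ (for odd $\alpha+\beta$) and $\alpha = \beta + 2$ (for even $\alpha+\beta$), where the $\tfrac{(\alpha-\beta)(\beta-1)}{2}$ slack is smallest.
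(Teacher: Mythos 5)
Your plan is essentially the paper's proof: a case-by-case verification across the eight branches of \cref{def:funcc}, using $\alpha - \beta \ge 2$ in the even cases and $\alpha - \beta \ge 1$ in the odd ones, exploiting that $\fabc$ (resp.\ $L$) is monotone in $\alpha$ (resp.\ $\beta$), and checking the tight boundary values. The paper also handles Cases 2, 3, 5 by showing their formula exceeds the Case-1 (resp.\ Case-4) formula by a nonnegative amount, which is a cleaner shortcut than re-expanding each case, but that is a stylistic difference.

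One concrete slip worth fixing before you carry out Cases 6--8: the case condition $2\beta > \alpha + 4\gamma + 6$ gives a \emph{lower} bound $\beta > \tfrac{\alpha}{2} + 2\gamma + 3$, so the substitution $\beta \le \tfrac{\alpha}{2} + 2\gamma + \tfrac{7}{2}$ you propose is in the wrong direction and in fact false in this regime. The upper bound on $\beta$ you actually have is $\beta \le \alpha - 1$ (from $\alpha > \beta$). Using that, $L \le 2(\alpha-1)\gamma + (\alpha - 1) - 2\gamma + 1 = 2\alpha\gamma - 4\gamma + \alpha$, and comparing with $\funcc \ge 2\gamma^2 + \alpha\gamma + \tfrac{\alpha^2}{8} + \tfrac{\alpha}{2}$ reduces to showing $\tfrac{\alpha^2}{8} - (\gamma + \tfrac12)\alpha + 2\gamma^2 + 4\gamma \ge 0$; the discriminant of this quadratic in $\alpha$ is $\tfrac14 - \gamma < 0$ for $\gamma \ge 1$, so it holds unconditionally. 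Alternatively the paper lower-bounds $\funcc$ by plugging $\alpha = \beta + 1$ and factoring $\tfrac{(\beta-4\gamma)(\beta-4\gamma-2)}{8}$ using $\beta - 4\gamma > 6$. Either way works; just make sure the $\beta$-bound goes the right direction.
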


\begin{proof}
    We want to prove
    \begin{align} 
        \funcc(\alpha, \beta, \gamma) \stackrel{?}{\ge} (\beta - 1) \cdot (2\gamma+1) + 2 = 2\beta\gamma+\beta - 2\gamma + 1. \tag{\textasteriskcentered}
    \end{align}
    Consider all the cases from the definition of $\funcc$.
    \begin{enumerate}
        \item \casea We have $\funcc(\alpha, \beta, \gamma) = 2\beta\gamma+\frac{\alpha\beta}{2} - \frac{\beta^2}{2} + \frac{\beta}{2} - \frac{\alpha}{2} - 2\gamma + 2$.
            In terms of $\alpha$ this is a linear function with slope $\frac{\beta-1}{2} > 0$.
            As $\alpha > \beta$ and $\alpha + \beta$ is even in this case, we obtain $\alpha \ge \beta + 2$.
            Therefore, $\fabc \ge \funcc(\beta + 2, \beta, \gamma) = 2\beta\gamma + \frac{(\beta + 2) \cdot \beta}{2} - \frac{\beta^2}{2} + \frac{\beta}{2} - \frac{\beta + 2}{2} - 2\gamma + 2 = 2\beta\gamma + \beta - 2\gamma + 1$.
        \item \caseb This case requires $\alpha = \beta$ or $\beta \ge \gamma + 2$. Since $\alpha > \beta$, we have $\beta \ge \gamma + 2$. The value of $\funcc$ in this case is different from the one in the previous case by $\beta - \gamma - 2 \ge 0$. As the only condition that we used in the previous case to prove \inq was the fact that $\alpha+\beta$ is even, the same arguments apply here.
        \item \casec The value of $\funcc$ in this case is different from the one in the first case by $\frac{5}{2}\beta - \frac{\alpha}{2} - 4\gamma - 6 = (3\beta - \alpha - 6\gamma - 8) + \frac{1}{2} \cdot (\alpha - \beta) + (2\gamma + 2)$, which is nonnegative due to the fact that $3\beta \ge \alpha + 6\gamma + 8$ in this case.
            As the only condition that we used in the first case to prove \inq was the fact that $\alpha+\beta$ is even, the same arguments apply here.
        \item \cased
            In terms of $\alpha$, the value of $\funcc(\alpha, \beta, \gamma)$ in this case is a linear function with slope $\frac{\beta-1}{2} > 0$.
            Since $\alpha \ge \beta + 1$, we obtain $\fabc \ge \funcc(\beta + 1, \beta, \gamma) =  2\beta\gamma + \frac{(\beta + 1) \cdot \beta}{2} - \frac{\beta^2}{2} + \beta - \frac{\beta + 1}{2} - 2\gamma + \frac{3}{2} = 2\beta\gamma + \beta - 2\gamma + 1$.
        \item \casee The value of $\funcc$ in this case is different from the one in the previous case by $\beta - 2\gamma - 3$, which is nonnegative due to the fact that $\beta \ge 2\gamma+3$ in this case. Therefore, the same arguments as in the previous case apply.
        \item \casef We have $\funcc(\alpha, \beta, \gamma) = 2\gamma^2 + \alpha\gamma + \frac{\alpha^2}{8} + \frac{\alpha}{2}$. This is monotonically increasing in $\alpha$ (for positive $\alpha$).
            Furthermore, in this case $\alpha + 4\gamma + 6 < 2\beta < \alpha + \beta$ is required, which yields $\beta - 4\gamma > 6$.
            Since $\alpha \ge \beta + 1$, we obtain $\fabc \ge \funcc(\beta + 1, \beta, \gamma) = 2\gamma^2 + (\beta+1)\gamma + \frac{(\beta+1)^2}{8} + \frac{\beta+1}{2} = \frac{(\beta-4\gamma) (\beta-4\gamma-2)}{8} + (2\gamma - \frac{3}{8}) + 2\beta\gamma+\beta-2\gamma+1 \ge 2\beta\gamma+\beta-2\gamma+1$.
        \item \caseg In this case the value of $\funcc$ is larger by $\frac{3}{8}$ than the one in the sixth case while $2\beta > \alpha + 4\gamma + 6$ still holds.
            Thus, the inequality follows from the sixth case.
        \item \caseh In this case the value of $\funcc$ is larger by $\frac{1}{2}$ than the one in the sixth case while $2\beta > \alpha + 4\gamma + 6$ still holds.
            Thus, the inequality follows from the sixth case.
    \end{enumerate}
\end{proof}

\begin{lemma} \label{a-b-c-linear-bound-2}
    For $\alpha > \beta \ge 3$ and $\gamma \ge 1$, we have $\funcc(\alpha, \beta, \gamma) \ge (\alpha - 2\ell + 1) \cdot (2\gamma + 1) + \ell$ where $\ell \coloneqq \funcc(\alpha, \beta, \gamma) - (\beta - 1) \cdot (2\gamma + 1)$.
\end{lemma}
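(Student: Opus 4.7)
The plan is to algebraically reduce the target inequality to a simpler form and then verify this reduced form by case analysis. Let $F \coloneqq \funcc(\alpha, \beta, \gamma)$ and $\ell \coloneqq F - (\beta - 1)(2\gamma + 1)$. First I would expand the target inequality $F \ge (\alpha - 2\ell + 1)(2\gamma + 1) + \ell$ by writing the right-hand side as $(\alpha + 1)(2\gamma + 1) - \ell(4\gamma + 1)$, substituting the definition of $\ell$, and dividing by $2(2\gamma + 1) > 0$. After simplification this yields the equivalent condition
\[ 2\funcc(\alpha, \beta, \gamma) \ge \alpha + \beta + 4(\beta - 1)\gamma. \]

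A short reduction handles a large range of parameters uniformly: if $\alpha - \beta \le 2$, then \cref{a-b-c-linear-bound} gives $2F \ge 2(\beta - 1)(2\gamma + 1) + 4 = 4(\beta - 1)\gamma + 2\beta + 2$, and the hypothesis $\alpha \le \beta + 2$ then yields the reformulated inequality directly. So it suffices to handle $\alpha \ge \beta + 3$.

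For this remaining range I would verify the reformulated inequality by case analysis over the eight cases of \cref{def:funcc}. In each of cases 1 through 5, substituting the polynomial expression for $F$ and simplifying $2F - \alpha - \beta - 4(\beta - 1)\gamma$ produces an expression that is linear in $\alpha$ with positive slope $\beta - 2 > 0$ (since $\beta \ge 3$). Thus it is minimized at the smallest permissible value of $\alpha$---namely $\beta + 2$ in the even-parity cases and $\beta + 1$ in the odd-parity cases---and plugging in this value yields a polynomial in $\beta$ and $\gamma$ whose non-negativity follows from the case's side hypothesis (for instance, $\beta \ge \gamma + 2$ in case 2 discharges a residual $-2\gamma$ term using $(\alpha-\beta)(\beta-2) \ge 2\gamma$; analogously $\beta \ge 2\gamma + 3$ does the job in case 5).

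For cases 6 through 8, $F$ is quadratic in $\alpha$. Here I would apply \cref{lm:funcc-last-cases} to bound $2F \ge 4\gamma^2 + 2\alpha\gamma + \alpha^2/4 + \alpha$, and combine with $\beta \le \alpha - 1$ to control the right-hand side. After rearrangement and completing the square, the desired inequality reduces to $(\alpha - 4\gamma - 2)^2 + 16\gamma \ge 0$, which is trivially true. The main obstacle is the bookkeeping of the eight cases and the correct use of each case's side condition; case 3 is the most delicate, since the combined constraints $\alpha > \beta$, $\alpha + \beta$ even, $2\beta \le \alpha + 4\gamma + 6$, and $3\beta \ge \alpha + 6\gamma + 8$ must be used together to pin down a range for $\alpha$ in which the reformulated polynomial inequality becomes tight.
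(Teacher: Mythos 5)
Your proposal mirrors the paper's proof: you derive exactly the same reformulated inequality ($2\funcc(\alpha,\beta,\gamma) \ge \alpha + \beta + 4(\beta-1)\gamma$, equivalently $\funcc(\alpha,\beta,\gamma) \ge \tfrac{\alpha+\beta}{2} + 2\beta\gamma - 2\gamma$) and then verify it case-by-case over \cref{def:funcc}, plugging in the smallest admissible $\alpha$ in Cases 1--5 and completing the square in Cases 6--8.

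One small inaccuracy worth flagging: in Case 3 the expression $2\funcc(\alpha,\beta,\gamma) - \alpha - \beta - 4(\beta-1)\gamma$ has slope $\beta - 3$ in $\alpha$, not $\beta - 2$ (the $-\alpha$ term in that case, versus $-\tfrac{\alpha}{2}$ in the others, shifts the coefficient down by one). This does not sink the argument, since the side condition $3\beta \ge \alpha + 6\gamma + 8$ together with $\alpha \ge \beta + 2$ forces $\beta \ge 3\gamma + 5 \ge 8$, so the slope stays strictly positive; but as written your uniform claim of slope $\beta-2$ for Cases 1--5 is false. The paper sidesteps this by reducing Cases 2 and 3 to Case 1 (and Case 5 to Case 4) via non-negative differences, which avoids recomputing slopes in each case. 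Your shortcut for $\alpha - \beta \le 2$ is harmless but redundant: the per-case minimization already covers $\alpha = \beta+1$ and $\alpha = \beta+2$. Your Cases 6--8 argument via \cref{lm:funcc-last-cases} and the identity $(\alpha-4\gamma-2)^2 + 16\gamma \ge 0$ is a clean and correct alternative to the paper's direct computation.
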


\begin{proof}
    We need to prove
    \begin{align*}
        \fabc &\stackrel{?}{\ge} (\alpha - 2\ell + 1) \cdot (2 \gamma + 1) + \ell\\
              &= (\alpha + 1) \cdot (2 \gamma + 1) - (4 \gamma + 1) \ell\\
              &= (\alpha + 1) \cdot (2 \gamma + 1) - (4 \gamma + 1) \cdot (\funcc(\alpha, \beta, \gamma) - (\beta - 1) \cdot (2\gamma + 1))\\
              &= (\alpha + 1) \cdot (2 \gamma + 1) + (4 \gamma + 1) \cdot (\beta - 1) \cdot (2 \gamma + 1) - (4 \gamma + 1) \cdot \fabc.
    \end{align*}

    By rearranging, we simplify our proof goal to
    \begin{align*}
        \fabc &\stackrel{?}{\ge} \frac{1}{4 \gamma + 2} \cdot ((\alpha + 1) \cdot (2 \gamma + 1) + (4 \gamma + 1) \cdot (\beta - 1) \cdot (2 \gamma + 1))\\
              &= \frac{1}{2} (\alpha + 1 + (4 \gamma + 1) \cdot (\beta - 1))\\
              &= \frac{\alpha + \beta}{2} + 2 \beta \gamma - 2 \gamma. \tag{\textasteriskcentered}
    \end{align*}

    We consider all the cases from the definition of $\funcc$.
    \begin{enumerate}
        \item \casea As $\alpha > \beta$ by the lemma statement, and $\alpha + \beta$ is required to be even in this case, we get $\alpha \ge \beta + 2$.
            Therefore, we obtain
            \begin{align*}
                \fabc &= 2 \beta \gamma + \frac{\alpha \beta}{2} - \frac{\beta^2}{2} + \frac{\beta}{2} - \frac{\alpha}{2}-2\gamma+2\\
                      &= 2 \beta \gamma + \frac{\alpha}{2} + \frac{\alpha (\beta - 2)}{2} - \frac{\beta^2}{2} + \frac{\beta}{2} - 2\gamma + 2\\
                      &\ge 2 \beta \gamma + \frac{\alpha}{2} + \frac{(\beta + 2) \cdot (\beta - 2)}{2} - \frac{\beta^2}{2} + \frac{\beta}{2} - 2\gamma + 2\\
                      &= \frac{\alpha + \beta}{2} + 2 \beta \gamma - 2 \gamma,
            \end{align*}
            thus proving \inq.
        
        \item \caseb The value of $\funcc$ in this case is different from the one in the previous case by $\beta - \gamma - 2$, which is nonnegative as in this case either $\alpha = \beta$ or $\beta \ge \gamma + 2$ is required, and by lemma statement $\alpha > \beta$.
            As the only condition that we used in the previous case to prove \inq was the fact that $\alpha+\beta$ is even, the same arguments apply here.
        \item \casec The value of $\funcc$ in this case is different from the one in the first case by $\frac{5}{2}\beta - \frac{\alpha}{2} - 4\gamma - 6 = (3\beta - \alpha - 6\gamma - 8) + \frac{\alpha - \beta}{2} + 2\gamma + 2$, which is nonnegative as $3\beta \ge \alpha + 6\gamma + 8$ is required in this case.
            As the only condition that we used in the first case to prove \inq was the fact that $\alpha+\beta$ is even, the same arguments apply here.
        \item \cased As $\alpha > \beta$ by the lemma statement, we get $\alpha \ge \beta + 1$.
            Therefore, we obtain
            \begin{align*}
                \fabc &= 2 \beta \gamma + \frac{\alpha \beta}{2} - \frac{\beta^2}{2} + \beta - \frac{\alpha}{2}-2\gamma+\frac{3}{2}\\
                      &= 2 \beta \gamma + \frac{\alpha}{2} + \frac{\alpha (\beta - 2)}{2} - \frac{\beta^2}{2} + \beta - 2\gamma + \frac{3}{2}\\
                      &\ge 2 \beta \gamma + \frac{\alpha}{2} + \frac{(\beta + 1) \cdot (\beta - 2)}{2} - \frac{\beta^2}{2} + \beta - 2\gamma + \frac{3}{2}\\
                      &= \frac{\alpha + \beta}{2} + 2 \beta \gamma - 2 \gamma + \frac{1}{2}\\
                      &\ge \frac{\alpha + \beta}{2} + 2 \beta \gamma - 2 \gamma,
            \end{align*}
            thus proving \inq.
        \item \casee The value of $\funcc$ in this case is different from the one in the previous case by $\beta - 2\gamma - 3$, which is nonnegative due to the fact that $\beta \ge 2\gamma+3$ is required in this case.
            Therefore, the same arguments as in the previous case apply.
        \item \casef As $\alpha + 4 \gamma + 6 < 2 \beta \le \alpha + \beta$ is required in this case, we get $\beta - 4 \gamma - 2 \ge 0$.
            Furthermore, as $\alpha > \beta$ by the lemma statement, we get $\alpha \ge \beta + 1$.
            Combining these facts together, we obtain
            \begin{align*}
                \fabc &= 2 \gamma^2 + \alpha \gamma + \frac{\alpha^2}{8} + \frac{\alpha}{2}\\
                      &\ge 2 \gamma^2 + (\beta + 1) \cdot \gamma + \frac{(\beta + 1)^2}{8} + \frac{\alpha}{2}\\
                      &= \frac{\alpha + \beta}{2} + 2 \beta \gamma - 2 \gamma + (2 \gamma^2 - \beta \gamma + 3 \gamma - \frac{\beta}{2} + \frac{(\beta + 1)^2}{8})\\
                      &= \frac{\alpha + \beta}{2} + 2 \beta \gamma - 2 \gamma + (\frac{(\beta - 4 \gamma) \cdot (\beta - 4 \gamma - 2)}{8} + 2 \gamma + \frac{1}{8})\\
                      &\ge \frac{\alpha + \beta}{2} + 2 \beta \gamma - 2 \gamma,
            \end{align*}
            thus proving \inq.
        \item \caseg In this case the value of $\funcc$ is different by $\frac{3}{8}$ than the one in the sixth case, and the same arguments apply.
        \item \caseh In this case the value of $\funcc$ is different by $\frac{1}{2}$ than the one in the sixth case, and the same arguments apply.
    \end{enumerate}
\end{proof}

\begin{lemma} \label{lm:a3c-linear-bound}
    For $\alpha \ge \beta = 3$ and $\gamma \ge 1$, we have $\fabc \ge 2 \gamma + \alpha + 1$.
\end{lemma}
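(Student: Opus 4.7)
The plan is a straightforward case analysis on the definition of $\funcc$, specialized to $\beta = 3$. I would substitute $\beta = 3$ into each of the eight cases of \cref{def:funcc} and check (a) which cases can actually occur under the hypotheses $\alpha \ge 3$ and $\gamma \ge 1$, and (b) the target inequality $\funcc(\alpha,3,\gamma) \ge 2\gamma + \alpha + 1$ in each surviving case. The key is to exploit the smallness of $\beta$ to rule out most cases cheaply.

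First I would rule out the cases that cannot apply. Case~3 requires $3\beta \ge \alpha + 6\gamma + 8$, which with $\beta=3$ becomes $\alpha + 6\gamma \le 1$, impossible since $\alpha \ge 3$. Case~5 requires $\beta \ge 2\gamma+3$, i.e.\ $\gamma \le 0$, again impossible. Cases~6--8 all require $2\beta > \alpha + 4\gamma + 6$, i.e.\ $\alpha + 4\gamma < 0$, impossible. So only cases 1, 2, and 4 can occur.

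Next I would verify the inequality in the three remaining cases by direct substitution. In Case~1 ($\alpha + \beta$ even, $\alpha > \beta$, $\beta < \gamma+2$), a short computation gives $\funcc(\alpha,3,\gamma) = 4\gamma + \alpha - 1$, and the desired bound becomes $2\gamma \ge 2$, which holds since $\gamma \ge 1$. In Case~2 (with $\beta = 3$, this forces $\gamma = 1$ when $\alpha>\beta$, or the $\alpha=\beta$ branch), one gets $\funcc(\alpha,3,\gamma) = 3\gamma + \alpha$, and the bound becomes $\gamma \ge 1$, which holds. In Case~4 ($\alpha + \beta$ odd, $\beta < 2\gamma+3$, which is automatic for $\gamma \ge 1$), one gets $\funcc(\alpha,3,\gamma) = 4\gamma + \alpha$, and the bound becomes $2\gamma \ge 1$, which holds.

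This is essentially mechanical, so I do not anticipate a real obstacle; the only mild care needed is in verifying that the three surviving cases together cover all pairs $(\alpha,\gamma)$ with $\alpha \ge 3$, $\gamma \ge 1$ (for example, when $\alpha$ is odd and $\gamma = 1$, Case~1's hypothesis $\beta < \gamma+2 = 3$ fails, so one must rely on Case~2 instead). Well-definedness of $\funcc$ (\cref{lm:funcc-correctness}) guarantees exactly one case applies for each triple, so the inequality holds in all admissible situations.
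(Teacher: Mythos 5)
Your proposal is correct and takes essentially the same approach as the paper: both perform case analysis on the applicable branch of the definition of $\funcc$ with $\beta = 3$, and the substitutions $\funcc = 4\gamma+\alpha-1$, $3\gamma+\alpha$, $4\gamma+\alpha$ for cases 1, 2, 4 respectively agree with the paper's computations. The only cosmetic difference is organizational: you enumerate by definition case (first ruling out cases 3, 5, 6--8), while the paper enumerates by the parity of $\alpha$ and whether $\alpha=\beta$ and then identifies the relevant definition case; the underlying calculations and coverage argument coincide.
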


\begin{proof}
    We consider four cases.
    \begin{enumerate}
        \item \emph{$\alpha = \beta$.}
            In this case $\alpha + \beta$ is even, $3 \beta < \alpha + 6 \gamma + 8$, and $\alpha = \beta$, thus the second case from the definition of $\funcc$ is applicable.
            We obtain $\funcc(\alpha, \beta, \gamma) = 2\beta\gamma + \frac{\alpha\beta}{2} - \frac{\beta^2}{2} + \frac{3\beta}{2} - \frac{\alpha}{2} - 3\gamma = 3 \gamma + 3 \ge 2 \gamma + 4 = 2 \gamma + \alpha + 1$.
        \item \emph{$\alpha \neq \beta$ and $\alpha$ is even.}
            In this case $\alpha + \beta$ is odd and $\beta < 2\gamma + 3$.
            Thus, the fourth case from the definition of $\funcc$ is applicable.
            We obtain $\funcc(\alpha, \beta, \gamma) = 2\beta\gamma + \frac{\alpha\beta}{2} - \frac{\beta^2}{2} + \beta - \frac{\alpha}{2} - 2\gamma + \frac{3}{2} = 4\gamma + \alpha \ge 2 \gamma + \alpha + 1$.
        \item \emph{$\alpha \neq \beta$, $\alpha$ is odd, and $\gamma = 1$.}
            In this case $\alpha + \beta$ is even, $3\beta < \alpha + 6\gamma + 8$, and $\beta \ge \gamma + 2$.
            Thus, the second case from the definition of $\funcc$ is applicable.
            We obtain $\funcc(\alpha, \beta, \gamma) = 2\beta\gamma + \frac{\alpha\beta}{2} - \frac{\beta^2}{2} + \frac{3\beta}{2} - \frac{\alpha}{2} - 3\gamma = \alpha + 3 = 2 \gamma + \alpha + 1$.
        \item \emph{$\alpha \neq \beta$, $\alpha$ is odd, and $\gamma \ge 2$.}
            In this case $\alpha + \beta$ is even, $\alpha > \beta$, and $\beta < \gamma + 2$.
            Thus, the first case from the definition of $\funcc$ is applicable.
            We obtain $\funcc(\alpha, \beta, \gamma) = 2\beta\gamma + \frac{\alpha\beta}{2} - \frac{\beta^2}{2} + \frac{\beta}{2} - \frac{\alpha}{2} - 2\gamma + 2 = 4\gamma + \alpha - 1 \ge 2 \gamma + \alpha + 1$.
    \end{enumerate}
\end{proof}

\begin{lemma} \label{a-b-c-quadratic-bound-b}
    For $\alpha \ge \beta \ge 3$, $\gamma \ge 1$, and $k \ge 2$ we have $\funcc(\alpha, \beta, \gamma) \ge (\beta - 2k + 1) \cdot (k + 2\gamma) + k$.
\end{lemma}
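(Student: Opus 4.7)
The strategy is case analysis on the eight cases of $\funcc$, verifying in each case that
\[
h(k) \coloneqq \fabc - (\beta - 2k + 1)(k + 2\gamma) - k = \fabc + 2k^2 - (\beta - 4\gamma + 2)k - 2\beta\gamma - 2\gamma
\]
is nonnegative for every integer $k \ge 2$. Since $h$ is a convex parabola in $k$ (leading coefficient $+2$) with global real minimum at $k^\star \coloneqq (\beta - 4\gamma + 2)/4$, its minimum over integer $k \ge 2$ is attained either at $k = 2$ (when $k^\star < 2$) or at the integer nearest to $k^\star$ (when $k^\star \ge 2$). The split between these two regimes is $\beta < 4\gamma + 6$ versus $\beta \ge 4\gamma + 6$.

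First I would dispose of the regime $\beta < 4\gamma + 6$, where the minimum of $h$ over integer $k \ge 2$ is $h(2)$, so the bound reduces to $\fabc \ge 2\beta\gamma + 2\beta - 6\gamma - 4$. This covers all of Case~1 ($\beta < \gamma + 2$), all of Case~4 ($\beta < 2\gamma + 3$), and the subcases of Cases~2, 3, 5 with $\beta < 4\gamma + 6$. In each case I would substitute the formula for $\fabc$ and verify the reduced inequality directly, using the parity constraint $\alpha + \beta \in \{\text{even, odd}\}$ together with $\alpha \ge \beta$ to minimize the relevant expression; the binding subcase is $\alpha = \beta$ with $\beta$ as small as each case allows.

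For the complementary regime $\beta \ge 4\gamma + 6$, I would use that the integer minimum of $h$ exceeds the real minimum by exactly $2 \, \lVert k^\star \rVert^2$, where $\lVert \cdot \rVert$ denotes distance to the nearest integer. For the remaining subcases of Cases 2, 3, 5 this leaves comfortable slack from direct expansion. The tight cases are 6--8, where $2\beta > \alpha + 4\gamma + 6$; here I would unify via Lemma~\ref{lm:funcc-last-cases} to write $\fabc = \lceil 2\gamma^2 + \alpha\gamma + \alpha^2/8 + \alpha/2 \rceil$ and split on $\alpha \bmod 4$. The three case-extras $0, 3/8, 1/2$ in $\fabc$ turn out to exactly offset the integrality deficits $1/2, 1/8, 0$ associated with $\beta - 4\gamma + 2 \bmod 4 \in \{2,\, 1\text{ or }3,\, 0\}$, yielding equality when $\alpha = \beta$ (the worst case); when $\alpha > \beta$ the extra slack $(\alpha - \beta)\bigl[\gamma + (\alpha + \beta)/8 + 1/2\bigr]$ provides strict positive margin.

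The main obstacle is the bookkeeping rather than any single hard step: eight cases of $\funcc$ intersected with the split $k^\star \lessgtr 2$, together with the parity-matched use of Lemma~\ref{lm:funcc-last-cases} to make the tight cases 6--8 line up. No genuinely new algebraic idea is needed beyond recognizing $h$ as a convex quadratic in $k$ and exploiting that the fractional parts of $k^\star$ and of $\fabc$ are coupled through $\alpha \bmod 4$ in precisely the way that saves the bound.
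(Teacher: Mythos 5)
Your approach is correct and structurally identical to the paper's proof: both treat the bound as a quadratic in $k$ optimized over the integer constraint $k \ge 2$, split on $\beta \le 4\gamma+5$ versus $\beta \ge 4\gamma+6$ (your $k^\star \lessgtr 2$), reduce the small-$\beta$ regime to $\fabc \ge 2\beta\gamma + 2\beta - 6\gamma - 4$, and for large $\beta$ exploit exactly the coupling you describe between the fractional part of $k^\star = (\beta-4\gamma+2)/4$ and the mod-4 corrections in cases 6--8 of $\funcc$, with equality at $\alpha=\beta$. The only minor wrinkle is that in the small-$\beta$ regime the true binding choice is the smallest $\alpha$ permitted by the relevant parity/strict-inequality constraints (e.g.\ $\alpha = \beta+2$ in case 1, $\alpha=\beta+1$ in case 4) rather than literally $\alpha=\beta$, but that is just a bookkeeping refinement of the same idea.
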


\begin{proof}
    We want to prove
    \begin{align} 
        \funcc(\alpha, \beta, \gamma) \stackrel{?}{\ge} (\beta - 2k + 1) \cdot (k + 2\gamma) + k = -2k^2 + (\beta - 4\gamma + 2) k + (2\beta\gamma + 2\gamma). \tag{\textasteriskcentered}
    \end{align}
    The right-hand side of \inq is a quadratic function $k$ with a negative coefficient for $k^2$.
    It achieves the maximum value for $k = k_{\max} \coloneqq \frac{\beta - 4 \gamma + 2}{4}$.
    As $k_{\max}$ may be fractional, the maximum value over integer values of $k$ is achieved for the closest integer to $k_{\max}$.
    That is, $k'_{\max} \coloneqq \left\lfloor \frac{\beta - 4 \gamma + 3}{4} \right\rfloor = \left\lceil \frac{\beta}{4} \right\rceil - \gamma$.
    Furthermore, as we have $k \ge 2$, the maximum over possible values of $k$ is achieved for $k = k_0 \coloneqq \max\{k'_{\max}, 2\}$.
    Hence, $-2k_0^2 + (\beta - 4\gamma + 2) k_0 + (2\beta\gamma + 2\gamma) \ge -2k^2 + (\beta - 4\gamma + 2) k + (2\beta\gamma + 2\gamma)$.
    Therefore, it suffices to prove \inq for $k = k_0 = \max\{\left\lceil \frac{\beta}{4} \right\rceil - \gamma, 2\}$.
    We consider three cases.
    \begin{itemize}
        \item \emph{Case 1: $\beta \le 4\gamma+5$ and $\alpha = \beta$.} In this case $k = k_0 = 2$.
            Hence, the right-hand side of \inq is $2\beta\gamma + 2\beta - 6\gamma - 4$. Consider all the cases from the definition of $\funcc$.
        \begin{enumerate}
            \item \casea This case requires $\alpha > \beta$ which contradicts the fact that $\alpha=\beta$.
            \item \caseb This case requires $3\beta < \alpha + 6\gamma+8 = \beta + 6\gamma + 8$.
                Thus, $\beta < 3\gamma+4$.
                Therefore, we obtain $\funcc(\alpha, \beta, \gamma) = 2\beta\gamma + \frac{\alpha\beta}{2} - \frac{\beta^2}{2} + \frac{3\beta}{2} - \frac{\alpha}{2} - 3\gamma = 2\beta\gamma + \beta - 3\gamma \ge 2\beta\gamma+2\beta-6\gamma-4$.
            \item \casec We have $\funcc(\alpha, \beta, \gamma) = 2\beta\gamma + \frac{\alpha\beta}{2} - \frac{\beta^2}{2} + 3\beta - \alpha - 6\gamma - 4 = 2\beta\gamma+2\beta -6\gamma-4$.
            \item \cased This case requires $\alpha + \beta$ being odd which contradicts $\alpha=\beta$.
            \item \casee This case requires $\alpha + \beta$ being odd which contradicts $\alpha=\beta$.
            \item \casef This case requires $2\beta  > \alpha + 4\gamma + 6 = \beta + 4\gamma + 6$ which implies $\beta > 4\gamma+6$. We arrive at a contradiction with the case assumption $\beta \le 4\gamma+5$.
            \item \caseg This case requires $2\beta  > \alpha + 4\gamma + 6 = \beta + 4\gamma + 6$ which implies $\beta > 4\gamma+6$. We arrive at a contradiction with the case assumption $\beta \le 4\gamma+5$.
            \item \caseh This case requires $2\beta  > \alpha + 4\gamma + 6 = \beta + 4\gamma + 6$ which implies $\beta > 4\gamma+6$. We arrive at a contradiction with the case assumption $\beta \le 4\gamma+5$.
        \end{enumerate}
        \item \emph{Case 2: $\beta \le 4\gamma+5$ and $\alpha > \beta$.}
            In this case $k = k_0 = 2$.
            Hence, the right-hand side of \inq is $2\beta\gamma + 2\beta - 6\gamma - 4$.
            Now we use the case assumption $\beta \le 4 \gamma + 5$ and \cref{a-b-c-linear-bound} to obtain $\fabc \ge (\beta-1) \cdot (2\gamma+1) + 2 = 2\beta\gamma+\beta-2\gamma+1 \ge 2\beta\gamma+2\beta -6\gamma-4$.
        \item \emph{Case 3: $\beta \ge 4\gamma+6$.}
            In this case $k = k_0 = \left\lceil \frac{\beta}{4} \right\rceil - \gamma$.
            Let $x \in \{0, 1, 2, 3\}$ be such that $\left\lceil \frac{\beta}{4} \right\rceil = \frac{\beta+x}{4}$.
            We substitute $k$ in the right-hand side of \inq:
            \begin{align*}
                \fabc & \stackrel{?}{\ge} -2(\left\lceil \frac{\beta}{4} \right\rceil - \gamma)^2 + (\beta-4\gamma+2)(\left\lceil \frac{\beta}{4} \right\rceil - \gamma)+(2\beta\gamma+2\gamma) \\
                      & = -2\left\lceil \frac{\beta}{4} \right\rceil^2 + 2\gamma^2 + \beta \cdot \left\lceil \frac{\beta}{4} \right\rceil + \beta\gamma + 2 \left\lceil \frac{\beta}{4} \right\rceil \\
                      & = -2(\frac{\beta+x}{4})^2 + 2\gamma^2 + \beta \cdot \frac{\beta+x}{4} + \beta\gamma + 2 \frac{\beta+x}{4} \\
                      & = \frac{\beta^2}{8} + \beta\gamma + 2\gamma^2 + \frac{\beta}{2} + \frac{x}{2} - \frac{x^2}{8}. \tag{\textasteriskcentered \textasteriskcentered}
            \end{align*}
            We consider all the cases from the definition of $\funcc$.
        \begin{enumerate}
            \item \casea In this case $\beta < \gamma+2$ is required which contradicts the case assumption $\beta \ge 4\gamma+6$.
            \item \caseb $\funcc(\alpha, \beta, \gamma) = 2\beta\gamma+\frac{\alpha\beta}{2} - \frac{\beta^2}{2} + \frac{3\beta}{2} - \frac{\alpha}{2} - 3\gamma$. In terms of $\alpha$, $\fabc$ is a linear function with slope $\frac{\beta-1}{2} > 0$.
                Hence, $\fabc$ is a strictly increasing function in $\alpha$.
                In this case $3 \beta < \alpha + 6 \gamma + 8$ is required, which yields $\alpha \ge 3 \beta - 6 \gamma - 7$.
                As $\alpha + \beta$ is even in this case, we get $\alpha \ge 3 \beta - 6 \gamma - 6$.
                Furthermore, the case assumption requires $\beta \ge 4 \gamma + 6$.
                Hence, $0 < \gamma \le \frac{\beta - 6}{4}$.
                Finally, $\frac{\beta^2}{4} + \beta + 3 \ge \frac{9}{2} + \frac{x}{2} - \frac{x^2}{8}$ as $\beta \ge 3$ and $x \in \{0, 1, 2, 3\}$.
                Combining these facts together, we obtain

                \begin{align*}
                    \fabc & \ge \funcc(3 \beta - 6 \gamma - 6, \beta, \gamma) \\
                          & = 2\beta\gamma+\frac{(3 \beta - 6 \gamma - 6) \cdot \beta}{2} - \frac{\beta^2}{2} + \frac{3\beta}{2} - \frac{3 \beta - 6 \gamma - 6}{2} - 3\gamma \\
                          & = \beta^2 + 3 - \beta \gamma - 3 \beta \\
                          & \ge \beta^2 + 3 - \beta \cdot \frac{\beta - 6}{4} - 3 \beta \\
                          & = \frac{\beta^2}{2} - \frac{5 \beta}{2} + (\beta + 3 + \frac{\beta^2}{4}) \\
                          & \ge \frac{\beta^2}{2} - \frac{5 \beta}{2} + (\frac{9}{2} + \frac{x}{2} - \frac{x^2}{8}) \\
                          & = \frac{\beta^2}{8} + \beta \cdot \frac{\beta - 6}{4} + 2 (\frac{\beta - 6}{4})^2 + \frac{\beta}{2} + \frac{x}{2} - \frac{x^2}{8} \\
                          & \ge \frac{\beta^2}{8} + \beta \gamma + 2 \gamma^2 + \frac{\beta}{2} + \frac{x}{2} - \frac{x^2}{8},
                \end{align*}

                thus proving \inqq.

            \item \casec We have $\funcc(\alpha, \beta, \gamma) = 2\beta\gamma+\frac{\alpha\beta}{2} - \frac{\beta^2}{2} + 3\beta - \alpha - 6\gamma - 4$.
                In terms of $\alpha$, $\fabc$ is a linear function with slope $\frac{\beta-2}{2} > 0$.
                Hence, $\fabc$ is a strictly increasing function in $\alpha$.
                In this case $2\beta  \le \alpha + 4\gamma + 6$ is required which yields $\alpha \ge 2\beta -4\gamma-6$.
                Furthermore, the case assumption requires $\beta \ge 4 \gamma + 6$.
                Hence, $0 < \gamma \le \frac{\beta - 6}{4}$.
                Finally, one can verify that $5 \ge \frac{x}{2} + \frac{9}{2} - \frac{x^2}{8}$ for any $x \in \{0, 1, 2, 3\}$.
                Combining these facts together, we obtain
                \begin{align*}
                    \fabc &\ge \funcc(2 \beta - 4 \gamma - 6, \beta, \gamma)\\
                          &= 2\beta\gamma+\frac{(2 \beta - 4 \gamma - 6) \cdot \beta}{2} - \frac{\beta^2}{2} + 3\beta - (2 \beta - 4 \gamma - 6) - 6\gamma - 4\\
                          &= \frac{\beta^2}{2} - 2 \beta - 2 \gamma + 2\\
                          &\ge \frac{\beta^2}{2} - 2 \beta - 2 \frac{\beta - 6}{4} + 2\\
                          &= \frac{\beta^2}{2} - \frac{5 \beta}{2} + 5\\
                          &\ge \frac{\beta^2}{2} - \frac{5 \beta}{2} + (\frac{x}{2} + \frac{9}{2} - \frac{x^2}{8})\\
                          &=\frac{\beta^2}{8} + \beta \cdot \frac{\beta - 6}{4} + 2 (\frac{\beta - 6}{4})^2 + \frac{\beta}{2} + \frac{x}{2} - \frac{x^2}{8}\\
                          &\ge \frac{\beta^2}{8} + \beta \gamma + 2 \gamma^2 + \frac{\beta}{2} + \frac{x}{2} - \frac{x^2}{8},
                \end{align*}
                thus proving \inqq.

            \item \cased In this case $\beta < 2\gamma+3$ is required which contradicts the case assumption $\beta \ge 4\gamma+6$.

            \item \casee The value of $\funcc$ in this case is different from the one in the third case by $-\beta + \frac{\alpha}{2} + 2\gamma + \frac{5}{2} = \frac{\alpha + 4\gamma + 6 - 2\beta}{2} - \frac{1}{2} \ge -\frac{1}{2}$, where the last inequality holds due to the fact that $2\beta  \le \alpha + 4\gamma+6$ is required in these two cases.
                This is the only case assumption that we used to prove the inequality for the third case.
                Hence, the same arguments yield \inq but with $-\frac{1}{2}$ on the right-hand side.
                As the values on both sides of \inq are integers due to \cref{lm:funcc-is-positive-integer}, it follows that \inq holds (since $n \ge m - \frac{1}{2}$ for integers $n$ and $m$ implies $n \ge m$).
            
            \item \casef Here we consider all cases $6$ through $8$ together.
                We have $\funcc(\alpha, \beta, \gamma) = 2\gamma^2 + \alpha\gamma + \frac{\alpha^2}{8} + \frac{\alpha}{2} + y$, where
                \[y = \begin{cases} 0, & \text{if $\alpha = 0 \bmod 4$,}\\ \frac{3}{8}, & \text{if $\alpha$ is odd,}\\ \frac{1}{2}, & \text{if $\alpha = 2 \bmod 4$.} \end{cases}\]
                
                We claim that $(\alpha - \beta) \gamma + \frac{x^2}{8} - \frac{x}{2} + y \ge 0$ holds.
                If $\alpha > \beta$, we get $(\alpha - \beta) \gamma + \frac{x^2}{8} - \frac{x}{2} + y \ge 1 + \frac{x^2}{8} - \frac{x}{2} \ge 0$, where the last inequality can be checked for all values $x \in \{0, 1, 2, 3\}$.
                On the other hand, if $\alpha = \beta$, we have $\alpha = \beta \bmod 4$, and thus for each fixed value of $x \in \{0, 1, 2, 3\}$, we have a fixed value of $y$.
                We get that $(\alpha - \beta) \gamma + \frac{x^2}{8} - \frac{x}{2} + y = \frac{x^2}{8} - \frac{x}{2} + y \ge 0$, where the last inequality can be checked for all values $x \in \{0, 1, 2, 3\}$ and corresponding values of $y$.
                Hence, indeed $(\alpha - \beta) \gamma + \frac{x^2}{8} - \frac{x}{2} + y \ge 0$ always holds.

                Therefore, we obtain
                \begin{align*}
                    \fabc &= 2\gamma^2 + \alpha\gamma + \frac{\alpha^2}{8} + \frac{\alpha}{2} + y\\
                          &\ge 2\gamma^2 + \alpha\gamma + \frac{\beta^2}{8} + \frac{\beta}{2} + y\\
                          &= \frac{\beta^2}{8} + \beta \gamma + 2 \gamma^2 + \frac{\beta}{2} + \frac{x}{2} - \frac{x^2}{8} + ((\alpha - \beta) \gamma + \frac{x^2}{8} - \frac{x}{2} + y)\\
                          &\ge \frac{\beta^2}{8} + \beta \gamma + 2 \gamma^2 + \frac{\beta}{2} + \frac{x}{2} - \frac{x^2}{8},
                \end{align*}
                thus proving \inqq.
        \item \caseg See case $6$.
        \item \caseh See case $6$.
        \end{enumerate}
    \end{itemize}
\end{proof}

\begin{lemma} \label{small-lemma-about-mod-4}
    If $\alpha \ge \beta \ge 3$, $\gamma \ge 1$, $\alpha + \beta$ is even, and $2\beta  \le \alpha + 4\gamma + 5$, then $\frac{\alpha - \beta + 2}{2} \ge \left\lfloor \frac{\alpha + 2 - 4 \gamma}{4} \right\rfloor$.
\end{lemma}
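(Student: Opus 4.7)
The plan is to reduce the inequality to a purely arithmetic statement about residues of $\alpha \bmod 4$. First, I would rearrange the hypothesis $2\beta \le \alpha + 4\gamma + 5$ to obtain $\alpha - \beta \ge \frac{\alpha - 4\gamma - 5}{2}$, and hence
\[
\frac{\alpha - \beta + 2}{2} \;\ge\; \frac{\alpha - 4\gamma - 1}{4}.
\]
The key additional observation is that since $\alpha + \beta$ is even, $\alpha - \beta$ is even, so the left-hand side $\frac{\alpha - \beta + 2}{2}$ is an integer. Thus the bound can be sharpened to
\[
\frac{\alpha - \beta + 2}{2} \;\ge\; \Bigl\lceil \tfrac{\alpha - 4\gamma - 1}{4} \Bigr\rceil.
\]

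It therefore suffices to show $\lceil \tfrac{\alpha - 4\gamma - 1}{4} \rceil \ge \lfloor \tfrac{\alpha + 2 - 4\gamma}{4} \rfloor$. Setting $y := \alpha - 4\gamma$, since $4\gamma$ is a multiple of $4$ this reduces to verifying $\lceil \tfrac{y-1}{4} \rceil \ge \lfloor \tfrac{y+2}{4} \rfloor$, which I would check by a short four-case case analysis on $y \bmod 4$. Writing $y = 4k + r$ with $r \in \{0,1,2,3\}$, both sides evaluate to $k$ when $r \in \{0,1\}$ and to $k+1$ when $r \in \{2,3\}$, so in fact equality holds in all four cases.

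There is no serious obstacle: the only subtlety is remembering to use the integrality of $\frac{\alpha - \beta + 2}{2}$ (coming from $\alpha + \beta$ being even) to replace the real-valued lower bound $\frac{\alpha - 4\gamma - 1}{4}$ by its ceiling, since without this step the $\frac{3}{4}$-gap between $\frac{\alpha - 4\gamma - 1}{4}$ and $\frac{\alpha + 2 - 4\gamma}{4}$ would leave the inequality unprovable. The hypotheses $\alpha \ge \beta \ge 3$ and $\gamma \ge 1$ play no essential role in the argument; they just describe the context in which the lemma is applied.
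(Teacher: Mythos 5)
Your proof is correct, and it takes a genuinely different — and cleaner — route from the paper's. The paper does a four-case analysis on $\alpha \bmod 4$, deriving in each case a sharpened form of the hypothesis $2\beta \le \alpha + 4\gamma + 5$ by exploiting the evenness of $2\beta$ (cases 1, 3, 4) and the parity constraint on $\beta$ implied by $\alpha+\beta$ even (cases 3, 4), and then plugging the resulting bound on $\beta$ into the left-hand side. You instead avoid reasoning about $\beta$'s parity entirely: you first derive the real-valued bound $\frac{\alpha-\beta+2}{2} \ge \frac{\alpha-4\gamma-1}{4}$ in one step, use integrality of the left-hand side (the only place the evenness of $\alpha+\beta$ enters) to promote it to $\ge \lceil \frac{\alpha-4\gamma-1}{4} \rceil$, and then close with the purely arithmetic identity $\lceil \frac{y-1}{4} \rceil = \lfloor \frac{y+2}{4} \rfloor$ for integer $y$, checked by a residue case split that is disentangled from the rest of the argument. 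This isolates the single nontrivial observation (the ceiling/floor identity) from the algebra and is easier to verify; the paper's version mixes the parity reasoning into each case, which is a bit more error-prone but makes the role of the $+5$ slack in the hypothesis more visible case by case. Both approaches are sound and of comparable length.
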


\begin{proof}
    We consider four cases of the remainder of $\alpha$ modulo four.

    \begin{enumerate}
        \item \emph{$\alpha = 0 \bmod 4$.}
            In this case the right-hand side of the inequality $2 \beta \le \alpha + 4 \gamma + 5$ is odd, while the left-hand side is even.
            Hence, we get $2 \beta \le \alpha + 4 \gamma + 4$, and consequently $\beta \le \alpha / 2 + 2 \gamma + 2$.
            We thus obtain $\frac{\alpha - \beta +  2}{2} \ge \frac{\alpha - (\alpha / 2 + 2 \gamma + 2) + 2}{2} = \frac{\alpha - 4 \gamma}{4} = \left\lfloor \frac{\alpha + 2 - 4 \gamma}{4} \right\rfloor$.
        \item \emph{$\alpha = 1 \bmod 4$.}
            As $2 \beta \le \alpha + 4 \gamma + 5$, we get $\beta \le \alpha / 2 + 2 \gamma + 5 / 2$.
            Hence, $\frac{\alpha - \beta + 2}{2} \ge \frac{\alpha - (\alpha / 2 + 2 \gamma + 5 / 2) + 2}{2} = \frac{\alpha - 1 - 4 \gamma}{4} = \left\lfloor \frac{\alpha + 2 - 4 \gamma}{4} \right\rfloor$.
        \item \emph{$\alpha = 2 \bmod 4$.}
            As $\alpha$ is even in this case, we get that $\beta$ is also even because $\alpha + \beta$ is even.
            Hence, $2 \beta - \alpha - 4 \gamma = 2 \bmod 4$.
            Therefore, as we have $2 \beta - \alpha - 4 \gamma \le 5$, we obtain $2 \beta - \alpha - 4 \gamma \le 2$, and thus $\beta \le \alpha / 2 + 2 \gamma + 1$.
            We obtain $\frac{\alpha - \beta + 2}{2} \ge \frac{\alpha - (\alpha / 2 + 2 \gamma + 1) + 2}{2} = \frac{\alpha + 2 - 4 \gamma}{4} = \left\lfloor \frac{\alpha + 2 - 4 \gamma}{4} \right\rfloor$.
        \item \emph{$\alpha = 3 \bmod 4$.}
            As $\alpha$ is odd in this case, we get that $\beta$ is also odd because $\alpha + \beta$ is even.
            Hence, $2 \beta - \alpha - 4 \gamma = 3 \bmod 4$.
            Therefore, as we have $2 \beta - \alpha - 4 \gamma \le 5$, we obtain $2 \beta - \alpha - 4 \gamma \le 3$, and thus $\beta \le \alpha / 2 + 2 \gamma + 3 / 2$.
            We obtain $\frac{\alpha - \beta + 2}{2} \ge \frac{\alpha - (\alpha / 2 + 2 \gamma + 3 / 2) + 2}{2} = \frac{\alpha + 1 - 4 \gamma}{4} = \left\lfloor \frac{\alpha + 2 - 4 \gamma}{4} \right\rfloor$.
    \end{enumerate}
\end{proof}

\begin{lemma} \label{second-small-lemma-about-mod-4}
    If $\alpha \ge \beta \ge 3$, $\gamma \ge 1$, $\alpha + \beta$ is odd, and $2\beta  \le \alpha + 4\gamma + 6$, then $\frac{\alpha - \beta + 3}{2} \ge \left\lfloor \frac{\alpha + 3 - 4 \gamma}{4} \right\rfloor$.
\end{lemma}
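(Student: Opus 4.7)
The plan is to mirror the case analysis used in \cref{small-lemma-about-mod-4}, splitting on $\alpha \bmod 4$ and exploiting the fact that $\alpha + \beta$ being odd forces the opposite parity on $\beta$. In each of the four cases I would (i) determine the value of $\lfloor (\alpha+3-4\gamma)/4 \rfloor$ from the residue of $\alpha + 3 - 4\gamma$ modulo $4$, and (ii) tighten the hypothesis $2\beta \le \alpha + 4\gamma + 6$ using the congruence class of $2\beta$ modulo $4$, obtaining an upper bound on $\beta$ that is strong enough to certify the desired inequality after substitution.

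Concretely, in the case $\alpha = 0 \bmod 4$ we have $\beta$ odd, so $2\beta \equiv 2 \bmod 4$ while the right-hand side is $2 \bmod 4$ as well, and the bound $\beta \le \alpha/2 + 2\gamma + 3$ is not improvable; one checks $\lfloor(\alpha+3-4\gamma)/4\rfloor = (\alpha-4\gamma)/4$ and computes $(\alpha-\beta+3)/2 \ge (\alpha-4\gamma)/4$. In the case $\alpha = 1 \bmod 4$ we have $\beta$ even, $2\beta \equiv 0 \bmod 4$, while $\alpha+4\gamma+6 \equiv 3 \bmod 4$, so we can sharpen the bound to $2\beta \le \alpha + 4\gamma + 3$, i.e.\ $\beta \le \alpha/2 + 2\gamma + 3/2$, which exactly matches $\lfloor(\alpha+3-4\gamma)/4\rfloor = (\alpha+3-4\gamma)/4$. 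In the case $\alpha = 2 \bmod 4$ with $\beta$ odd we have $2\beta \equiv 2$ and the right-hand side $\equiv 0 \bmod 4$, so the bound sharpens to $2\beta \le \alpha + 4\gamma + 4$, giving $(\alpha+2-4\gamma)/4$ on both sides. Finally, for $\alpha = 3 \bmod 4$, $\beta$ is even and the same kind of sharpening yields $\beta \le \alpha/2 + 2\gamma + 5/2$, matching $\lfloor(\alpha+3-4\gamma)/4\rfloor = (\alpha+1-4\gamma)/4$.

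There is no real obstacle here beyond bookkeeping: the only slightly subtle step is noticing that in the three cases where $\alpha + 4\gamma + 6$ and $2\beta$ lie in different residue classes mod $4$, the integrality of $2\beta$ lets us strengthen the hypothesis by $1,2$, or $3$, and this strengthening is exactly what is needed to absorb the floor on the right-hand side. I would present the four cases in the same order and style as \cref{small-lemma-about-mod-4} so that both lemmas read in parallel.
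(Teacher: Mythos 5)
Your proposal is correct and follows essentially the same approach as the paper's own proof: a case split on $\alpha \bmod 4$, using the parity of $\beta$ (forced by $\alpha + \beta$ odd) to tighten the hypothesis $2\beta \le \alpha + 4\gamma + 6$ by the exact amount needed to match $\left\lfloor \frac{\alpha+3-4\gamma}{4} \right\rfloor$. The only cosmetic difference is in the case $\alpha \equiv 3 \bmod 4$, where the paper argues directly from the parity of $2\beta$ versus the odd right-hand side rather than reasoning modulo $4$, but the resulting bound $\beta \le \alpha/2 + 2\gamma + 5/2$ is the same.
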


\begin{proof}
    We consider four cases of the remainder of $\alpha$ modulo four.

    \begin{enumerate}
        \item \emph{$\alpha = 0 \bmod 4$.}
            As $2 \beta \le \alpha + 4 \gamma + 6$, we get $\beta \le \alpha / 2 + 2 \gamma + 3$.
            We thus obtain $\frac{\alpha - \beta + 3}{2} \ge \frac{\alpha - (\alpha / 2 + 2 \gamma + 3) + 3}{2} = \frac{\alpha - 4 \gamma}{4} = \left\lfloor \frac{\alpha + 3 - 4 \gamma}{4} \right\rfloor$.
        \item \emph{$\alpha = 1 \bmod 4$.}
            As $\alpha$ is odd in this case, we get that $\beta$ is even because $\alpha + \beta$ is odd.
            Hence, $2 \beta - \alpha - 4 \gamma = 3 \bmod 4$.
            Therefore, as we have $2 \beta - \alpha - 4 \gamma \le 6$, we derive $2 \beta - \alpha - 4 \gamma \le 3$.
            Thus, $\beta \le \alpha / 2 + 2 \gamma + 3 / 2$.
            We obtain $\frac{\alpha - \beta + 3}{2} \ge \frac{\alpha - (\alpha / 2 + 2 \gamma + 3 / 2) + 3}{2} = \frac{\alpha + 3 - 4 \gamma}{4} = \left\lfloor \frac{\alpha + 3 - 4 \gamma}{4} \right\rfloor$.
        \item \emph{$\alpha = 2 \bmod 4$.}
            As $\alpha$ is even in this case, we get that $\beta$ is odd because $\alpha + \beta$ is odd.
            Hence, $2 \beta - \alpha - 4 \gamma = 0 \bmod 4$.
            Therefore, as we have $2 \beta - \alpha - 4 \gamma \le 6$, we derive $2 \beta - \alpha - 4 \gamma \le 4$, and thus $\beta \le \alpha / 2 + 2 \gamma + 2$.
            We obtain $\frac{\alpha - \beta + 3}{2} \ge \frac{\alpha - (\alpha / 2 + 2 \gamma + 2) + 3}{2} = \frac{\alpha + 2 - 4 \gamma}{4} = \left\lfloor \frac{\alpha + 3 - 4 \gamma}{4} \right\rfloor$.
        \item \emph{$\alpha = 3 \bmod 4$.}
            In this case the right-hand side of the inequality $2 \beta \le \alpha + 4 \gamma + 6$ is odd, while the left-hand side is even.
            Hence, we get $2 \beta \le \alpha + 4 \gamma + 5$, and consequently $\beta \le \alpha / 2 + 2 \gamma + 5 / 2$.
            We thus obtain $\frac{\alpha - \beta + 3}{2} \ge \frac{\alpha - (\alpha / 2 + 2 \gamma + 5 / 2) + 3}{2} = \frac{\alpha + 1 - 4 \gamma}{4} = \left\lfloor \frac{\alpha + 3 - 4 \gamma}{4} \right\rfloor$.
    \end{enumerate}
\end{proof}

\begin{lemma} \label{a-b-c-quadratic-bound-a-t-2}
    For $\alpha \ge \beta \ge 3$, $\gamma \ge 1$, and $\ell \ge 2$ such that $\ell \ge \fabc - (2\gamma + 1 + (\beta - 2) \cdot (2\gamma + \ell))$, we have $\funcc(\alpha, \beta, \gamma) \ge \gamma + (\alpha - 2 \ell) \cdot (2\gamma + \ell) + \ell + 1$.
\end{lemma}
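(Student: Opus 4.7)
The first step is to rewrite the hypothesis into a cleaner form. Using the identity
\[
\ell + 2\gamma + 1 + (\beta-2)(2\gamma+\ell) \;=\; (\beta-1)(2\gamma+\ell) + 1,
\]
the hypothesis $\ell \ge \fabc - (2\gamma + 1 + (\beta-2)(2\gamma+\ell))$ is equivalent to $\fabc \le (\beta-1)(2\gamma+\ell) + 1$, or equivalently $\ell \ge L := (\fabc - 1 - 2\gamma(\beta-1))/(\beta-1)$. Let $g(\ell) := \gamma + (\alpha-2\ell)(2\gamma+\ell) + \ell + 1$ denote the right-hand side of the goal. Expanding gives $g(\ell) = -2\ell^2 + (\alpha-4\gamma+1)\ell + 2\alpha\gamma + \gamma + 1$, a downward parabola in $\ell$ attaining its real maximum $g(\ell^\ast) = (\alpha-4\gamma+1)^2/8 + 2\alpha\gamma + \gamma + 1$ at $\ell^\ast := (\alpha-4\gamma+1)/4$.

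The plan is to case split on which formula of \cref{def:funcc} defines $\fabc$ and, in each case, identify the worst admissible integer $\ell$ and verify the inequality there. Since $g$ is a downward parabola, the admissible $\ell \ge \max(2, \lceil L \rceil)$ closest to $\ell^\ast$ maximizes $g(\ell)$; typically this is $\max(2, \lceil L \rceil)$ when $L$ is large enough to push past $\ell^\ast$, and an integer near $\ell^\ast$ otherwise.

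In cases 6--8, the side condition $2\beta > \alpha+4\gamma+6$ combined with $\beta \le \alpha$ forces $\alpha \ge 4\gamma + 8$, and direct algebra parallel to the cases 6--8 analysis in \cref{a-b-c-quadratic-bound-b} shows $\fabc \ge g(\ell^\ast)$. Hence the bound $\fabc \ge g(\ell)$ holds for every $\ell$ and the hypothesis is not needed. In cases 1--5 the hypothesis is essential: substituting the closed-form expression for $\fabc$ from \cref{def:funcc}, the bound $\ell \ge \lceil L \rceil$ becomes tight enough that $g(\lceil L\rceil) \le \fabc$ reduces to a polynomial inequality in $\alpha,\beta,\gamma$ which is verified from the parity and range constraints of the case. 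The parity conditions ($\alpha+\beta$ even or odd) and the constraints ($3\beta < \alpha+6\gamma+8$, $\beta \ge \gamma+2$, etc.) are used to control the integrality gap $\lceil L\rceil - L$, in the spirit of \cref{small-lemma-about-mod-4} and \cref{second-small-lemma-about-mod-4}.

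The main obstacle is the case analysis for cases 1--5, where the inequality is often tight --- for instance at $\alpha=7$, $\beta=3$, $\gamma=1$, where case 2 of \cref{def:funcc} yields $\fabc=10$, the hypothesis forces $\ell \ge 3$, and $g(3) = 10 = \fabc$. No algebraic slack can be lost in the simplifications, so each case must be treated separately using its specific parity and inequality conditions, following the template of \cref{a-b-c-linear-bound-2} and \cref{a-b-c-quadratic-bound-b}.
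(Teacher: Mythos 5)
Your structural decomposition is sound and matches the paper's proof: rewrite the hypothesis as $\ell \ge L := (\fabc - 1 - 2\gamma(\beta-1))/(\beta-1)$ (the paper writes $\ell_{\lo} = \lceil (\fabc + 2\gamma - 2\beta\gamma - 1)/(\beta-1)\rceil$, which is the same quantity), note that the goal is a downward parabola $g(\ell)$ in $\ell$, and argue that it suffices to verify $\fabc \ge g(\ell_0)$ at $\ell_0 = \max\{\lfloor(\alpha-4\gamma+2)/4\rfloor, \lceil L\rceil, 2\}$. For cases 6--8 your observation that the unconditional bound $\fabc \ge g(\ell^\ast)$ already holds is correct (and a hair slicker than the paper's route via $g(\ell'_{\max})$ and the residue $x \in \{-1,0,1,2\}$). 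A very minor quibble: $2\beta > \alpha+4\gamma+6$ with $\beta\le\alpha$ gives $\alpha \ge 4\gamma+7$, not $4\gamma+8$ (though $4\gamma+8$ is correct once you use $\alpha\equiv 0\bmod 4$ in case 6); your inequality $\fabc - g(\ell^\ast) = \alpha/4 + y - 9/8 > 0$ survives either way.

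The genuine gap is that cases 1--5 are only sketched. You correctly flag that the bound is tight there (your example $\alpha=7,\beta=3,\gamma=1$ gives $\fabc=10=g(3)$ exactly), but then you wave toward ``a polynomial inequality verified from the parity and range constraints,'' without exhibiting it. This is where essentially all of the work lies. In particular you need, for each of cases 1--5, (a) an explicit closed form for $\lceil L\rceil$ (e.g., in case 1 it is $(\alpha-\beta+2)/2$, in cases 4 and 5 it is $(\alpha-\beta+3)/2$), (b) a proof that $\lceil L\rceil \ge \lfloor (\alpha-4\gamma+2)/4\rfloor$ so that $\ell_0 = \lceil L\rceil$ (this is where the parity lemmas such as \cref{second-small-lemma-about-mod-4} enter), and (c) the final polynomial verification, which itself requires further sub-splits — case 2 splits on $\alpha=\beta$ vs.\ $\alpha>\beta$, and case 3 splits on whether $2\beta = \alpha+4\gamma+6$ exactly. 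None of this is routine and none of it is written. As you say yourself, ``no algebraic slack can be lost,'' which is exactly why the plan must be executed in full; as it stands, the proposal identifies the right roadmap but proves only cases 6--8.
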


\begin{proof}
    We have $\ell \ge \fabc - (2\gamma + 1 + (\beta - 2) \cdot (2\gamma + \ell)) = (\fabc + 2 \gamma - 2 \beta \gamma - 1) - (\beta - 2) \cdot \ell$. 
    By rearranging, we obtain $\ell \ge \frac{\fabc + 2 \gamma - 2 \beta \gamma - 1}{\beta - 1}$.
    As $\ell$ is an integer, $\ell \ge \ell_{\lo} \coloneqq \left\lceil \frac{\fabc + 2 \gamma - 2 \beta \gamma - 1}{\beta - 1} \right\rceil$ follows.

    We need to prove
    \begin{align*}
        \fabc &\stackrel{?}{\ge} \gamma + (\alpha - 2 \ell) \cdot (2\gamma + \ell) + \ell + 1\\
              &= -2 \ell^2 + (\alpha - 4 \gamma + 1) \ell + (2 \alpha \gamma + \gamma + 1). \tag{\textasteriskcentered}
    \end{align*}

    The right-hand side of \inq is a quadratic function in $\ell$ with a negative coefficient for $\ell^2$.
    It achieves the maximum value for $\ell = \ell_{\max} \coloneqq \frac{\alpha - 4 \gamma + 1}{4}$.
    As $\ell_{\max}$ may be fractional, the maximum value over integer values of $\ell$ is achieved for the closest integer to $\ell_{\max}$.
    That is, $\ell'_{\max} \coloneqq \left\lfloor \frac{\alpha - 4 \gamma + 2}{4} \right\rfloor$.
    Furthermore, as we have $\ell \ge \ell_{\lo}$ and $\ell \ge 2$, the maximum over possible values of $\ell$ is achieved for $\ell = \ell_0 \coloneqq \max\{\ell'_{\max}, \ell_{\lo}, 2\}$.
    Hence, $-2 \ell_0^2 + (\alpha - 4 \gamma + 1) \ell_0 + (2 \alpha \gamma + \gamma + 1) \ge -2 \ell^2 + (\alpha - 4 \gamma + 1) \ell + (2 \alpha \gamma + \gamma + 1)$.
    Therefore, it suffices to prove \inq for $\ell = \ell_0 = \max\{\left\lfloor \frac{\alpha - 4 \gamma + 2}{4} \right\rfloor, \left\lceil \frac{\fabc + 2 \gamma - 2 \beta \gamma - 1}{\beta - 1} \right\rceil, 2\}$.
    We consider all the cases from the definition of $\funcc$.

\begin{enumerate}
    \item \casea In this case we have 
        \begin{align*}
            \ell_{\lo} &= \left\lceil \frac{\funcc(\alpha, \beta, \gamma) + 2\gamma - 2\beta\gamma - 1}{\beta - 1} \right\rceil\\
                       &= \left\lceil \frac{2\beta\gamma + \frac{\alpha\beta}{2} - \frac{\beta^2}{2} + \frac{\beta}{2} - \frac{\alpha}{2} - 2\gamma + 2 + 2\gamma - 2\beta\gamma - 1}{\beta - 1} \right\rceil\\
                       &= \left\lceil \frac{\alpha - \beta}{2} + \frac{1}{\beta - 1} \right\rceil\\
                       &= \frac{\alpha - \beta}{2} + \left\lceil \frac{1}{\beta - 1} \right\rceil\\
                       &= \frac{\alpha - \beta + 2}{2},
        \end{align*}
        where $\frac{\alpha-\beta}{2}$ is an integer as $\alpha+\beta$ is even in this case.
        As $\beta < \gamma + 2$ is required in this case, we get $\beta \le \gamma + 1$.
        Hence, $\ell_{\lo} = \frac{\alpha - \beta + 2}{2} \ge \frac{\alpha - (\gamma + 1) + 2}{2} \ge \frac{\alpha + 2}{4} - \frac{2\gamma + 2}{4} \ge \frac{\alpha + 2 - 4 \gamma}{4} \ge \ell'_{\max}$.
        Furthermore, $\frac{\alpha - \beta + 2}{2} \ge 2$ as $\alpha > \beta$ in this case and $\alpha + \beta$ is even.
        Therefore, $\ell = \ell_0 = \ell_{\lo} = \frac{\alpha - \beta + 2}{2}$ in this case.
        In order to reuse the arguments in later cases, we substitute $\ell = \frac{\alpha - \beta + x}{2}$ in the right-hand side of \inq, where $x = 2$ in this case:
        \begin{align*}
            \fabc &\stackrel{?}{\ge} -2 \ell^2 + (\alpha - 4 \gamma + 1) \ell + (2 \alpha \gamma + \gamma + 1)\\
                  &= -2 (\frac{\alpha - \beta + x}{2})^2 + (\alpha - 4 \gamma + 1) \cdot \frac{\alpha - \beta + x}{2} + (2 \alpha \gamma + \gamma + 1)\\
                  &= 2 \beta \gamma + \frac{\alpha \beta}{2} - \frac{\beta^2}{2} + (x - \frac{1}{2}) \beta - \frac{x - 1}{2} \alpha - (2x - 1) \gamma + (\frac{x}{2} - \frac{x^2}{2} + 1). \tag{\textasteriskcentered \textasteriskcentered}
        \end{align*}
        This case requires $\beta < \gamma + 2$. Therefore, we obtain
        \begin{align*}
            \fabc &= 2 \beta \gamma + \frac{\alpha \beta}{2} - \frac{\beta^2}{2} + \frac{\beta}{2} - \frac{\alpha}{2} - 2 \gamma + 2\\
                  &\ge 2 \beta \gamma + \frac{\alpha \beta}{2} - \frac{\beta^2}{2} + \frac{\beta}{2} - \frac{\alpha}{2} - 2 \gamma + 2 + (\beta - \gamma - 2)\\
                  &= 2 \beta \gamma + \frac{\alpha \beta}{2} - \frac{\beta^2}{2} + \frac{3\beta}{2} - \frac{\alpha}{2} - 3 \gamma\\
                  &= 2 \beta \gamma + \frac{\alpha \beta}{2} - \frac{\beta^2}{2} + (x - \frac{1}{2}) \beta - \frac{x - 1}{2} \alpha - (2x - 1) \gamma + (\frac{x}{2} - \frac{x^2}{2} + 1),
        \end{align*}
        thus proving \inqq.

    \item \caseb In this case $\fabc$ is different from the value in the first case by $\beta-\gamma-2$.
        Thus, we analogously get $\ell_{\lo} = \left\lceil \frac{\alpha-\beta}{2} + \frac{1 + (\beta - \gamma - 2)}{\beta-1} \right\rceil = \frac{\alpha-\beta}{2} + \left\lceil \frac{\beta-\gamma-1}{\beta - 1} \right\rceil$, where $\frac{\alpha - \beta}{2}$ is an integer as $\alpha + \beta$ is even in this case. In the conditions for this case we have that either $\alpha = \beta$ or $\beta \ge \gamma + 2$. We consider two cases
    \begin{enumerate}
        \item \emph{$\alpha = \beta$.} We get $\ell_{\lo} = \frac{\alpha - \beta}{2} + \left\lceil \frac{\beta-\gamma-1}{\beta - 1} \right\rceil = \left\lceil \frac{\beta-\gamma-1}{\beta - 1} \right\rceil \le 1$.
        Furthermore, this case requires $3 \alpha = 3 \beta < \alpha + 6 \gamma + 8$.
        Therefore, $\alpha < 3 \gamma + 4$.
        Hence, $\alpha \le 3 \gamma + 3$.
        We obtain $\ell'_{\max} = \left\lfloor \frac{\alpha - 4 \gamma + 2}{4} \right\rfloor \le \left\lfloor \frac{(3 \gamma + 3) - 4 \gamma + 2}{4} \right\rfloor = \left\lfloor \frac{5 - \gamma}{4} \right\rfloor \le 1$.
        Thus, $\ell = \ell_0 = 2$ in this case.
        From $\alpha \le 3 \gamma + 3$ we derive $\alpha \le 4 \gamma + 5$.
        Therefore, we obtain
        \begin{align*}
            \fabc &= 2 \beta \gamma + \frac{\alpha \beta}{2} - \frac{\beta^2}{2} + \frac{3 \beta}{2} - \frac{\alpha}{2} - 3 \gamma\\
                  &= 2 \alpha \gamma + \alpha - 3 \gamma\\
                  &\ge 2 \alpha \gamma + \alpha - 3 \gamma + (\alpha - 4 \gamma - 5)\\
                  &= -2 \ell^2 + (\alpha - 4 \gamma + 1) \ell + (2 \alpha \gamma + \gamma + 1),
        \end{align*}
        thus proving \inq.
        \item \emph{$\alpha > \beta$ and $\beta \ge \gamma + 2$.}
            In this case we have $\beta - \gamma - 1 \ge 1$.
            Thus, $\ell_{\lo} = \frac{\alpha - \beta}{2} + \left\lceil \frac{\beta-\gamma-1}{\beta - 1} \right\rceil = \frac{\alpha - \beta + 2}{2}$.
            Note that $\ell_{\lo} = \frac{\alpha - \beta + 2}{2} \ge 2$ as $\alpha > \beta$ and $\alpha + \beta$ is even in this case.

            We now prove $\ell_{\lo} \ge \ell'_{\max}$.
            If $\alpha \le 9$, we get $\ell_{\lo} \ge 2 \ge \frac{\alpha - 4 + 2}{4} \ge \frac{\alpha - 4 \gamma + 2}{4} \ge \ell'_{\max}$.
            Otherwise, $\alpha \ge 10$. As $3 \beta < \alpha + 6 \gamma + 8$ is required in this case, we get $\beta < (\alpha + 6 \gamma + 8) / 3$.
            Combining these two facts, we obtain
            \begin{align*}
                \ell_{\lo} &= \frac{\alpha - \beta + 2}{2}\\
                           &\ge \frac{\alpha - (\alpha + 6 \gamma + 8) / 3 + 2}{2}\\
                           &= \frac{\alpha - 4 \gamma + 2 + (\alpha - 10) / 3}{4}\\
                           &\ge \frac{\alpha - 4 \gamma + 2}{4}\\
                           &\ge \ell'_{\max}.
            \end{align*}
            Hence, indeed $\ell_{\lo} \ge \ell'_{\max}$ always holds, and we have $\ell = \ell_0 = \ell_{\lo} = \frac{\alpha - \beta + 2}{2} = \frac{\alpha - \beta + x}{2}$ for $x = 2$ in this case.
            We obtain
            \begin{align*}
                \fabc &= 2 \beta \gamma + \frac{\alpha \beta}{2} - \frac{\beta^2}{2} + \frac{3\beta}{2} - \frac{\alpha}{2} - 3 \gamma\\
                      &= 2 \beta \gamma + \frac{\alpha \beta}{2} - \frac{\beta^2}{2} + (x - \frac{1}{2}) \beta - \frac{x - 1}{2} \alpha - (2x - 1) \gamma + (\frac{x}{2} - \frac{x^2}{2} + 1),
            \end{align*}
            thus proving \inqq.
    \end{enumerate}
    \item \casec In this case $\funcc(\alpha, \beta, \gamma)$ is different from the value in the first case by $\frac{5}{2}\beta-\frac{\alpha}{2}-4\gamma-6$.
        Thus, we analogously get
        $\ell_{\lo} = \left\lceil \frac{\alpha-\beta}{2} + \frac{1 + (\frac{5}{2}\beta-\frac{\alpha}{2}-4\gamma-6)}{\beta-1} \right\rceil = \frac{\alpha-\beta }{2} + \left\lceil \frac{\frac{3\beta-\alpha-6\gamma-8}{2} + (\beta - \gamma - 1)}{\beta - 1} \right\rceil \ge \frac{\alpha-\beta+2}{2}$, where $\frac{\alpha - \beta}{2}$ is an integer as $\alpha + \beta$ is even in this case, and by the case requirements we have $3\beta \ge \alpha + 6\gamma + 8 \ge 3 \cdot (\gamma + 2)$.
        We consider two cases.
    \begin{enumerate}
        \item \emph{$\alpha + 4 \gamma + 6 = 2 \beta$.}
            Hence, $\beta = \frac{\alpha}{2} + 2 \gamma + 6$.
            As $\beta$ is an integer, we get that $\alpha$ is even.
            As $\alpha + \beta$ is even in this case, we get that $\beta$ is also even.
            Hence, for the equality $\alpha + 4 \gamma + 6 = 2 \beta$ to hold modulo four, we get $\alpha = 2 \bmod 4$.
            Thus, in this case $\ell'_{\max} = \left\lceil \frac{\alpha + 2 - 4 \gamma}{4} \right\rceil = \frac{\alpha + 2 - 4 \gamma}{4}$.
            Combining these facts, we obtain
            \begin{align*}
                \fabc &= 2 \beta \gamma + \frac{\alpha \beta}{2} - \frac{\beta^2}{2} + 3 \beta - \alpha - 6 \gamma - 4\\
                      &= (\alpha + 4 \gamma + 6) \cdot \gamma + \frac{\alpha \cdot (\alpha / 2 + 2 \gamma + 3)}{2} - \frac{(\alpha / 2 + 2 \gamma + 3)^2}{2} \\
                      &+ 3 \cdot (\frac{\alpha}{2} + 2 \gamma + 3) - \alpha - 6 \gamma - 4\\
                      &= \frac{\alpha^2}{8} + \alpha \gamma + 2 \gamma^2 + \frac{2 \alpha + 2}{4}\\
                      &\ge \frac{\alpha^2}{8} + \alpha \gamma + 2 \gamma^2 + \frac{\alpha + 4}{4}\\
                      &= - 2 (\frac{\alpha + 2 - 4 \gamma}{4})^2 + (\alpha - 4 \gamma + 1) \cdot \frac{\alpha + 2 - 4 \gamma}{4} + (2 \alpha \gamma + \gamma + 1)\\
                      &= -2 \ell'^2_{\max} + (\alpha - 4 \gamma + 1) \cdot \ell'_{\max} + (2 \alpha \gamma + \gamma + 1)\\
                      &\ge -2 \ell_0^2 + (\alpha - 4 \gamma + 1) \cdot \ell_0 + (2 \alpha \gamma + \gamma + 1),
            \end{align*}
            thus proving \inq, where the last inequality holds as $\ell'_{\max}$ is the point in which such a quadratic function takes its maximum value over integers.

        \item \emph{$\alpha + 4 \gamma + 6 \neq 2 \beta$.} As $\alpha + 4 \gamma + 6 \ge 2 \beta$ is required in this case, we obtain $\alpha + 4 \gamma + 5 \ge 2 \beta$.
            Thus, \cref{small-lemma-about-mod-4} implies $\ell_0 \ge \ell_{\lo} \ge \frac{\alpha - \beta + 2}{2} \ge \ell_{\max}$.
            Hence, it is sufficient to prove \inq for $\ell = \frac{\alpha - \beta + 2}{2}$ as then $-2 \ell^2 + (\alpha - 4 \gamma + 1) \ell + (2 \alpha \gamma + \gamma + 1) \ge -2 \ell_0^2 + (\alpha - 4 \gamma + 1) \ell_0 + (2 \alpha \gamma + \gamma + 1)$.
            As $3 \beta \ge \alpha + 6 \gamma + 8$ is required in this case, we obtain
            \begin{align*}
                \fabc &= 2 \beta \gamma + \frac{\alpha \beta}{2} - \frac{\beta^2}{2} + 3 \beta - \alpha - 6 \gamma - 4\\
                      &\ge 2 \beta \gamma + \frac{\alpha \beta}{2} - \frac{\beta^2}{2} + 3 \beta - \alpha - 6 \gamma - 4 + (\alpha + 6 \gamma + 8 - 3 \beta) / 2\\
                &= 2 \beta \gamma + \frac{\alpha \beta}{2} - \frac{\beta^2}{2} + \frac{3\beta}{2} - \frac{\alpha}{2} - 3 \gamma\\
                      &= 2 \beta \gamma + \frac{\alpha \beta}{2} - \frac{\beta^2}{2} + (x - \frac{1}{2}) \beta - \frac{x - 1}{2} \alpha - (2x - 1) \gamma + (\frac{x}{2} - \frac{x^2}{2} + 1),
            \end{align*}
            thus proving \inqq because $x = 2$ in this case.
    \end{enumerate}
        
    \item \cased In this case $\funcc(\alpha, \beta, \gamma)$ is different from the value in the first case by $\frac{\beta}{2}-\frac{1}{2}$.
        Thus, we analogously get $\ell_{\lo} = \left\lceil \frac{\alpha-\beta}{2} + \frac{1 + (\frac{\beta}{2} - \frac{1}{2})}{\beta-1} \right\rceil = \frac{\alpha-\beta+1}{2} + \left\lceil \frac{1}{\beta - 1} \right\rceil = \frac{\alpha-\beta+3}{2}$, where $\frac{\alpha + \beta + 1}{2}$ is an integer as $\alpha + \beta$ is odd in this case.
        In this case $\beta < 2\gamma + 3$ is required, thus $\ell_{\lo} = \frac{\alpha-\beta+3}{2} \ge \frac{\alpha-(2\gamma+3)+3}{2} = \frac{2\alpha - 4 \gamma}{4} \ge \frac{\alpha + 2 - 4 \gamma}{4} \ge \left\lfloor \frac{\alpha + 2 - 4 \gamma}{4} \right\rfloor = \ell'_{\max}$.
        Furthermore, $\ell_{\lo} = \frac{\alpha - \beta + 3}{2} \ge 2$ as $\alpha + \beta$ is odd.
        Consequently, we get that $\ell = \ell_0 = \ell_{\lo} = \frac{\alpha - \beta + 3}{2}$ in this case.
        As $\beta < 2 \gamma + 3$ is required in this case, $3 \beta < 6 \gamma + 9 \le 6 \gamma + 7 + \alpha$ follows.
        We thus obtain
        \begin{align*}
            \fabc &= 2 \beta \gamma + \frac{\alpha \beta}{2} - \frac{\beta^2}{2} + \beta - \frac{\alpha}{2} - 2 \gamma + \frac{3}{2}\\
                  &\ge 2 \beta \gamma + \frac{\alpha \beta}{2} - \frac{\beta^2}{2} + \beta - \frac{\alpha}{2} - 2 \gamma + \frac{3}{2} + (3 \beta - 6 \gamma - 7 - \alpha) / 2\\
            &= 2 \beta \gamma + \frac{\alpha \beta}{2} - \frac{\beta^2}{2} + \frac{5\beta}{2} - \alpha - 5 \gamma - 2\\
                  &= 2 \beta \gamma + \frac{\alpha \beta}{2} - \frac{\beta^2}{2} + (x - \frac{1}{2}) \beta - \frac{x - 1}{2} \alpha - (2x - 1) \gamma + (\frac{x}{2} - \frac{x^2}{2} + 1),
        \end{align*}
        thus proving \inqq because $x = 3$ in this case.
    \item \casee In this case $\funcc(\alpha, \beta, \gamma)$ is different from the value in the first case by $\frac{3}{2}\beta-2\gamma-\frac{7}{2}$.
        Thus, we analogously get $\ell_{\lo} = \left\lceil \frac{\alpha-\beta }{2} + \frac{1 + (\frac{3}{2}\beta-2\gamma-\frac{7}{2})}{\beta-1} \right\rceil = \frac{\alpha-\beta+1}{2} + \left\lceil \frac{\beta - 2\gamma - 2}{\beta - 1} \right\rceil = \frac{\alpha-\beta+3}{2}$, where $\frac{\alpha + \beta + 1}{2}$ is an integer as $\alpha + \beta$ is odd in this case, and by the case requirements we have $\beta - 2 \gamma - 2 \ge 1$.
        Furthermore, $\ell_{\lo} = \frac{\alpha - \beta + 3}{2} \ge \left\lfloor \frac{\alpha + 3 - 4 \gamma}{4} \right\rfloor \ge \left\lfloor \frac{\alpha + 2 - 4 \gamma}{4} \right\rfloor = \ell'_{\max}$ due to \cref{second-small-lemma-about-mod-4}.
        Finally, $\ell_{\lo} = \frac{\alpha - \beta + 3}{2} \ge 2$ as $\alpha + \beta$ is odd.
        Consequently, we get that $\ell = \ell_0 = \ell_{\lo} = \frac{\alpha - \beta + 3}{2}$ in this case.
        As $\alpha \ge \beta$, we derive $\alpha / 2 + \gamma + 1 / 2 \ge \beta / 2$.
        We thus obtain
        \begin{align*}
            \fabc &= 2 \beta \gamma + \frac{\alpha \beta}{2} - \frac{\beta^2}{2} + 2 \beta - \frac{\alpha}{2} - 4 \gamma - \frac{3}{2}\\
                  &\ge 2 \beta \gamma + \frac{\alpha \beta}{2} - \frac{\beta^2}{2} + 2 \beta - \frac{\alpha}{2} - 4 \gamma - \frac{3}{2} + (\beta / 2 - \alpha / 2 - \gamma - 1 / 2)\\
            &= 2 \beta \gamma + \frac{\alpha \beta}{2} - \frac{\beta^2}{2} + \frac{5\beta}{2} - \alpha - 5 \gamma - 2\\
                  &= 2 \beta \gamma + \frac{\alpha \beta}{2} - \frac{\beta^2}{2} + (x - \frac{1}{2}) \beta - \frac{x - 1}{2} \alpha - (2x - 1) \gamma + (\frac{x}{2} - \frac{x^2}{2} + 1),
        \end{align*}
        thus proving \inqq because $x = 3$ in this case.

    \item \casef Here we consider all cases $6$ through $8$ together.
                We have $\funcc(\alpha, \beta, \gamma) = 2\gamma^2 + \alpha\gamma + \frac{\alpha^2}{8} + \frac{\alpha}{2} + y$, where
                \[y = \begin{cases} 0, & \text{if $\alpha = 0 \bmod 4$,}\\ \frac{3}{8}, & \text{if $\alpha$ is odd,}\\ \frac{1}{2}, & \text{if $\alpha = 2 \bmod 4$.} \end{cases}\]
                Furthermore, let $x \in \{-1, 0, 1, 2\}$ by such that $\ell'_{\max} = \left\lfloor \frac{\alpha + 2 - 4 \gamma}{4} \right\rfloor = \left\lfloor \frac{\alpha + 2}{4} \right\rfloor - \gamma = \frac{\alpha + x}{4} - \gamma$.
                We claim that $y + \frac{\alpha}{4} \ge 1 + \frac{x}{4} - \frac{x^2}{8}$ holds.
                To prove it, we consider four cases of the remainder of $\alpha$ modulo four.
                \begin{enumerate}
                    \item \emph{$\alpha = 0 \bmod 4$.} In this case $x = y = 0$ and $\alpha \ge 4$. By substitution the inequality holds.
                    \item \emph{$\alpha = 1 \bmod 4$.} In this case $x = -1$, $y = 3 / 8$, and $\alpha \ge 3$. By substitution the inequality holds.
                    \item \emph{$\alpha = 2 \bmod 4$.} In this case $x = 2$, $y = 1 / 2$, and $\alpha \ge 3$. By substitution the inequality holds.
                    \item \emph{$\alpha = 3 \bmod 4$.} In this case $x = 1$, $y = 3 / 8$, and $\alpha \ge 3$. By substitution the inequality holds.
                \end{enumerate}
                Thus, indeed we have $y + \frac{\alpha}{4} \ge 1 + \frac{x}{4} - \frac{x^2}{8}$.
                Therefore, we obtain
            \begin{align*}
                \fabc &= 2\gamma^2 + \alpha\gamma + \frac{\alpha^2}{8} + \frac{\alpha}{2} + y\\
                      &\ge 2\gamma^2 + \alpha\gamma + \frac{\alpha^2}{8} + \frac{\alpha}{2} + y + (1 + \frac{x}{4} - \frac{x^2}{8} - y - \frac{\alpha}{4})\\
                      &=2 \gamma^2 + \alpha \gamma + \frac{\alpha^2}{8} + \frac{\alpha}{4} + (1 + \frac{x}{4} - \frac{x^2}{8})\\
                      &= -2(\frac{\alpha + x}{4} - \gamma)^2 + (\alpha - 4 \gamma + 1) \cdot (\frac{\alpha + x}{4} - \gamma) + (2 \alpha \gamma + \gamma + 1)\\
                      &= -2 \ell'^2_{\max} + (\alpha - 4 \gamma + 1) \cdot \ell'_{\max} + (2 \alpha \gamma + \gamma + 1)\\
                      &\ge - 2 \ell_0^2 + (\alpha - 4 \gamma + 1) \cdot \ell_0 + (2 \alpha \gamma + \gamma + 1),
            \end{align*}
            thus proving \inq, where the last inequality holds as $\ell'_{\max}$ is the point in which such a quadratic function takes its maximum value over integers.
    \item \caseg See case $6$.
    \item \caseh See case $6$.
\end{enumerate}
\end{proof}

\begin{lemma} \label{a-b-c-quadratic-bound-a-t-0}
    For $\alpha \ge \beta \ge 4$, $\gamma \ge 1$, and $\ell \ge 2$, such that $\ell \ge \fabc - (2\gamma + 1 + (\beta - 4) \cdot (2\gamma + \ell) + 2 \cdot (\gamma + \ell))$, we have $\fabc \ge (\alpha - 2\ell + 1) \cdot (2\gamma + \ell) + \ell$.
\end{lemma}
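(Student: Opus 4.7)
The plan is to mirror the structure of the preceding \cref{a-b-c-quadratic-bound-a-t-2}. First I would rearrange the hypothesis on $\ell$ algebraically: expanding the right-hand side and collecting terms in $\ell$ yields $(\beta-1)\ell \ge \fabc + 4\gamma - 2\beta\gamma - 1$, hence any admissible $\ell$ satisfies $\ell \ge \ell_{\lo} := \lceil (\fabc + 4\gamma - 2\beta\gamma - 1)/(\beta-1) \rceil$. Next, I would observe that the desired inequality can be rewritten as $\fabc \ge -2\ell^2 + (\alpha - 4\gamma + 2)\ell + (2\alpha\gamma + 2\gamma)$, whose right-hand side is a concave quadratic in $\ell$ with real maximum at $(\alpha-4\gamma+2)/4$ and integer maximum at some $\ell'_{\max}$ of the form $\lfloor (\alpha - 4\gamma + c)/4 \rfloor$ for a small constant $c$. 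Since $\ell$ ranges over integers that are at least $2$ and at least $\ell_{\lo}$, the right-hand side is maximized at $\ell_0 := \max\{\ell_{\lo}, \ell'_{\max}, 2\}$, so it suffices to verify the inequality at $\ell = \ell_0$.

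The bulk of the proof is then an eight-case split matching \cref{def:funcc}. In each case, substituting the explicit formula for $\fabc$ into $\ell_{\lo}$ produces an expression of the shape $\frac{\alpha - \beta + x}{2}$ (or, for cases 6--8, a value depending on $\alpha$ and $\gamma$ alone), where $x$ is a small constant that may depend on parity. I would compare $\ell_{\lo}$ against $\ell'_{\max}$ and $2$ using the structural constraints of the case (such as $\beta < \gamma + 2$ in case 1, $3\beta < \alpha + 6\gamma + 8$ in case 2, $2\beta \le \alpha + 4\gamma + 6$ in cases 3 and 5, and so on), determine which of the three attains the maximum, and then verify $\fabc \ge -2\ell_0^2 + (\alpha - 4\gamma + 2)\ell_0 + (2\alpha\gamma + 2\gamma)$ by substitution and rearrangement. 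Helper lemmas such as \cref{small-lemma-about-mod-4,second-small-lemma-about-mod-4} will handle parity-dependent comparisons of $\ell_{\lo}$ versus $\ell'_{\max}$ in tight boundary configurations, and \cref{a-b-c-linear-bound} or direct arithmetic covers degenerate subcases such as $\alpha = \beta$.

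The main obstacle is the case-level bookkeeping. Compared to \cref{a-b-c-quadratic-bound-a-t-2}, the extra summand $2(\gamma + \ell)$ in the hypothesis shifts the thresholds and alters the algebraic form of $\ell_{\lo}$, so each case must be re-derived rather than inherited wholesale; on the other hand, the hypothesis $\beta \ge 4$ is strictly stronger than before, which should trivialise some subcases. Cases 6--8 (when $2\beta > \alpha + 4\gamma + 6$) are the most delicate: there $\fabc$ has the ``quadratic'' form $2\gamma^2 + \alpha\gamma + \alpha^2/8 + \alpha/2 + y$ for a parity-dependent constant $y \in \{0, 3/8, 1/2\}$, and one must verify, by a sub-split on $\alpha \bmod 4$, that the small additive constants align favourably with the corresponding residue-dependent value of $\ell'_{\max}$. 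As in case 3 of \cref{a-b-c-quadratic-bound-a-t-2}, the equality boundary $\alpha + 4\gamma + 6 = 2\beta$ in case 3 of this lemma will also require a separate modular-arithmetic treatment to pin down the parity of $\alpha$ and $\beta$ before plugging in.
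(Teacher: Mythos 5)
Your plan mirrors the paper's proof almost exactly: the same rearrangement to $\ell_{\lo} = \lceil (\fabc + 4\gamma - 2\beta\gamma - 1)/(\beta-1)\rceil$, the same rewrite of the goal as a concave quadratic in $\ell$ maximized at $\ell_0 = \max\{\ell'_{\max},\ell_{\lo},2\}$ with $\ell'_{\max} = \lfloor(\alpha-4\gamma+3)/4\rfloor$, and the same eight-case split with the helper lemmas and the special modular treatment of $2\beta = \alpha + 4\gamma + 6$ in case 3 and of $\alpha \bmod 4$ in cases 6--8. The proposal is a correct blueprint for the paper's argument.
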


\begin{proof}
    We have $\ell \ge \fabc - (2\gamma + 1 + (\beta - 4) \cdot (2\gamma + \ell) + 2 \cdot (\gamma + \ell)) = (\fabc + 4 \gamma - 2 \beta \gamma - 1) - (\beta - 2) \cdot \ell$.
    By rearranging, we obtain
    $\ell \ge \frac{\fabc + 4 \gamma - 2 \beta \gamma - 1}{\beta - 1}$.
    As $\ell$ is an integer, $\ell \ge \ell_{\lo} \coloneqq \left\lceil \frac{\fabc + 4 \gamma - 2 \beta \gamma - 1}{\beta - 1} \right\rceil$ follows.

    We need to prove
    \begin{align*}
        \fabc &\stackrel{?}{\ge} (\alpha - 2\ell + 1) \cdot (2\gamma + \ell) + \ell\\
              &=- 2 \ell^2 + (\alpha - 4 \gamma + 2) \ell + (2 \alpha \gamma + 2 \gamma). \tag{\textasteriskcentered}
    \end{align*}

    The right-hand side of \inq is a quadratic function in $\ell$ with a negative coefficient for $\ell^2$.
    It achieves the maximum value for $\ell = \ell_{\max} \coloneqq \frac{\alpha - 4 \gamma + 2}{4}$.
    As $\ell_{\max}$ may be fractional, the maximum value over integer values of $\ell$ is achieved for the closest integer to $\ell_{\max}$.
    That is, $\ell'_{\max} \coloneqq \left\lfloor \frac{\alpha - 4 \gamma + 3}{4} \right\rfloor$.
    Furthermore, as we have $\ell \ge \ell_{\lo}$ and $\ell \ge 2$, the maximum over possible values of $\ell$ is achieved for $\ell = \ell_0 \coloneqq \max\{\ell'_{\max}, \ell_{\lo}, 2\}$.
    Hence, $-2 \ell_0^2 + (\alpha - 4 \gamma + 2) \ell_0 + (2 \alpha \gamma + 2\gamma) \ge -2 \ell^2 + (\alpha - 4 \gamma + 2) \ell + (2 \alpha \gamma + 2\gamma)$.
    Therefore, it suffices to prove \inq for $\ell = \ell_0 = \max\{\left\lfloor \frac{\alpha - 4 \gamma + 3}{4} \right\rfloor, \left\lceil \frac{\fabc + 4 \gamma - 2 \beta \gamma - 1}{\beta - 1} \right\rceil, 2\}$.
    We consider all the cases from the definition of $\funcc$.

\begin{enumerate}
    \item \casea In this case we have
        \begin{align*}
            \ell_{\lo} &= \left\lceil \frac{\funcc(\alpha, \beta, \gamma) + 4\gamma - 2\beta\gamma - 1}{\beta - 1} \right\rceil\\
                       &= \left\lceil \frac{2\beta\gamma + \frac{\alpha\beta}{2} - \frac{\beta^2}{2} + \frac{\beta}{2} - \frac{\alpha}{2} - 2\gamma + 2 + 4\gamma - 2\beta\gamma - 1}{\beta - 1} \right\rceil = \left\lceil \frac{\frac{\beta \cdot (\alpha - \beta)}{2} - \frac{\alpha - \beta}{2} + 2\gamma + 1}{\beta - 1} \right\rceil\\
                       &= \frac{\alpha - \beta}{2} + \left\lceil \frac{2\gamma + 1}{\beta - 1} \right\rceil \ge \frac{\alpha - \beta + 6}{2},
        \end{align*}
        where $\frac{\alpha-\beta}{2}$ is an integer as $\alpha+\beta$ is even in this case, and $\frac{2\gamma + 1}{\beta - 1} > 2$ as $\beta \le \gamma + 1$ in this case.
        Furthermore, $\ell_0 \ge \ell_{\lo} \ge \frac{\alpha - \beta + 6}{2} \ge \frac{\alpha - (\gamma + 1) + 6}{2} \ge \frac{\alpha + 3 - 4 \gamma}{4} \ge \left\lfloor \frac{\alpha + 3 - 4 \gamma}{4} \right\rfloor = \ell'_{\max}$.
        Hence, it is sufficient to prove \inq for $\ell = \frac{\alpha - \beta + 6}{2}$ as then $-2 \ell^2 + (\alpha - 4 \gamma + 2) \ell + (2 \alpha \gamma + 2\gamma) \ge -2 \ell_0^2 + (\alpha - 4 \gamma + 2) \ell_0 + (2 \alpha \gamma + 2\gamma)$.
        In order to reuse the arguments in later cases, we substitute $\ell = \frac{\alpha - \beta + x}{2}$ in the right-hand side of \inq, where $x = 6$ in this case:
        \begin{align*}
            \fabc &\stackrel{?}{\ge} -2 \ell^2 + (\alpha - 4 \gamma + 2) \ell + (2 \alpha \gamma + 2 \gamma)\\
                  &= -2(\frac{\alpha-\beta+x}{2})^2 + (\alpha-4\gamma+2) \cdot \frac{\alpha-\beta+x}{2} + (2\alpha\gamma + 2\gamma)\\
                  &= 2\beta\gamma + \frac{\alpha\beta}{2} - \frac{\beta^2}{2} + (x - 1) \beta + (1 - \frac{x}{2}) \alpha + (2-2x)\gamma + (x - \frac{x^2}{2}). \tag{\textasteriskcentered \textasteriskcentered}
        \end{align*}

        As $\beta < \gamma + 2$ in this case, we derive $3 \alpha / 2 + 8 \gamma + 14 \ge 3 \alpha / 2 + 3 \cdot (\gamma + 2) \ge 3 \beta / 2 + 3 \beta = 9 \beta / 2$.
        Therefore, we obtain
        \begin{align*}
            \fabc &= 2 \beta \gamma + \frac{\alpha \beta}{2} - \frac{\beta^2}{2} + \frac{\beta}{2} - \frac{\alpha}{2} - 2 \gamma + 2\\
                  &\ge 2 \beta \gamma + \frac{\alpha \beta}{2} - \frac{\beta^2}{2} + \frac{\beta}{2} - \frac{\alpha}{2} - 2 \gamma + 2 + (9 \beta / 2 - 3 \alpha / 2 - 8 \gamma - 14)\\
                  &=2 \beta \gamma + \frac{\alpha \beta}{2} - \frac{\beta^2}{2} + 5 \beta - 2 \alpha - 10 \gamma - 12\\
                  &=2\beta\gamma + \frac{\alpha\beta}{2} - \frac{\beta^2}{2} + (x - 1) \beta + (1 - \frac{x}{2}) \alpha + (2-2x)\gamma + (x - \frac{x^2}{2}),
        \end{align*}
        thus proving \inqq.
    \item \caseb In this case $\funcc(\alpha, \beta, \gamma)$ is different from the value in the first case by $\beta-\gamma-2$.
        Thus, we analogously get $\ell_{\lo} = \left\lceil \frac{\alpha-\beta }{2} + \frac{2\gamma + 1 + (\beta - \gamma - 2)}{\beta-1} \right\rceil = \frac{\alpha-\beta }{2} + \left\lceil \frac{\beta-1+\gamma}{\beta - 1} \right\rceil = \frac{\alpha - \beta + 2}{2} + \left\lceil \frac{\gamma}{\beta - 1} \right\rceil \ge \frac{\alpha - \beta + 4}{2}$, where $\frac{\alpha - \beta}{2}$ is an integer as $\alpha + \beta$ is even in this case.
        As $3 \beta < \alpha + 6 \gamma + 8$ is required in this case, we derive $\ell_0 \ge \ell_{\lo} \ge \frac{\alpha - \beta + 4}{2} = \frac{3 \alpha - 3 \beta + 12}{6} \ge \frac{3 \alpha - (\alpha + 6 \gamma + 8) + 12}{6} = \frac{4 \alpha + 8 - 12 \gamma}{12} \ge \frac{3 \alpha + 9 - 12 \gamma}{12} = \frac{\alpha + 3 - 4 \gamma}{4} \ge \left\lfloor \frac{\alpha + 3 - 4 \gamma}{4} \right\rfloor = \ell'_{\max}$.
        Hence, it is sufficient to prove \inq for $\ell = \frac{\alpha - \beta + 4}{2} = \frac{\alpha - \beta + x}{2}$ where $x=4$ in this case, as then $-2 \ell^2 + (\alpha - 4 \gamma + 2) \ell + (2 \alpha \gamma + 2\gamma) \ge -2 \ell_0^2 + (\alpha - 4 \gamma + 2) \ell_0 + (2 \alpha \gamma + 2\gamma)$.
        As $\alpha + 6 \gamma + 8 > 3 \beta$ is required in this case, we obtain
        \begin{align*}
            \fabc &= 2 \beta \gamma + \frac{\alpha \beta}{2} - \frac{\beta^2}{2} + \frac{3\beta}{2} - \frac{\alpha}{2} - 3 \gamma\\
                  &\ge 2 \beta \gamma + \frac{\alpha \beta}{2} - \frac{\beta^2}{2} + \frac{3\beta}{2} - \frac{\alpha}{2} - 3 \gamma + (3 \beta - \alpha - 6 \gamma - 8) / 2\\
                  &= 2 \beta \gamma + \frac{\alpha \beta}{2} - \frac{\beta^2}{2} + 3 \beta - \alpha - 6 \gamma - 4\\
                  &=2\beta\gamma + \frac{\alpha\beta}{2} - \frac{\beta^2}{2} + (x - 1) \beta + (1 - \frac{x}{2}) \alpha + (2-2x)\gamma + (x - \frac{x^2}{2}),
        \end{align*}
        thus proving \inqq.
    \item \casec In this case $\funcc(\alpha, \beta, \gamma)$ is different from the value in the first case by $\frac{5}{2}\beta-\frac{\alpha}{2}-4\gamma-6$.
        Thus, we analogously get $\ell_{\lo} = \left\lceil \frac{\alpha-\beta }{2} + \frac{2\gamma + 1 + (\frac{5}{2}\beta-\frac{\alpha}{2}-4\gamma-6)}{\beta-1} \right\rceil = \frac{\alpha-\beta }{2} + \left\lceil \frac{\frac{3\beta-\alpha-6\gamma-8}{2} + (\beta - 1) + \gamma}{\beta - 1} \right\rceil = \frac{\alpha - \beta + 2}{2} + \left\lceil \frac{\frac{3\beta - \alpha - 6\gamma - 8}{2} + \gamma}{\beta - 1} \right\rceil \ge \frac{\alpha - \beta + 4}{2}$, where $\frac{\alpha - \beta}{2}$ is an integer as $\alpha+\beta$ is even in this case, and $3\beta \ge \alpha + 6\gamma + 8$ in this case.

        We claim that $\frac{\alpha - \beta + 4}{2} = \frac{\alpha + 2}{4} + \frac{\alpha + 4 \gamma + 6 - 2 \beta}{4} - \gamma \ge \left\lfloor \frac{\alpha + 3}{4} \right\rfloor - \gamma = \left\lfloor \frac{\alpha + 3 - 4 \gamma}{4} \right\rfloor = \ell_{\max}$.
        Note that $\alpha + 4 \gamma + 6 \ge 2 \beta$ is required in this case.
        If $\alpha + 4 \gamma + 6 = 2 \beta$, then $\alpha$ is even, and we have $\frac{\alpha + 2}{4} + \frac{\alpha + 4 \gamma + 6 - 2 \beta}{4} - \gamma \ge \frac{\alpha + 2}{4} - \gamma \ge \left\lfloor \frac{\alpha + 3}{4} \right\rfloor - \gamma$.
        Otherwise, $\alpha + 4 \gamma + 5 \ge 2 \beta$, and we have $\frac{\alpha + 2}{4} + \frac{\alpha + 4 \gamma + 6 - 2 \beta}{4} - \gamma \ge \frac{\alpha + 2}{4} + \frac{1}{4} - \gamma \ge \left\lfloor \frac{\alpha + 3}{4} \right\rfloor - \gamma$.
        Therefore, indeed $\ell_0 \ge \ell_{\lo} \ge \frac{\alpha - \beta + 4}{2} \ge \ell'_{\max}$.
        Hence, it is sufficient to prove \inq for $\ell = \frac{\alpha - \beta + 4}{2} = \frac{\alpha - \beta + x}{2}$ where $x=4$ in this case, as then $-2 \ell^2 + (\alpha - 4 \gamma + 2) \ell + (2 \alpha \gamma + 2\gamma) \ge -2 \ell_0^2 + (\alpha - 4 \gamma + 2) \ell_0 + (2 \alpha \gamma + 2\gamma)$.
        We obtain
        \begin{align*}
            \fabc &= 2 \beta \gamma + \frac{\alpha \beta}{2} - \frac{\beta^2}{2} + 3 \beta - \alpha - 6 \gamma - 4\\
            &=2\beta\gamma + \frac{\alpha\beta}{2} - \frac{\beta^2}{2} + (x - 1) \beta + (1 - \frac{x}{2}) \alpha + (2-2x)\gamma + (x - \frac{x^2}{2}),
        \end{align*}
        thus proving \inqq.
        
    \item \cased In this case $\funcc(\alpha, \beta, \gamma)$ is different from the value in the first case by $\frac{\beta}{2}-\frac{1}{2}$.
        Thus, we analogously get $\ell_{\lo} = \left\lceil \frac{\alpha-\beta }{2} + \frac{2\gamma + 1 + (\frac{\beta}{2} - \frac{1}{2})}{\beta-1} \right\rceil = \frac{\alpha-\beta+1}{2} + \left\lceil \frac{2\gamma + 1}{\beta - 1} \right\rceil \ge \frac{\alpha-\beta+3}{2}$, where $\frac{\alpha - \beta + 1}{2}$ is an integer as $\alpha + \beta$ is odd in this case.
        Furthermore, as $2 \gamma + 3 > \beta$ is required in this case, we derive $\ell_0 \ge \ell_{\lo} \ge \frac{\alpha - \beta + 3}{2} \ge \frac{\alpha-(2\gamma+3)+3}{2} = \frac{2\alpha - 4 \gamma}{4} \ge \frac{\alpha + 3 - 4 \gamma}{4} \ge \left\lfloor \frac{\alpha + 3 - 4 \gamma}{4} \right\rfloor = \ell'_{\max}$.
        Hence, it is sufficient to prove \inq for $\ell = \frac{\alpha - \beta + 3}{2} = \frac{\alpha - \beta + x}{2}$ where $x=3$ in this case, as then $-2 \ell^2 + (\alpha - 4 \gamma + 2) \ell + (2 \alpha \gamma + 2\gamma) \ge -2 \ell_0^2 + (\alpha - 4 \gamma + 2) \ell_0 + (2 \alpha \gamma + 2\gamma)$.

        This case requires $2 \gamma + 3 > \beta$ to hold. Therefore, we obtain
        \begin{align*}
            \fabc &= 2 \beta \gamma + \frac{\alpha \beta}{2} - \frac{\beta^2}{2} + \beta - \frac{\alpha}{2} - 2 \gamma + \frac{3}{2}\\
                  &\ge 2 \beta \gamma + \frac{\alpha \beta}{2} - \frac{\beta^2}{2} + \beta - \frac{\alpha}{2} - 2 \gamma + \frac{3}{2} + (\beta - 2 \gamma - 3)\\
                  &= 2 \beta \gamma + \frac{\alpha \beta}{2} - \frac{\beta^2}{2} + 2 \beta - \frac{\alpha}{2} - 4 \gamma - \frac{3}{2}\\
                  &=2\beta\gamma + \frac{\alpha\beta}{2} - \frac{\beta^2}{2} + (x - 1) \beta + (1 - \frac{x}{2}) \alpha + (2-2x)\gamma + (x - \frac{x^2}{2}),
        \end{align*}
        thus proving \inqq.
    \item \casee In this case $\funcc(\alpha, \beta, \gamma)$ is different from the value in the first case by $\frac{3}{2}\beta-2\gamma-\frac{7}{2}$.
        Thus, we analogously get $\ell_{\lo} = \left\lceil \frac{\alpha-\beta }{2} + \frac{2\gamma + 1 + (\frac{3}{2}\beta-2\gamma-\frac{7}{2})}{\beta-1} \right\rceil = \frac{\alpha-\beta+1}{2} + \left\lceil \frac{\beta - 2}{\beta - 1} \right\rceil = \frac{\alpha-\beta+3}{2}$, where $\frac{\alpha - \beta + 1}{2}$ is an integer as $\alpha + \beta$ is odd in this case.

        \cref{second-small-lemma-about-mod-4} implies that $\ell_{\lo} = \frac{\alpha - \beta + 3}{2} \ge \left\lfloor \frac{\alpha + 3 - 4 \gamma}{4} \right\rfloor = \ell_{\max}$.
        Finally, $\ell_{\lo} = \frac{\alpha - \beta + 3}{2} \ge 2$ as $\alpha + \beta$ is odd.
        Hence $\ell_0 = \ell_{\lo} = \frac{\alpha - \beta + 3}{2}$ in this case.
        We obtain
        \begin{align*}
            \fabc &= 2 \beta \gamma + \frac{\alpha \beta}{2} - \frac{\beta^2}{2} + 2 \beta - \frac{\alpha}{2} - 4 \gamma - \frac{3}{2}\\
                  &=2\beta\gamma + \frac{\alpha\beta}{2} - \frac{\beta^2}{2} + (x - 1) \beta + (1 - \frac{x}{2}) \alpha + (2-2x)\gamma + (x - \frac{x^2}{2}),
        \end{align*}
        thus proving \inqq, where $x = 3$ in this case.
    \item \casef Here we consider all cases $6$ through $8$ together.
            We have $\funcc(\alpha, \beta, \gamma) = 2\gamma^2 + \alpha\gamma + \frac{\alpha^2}{8} + \frac{\alpha}{2} + y$, where
            \[y = \begin{cases} 0, & \text{if $\alpha = 0 \bmod 4$,}\\ \frac{3}{8}, & \text{if $\alpha$ is odd,}\\ \frac{1}{2}, & \text{if $\alpha = 2 \bmod 4$.} \end{cases}\]
            Furthermore, let $x \in \{0, 1, 2, 3\}$ by such that $\ell'_{\max} = \left\lfloor \frac{\alpha + 3 - 4 \gamma}{4} \right\rfloor = \left\lfloor \frac{\alpha + 3}{4} \right\rfloor - \gamma = \frac{\alpha + x}{4} - \gamma$.
            We claim that $y = \frac{x}{2} - \frac{x^2}{8}$ holds.
            To prove it, we consider four cases of the remainder of $\alpha$ modulo four.
            \begin{enumerate}
                \item \emph{$\alpha = 0 \bmod 4$.} In this case $x = y = 0$. The equality obviously holds.
                \item \emph{$\alpha = 1 \bmod 4$.} In this case $x = 3$ and $y = 3 / 8$. The equality obviously holds.
                \item \emph{$\alpha = 2 \bmod 4$.} In this case $x = 2$ and $y = 1 / 2$. The equality obviously holds.
                \item \emph{$\alpha = 3 \bmod 4$.} In this case $x = 1$ and $y = 3 / 8$. The equality obviously holds.
            \end{enumerate}
            Thus, indeed we have $y = \frac{x}{2} - \frac{x^2}{8}$.
            Therefore, we obtain
            \begin{align*}
                \fabc &= 2\gamma^2 + \alpha\gamma + \frac{\alpha^2}{8} + \frac{\alpha}{2} + y\\
                      &= 2\gamma^2 + \alpha\gamma + \frac{\alpha^2}{8} + \frac{\alpha}{2} + (\frac{x}{2} - \frac{x^2}{8})\\
                      &= -2(\frac{\alpha + x}{4} - \gamma)^2 + (\alpha - 4 \gamma + 2) \cdot (\frac{\alpha + x}{4} - \gamma) + (2 \alpha \gamma + 2\gamma)\\
                      &= -2 \ell'^2_{\max} + (\alpha - 4 \gamma + 2) \cdot \ell'_{\max} + (2 \alpha \gamma + 2\gamma)\\
                      &\ge -2 \ell_0^2 + (\alpha - 4 \gamma + 2) \ell_0 + (2 \alpha \gamma + 2\gamma),
            \end{align*}
            thus proving \inq, where the last inequality holds as $\ell'_{\max}$ is the point in which such a quadratic function takes its maximum value over integers.
    \item \caseg See case $6$.
    \item \caseh See case $6$.
\end{enumerate}
\end{proof}

\bibliography{refs}

\appendix

\section{Example: Algorithm and Lower Bound for \boldmath$P(9,2,2)$} \label{sec:p922-example}

In this section we show that the pattern graph $P(9,2,2)$ has complexity $2-1/8$. This is an example for our algorithms for graphs in $\family$ (see \cref{lm:p-graph-upper-bound}) and for our lower bounds for graphs in $\family$ (see \cref{lm:p-graph-lower-bound}). 

As mentioned in the proof overview, algorithms for subgraph finding typically use the high-degree-low-degree idea of Alon, Yuster, and Zwick~\cite{AlonYZ97} (at least in the algorithm theory community). Here we generalize the usual degree splitting to ``hyper-degree splitting'', where we have already computed a superset $T$ of the projection of all $H$-subgraphs to some vertex parts $V_1 \times \ldots \times V_\ell$ and we split the tuples $(v_1,\ldots,v_{\ell-1})$ into high degree and low degree based on their hyper-degree $\textup{deg}(v_1,\ldots,v_{\ell-1}) = |\{v_\ell : (v_1,\ldots,v_\ell) \in T\}|$.
For an example that involves $\ell = 3$ parts, see Cases 2.1 and 2.2 in the proof of \cref{lm:p922-ub} below.
We remark that for the pattern $P(9,2,2)$ the use of hyper-degree splitting seems necessary, since without hyper-degree splitting (i.e., in the classic high-degree-low-degree framework) we were only able to obtain exponent $2-1/9$, while with hyper-degree splitting we obtain the optimal exponent $2-1/8$. We prove the matching conditional lower bound in \cref{lem:exampleP922-lower-bound} below.

\begin{lemma} \label{lm:p922-ub}
    \enciso{P(9, 2, 2)} can be solved in time $\Oh(m^{2 - \frac{1}{8}})$.
\end{lemma}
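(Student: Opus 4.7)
The plan is to instantiate the high-degree/low-degree scheme of \cref{lm:pa2c-ub} for the parameters $\alpha=9$, $\gamma=2$, and threshold exponent $f = 1/8$, with hyper-degree splitting to reach the optimal bound $\Oh(m^{15/8})$. Label $V(P(9,2,2))$ as $a_0 \edg a_1 \edg \cdots \edg a_9$ together with $c_1, c_2$, each adjacent to both $a_0$ and $a_9$, and split every $V_{a_i}$ into a low-degree subpart (degree $< m^f$) and a high-degree subpart (degree $\ge m^f$, size $\le m^{1-f}$ by \cref{high-degree-small-cnt-nodes}); each of the constantly many combinations is handled separately. If an endpoint $v_{a_0}$ is high-degree, then $H - a_0$ is a tree, so I would take its edge-bag decomposition, adjoin $a_0$ to every bag, and enumerate $\Oh(m)$ local edges times $\Oh(m^{1-f})$ endpoint choices; the $v_{a_9}$ case is symmetric. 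If all $v_{a_i}$ are low-degree, I would use bags $B_1 = \{a_0, \ldots, a_5, c_1, c_2\}$ and $B_2 = \{a_5, \ldots, a_9, c_1, c_2\}$, extending from an adjacent pair by $m^f$ low-degree neighborhoods six times for a total of $\Oh(m^{7/4})$.

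The remaining case, both endpoints low-degree but some middle $v_{a_i}$ with $i \in [1,8]$ high-degree, is where hyper-degree splitting enters. I would first materialize $\{a_0, c_1, c_2\}$ in time $\Oh(m^{1+f})$ and build a dictionary that stores, for every pair $(v_{c_1}, v_{c_2})$, all its common neighbors in $V_{a_0}^{\lo}$. I then split such tuples by hyper-degree using threshold $m^{2f}$. A high-hyper-degree tuple arises for at most $\Oh(m^{1-f})$ pairs, so I would apply the same ``remove $\{c_1,c_2\}$'' trick: the edge-bag decomposition of the tree $H - \{c_1,c_2\}$ with $\{c_1,c_2\}$ adjoined costs $\Oh(m \cdot m^{1-f})$ per bag. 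For a low-hyper-degree tuple I would materialize $\{a_9, c_1, c_2\}$ analogously, extend each qualifying tuple through the dictionary by at most $m^{2f}$ neighbors in $V_{a_0}^{\lo}$, and obtain a submaterialization of $B_{\textup{down}} = \{a_0, a_9, c_1, c_2\}$ of size $\Oh(m^{1 + 3f}) = \Oh(m^{2 - 5f})$; projecting to $B_{\textup{ends}} = \{a_0, a_9\}$ preserves this bound.

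To finish the low-hyper-degree subcase I would view $V_{a_0}, \ldots, V_{a_9}$ together with the $\Oh(m^{2-5f})$ red edges from $S_{\textup{ends}}$ as a Biased Cycle instance with $k+1=10$ and $\ell=5$, and invoke \cref{lm:biased-cycle-algorithm}. Some $\jj_{a_i} > 1$ for $i \in [1,8] \subseteq \{0,\ldots,4\}\cup\{5,\ldots,9\}$ by the case assumption, so the instance is $5$-reachable and runs in $\Oh(m^{2-f})$. The hard part will be tuning so that the two hyper-degree buckets balance against the Biased Cycle bound simultaneously: the high-hyper-degree bucket count $m^{1-f}$ lines up with the outer $m$ edges in $H-\{c_1,c_2\}$, while the low-hyper-degree submaterialization size $m^{2-5f}$ is exactly the red-edge budget required for $5$-reachability at $f = 1/8$. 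A vertex-level split on the $c_i$'s individually only brings the complexity down to $\funcc = 9$; splitting jointly on the pair $(v_{c_1}, v_{c_2})$ is what recovers the optimal $\funcc = 8$.
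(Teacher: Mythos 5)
Your proposal is correct and follows the same core strategy as the paper's proof: endpoint degree splitting, then ``hyper-degree'' splitting of the pair $(v_{c_1}, v_{c_2})$ at threshold $m^{2f} = m^{1/4}$, with the low-hyper-degree branch completed via a cycle-type dynamic programming on $V_{a_0}, \ldots, V_{a_9}$ using the $\Oh(m^{2-5f})$ red edges. Two cosmetic differences from the paper's write-up: (i) you invoke the Biased Cycle lemma (\cref{lm:biased-cycle-algorithm}) for the low-hyper-degree branch, as the general \cref{lm:pa2c-ub} does, whereas the paper's appendix version spells out the tree decomposition (its bags $B_0, B_{r+1}, \ldots, B_8$) explicitly; and (ii) you place the split on whether a middle $v_{a_i}$ is high-degree \emph{before} the $(c_1,c_2)$-hyper-degree split, handling the all-middle-low case directly with the two bags $\{a_0,\ldots,a_5,c_1,c_2\}$ and $\{a_5,\ldots,a_9,c_1,c_2\}$ in $\Oh(m^{7/4})$, whereas the paper puts the hyper-degree split at the top of its Case~2. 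Both orderings partition the $H$-subgraphs consistently and yield the claimed $\Oh(m^{15/8})$ encoding time, so the two proofs are essentially interchangeable.
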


\begin{proof}
	We write $H = P(9,2,2)$.
    Consider any $H$-subgraph $\bv$ in $G$. We create several partial $H$-encodings of $G$ and show that exactly one of them encodes $\bv$.

    We split the nodes in $V_{a_0}, V_{a_1}, \ldots, V_{a_9}$ into the ones that have degree less than $m^{\frac{1}{8}}$ (low-degree nodes, sets $V_{a_i}^{\lo}$) and the ones that have degree at least $m^{\frac{1}{8}}$ (high-degree nodes, sets $V_{a_i}^{\hi}$). For each of the $2^{10}$ choices of high or low degrees, we construct several partial $H$-encodings that encode exactly the $H$-subgraphs that satisfy these degree constraints. As $\bv$ satisfies exactly one of these $2^{10}$ cases, it will be encoded exactly once. Once we fix for each of the parts $V_{a_0}, V_{a_1}, \ldots, V_{a_9}$ whether we consider a low-degree or a high-degree node in that part, we filter out all other nodes. We now distinguish three cases.

\begin{itemize}
    \item \emph{Case 1: $v_{a_0}$ has high degree.} The sum of degrees of nodes from $V_{a_0}^{\hi}$ in the original host graph (i.e., before filtering) is at least $m^{\frac{1}{8}} \cdot |V_{a_0}^{\hi}|$. As the sum of all degrees in a graph is equal to $2m$, we obtain $|V_{a_0}^{\hi}| = \Oh(m^{1-{\frac{1}{8}}})$. Since $H-\{a_0\}$ is a tree, it has a tree decomposition where every bag consists of exactly two adjacent nodes. Adding $a_0$ to all these bags, we obtain a valid tree decomposition of $H$ (see \cref{P922-upper-bound-1}). Each one of these bags can be materialized in time $\Oh(m^{2-\frac{1}{8}})$ because there are $\Oh(m)$ choices for the two adjacent nodes and $\Oh(|V_{a_0}^{\hi}|) = \Oh(m^{1 - \frac{1}{8}})$ choices for $v_{a_0}$.

    \begin{figure}
    \begin{center}
        \includegraphics[scale=0.85]{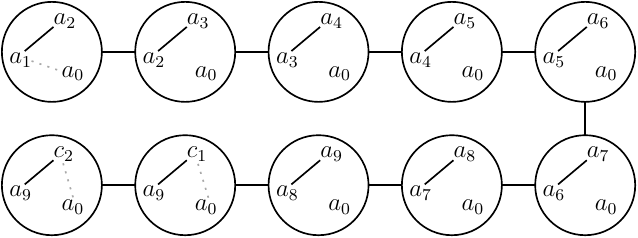}
    \end{center}

    \caption{Tree decomposition of $P(9, 2, 2)$ from Case 1 of \cref{lm:p922-ub}.}
    \label{P922-upper-bound-1}
    \end{figure}

\item \emph{Case 1': $v_{a_0}$ has low degree, and $v_{a_9}$ has high degree.} This case is symmetric to the previous one.

\item \emph{Case 2: $v_{a_0}$ and $v_{a_9}$ have low degrees.} As the degrees of nodes in $V_{a_0}^{\lo}$ are less than $m^{\frac{1}{8}}$, in time $\Oh(m^{1 + \frac{1}{8}})$ we can materialize a bag $B_{a_0 \to c} \coloneqq \{a_0, c_1, c_2\}$: There are $\Oh(m)$ choices to pick an adjacent pair of nodes from $V_{a_0}^{\lo}$ and $V_{c_1}$, and there are $\Oh(m^{\frac{1}{8}})$ choices to pick a neighbor in $V_{c_2}$ of the chosen node in $V_{a_0}^{\lo}$. By materializing all such triples, we ensure that we list the triple $(v_{a_0},v_{c_1},v_{c_2})$ of the nodes in the fixed solution. By using a binary search tree as a dictionary, we can store for every pair of nodes from parts $V_{c_1}$ and $V_{c_2}$ all their common neighbors in $V_{a_0}^{\lo}$ in time $\tilde{O}(m^{1 + \frac{1}{8}}) \le \Oh(m^{2- \frac{1}{8}})$.
    Now we split the listed pairs of nodes from $V_{c_1} \times V_{c_2}$ into the pairs that have less than $m^{\frac{1}{4}}$ common neighbors in $V_{a_0}$ (low-degree pairs) and the ones that have at least $m^{\frac{1}{4}}$ common neighbors (high-degree pairs). Similarly to nodes from parts $V_{a_i}$, we deal with low- and high-degree pairs separately. We now distinguish four cases.

    \begin{itemize}
        \item \emph{Case 2.1: $(v_{c_1}, v_{c_2})$ is a high-degree pair.} The total number of triples in $V_{a_0}^{\hi} \times V_{c_1} \times V_{c_2}$ that we generated is $\Oh(m^{1 + \frac{1}{8}})$. Every high-degree pair in $V_{c_1} \times V_{c_2}$ has at least $m^{\frac{1}{4}}$ common neighbors in $V_{a_0}$ which means that there are $\Oh(m^{1 + \frac{1}{8}} / m^{\frac{1}{4}}) = \Oh(m^{1-\frac{1}{8}})$ such pairs. Consider the graph $H-\{c_1, c_2\}$. It is a tree, so it has a tree decomposition where each bag consists of two adjacent nodes. Adding $c_1$ and $c_2$ to every bag, we obtain a valid tree decomposition of $H$ (see Figure \ref{P922-upper-bound-2}). Submaterializations of all of these bags can be computed in time $\Oh(m^{2-\frac{1}{8}})$ because there are $\Oh(m)$ choices for the two adjacent nodes and $\Oh(m^{1 - \frac{1}{8}})$ choices for a high-degree pair from $V_{c_1} \times V_{c_2}$.
            Note that this tree decomposition encodes exactly such $H$-subgraphs $\bv$ for which $(v_{c_1}, v_{c_2})$ is a high-degree pair, because for all bags only high-degree pairs $(v_{c_1}, v_{c_2})$ are listed.
        \begin{figure}
        \begin{center}
            \includegraphics[scale=0.85]{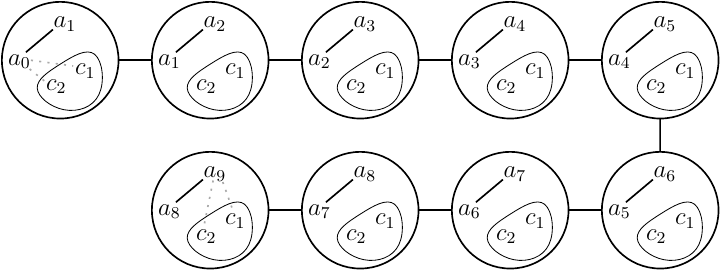}
        \end{center}

    \caption{Tree decomposition of $P(9, 2, 2)$ from Case 2.1 of \cref{lm:p922-ub}.}
        \label{P922-upper-bound-2}
        \end{figure}
    \item \emph{Case 2.2: $(v_{c_1}, v_{c_2})$ is a low-degree pair, and at least one of $v_{a_1}, v_{a_2}, v_{a_3}$, and $v_{a_4}$ has high degree.} Let $r$ be the smallest number among $\{1, 2, 3, 4\}$ such that in $v_{a_r}$ has high degree.
        We create the following tree decomposition of $H$ (see \cref{P922-upper-bound-3}). We define the bags $B_0 \coloneqq \{a_0, a_1, \ldots, a_r, a_9, c_1, c_2\}$ and $B_{r + 1}, B_{r+2}, \ldots, B_8$, where for each $i \in [r + 1, 8]$ we let $B_i \coloneqq \{a_r, a_i, a_{i + 1}\}$. We connect these bags in a path graph: $B_0 \edg B_{8} \edg B_{7} \edg \dots \edg B_{r+2} \edg B_{r+1}$.
        It is easy to see that it is indeed a valid tree decomposition of $H$ . As we chose high-degree nodes in $V_{a_r}$, we have $|V_{a_r}^{\hi}| = \Oh(m^{1-\frac{1}{8}})$.
        All bags $B_{r+1}, \ldots, B_8$ can be materialized in time $\Oh(m^{2-\frac{1}{8}})$ because there are $\Oh(m)$ choices for the two adjacent nodes and $\Oh(|V_{a^r}^{\hi}|) = \Oh(m^{1 - \frac{1}{8}})$ choices for a node from $V_{a^r}^{\hi}$.
        It remains to materialize $B_0$. In time $\Oh(m^{1 + \frac{1}{8}})$ we can materialize $\{a_9, c_1, c_2\}$: there are $\Oh(m)$ choices for an adjacent pair of nodes $(u_{a_9},u_{c_1})$ from $V_{a_9}^{\lo}$ and $V_{c_1}$, and $\Oh(m^{\frac{1}{8}})$ choices for a neighbor $u_{c_2}$ of $u_{a_9}$ in $V_{c_2}$. 
        We filter the generated triples by removing every triple $(u_{a_9},u_{c_1},u_{c_2})$ where $u_{c_1}$ and $u_{c_2}$ do not form a low-degree pair. For each remaining triple, we use our binary search tree to list all possible choices of a node from $V_{a_0}^{\lo}$ such that it is connected to both $u_{c_1}$ and $u_{c_2}$.
        This can be done in time $\tilde{O}(m^{\frac{1}{4}})$ for each low-degree pair, giving a total time complexity $\tilde{O}(m^{1 + \frac{3}{8}})$ to list all tuples of nodes $(u_{a_0}, u_{a_9}, u_{c_1}, u_{c_2})$ such that $u_{c_1}$ and $u_{c_2}$ form a low-degree pair and $u_{a_0}$ and $u_{a_9}$ are low-degree nodes. There are $\Oh(m^{1 + \frac{3}{8}})$ such tuples. For each one of them, we materialize nodes in $V_{a_1}^{\lo}, V_{a_2}^{\lo}, \ldots, V_{a_{r-1}}^{\lo}$, and $V_{a_r}^{\hi}$ in time $\Oh(m^{\frac{r}{8}}) \le \Oh(m^{\frac{4}{8}})$ by choosing the node in $V_{a_i}$ as one of the $\Oh(m^{\frac{1}{8}})$ neighbors of the chosen node in $V_{a_{i-1}}^{\lo}$ for $i=1,2,\ldots, r$, where we are using that $r$ is the smallest index for which $v_{a_r}$ has high degree.
    In total, we create a submaterialization of $B_0$ in time $\Oh(m^{1 + \frac{3}{8}} \cdot (\log m + m^{\frac{4}{8}})) = \Oh(m^{2 - \frac{1}{8}})$.
    Note that this tree decomposition encodes exactly such $H$-subgraphs $\bv$ for which $(v_{c_1}, v_{c_2})$ is a low-degree pair, because for $B_0$ we materialize only high-degree pairs $(v_{c_1}, v_{c_2})$.
        \begin{figure}
        \begin{center}
            \includegraphics[scale=1.8]{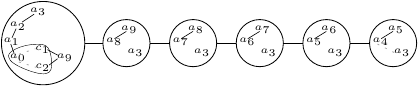}
        \end{center}

        \caption{Example of a tree decomposition of $P(9, 2, 2)$ from Case 2.2 of \cref{lm:p922-ub} for $r = 3$.}
        \label{P922-upper-bound-3}
        \end{figure}
    \item \emph{Case 2.2': $(v_{c_1}, v_{c_2})$ is a low-degree pair, $v_{a_1}, v_{a_2}, v_{a_3}$, and $v_{a_4}$ all have low degree, and at least one of $v_{a_5}, v_{a_6}, v_{a_7}$, and $v_{a_8}$ has high degree.} This case is symmetric to the previous one.
    \item \emph{Case 2.3: $(v_{c_1}, v_{c_2})$ is a low-degree pair, and $v_{a_1},v_{a_2},\ldots,v_{a_8}$ all have low degree.}
        We create a tree decomposition consisting of two adjacent bags: $B_1 \coloneqq \{a_0, a_1, a_2, a_3, a_4, a_5, c_1, c_2\}$ and $B_2 \coloneqq \{a_5, a_6, a_7, a_8, a_9, c_1, c_2\}$ (see Figure \ref{P922-upper-bound-4}).
        It is easy to see that this is indeed a valid tree decomposition of $H$. To materialize $B_1$, note that there are $\Oh(m)$ choices for an adjacent pair of nodes $(u_{a_0},u_{c_1})$ from $V_{a_0}^{\lo}$ and $V_{c_1}$, and $\Oh(m^{\frac{1}{8}})$ choices for a neighbor $u_{c_2} \in V_{c_2}$ of $u_{a_0}$. We filter out triples $(u_{a_0},u_{c_1},u_{c_2})$ where $u_{c_1},u_{c_2}$ do not form a low-degree pair. For each remaining triple, we materialize nodes in $V_{a_1}^{\lo}, V_{a_2}^{\lo}, V_{a_3}^{\lo}, V_{a_4}^{\lo}$, and $V_{a_5}^{\lo}$ in time $\Oh(m^{\frac{5}{8}})$ by choosing the node in $V_{a_i}$ as one of the $\Oh(m^{\frac{1}{8}})$ neighbors of the chosen node in $V_{a_{i-1}}^{\lo}$ for $i=1,2,3,4,5$. The materialization of $B_1$ takes time $\Oh(m^{1 + \frac{1}{8}} \cdot (\log m + m^{\frac{5}{8}})) = \Oh(m^{1 + \frac{6}{8}})$. The second bag $B_2$ can be materialized similarly in time $\Oh(m^{1 + \frac{5}{8}})$.
        Note that this tree decomposition encodes exactly such $H$-subgraphs $\bv$ for which $(v_{c_1}, v_{c_2})$ is a low-degree pair, because while we materialize $B_1$ we discard the options for which $(v_{c_1}, v_{c_2})$ is a high-degree pair.
        \begin{figure}
        \begin{center}
            \includegraphics[scale=1.8]{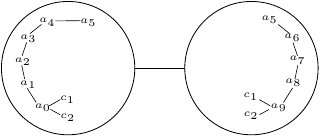}
        \end{center}

        \caption{Tree decomposition of $P(9, 2, 2)$ from Case 2.3 of \cref{lm:p922-ub}.}
        \label{P922-upper-bound-4}
        \end{figure}
    \end{itemize}
\end{itemize}

The cases are mutually exclusive and cover all $H$-subgraphs. Thus, we indeed create a full $H$-encoding of $G$. In each case, we computed a partial $H$-encoding of $G$ in time $\Oh(m^{2-\frac{1}{8}})$. Hence, we solved \Henciso in time $\Oh(m^{2 - \frac{1}{8}})$.
\end{proof}

\begin{lemma} \label{lem:exampleP922-lower-bound}
    $\clemb(P(9, 2, 2)) \ge 2 - \frac{1}{8}$.
\end{lemma}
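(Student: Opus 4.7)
The plan is to specialize the general construction of \cref{lm:pa2c-clemb} to the case $\alpha = 9$, $\gamma = 2$. Since $\alpha$ is odd, we set $\ell := (\alpha-1)/2 = 4$, and exhibit an explicit clique embedding $\psi : K_{15} \to P(9,2,2)$ of weak edge depth $8$; this immediately yields $\clemb(P(9,2,2)) \ge 15/8 = 2 - 1/8$. The target number $4\gamma + 2\ell - 1 = 15$ of embedded $K$-nodes matches the sum of three ingredient types described below.

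First, view the path $a_0 \edg a_1 \edg \cdots \edg a_9$ together with node $c_1$ as a cycle $\mathcal{C}$ of length $11$. The embedding assembles three families of connected subgraphs of $P(9,2,2)$ into which $K$-nodes are mapped: \emph{(A)} one $K$-node into each of the $2\ell+3 = 11$ paths of length $\ell = 4$ in $\mathcal{C}$; \emph{(B)} an extra $\gamma - 1 = 1$ $K$-node into each of the two paths $a_0 \edg a_1 \edg \cdots \edg a_4$ and $a_5 \edg a_6 \edg \cdots \edg a_9$, so that each of these two paths receives $\gamma = 2$ $K$-nodes; \emph{(C)} one $K$-node into each of the two paths $c_2 \edg a_0 \edg a_1 \edg a_2 \edg a_3$ and $c_2 \edg a_9 \edg a_8 \edg a_7 \edg a_6$. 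Summing, $11 + 2 + 2 = 15$ as required.

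Next, I would verify that $\psi$ is a valid clique embedding. Connectedness is immediate since each embedded subgraph is a path. For the touching condition, any two type-A/B paths have length $4$ inside an $11$-cycle and thus touch by \cref{paths-touch}; any two type-C paths share one of $c_2, a_0, a_9$ (up to incident edges), and any type-C path meets any type-A/B path either at an endpoint in $\{a_0,a_9\}$, along the edge $\{a_0,c_2\}$ or $\{a_9,c_2\}$, or within the $11$-cycle $a_0 a_1 \cdots a_9 c_2$ via \cref{paths-touch}.

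Finally, I would compute the weak edge depth. In isolation, the type-A paths contribute depth $\ell + 2 = 6$ to every edge of $\mathcal{C}$, since each edge of $\mathcal{C}$ is covered by exactly $\ell+2$ paths of length $\ell$. The type-B additions contribute $\gamma - 1 = 1$ extra to every edge on $a_0 \edg \cdots \edg a_4$, to every edge on $a_5 \edg \cdots \edg a_9$, and to the middle edge $\{a_4,a_5\}$ it contributes $2(\gamma-1) = 2$. The type-C paths contribute $\gamma - 1 = 1$ to each edge of $c_1 \edg a_0 \edg a_1 \edg a_2 \edg a_3$ and of $c_1 \edg a_9 \edg a_8 \edg a_7 \edg a_6$ (but $0$ to $\{a_4,a_5\}$). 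A straightforward case check gives total depth $\ell + 2 + 2(\gamma - 1) = 8$ for every edge of $\mathcal{C}$. The two remaining edges $\{a_0, c_2\}$ and $\{a_9, c_2\}$ differ from $\{a_0,c_1\}$ and $\{a_9,c_1\}$ respectively only by trading a type-A path ending at $c_1$ for a type-C path through $c_2$, so they also have depth $8$. The main ``obstacle'' is purely bookkeeping: making sure that in Case (B) the added weight on $\{a_4,a_5\}$ is matched by the drop in type-C coverage there. Hence $\wed(\psi) = 8$ and $\clemb(P(9,2,2)) \ge 15/8$.
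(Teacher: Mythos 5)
Your proposal is correct and takes essentially the same approach as the paper: both present the explicit clique embedding $\psi\colon K_{15}\to P(9,2,2)$ obtained by specializing the general construction of \cref{lm:pa2c-clemb} to $\alpha=9$, $\gamma=2$, and both verify $\wed(\psi)=8$ by a case check over the eleven cycle edges and the two edges through $c_2$. (A minor imprecision in your bookkeeping: each type-C path actually touches five cycle edges, e.g.\ $c_2\edg a_0\edg\cdots\edg a_3$ also meets $\{a_3,a_4\}$, not just the four edges you list; but this does not affect your correct conclusion that every edge has depth exactly $8$.)
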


\begin{proof}
    We create a clique embedding $\psi$ from $K_{15}$ to $P(9, 2, 2)$ with $\wed(\psi) = 8$ which yields $\clemb(P(9, 2, 2)) \ge \frac{15}{8} = 2 - \frac{1}{8}$. See Figure \ref{P922-alt-lower-bound}.
    We first treat $a_0 \edg a_1 \edg a_2 \edg \dots \edg a_8 \edg a_9 \edg c_1$ as a cycle of length $11$ and embed one node into each of the paths of length four on this cycle.
    For the paths $a_0 \edg a_1 \edg a_2 \edg a_3 \edg a_4$ and $a_5 \edg a_6 \edg a_7 \edg a_8 \edg a_9$ we embed one additional node, making their total embedded number of nodes two. Furthermore, we embed one node into the path $c_2 \edg a_0 \edg a_1 \edg a_2 \edg a_3$ and one node into the path $a_6 \edg a_7 \edg a_8 \edg a_9 \edg c_2$. 
    In total, we embed one node into every one of $11$ paths of length $4$ in the cycle, additional one node for two of these paths, and one node for each of the two paths going through $c_2$. Thus, we used all $11 + 2 + 2 = 15$ nodes of $K_{15}$.
\begin{figure}
\begin{center}
    \includegraphics[scale=0.55]{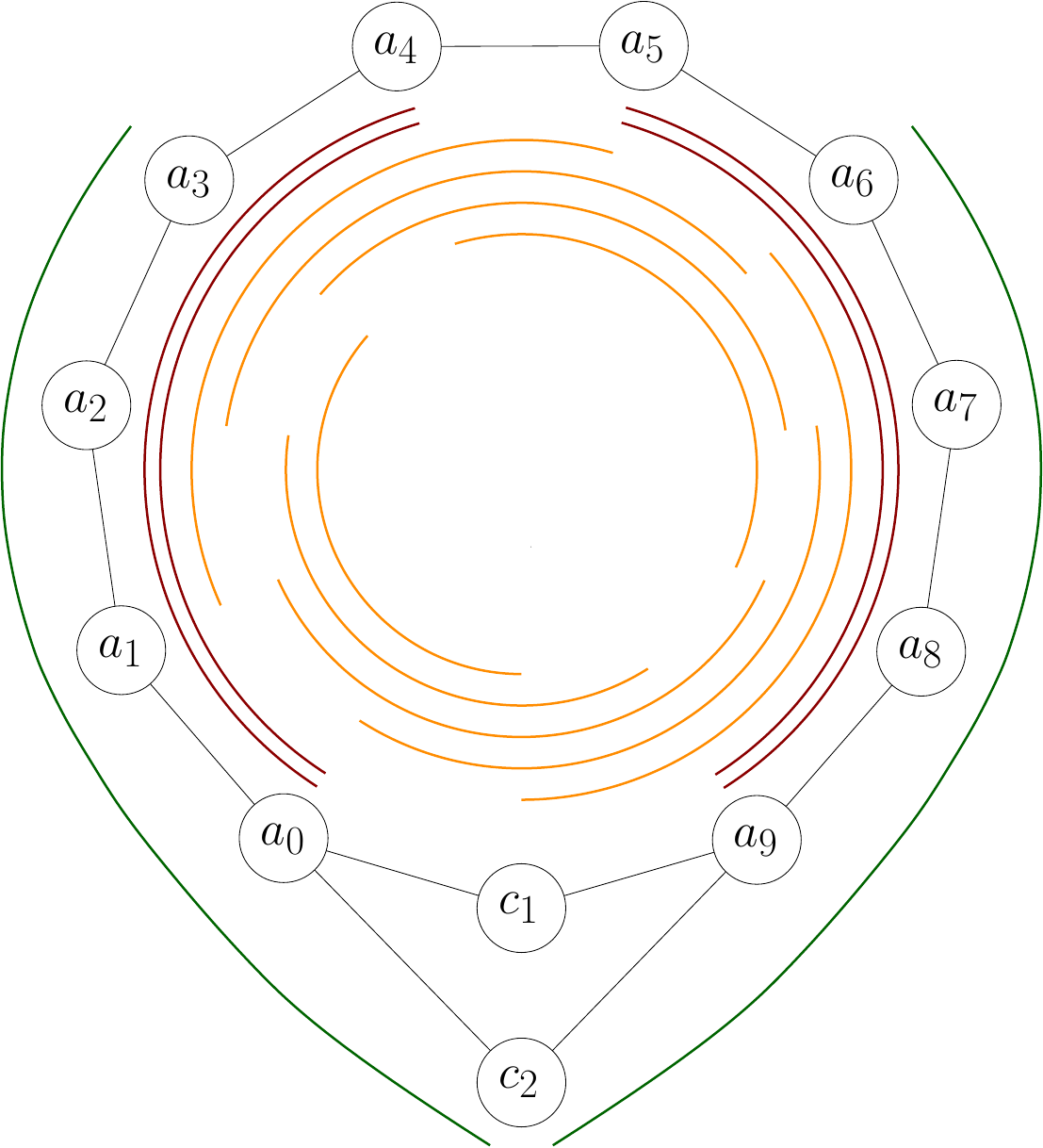}
\end{center}

\caption{Clique embedding for $P(9, 2, 2)$. Each arc represents an embedding of one node into the subgraph covered by the arc.}
\label{P922-alt-lower-bound}
\end{figure}
We omit the details, but it can be checked by inspecting \cref{P922-alt-lower-bound} that each embedded subgraphs is connected and any two embedded subgraphs touch, so $\psi$ is a valid clique embedding. Furthermore, every edge has weak edge depth equal to $8$, thus proving the claim.
\end{proof}

\section{Further Related Work}
\label{app:further-related-work}

Standard variations of the problems studied in this paper arise from the following changes:

\begin{itemize}
\item \emph{Parameter $m$ vs $n$:} Measuring running time in terms of $m$ is natural, as it describes the input size. Nevertheless, the majority of work on subgraph finding problems studies the parameter $n$. We leave it as an open problem whether similar results as in this paper can be shown for parameter $n$.

\item \emph{Colored vs uncolored:} We study the colored version of subgraph isomorphism, where each node of the host graph $G$ is colored with a node of the pattern graph $H$, and we are only interested in $H$-subgraphs that respect this coloring. This problem variant arises e.g.\ in database problems. From a graph theory standpoint, the uncolored variant is somewhat more natural and has been widely studied (see, e.g.~\cite{YusterZ97,DahlgaardKS17}), also for listing (see, e.g.,~\cite{AbboudKLS23,JinX23,
JinVWZ23}). The seminal color coding technique transfers any algorithm from the colored setting to the uncolored setting at the cost of only a logarithmic factor in the running time~\cite{AlonYZ95}. However, conditional lower bounds are significantly harder to prove in the uncolored setting (see e.g.\ the recent complicated resolution of listing 4-cycles in the uncolored setting~\cite{AbboudBKZ22,AbboudBF23,JinX23,ChanX23,
AbboudKLS23}). We leave as an open problem whether results as in this paper can be shown for the uncolored problem variant.

\item \emph{Induced vs non-induced:} In this paper we search for subgraphs that are not necessarily induced. Induced subgraph isomorphism behaves quite differently (see, e.g.,~\cite{WilliamsWWY15,DalirrooyfardVW21,
DalirrooyfardW22,EppsteinLW23}), so we do not expect that similar results as in this paper hold for the induced version.
\end{itemize}

Further related work considers subgraph isomorphism with restrictions on the host graph (see, e.g.,~\cite{ChibaN85,Kowalik03,MarxP14}) or with respect to other parameters (see, e.g.,~\cite{MarxP14,DalirrooyfardMVWX23}). 
Closely related are problems in which the pattern is not a fixed subgraph, but the goal is to detect the smallest subgraph from a family of subgraphs, e.g., finding the shortest cycle (see, e.g.,~\cite{LincolnWW18}).

\paragraph*{Fixed-Parameter Tractability}
In parameterized algorithms, a class $\cal C$ of pattern graphs is called fixed-parameter tractable if there is a constant $c>0$ such that $H$-subgraph detection can be solved in time $\Oh(m^c)$ for each $H \in \cal C$. As the hidden constant may depend on $H$, this running time is typically written as $f(H) \cdot m^c$. Significant effort goes into optimizing the factor $f(H)$, e.g., the $k$-path pattern can be detected in time $\Oh(1.66^k \textup{poly}(n))$~\cite{BjorklundHKK17}. Similar improvements of the factor $f(H)$ have been found for the $\tOh(n^{\tw(H)+1})$-time algorithm for $H$-subgraph detection~\cite{AlonYZ95,AminiFS12,FominLRSR12,Pratt19}.

In contrast, in this paper we assume $H$ to have constant size and therefore we ignore the constant factor $f(H)$.

\paragraph*{Database Theory}

One of the most fundamental operations on relational databases are join queries. 
Without going into the details of their definition, it suffices to say that join queries are a generalization of subgraph listing to hypergraphs (specifically, subgraph isomorphism is the special case of join queries where all tables have two columns and where the query is self-join-free). 

The last 15 years have brought two major advances for join queries in database theory. The first is the AGM bound, which bounds the maximal output size of a join for a fixed query and is tight up to constant factors~\cite{AtseriasGM13}. More precisely, for a join query $Q$ the AGM bound is a number $\textup{AGM}(Q)$ such that on any input database of size $m$ the output size is at most $\Oh(m^{\textup{AGM}(Q)})$. 
The second major advancement was the development of worst-case optimal join algorithms, which solve a join query in running time $\tOh(m^{\textup{AGM}(Q)})$, and thus are optimal up to logarithmic factors for queries attaining the worst-case output size~\cite{NgoPRR18,Veldhuizen12}. 

The natural next step is to study \emph{output-sensitive} algorithms, for which the running time depends on the input size $m$ and the output size $t$. Specifically, the next research challenge is to determine for each join query $Q$ the optimal constant $c(Q)$ such that it can be solved in time $\tOh(m^{c(Q)} + t)$. (Alternatively, one can ask for the optimal preprocessing time $\tOh(m^{c(Q)})$ for enumeration with delay $\tOh(1)$.) Note that worst-case optimal join algorithms attain $c(Q) \le \textup{AGM}(Q)$, and that any algorithm with $c(Q) \le \textup{AGM}(Q)$ is also a worst-case optimal join algorithm, so this challenge asks for a strengthening of worst-case optimal join algorithms.
Much of the recent interest in the fine-grained complexity of listing and enumeration in database theory can be seen as working on this challenge, see, e.g.~\cite{DurandG07,DeepHK20,BringmannCM22,
CarmeliZBCKS22,CarmeliS23}. Our results in this paper also contribute to this challenge, as we determine the optimal constant $c(Q)$ for listing and enumeration for graph queries (more precisely, self-join-free join queries of arity 2) of subquadratic complexity.
We leave it as an open problem whether the results in this paper can be generalized to the general case of join queries.

\paragraph*{Submodular Width}
Submodular width is an algorithmic approach to subgraph detection (and, more generally, to join queries) with heavy influences from both parameterized algorithms and database theory.
Marx~\cite{Marx13} defined the submodular width $\subw(H)$ and showed that $H$-subgraph detection has time complexity $m^{\Theta(\subw(H)^{\Theta(1)})}$, where the lower bound assumes the Exponential Time Hypothesis.
The upper bound was improved to time $\tOh(m^{\subw(H)})$ by Abo Khamis, Ngo, and Suciu~\cite{Khamis0S17}; this is known as the PANDA algorithm, see also \cite{KhamisNS24} for a simplified exposition.

Let us compare our results in this paper with the submodular width approach. It turns out that the exponents we obtain are exactly the submodular width. More precisely, in a future version of this paper we will show that $\min\{c_W(H),2\} = \min\{\textup{subw}(H),2\}$, assuming standard hypotheses from fine-grained complexity theory. That is, we could have obtained (some of) our results by proving bounds on the submodular width of certain families of pattern graphs and then invoking the PANDA algorithm. However, this would come with significant disadvantages: (1) The original PANDA algorithm works for detection, which was generalized to min-weight in~\cite{KhamisCMNNOS20}. However, for listing and enumeration the only known variant~\cite{BerkholzS19} requires additional constant factors in the exponent and thus would give significantly worse results compared to ours.
(2) PANDA has a large number of logarithmic factors hidden by the $\tOh$-notation, while our algorithms do not need any logarithmic factors.
(3) One would hope that applying the submodular width framework would result in a much simpler proof, as we could use the PANDA algorithm as a black box. However, the definition of submodular width is so complicated that analyzing the submodular width of the relevant families of pattern graphs would be essentially as complicated as writing a complete algorithm; in other words, following the submodular width approach would not make our paper significantly simpler. For these reasons, we are not following the submodular width approach in this paper.

\section{Computing the Decomposition \boldmath$D(H)$} \label{sec:appendix-decomposition-algorithm}

In this section we show that the decomposition $D(H)$ defined in \cref{def:clique-separator-decomp} can be computed in polynomial time (in terms of $|V(H)|$). This is not needed for our main result, because in this paper we always assume that $H$ has constant size. In this section we make an exception, to show this interesting side result.

\begin{lemma}[\cite{Whitesides81}] \label{lm:find-clique-separator}
    There is an algorithm that given a graph $H$, in polynomial time either decides that $H$ does not have a clique separator or computes some clique separator $C$ of $H$.
\end{lemma}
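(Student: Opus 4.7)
The plan is to give a direct polynomial-time algorithm based on Whitesides' observation that every minimal clique separator can be ``seen'' from a vertex on one side. Specifically, I would iterate over every vertex $v \in V(H)$ and every connected component $K$ of $H[V(H) \setminus N[v]]$. For each such pair $(v,K)$, I would compute the set $S := N(K)$ of vertices in $V(H) \setminus K$ that have a neighbor in $K$, then test (i)~whether $H[S]$ is a clique and (ii)~whether $V(H) \setminus (K \cup S) \neq \emptyset$. If both tests pass for some $(v,K)$, I would return $S$; otherwise I would declare that $H$ has no clique separator. There are $\Oh(|V(H)|^2)$ pairs $(v,K)$ and each check runs in $\Oh(|V(H)|^2)$ time, so the overall running time is polynomial.

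For soundness, if the algorithm returns $S$, then $K$ and $V(H) \setminus (K \cup S)$ lie in different connected components of $H[V(H) \setminus S]$ by construction of $S = N(K)$, and $H[S]$ is a clique by condition (i), so $S$ is a clique separator. The interesting direction is completeness. Suppose $H$ has a clique separator; pick a minimal one, call it $C$, and let $K_1,\ldots,K_t$ be the connected components of $H[V(H) \setminus C]$ (with $t \ge 2$). The plan is to choose any $v \in K_1$ and consider the component $K$ of $H[V(H) \setminus N[v]]$ that contains some fixed $K_i$ with $i \ne 1$; since $v$ is not adjacent to any vertex outside $K_1 \cup C$, such a $K_i$ lies entirely in $V(H) \setminus N[v]$. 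The argument then shows that $N(K) \subseteq C$, and using minimality of $C$ one concludes that $N(K) = C$ (otherwise $N(K)$ would be a strictly smaller clique separator, contradicting minimality of $C$).

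The main obstacle will be pinning down the second half of the completeness argument, in particular showing $N(K) = C$ rather than just $N(K) \subseteq C$. The containment $\subseteq$ is immediate because any neighbor of a vertex in $K_i$ outside $K_i$ belongs to $C$; the reverse requires arguing that every $c \in C$ has a neighbor in $K$, which one obtains from minimality of $C$ (if some $c \in C$ had no neighbor in any of $K_2,\ldots,K_t$, then $C \setminus \{c\}$ would still separate $K_1$ from $K_2 \cup \cdots \cup K_t$, contradicting minimality of $C$) together with choosing $K$ so that it contains $K_2 \cup \cdots \cup K_t$. A mild subtlety is that $K$ is a component of $H[V(H)\setminus N[v]]$, which is slightly larger than $V(H) \setminus (K_1 \cup C)$ in general; one must verify that all of $K_2,\ldots,K_t$ end up in the same component $K$ here, which can be arranged by choosing $v$ to be a vertex of $K_1$ whose closed neighborhood is contained in $K_1 \cup C$ (any vertex of $K_1$ works since edges from $K_1$ only go into $K_1 \cup C$).

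With the algorithm and its correctness in hand, the decomposition $D(H)$ itself is computed recursively using \cref{d-set-computation}: find a minimal clique separator of $H$ by repeatedly trying to shrink the clique separator returned above (removing one vertex at a time and re-running the algorithm on the induced subgraphs), split along it into the pieces $H[S_i \cup C]$, and recurse on each piece. The recursion depth and the number of pieces are both polynomial in $|V(H)|$, so the total time to compute $D(H)$ is polynomial.
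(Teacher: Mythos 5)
Your algorithm is not complete, and the completeness half of your argument cannot be salvaged as written. Take the $6$-cycle $a_1 \edg a_2 \edg c_2 \edg b_2 \edg b_1 \edg c_1 \edg a_1$ together with the chord $\{c_1,c_2\}$. This graph is triangle-free, so its only cliques are $\emptyset$, the singletons and the edges, and one checks that the unique clique separator is $C=\{c_1,c_2\}$ (it separates $\{a_1,a_2\}$ from $\{b_1,b_2\}$, while every other candidate leaves a connected remainder). Yet for $v=a_1$ we get $N[v]=\{a_1,a_2,c_1\}$, the single component of $H[V(H)\setminus N[v]]$ is $K=\{c_2,b_1,b_2\}$, and $N(K)=\{a_2,c_1\}$ is not a clique; for $v=c_1$ the components $\{a_2\}$ and $\{b_2\}$ give $N(\{a_2\})=\{a_1,c_2\}$ and $N(\{b_2\})=\{b_1,c_2\}$, again not cliques; the remaining vertices are covered by the automorphisms swapping $a_i\leftrightarrow b_i$ and swapping indices $1\leftrightarrow 2$. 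So your algorithm declares that this graph has no clique separator, which is wrong.

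The flaw sits exactly at the ``mild subtlety'' you flag, but it is fatal. You assert that $N(K)\subseteq C$ is ``immediate because any neighbor of a vertex in $K_i$ outside $K_i$ belongs to $C$'', but that reasons about $K_i$, whereas $K$ can be strictly larger: starting from $v\in K_1$, the component $K$ of $H[V(H)\setminus N[v]]$ containing $K_2$ also absorbs the vertices of $C\setminus N(v)$ (here $c_2$) and, through them, can reach back into $K_1$; then $N(K)$ picks up vertices of $K_1\cap N(v)$ (here $a_2$) and need not be contained in $C$, nor be a clique. What the argument implicitly requires is a vertex $v\notin C$ with $C\subseteq N(v)$, but such a $v$ need not exist -- in the counterexample no vertex outside $C$ is adjacent to both $c_1$ and $c_2$. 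The cited algorithm of Whitesides~\cite{Whitesides81} (and Tarjan's clique-separator decomposition) is structurally different: it first computes a minimal elimination ordering / minimal triangulation of $H$ and then checks, vertex by vertex in that order, whether the later fill-in neighborhood of the current vertex induces a clique in the original $H$; the minimal-ordering machinery is what yields completeness, and a local $N[v]$-based search has no analogous guarantee.
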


\begin{lemma} \label{min-clique-separator}
    There is an algorithm that given a graph $H$, in polynomial time either decides that $H$ does not have a clique separator or computes a minimal clique separator $C$ of~$H$.
\end{lemma}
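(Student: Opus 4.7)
The plan is to reduce to \cref{lm:find-clique-separator} by iteratively shrinking a clique separator one vertex at a time.

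First, apply \cref{lm:find-clique-separator} to $H$. If it reports that $H$ has no clique separator, we are done. Otherwise, let $C$ be the clique separator it returns. Now iterate: for each $v \in C$, check whether $C \setminus \{v\}$ is still a clique separator, i.e., whether $H[V(H) \setminus (C \setminus \{v\})]$ is disconnected. (The clique condition is automatic since any subset of a clique is a clique.) This check reduces to a single BFS/DFS on the graph $H[V(H) \setminus (C \setminus \{v\})]$ and thus takes polynomial time. If for some $v$ the answer is yes, replace $C$ by $C \setminus \{v\}$ and restart the loop. If for every $v \in C$ the answer is no, output $C$. Since each successful iteration strictly decreases $|C|$, the loop terminates after at most $|V(H)|$ rounds, so the total running time is polynomial.

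It remains to argue that the $C$ returned is actually minimal with respect to subset inclusion, not merely locally minimal under single-vertex deletions. Suppose for contradiction that $C' \subsetneq C$ is a clique separator, and pick any $v \in C \setminus C'$. We claim that $C \setminus \{v\}$ is then also a clique separator, contradicting the termination condition. Since $C$ is a clique in $H$, the vertices of $C \setminus C'$ are pairwise adjacent in $H - C'$, so they all lie in a single connected component $A_1$ of $H - C'$. Because $C'$ is a clique separator, $H - C'$ has at least one further component $A_2$. Passing from $H - C'$ to $H - (C \setminus \{v\})$ only removes vertices of $C \setminus C'$ other than $v$ itself, all of which lie in $A_1$; hence $A_2$ is untouched and $v$ survives in $A_1 \setminus ((C \setminus C') \setminus \{v\})$. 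Therefore $H - (C \setminus \{v\})$ is disconnected, giving the desired contradiction.

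The main (and only real) obstacle is this subset-minimality argument, which requires observing that the vertices of $C \setminus C'$ form a clique and hence cannot straddle components of $H - C'$; everything else is a straightforward greedy deletion loop built on top of \cref{lm:find-clique-separator}. No other subtlety arises, since connectivity tests and clique checks on a graph of size $|V(H)|$ are polynomial time.
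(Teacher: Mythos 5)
Your proof is correct and follows essentially the same approach as the paper: apply the clique-separator-finding algorithm once, then greedily delete single vertices while the result remains a separator, and finally argue that local single-vertex minimality implies global subset minimality because $C \setminus C'$ is a clique and therefore lies entirely within one component of $H - C'$. Your write-up is in fact slightly more explicit than the paper's in tracking which vertices survive and why the second component is untouched.
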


\begin{proof}
    We apply the algorithm from \cref{lm:find-clique-separator}.
    In polynomial time it either decides that $H$ does not have a clique separator or computes some clique separator $\bar{C}$ of $H$. In the first case we are done, and in the second case we still have to transform $\bar{C}$ into a minimal clique separator. To this end, we iteratively try to remove nodes from $\bar{C}$ one by one while $\bar{C}$ is still a clique separator. At the end of this process we get some clique separator $C$, for which $C \setminus \{b\}$ is not a clique separator for every $b \in C$. Such a procedure can clearly be performed in polynomial time. We claim that such $C$ is a minimal clique separator.

    For the sake of contradiction, assume that $C$ is not a minimal clique separator.
    That is, there exists a clique separator $C' \subsetneq C$. All nodes of $C$ are adjacent, so all nodes of $C \setminus C'$ lie in one connected component of $H[V \setminus C']$. Let $a$ be a node from some other connected component of $H[V \setminus C']$ (there are at least two because $C'$ is a clique separator). Let $b$ be some node of $C \setminus C'$. As $b$ and $a$ lie in different connected components of $H[V(H) \setminus (C \setminus \{b\})]$, we have that $C \setminus \{b\}$ is a clique separator in $H$, which leads to a contradiction because we assumed that $C \setminus \{b\}$ is not a clique separator for every $b \in C$.
\end{proof}

Lemma above gives an efficient way of computing $D(H)$: one searches for some minimal clique separator in $H$, and if it exists, recursively splits into smaller graphs, and if it does not exist, $H$ does not have any clique separators, and thus $D(H) = \{H\}$. We formalize it in the following lemma.

\begin{lemma} \label{d-set-algo}
    Given a graph $H$, there is a $\poly(|V(H)|)$-time algorithm that computes $D(H)$.
\end{lemma}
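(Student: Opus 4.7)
The plan is to implement the recursive strategy directly suggested by \cref{d-set-computation}. Given an input graph $H$, first invoke the algorithm from \cref{min-clique-separator} to either find a minimal clique separator $C$ of $H$ or conclude that none exists. If no clique separator exists, return $\{H\}$, since then by definition $V(H)$ itself is the unique inclusionwise maximal subset $V'$ such that $H[V']$ has no clique separator. Otherwise, compute the connected components $S_1, \ldots, S_k$ of $H[V(H) \setminus C]$ (with $k \geq 2$, since $C$ is a clique separator), recursively compute $D(H[S_i \cup C])$ for each $i \in [k]$, and return $\bigcup_{i \in [k]} D(H[S_i \cup C])$. Note that this also handles the degenerate case of disconnected $H$ correctly, since then $\emptyset$ is a clique separator and the algorithm splits into the connected components.

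Correctness follows directly from \cref{d-set-computation}, which guarantees that splitting along any minimal clique separator yields the correct decomposition, combined with the correctness of \cref{min-clique-separator}. Termination is immediate: each recursive subproblem $H[S_i \cup C]$ has at most $|V(H)| - 1$ vertices, because at least one component $S_j$ with $j \neq i$ is non-empty and disjoint from $S_i \cup C$.

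The main obstacle is bounding the total running time polynomially. Each individual recursive call does only polynomial work (finding a minimal clique separator via \cref{min-clique-separator} and computing connected components), so it suffices to bound the number of recursive calls. View the recursion as a rooted tree whose leaves correspond exactly to the graphs in $D(H)$ and whose internal nodes each have at least two children, since every clique-separator split produces at least two components. Consequently the total number of nodes is at most $2|D(H)| - 1$, so it remains to show that $|D(H)| = \poly(|V(H)|)$. One clean way is to invoke the classical result of Tarjan (``Decomposition by clique separators'', 1985) that $|D(H)| \leq |V(H)| - 1$ for any connected graph. Alternatively, a self-contained proof by induction on $|V(H)|$ establishes the same polynomial bound, using that each recursive subproblem strictly decreases the vertex count and that the sum of subproblem sizes grows only by $(k-1)|C|$ per split, which can be absorbed into a polynomial potential like $|V(H)|^2$. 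Either way, the total work is polynomial in $|V(H)|$, completing the proof.
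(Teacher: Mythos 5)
Your algorithm and correctness argument match the paper's; the divergence is in the running-time analysis, where you take a genuinely different route. The paper bounds the number of \emph{non-terminal} recursive calls by the number of non-edges $\binom{n}{2}-|E(H)|$: every split ``consumes'' at least one non-edge (a pair $\{a,b\}$ with $a\in S_1$, $b\in S_2$), and the remaining non-edges distribute disjointly across the subproblems $H[S_i\cup C]$, since two subproblems intersect only in the clique $C$. This self-contained amortization yields $\Oh(n^3)$ total calls. You instead count leaves: the leaves biject with $D(H)$, every internal node has at least two children, so the recursion tree has at most $2|D(H)|-1$ nodes, and you appeal to the classical bound $|D(H)|\le |V(H)|-1$. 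When the cited bound is applied carefully this is correct and in fact gives a sharper $\Oh(n)$ bound on the number of calls.

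Two caveats, though. First, ``leaves correspond exactly to $D(H)$'' tacitly assumes the sets $D(H[S_i\cup C])$ are pairwise disjoint. This is true — any common element would have vertex set contained in $C$, hence (by maximality, since $H[C]$ is a clique with no clique separator) equal to $C$, and the proof of \cref{d-set-computation} shows $C$ itself is never maximal in $H[S_i\cup C]$ — but you assert it without justification, and it is exactly the kind of step that needs to be said. Second, your proposed self-contained backup with the potential $|V(H)|^2$ does not go through as stated: a minimal clique separator can have size $n-2$ (take $K_n$ minus one edge), after which the two subproblems each have $n-1$ vertices and $\sum_i |V(H[S_i\cup C])|^2 \approx 2(n-1)^2 > n^2$, so this potential can increase across a split. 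The paper's non-edge potential is precisely the quantity that handles a large clique separator, since a large clique has few non-edges. So if you want a self-contained argument, you should replace the vertex-count potential by the non-edge potential; and if you want to cite $|D(H)|\le n-1$, the precise reference is Leimer's uniqueness result for the atom decomposition (which identifies the atoms produced by a clique-separator decomposition with the maximal induced subgraphs without clique separators, i.e.\ the paper's $D(H)$) together with Tarjan's $n-1$ count, rather than Tarjan~'85 alone.
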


\begin{proof}
    We develop a recursive procedure that finds $D(H)$. First, apply Lemma \ref{min-clique-separator}. If there is no clique separator in $H$, return $D(H) \coloneqq \{H\}$ because $V(H)$ is a subset of nodes with no clique separator, and by maximality it is the only element of $D(H)$. Otherwise, Lemma \ref{min-clique-separator} returns some minimal clique separator $C$, and we recurse into $H[S_i \cup C]$ for all $i \in [k]$, where $S_i$ are connected components of $H[V(H) \setminus C]$. Due to Lemma \ref{d-set-computation}, it is sufficient to return the union of all $D(H[S_i \cup C])$ that we get from these recursive calls.

    Set $n \coloneqq |V(H)|$. We now claim that the number of recursive calls in this procedure is polynomial in $n$, and as each terminal call of this recursion returns a single element in its $D$-set, and in the upper levels of recursion the $D$-sets are only united, $D$-sets in all recursive calls have sizes polynomial in $n$, and thus the whole procedure takes $\poly(n)$ time.

    It remains to show that the number of recursive calls is polynomial in $n$. We prove by complete induction on $n$ that the number of non-terminal recursive calls $R(H)$ for a graph $H$ is at most $\overline{E}(H) \coloneqq {n \choose 2} - |E(H)| = \poly(n)$. That is, the number of non-edges in $H$. Consequently, the total number of recursive calls is $\poly(n)$ because each non-terminal recursive call creates at most $n$ terminal recursive calls.
    We prove the step of induction. We assume that the claim is proven for all graphs $H'$ with $|V(H')| \le n - 1$ and prove it for $H$. If $H$ does not have a clique separator, it has $R(H) = 0 \le \overline{E}(H)$ non-terminal calls. Otherwise, $R(H) = 1 + \sum_{i \in [k]} R(H[S_i \cup C])$, where $C$ is some minimal clique separator in $H$, and $S_i$ are connected components of $H[V(H) \setminus C]$. Note that if some pair of nodes $\{a, b\}$ forms a non-edge in some $H[S_i \cup C]$, it also forms a non-edge in $H$.
    Furthermore, any non-edge $\{a, b\}$ from $H$ can be a non-edge in at most one of $H[S_i \cup C]$ because if $\{a, b\} \subseteq S_i \cup C$ and $\{a, b\} \subseteq S_j \cup C$ for some $j \neq j$, then $\{a, b\} \subseteq (S_i \cup C) \cap (S_j \cup C) = C$, which is impossible as $C$ is a clique. Thus, $\overline{E}(H) \ge \sum_{i \in [k]} \overline{E}(H[S_i \cup C])$ follows. Furthermore, for any $a \in S_1$ and $b \in S_2$, $\{a, b\}$ is a non-edge in $H$, but this non-edge is not present in any $H[S_i \cup C]$, thus $\overline{E}(H) \ge 1 + \sum_{i \in [k]} \overline{E}(H[S_i \cup C])$.
    Each of the graphs $H[S_i \cup C]$ has at most $n-1$ nodes, and thus by the induction hypothesis $\overline{E}(H[S_i \cup C]) \ge R(H[S_i \cup C])$ holds for all $i \in [k]$. Consequently, $\overline{E}(H) \ge 1 + \sum_{i \in [k]} \overline{E}(H[S_i \cup C]) \ge 1 + \sum_{i \in [k]} R(H[S_i \cup C]) = R(H)$, thus proving the claim.
\end{proof}

\end{document}